\numberwithin{equation}{section}
\newtheorem{Def}{Definition}
\newtheorem{Thm}[Def]{Theorem}
\newtheorem{Lemma}[Def]{Lemma}
\newtheorem{Remark}[Def]{Remark}
\newtheorem{Corollary}[Def]{Corollary}
\DeclareMathOperator{\csch}{csch}
\title{Cosmic Accelerations Characterize the Instability of the Critical Friedmann Spacetime}
\author{ \href{https://orcid.org/0000-0001-9255-6281}{\includegraphics[scale=0.06]{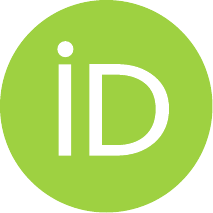}\hspace{1mm}Christopher Alexander}\\
Department of Mathematics\\
University College London\\
London, WC1H 0AY\\
United Kingdom\\
\texttt{christopher.alexander@ucl.ac.uk}\\
\And
\href{https://orcid.org/0000-0002-6907-1101}{\includegraphics[scale=0.06]{orcid.pdf}\hspace{1mm}Blake Temple} \\
Department of Mathematics\\
University of California\\
Davis, CA 95616\\
United States\\
\texttt{temple@math.ucdavis.edu}\\
\And
{Zeke Vogler} \\
Department of Mathematics\\
University of California\\
Davis, CA 95616\\
United States\\
\texttt{zekius@math.ucdavis.edu}
}
\begin{document}

\maketitle

\begin{abstract}
	We give a definitive characterization of the instability of the pressureless ($p=0$) critical ($k=0$) Friedmann spacetime to smooth radial perturbations. We use this to characterize the global accelerations away from $k\leq0$ Friedmann spacetimes induced by the instability in the underdense case. The analysis begins by incorporating the Friedmann spacetimes into a mathematical analysis of smooth spherically symmetric solutions of the Einstein field equations expressed in self-similar coordinates $(t,\xi)$ with $\xi=\frac{r}{t}<1$, conceived to realize the critical Friedmann spacetime as an unstable saddle rest point $SM$. We identify a new maximal asymptotically stable family $\mathcal{F}$ of smooth outwardly expanding solutions which globally characterize the evolution of underdense perturbations. Solutions in $\mathcal{F}$ align with a $k<0$ Friedmann spacetime at early times, generically introduce accelerations away from $k<0$ Friedmann spacetimes at intermediate times and then decay back to the same $k<0$ Friedmann spacetime as $t\to\infty$ uniformly at each fixed radius $r>0$. We propose $\mathcal{F}$ as the maximal asymptotically stable family of solutions into which generic underdense perturbations of the unstable critical Friedmann spacetime will evolve and naturally admit accelerations away from Friedmann spacetimes within the dynamics of solutions of Einstein's original field equations, that is, without recourse to a cosmological constant or dark energy.
\end{abstract}

\keywords{General Relativity \and Instability \and Cosmology \and Dark Energy}

This material is based upon work supported by EPSRC Project: EP/S02218X/1

\vfill

\pagebreak

\tableofcontents

\vfill

\pagebreak

\section{Introduction}\label{S1}

In our 2017 announcement in Proceedings of the Royal Society A \cite{SmolTeVo},\footnote{Authors dedicate this paper to our former collaborator and long-time friend Joel Smoller and acknowledge our use of unpublished notes which were the basis for \cite{SmolTeVo} and represent the point of departure for the present paper.} Smoller, Temple and Vogler introduced the STV-PDE, a version of the perfect fluid Einstein field equations for spherically symmetric spacetimes. These were obtained by starting with the spacetime metric in standard Schwarzschild coordinates (SSC) and then re-expressing it using the self-similar variable $\xi=\frac{r}{t}$ in place of $r$, assuming zero cosmological constant and assuming $|\xi|<1$ to keep $\xi$ as a spacelike coordinate. Since $\xi=1$ is a measure of the distance of light travel since the Big Bang in a Friedmann spacetime,\footnote{By a Friedmann spacetime, we mean a Friedmann--Lemaître--Robertson--Walker (FLRW) spacetime. Also recall that the curvature parameter $k$ in Friedmann spacetimes can always be scaled to one of the values $k=-1,0,1$. The $k=0$ spacetime is unique and $k=\pm1$ spacetimes each describe a one parameter family of distinct spacetimes depending on the single parameter $\Delta_0$ defined below in (\ref{DeltaDef}). Thus we refer to $k<0$ Friedmann spacetimes by $k$ or $\Delta_0$. Unless a different equation of state is specified, our use of the term Friedmann always assumes a dust ($p=0$) solution of the Einstein field equations.} we view $|\xi|<1$ as valid out to approximately the Hubble radius, a measure of the distance across the visible Universe \cite{smolteMemoirs}. The STV-PDE were conceived to represent the pressureless ($p=0$) critical ($k=0$) Friedmann spacetime as an unstable saddle rest point\footnote{We refer to a \emph{rest point} of a PDE as a time independent solution depending only on $\xi$. The character of a rest point of the STV-PDE, that is, unstable, stable and so on, is determined by the character it exhibits in the approximating STV-ODE obtained by the expansion of solutions in powers of $\xi$, as described later.}, which we denote by $SM$ for \emph{Standard Model}. This is based on the important realization that the metric components and fluid variables of the $p=0$, $k=0$ Friedmann spacetime in SSC can be expressed as a function of $\xi$ alone in an appropriate time gauge (see \cite{SmolTeVo} and Theorem \ref{SelfSimExpandxi} below). The character of a rest point is difficult to disentangle in coordinate systems, such as comoving coordinates, where it appears time dependent, especially so for PDE. To analyze rest points of PDE requires a procedure of finite approximation and this was manifested in \cite{SmolTeVo} by the observation that solutions of the STV-PDE which are smooth at the center of symmetry can be developed into a regular expansion in even powers of $\xi$ with time dependent coefficients. This generates a nested sequence of autonomous $2n\times2n$ ODE which close\footnote{This asymptotic expansion does not close at order $n$ when $p\neq0$ \cite{SmolTeVo}.} in the time dependent coefficients at every order $n\geq1$. We name the resulting $2n\times2n$ system of ODE the STV-ODE of order $n$ and observe that at each order, the STV-ODE is autonomous in the log-time variable $\tau=\ln t$ (so $0<t<\infty$ and $-\infty<\tau<\infty$). Moreover, the phase portrait of the resulting autonomous system at each order contains the unstable saddle rest point $SM$, together with the stable degenerate rest point $M$. $M$, for \emph{Minkowski}, is the limit of the time asymptotic decay of solutions as $t\to\infty$.\footnote{The presence of a single rest point $M$ which characterizes the late time dynamics of solutions is a serendipitous simplification inherent in the choice of self-similar coordinates.}

The STV-ODE are nested in the sense that higher order solutions provide strict refinements of solutions determined at lower orders. We prove that the STV-ODE are linear inhomogeneous ODE at every order $n>1$ in the sense that the coefficients of the highest order terms are always of lower order. Our analysis shows that the eigenvalue structure of the rest points $SM$ and $M$ determine the character of the phase portrait of the STV-ODE at every order and the essential character of all the phase portraits can be deduced from the phase portraits at orders $n=1$ and $n=2$. Our analysis establishes that $k<0$ and $k>0$ Friedmann spacetimes correspond to unique solution trajectories which lie in the unstable manifold of $SM$ at all orders of the STV-ODE, and general higher order solutions of the STV-ODE agree exactly with a Friedmann spacetime at order $n=1$. In particular, we prove that the phase portraits of the STV-ODE of order $n=1$ and $n=2$ characterize the instability of $k\leq0$ Friedmann spacetimes: The $k=0$ Friedmann spacetime is unstable with a codimension one unstable manifold, while the $k<0$ Friedmann spacetimes are locally unstable at $SM$ but are globally asymptotically stable in the sense that all underdense perturbations of $SM$ tend to the same rest point $M$ as $t\to\infty$. Moreover, this remains true at all orders $n>2$ and a smooth solution of the STV-PDE will lie in the unstable manifold of $SM$ at all orders $n\geq1$ of the STV-ODE if and only if it lies in the unstable manifold of $SM$ at order $n=2$.

The existence of a second positive eigenvalue of $SM$ at order $n=2$ (the first being at order $n=1$) implies the existence of a one parameter family of nontrivial solutions of the STV-ODE of order $n=2$ which exist within the unstable manifold of $SM$ but diverge from Friedmann spacetimes at that order. The existence of a second negative eigenvalue at $SM$ at order $n=2$ establishes that the unstable manifold of $SM$ is a codimension one space of trajectories, so solutions of the STV-ODE are generically not within the unstable manifold of $SM$. We prove that at order $n=2$ all solutions in the unstable manifold of $SM$ exit tangent to the trajectory associated with Friedmann spacetimes but a unique positive eigenvalue smaller than the leading order eigenvalue enters at order $n=3$. Since the eigenvector associated with the smallest eigenvalue dominates at rest point $SM$ in backward time, it follows that generic solutions in the unstable manifold of $SM$ exit tangent to a new eigendirection, different from Friedmann, at all orders $n=3$ and above.

The stable and unstable manifolds of $SM$ together with the stable manifold of $M$ characterize the global phase portraits of the STV-ODE at all orders. An analysis of the phase portraits lead to the introduction of a new family $\mathcal{F}$ of solutions of the STV-PDE that extend the $k<0$ Friedmann spacetimes to a maximal asymptotically stable family of solutions into which underdense perturbations of the unstable $k=0$ Friedmann spacetime will globally evolve and generically accelerate away from Friedmann spacetimes early on. Thus $\mathcal{F}$ globally characterizes the instability of the $p=0$, $k=0$ Friedmann spacetime to smooth radial underdense perturbations. We prove that solutions in $\mathcal{F}$ align with a $k<0$ Friedmann spacetime at early times, introduce accelerations away from $k<0$ Friedmann spacetimes at intermediate times and then decay back to the same $k<0$ Friedmann spacetime as $t\to\infty$ (at each fixed $r>0$). The existence of positive eigenvalues at $SM$, with eigensolutions tending to $M$ as $t\to\infty$ at every order $n\geq1$ of the STV-ODE, demonstrates the global instability of the $k=0$ Friedmann spacetime to perturbation at every order. On the other hand, the decay of solutions in $\mathcal{F}$ to the rest point $M$ as $t\to\infty$ establishes the global large time asymptotic stability of all $k<0$ Friedmann spacetimes. However, the existence of a second positive eigenvalue at $SM$ at order $n=2$ demonstrates the instability of $k<0$ Friedmann spacetimes to perturbation within the unstable manifold of $SM$ at early times, implying that solutions in $\mathcal{F}$ generically accelerate away from Friedmann spacetimes at intermediate times before asymptotic stability brings them back to a $k<0$ Friedmann spacetime via decay to $M$ as $t\to\infty$ (for fixed $r>0$). The existence of positive eigenvalues of $SM$ at every order implies that a similar instability of $k<0$ Friedmann spacetimes occurs at $t=0$ within the unstable manifold of $SM$ at every order $n\geq3$ of the STV-ODE as well. We conclude that solutions in $\mathcal{F}$ characterize both the instability of the $p=0$, $k=0$ Friedmann spacetime to smooth radial underdense perturbations and characterize the accelerations away from Friedmann spacetimes at intermediate times, both within the dynamics of Einstein's original field equations, that is, without recourse to a cosmological constant or dark energy.

\subsection{Introduction to the Family of Spacetimes $\mathcal{F}$}

We argued in \cite{SmolTeVo} that solutions of the $p=0$ STV-PDE which are smooth at the center of symmetry can be expanded in even powers of $\xi$ by Taylor's theorem (see Section \ref{S3}) and from the $n^{th}$ order term we obtain an approximation which solves the STV-ODE of order $n$. In the present paper we go on to prove that, for each such solution, there exists a solution dependent time translation $t\to t-t_*$ of the SSC time coordinate $t$, which we call \emph{time since the Big Bang}, such that making the SSC gauge transformation to time since the Big Bang has the effect of eliminating the leading order negative eigenvalue at $SM$. Moreover, this gauge transforms every solution to either the rest point $SM$ or one of the two trajectories in the unstable manifold of $SM$ at $n=1$. As a result of this, every solution agrees exactly with a $k<0$, $k=0$ or $k>0$ Friedmann spacetime in the phase portrait of the STV-ODE at leading order $n=1$, see Figure \ref{Figure1}. There is an important distinction to make here: The STV-ODE are autonomous in log-time $\tau=\ln t$, so translation in $\tau$ maps solutions to physically different solutions which traverse the same trajectory of the STV-ODE, but translation in $t$ is a gauge freedom of the SSC metric ansatz which maps trajectories of the STV-ODE to different trajectories which represent the same physical solution. Thus choosing time since the Big Bang eliminates a physical redundancy in the solution trajectories of the STV-ODE. When time since the Big Bang is imposed, there are only three remaining trajectories in the leading order $n=1$ phase portrait of the STV-ODE: The unstable rest point $SM$, the underdense (left) component of the unstable manifold and the overdense (right) component of the unstable manifold, see Figure \ref{Figure1}. The underdense component of the unstable manifold of $SM$ at $n=1$ is the unique trajectory which takes $SM$ to $M$ and corresponds to $k<0$ Friedmann spacetimes, with the value of $k$ determined by translation in $\tau$ along this unique trajectory. The unique trajectory exiting $SM$ in the opposite overdense direction is the component of the unstable manifold of $SM$ corresponding to $k>0$ Friedmann spacetimes. The three (including the fixed point $SM$) trajectories of the $n=1$ phase portrait, after time since the Big Bang is imposed, is diagrammed in Figure \ref{Figure2}.

In this paper we identify and study the entire subset of solutions $\mathcal{F}$ of the STV-PDE defined by the condition that, when time $t$ is taken to be time since the Big Bang, the resulting solution agrees with an underdense $k<0$ Friedmann solution in the leading order $n=1$ STV-ODE phase portrait. That is, the solution projects to the unique trajectory in the unstable manifold of $SM$ which takes $SM$ to $M$ at order $n=1$, parameterized by $\tau-\tau_0$ for some log-time translation constant $\tau_0=\ln t_0$ of the autonomous STV-ODE. We identify $\mathcal{F}$ as a new maximal asymptotically stable family of outwardly expanding solutions of the STV-PDE defined by the condition that solutions agree with a $k<0$ Friedmann spacetime in the leading order phase portrait of the expansion in even powers of $\xi$ when the SSC time is translated to time since the Big Bang associated with each solution. The main purpose of this paper is to demonstrate and characterize the instability of the $k=0$ Friedmann spacetime and the large time asymptotic stability and early time instability of the $k<0$ Friedmann spacetimes within the family of solutions $\mathcal{F}$.

We first characterize the forward time dynamics and asymptotic stability of solutions in $\mathcal{F}$ by proving that every solution which decays to the rest point $M$ as $t\to\infty$ in the phase portrait of the STV-ODE at order $n=1$, that is, every solution in $\mathcal{F}$, also decays as $t\to\infty$ to a corresponding unique stable rest point $M$ in the phase portrait of the STV-ODE at every order $n\geq1$. This characterizes the forward time dynamics of solutions in $\mathcal{F}$ because it implies that every solution in $\mathcal{F}$ decays, as $t\to\infty$, to a $k<0$ Friedmann spacetime faster than it decays to Minkowski space as $t\to\infty$ at each fixed radii $r>0$ at every order $n\geq1$. The asymptotic decay of solutions in $\mathcal{F}$ as $t\to\infty$ immediately implies uniform bounds on solutions of the STV-ODE for all time $t>t_0>0$ and $n\geq1$ in terms of bounds on the initial data at $t=t_0>0$ alone. The STV-ODE are linear in the highest order variables when lower order solutions are interpreted as known inhomogeneous terms, so solutions of the STV-ODE exist for all time $0<t<\infty$ at every order. For the purposes of asymptotic analysis, we formally assume convergence of solutions of the STV-ODE up to order $n$, with errors at order $n$ estimated by bounds at order $n+1$ according to Taylor's theorem, an assumption justified rigorously by simply restricting to an appropriate space of analytic solutions.\footnote{The convergence of solutions in $\mathcal{F}$ in the limit $n\to\infty$ for $|\xi|<1$, with estimates provided by Taylor's theorem, follows directly from mild assumptions on the growth rate of the initial data, due to the fact that all solutions lie on bounded trajectories which tend to $M$ as $t\to\infty$ at every order $n\geq1$.}

The backward time dynamics $t\to0$ ($\tau\to-\infty$) of solutions in $\mathcal{F}$ is determined at each order $n\geq1$ by the instability of the $k=0$ Friedmann spacetime, that is, by the eigenvalues of the saddle rest point $SM$ as it is represented in the STV-ODE phase portrait at each order $n\geq1$. By definition, each solution in $\mathcal{F}$ agrees at leading order ($n=1$) with a $k<0$ Friedmann spacetime represented as the unique trajectory in the unstable manifold of $SM$ which tends to $M$ as $t\to\infty$ and to $SM$ as $t\to0$. To understand the backward time dynamics of solutions in $\mathcal{F}$ at higher orders $n\geq2$, we demonstrate that $k<0$ Friedmann solutions correspond to a single trajectory in the unstable manifold of $SM$ in the phase portrait of the STV-ODE at every order $n\geq1$, with the value of $k$ determined by log-time translation at order $n=1$. We then prove that two new eigenvalues of the rest point $SM$ appear at each new order $n\geq2$ and all of them are positive except one negative eigenvalue which appears at order $n=2$. From this we conclude that the family $\mathcal{F}$ decomposes into two essential components: The underdense component of the unstable manifold of $SM$, consisting of trajectories which connect $SM$ to $M$ at every order $n\geq1$, and solutions which tend to $M$ in forward time but do not tend to $SM$ in backwards time. Because of the presence of the order $n=2$ negative eigenvalue at $SM$, solutions in $\mathcal{F}$ will generically not tend to $SM$ in backward time, but rather will follow the one-dimensional stable manifold of $SM$, the unique trajectory which emerges from $SM$ in backward time as $t\to0$ ($\tau\to-\infty$). Thus the Big Bang limit $t\to0$ of a generic solution in $\mathcal{F}$ is not self-similar like $SM$ beyond the leading order, but rather, generically displays a non-self-similar character at the Big Bang for all orders $n=2$ and above. This provides an important example of the self-similarity hypothesis, that solutions which approach a singularity exhibit self-similarity to leading order \cite{carrco}. However, in this case, such solutions are generically not self-similar beyond leading order.

To reiterate, each spacetime in $\mathcal{F}$ agrees with a unique $k<0$ Friedmann spacetime in the phase portrait of the STV-ODE at leading order $n=1$ (all the way into the Big Bang at $t=0$) and agrees with the same $k<0$ Friedmann spacetime in the limit $t\to\infty$ (for each fixed $r>0$), but generically accelerates away from this $k<0$ Friedmann spacetime at intermediate times in the phase portraits of the STV-ODE of order $n\geq2$. Since each member of the family $\mathcal{F}$ by definition contains the component of the unstable manifold of $SM$ which contains the leading order behavior of $k<0$ Friedmann spacetimes imposed at $n=1$, $\mathcal{F}$ represents a maximal extension of the one parameter family of $k<0$ Friedmann spacetimes to a robust asymptotically stable family of spacetimes which exist within the family of smooth spherically symmetric spacetimes. Since the family $\mathcal{F}$ is the full space of solutions into which underdense perturbations of $SM$ evolve, $\mathcal{F}$ characterizes the instability of $SM$ to underdense perturbations. Because a negative eigenvalue of $SM$ does not exist above order $n=2$, it follows that a solution in $\mathcal{F}$ lies in the unstable manifold of $SM$ if and only if its projection onto the $n=2$ phase portrait of the STV-ODE lies in the unstable manifold of $SM$, so the unstable manifold of $SM$ at $n=2$ characterizes the unstable manifold of $SM$ at all orders. A main goal of this paper is to prove that, even though $\mathcal{F}$ is asymptotically stable, solutions in $\mathcal{F}$ generically accelerate away from $k<0$ Friedmann spacetimes at early times due to an instability at $t=0$. 

To establish the instability of $k<0$ Friedmann spacetimes at $t=0$, it suffices to demonstrate that there are multiple solutions of the $STV-ODE$ which agree with $k<0$ Friedmann spacetimes at order $n=1$ and in the limit $t\to0$ but which diverge from $k<0$ Friedmann spacetimes at $t>0$ for some $n\geq2$. Imposing time since the Big Bang, every solution in $\mathcal{F}$ lies on the unique trajectory in the unstable manifold of $SM$ which takes $SM$ to $M$ and hence every solution agrees with a $k<0$ Friedmann solution at order $n=1$. Thus to establish the instability of $k<0$ Friedmann spacetimes at order $n=2$, it suffices to prove that: (1) Solution trajectories corresponding to $k<0$ Friedmann lie in the unstable manifold of $SM$ and (2) there exist other solutions in $\mathcal{F}$ in the unstable manifold of $SM$ which diverge from $k<0$ Friedmann at $t>0$. For this, we use known exact formulas to prove that the solution trajectory of $k<0$ Friedmann spacetimes lies in the unstable manifold of $SM$ at order $n=2$ with time translation differentiating $k$. Since a solution in $\mathcal{F}$ lies in the unstable manifold of $SM$ at every order if and only if it lies in the unstable manifold of $SM$ order $n=2$, we can conclude that $k<0$ Friedmann spacetimes lie on a single trajectory in the unstable manifold of $SM$ at every order $n\geq2$ as well. Next we use exact formulas to prove that the trajectory corresponding to $k<0$ Friedmann spacetimes at order $n=2$ is an eigensolution of the eigenvalue $\lambda_{A1}$ which enters at order $n=1$ (this is also shown to order $n=3$ in Section \ref{S11.5}). Thus to establish the instability of $k<0$ Friedmann spacetimes it suffices only to prove that a second positive eigenvalue $\lambda_{B2}\neq\lambda_{A1}$ emerges at $SM$ at order $n=2$. From this, the early time instability of the $k<0$ Friedmann spacetimes is established in the phase portrait of the STV-ODE of order $n=2$, diagrammed in Figure \ref{Figure3}, even though the whole family $\mathcal{F}$ is asymptotically stable. We discuss the phase portrait at order $n=2$ in Section \ref{S1.6}.

At this stage it is convenient to introduce a refined notation that distinguishes the family $\mathcal{F}$ of solutions of the STV-PDE from the solutions of the STV-ODE which approximate solutions in $\mathcal{F}$ up to arbitrary order. To this end, we define $\mathcal{F}_n$ as the set of solutions of the STV-ODE of order $n$ which satisfy the property that the trajectory reduces at order $n=1$ to the unique trajectory which connects $SM$ to $M$, but not necessarily at higher orders. We also define $\mathcal{F}_n'\subset\mathcal{F}_n$ to be the subset of solutions which lie in the unstable manifold of the rest point $SM$ in STV-ODE of order $n$. Note that by definition $\mathcal{F}_1'=\mathcal{F}_1$. Finally, let $\mathcal{F}'\subset \mathcal{F}$ denote the set of solutions of the STV-PDE whose $n^{th}$ order approximation lies in the unstable manifold of rest point $SM$ in the phase portrait of the STV-ODE of order $n$ for every $n\geq1$. We may sometimes refer to $\mathcal{F}_n$ as $\mathcal{F}$ at order $n$ and $\mathcal{F}_n'$ as the stable manifold of $SM$ in $\mathcal{F}$ at order $n$.

Having set out the main elements, we now discuss them in detail.

\subsection{The STV-PDE}

In Section \ref{S6} we give a new derivation of the Einstein field equations $G=\kappa T$ in what we call SSCNG coordinates. These coordinates are standard Schwarzschild coordinates (SSC), that is, where the metric takes the form
\begin{align}
    ds^2 = -B(t,r)dt^2 + \frac{dr^2}{A(t,r)} + r^2d\Omega^2,\label{SSCintro}
\end{align}
in addition to possessing a normal gauge (NG) when expressed in the variables $(t,\xi)$. An arbitrary spherically symmetric spacetime can generically\footnote{That is, under the condition $\frac{\partial C(t,r)}{\partial r}\neq0$, where $C(t,r)d\Omega^2$ is the angular part of the metric, see \cite{SmolTeVo}.} be transformed to SSC metric form by defining $r$ so $r^2d\Omega^2$ is the angular part of the metric and then constructing a time coordinate $t$, complementary to $r$, such that the metric is diagonal in $(t,r)$-coordinates \cite{smolteMemoirs,wein}. The SSC metric form is invariant under the gauge freedom of arbitrary time transformation $t\to\phi(t)$, so to set the gauge, we impose the condition that $B(t,0)=1$, that is, geodesic (proper) time at $r=0$ \cite{SmolTeVo}. We refer to this as the normal gauge (NG).

The STV-PDE use density and velocity variables:
\begin{align*}
    z &=\frac{\kappa\rho r^2}{1-v^2}, & w &= \frac{v}{\xi},
\end{align*}
coupled to the SSC metric components $A$ and $D=\sqrt{AB}$. Recall that the STV-PDE are not the Einstein field equations in $(t,\xi)$ coordinates but rather the Einstein field equations with an SSC metric form expressed in terms of $z$ and $w$ with independent variables $(t,\xi)$ \cite{SmolTeVo}. We extend the derivation of the STV-PDE to equations of state of the form $p=\sigma\rho$, with $\sigma$ constant, in Theorem \ref{thmsigma} below. However, our concern in this paper is with the case $p=\sigma=0$, applicable to late time Big Bang Cosmology.\footnote{In the Standard Model of Cosmology, the pressure drops precipitously to zero at about 10,000 years after the Big Bang, an order of magnitude before the uncoupling of matter and radiation \cite{Peacock}}. 

\begin{Thm}[Special case of Theorem \ref{thmsigmainitial2}]\label{STV-PDE}
    Assume the equation of state $p=0$. Then the perfect fluid Einstein field equations with an SSC metric are equivalent to the following four equations in unknowns $A(t,\xi)$, $D(t,\xi)$, $z(t,\xi)$ and $w(t,\xi)$:
    \begin{align}
        \xi A_\xi &= -z + (1-A),\label{AeqnxiIntro}\\
        \xi D_\xi &= \frac{D}{2A}\big(2(1-A)-(1-\xi^2w^2)z\big),\label{DeqnxiIntro}\\
        tz_t + \xi\big((-1+Dw)z\big)_\xi &= -Dwz,\label{zeqnxiIntro}\\
        tw_t + \xi(-1+Dw)w_\xi &= w - D\bigg(w^2+\frac{1}{2\xi^2}(1-\xi^2w^2)\frac{1-A}{A}\bigg).\label{weqnxiIntro}
    \end{align}
\end{Thm}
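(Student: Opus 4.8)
Since Theorem~\ref{STV-PDE} is the $\sigma=0$ case of Theorem~\ref{thmsigmainitial2}, one route is simply to specialize that general derivation; but it is worth recording the self-contained plan. I would start from the SSC metric ansatz~\eqref{SSCintro}, compute the nonzero mixed components $G^t{}_t$, $G^r{}_r$, $G^t{}_r$, $G^\theta{}_\theta$ of the Einstein tensor, and pair them against the $p=0$ perfect-fluid stress tensor $T^{\mu\nu}=\rho\,u^\mu u^\nu$, written in terms of the mass density $\rho$ and the fluid velocity $v$ relative to the SSC frame (so that $u$ is timelike exactly when $v^2<1$ and $(u^t)^2=\frac{1}{B(1-v^2)}$, $u^r=\sqrt{AB}\,v\,u^t$). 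In spherical symmetry the Einstein--Euler system reduces to four independent equations: one Einstein equation governing $A$, one governing $B$, and the two components $\nu=t,r$ of $\nabla_\mu T^{\mu\nu}=0$; the remaining Einstein equations then follow from the contracted Bianchi identities together with these. For $p=0$ the conservation law splits conveniently into the continuity equation $\nabla_\mu(\rho u^\mu)=0$ and the radial geodesic equation $u^\mu\nabla_\mu u^r=0$, which will become~\eqref{zeqnxiIntro} and~\eqref{weqnxiIntro} respectively.

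Next I would change independent variables from $(t,r)$ to $(t,\xi)$ with $\xi=r/t$, using $\partial_r\mapsto\frac1t\partial_\xi$ and $\partial_t\big|_r\mapsto\partial_t\big|_\xi-\frac{\xi}{t}\partial_\xi$, and introduce the unknowns $z=\frac{\kappa\rho r^2}{1-v^2}$, $w=\frac{v}{\xi}$ and $D=\sqrt{AB}$, noting the identity $1-\xi^2w^2=1-v^2$. The equation $G^t{}_t=\kappa T^t{}_t$ is the spatial ``mass'' equation; using $T^t{}_t=-\frac{\rho}{1-v^2}$ it becomes $rA_r=1-A-\frac{\kappa\rho r^2}{1-v^2}$, which in the new variable is precisely $\xi A_\xi=-z+(1-A)$, i.e.~\eqref{AeqnxiIntro} — in particular it carries no time derivative, which is the structural fact underpinning the later expansion in powers of $\xi$. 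The equation $G^r{}_r=\kappa T^r{}_r$ determines $B_r/B$; rewriting it through $\frac{D_r}{D}=\frac12(\frac{A_r}{A}+\frac{B_r}{B})$ and substituting~\eqref{AeqnxiIntro} to eliminate $rA_r$ collapses it to $\xi D_\xi=\frac{D}{2A}\big(2(1-A)-(1-\xi^2w^2)z\big)$, i.e.~\eqref{DeqnxiIntro}.

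Finally I would put the two conservation equations into transport form. Expanding the continuity equation and the radial geodesic equation in the $(t,\xi)$ variables, using~\eqref{AeqnxiIntro}--\eqref{DeqnxiIntro} to remove $A_\xi$ and $D_\xi$ (hence $A_r$ and $B_r$), and carefully tracking the $r^2$ and $1-v^2$ factors packed inside $z$ and $w$ — each of which contributes its own first-order terms under the change of variables — the equations reorganize into $tz_t+\xi\big((-1+Dw)z\big)_\xi=-Dwz$ and $tw_t+\xi(-1+Dw)w_\xi=w-D\big(w^2+\frac{1}{2\xi^2}(1-\xi^2w^2)\frac{1-A}{A}\big)$. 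The common characteristic speed $-1+Dw$ in $\xi$ corresponds to the fluid velocity $\sqrt{AB}\,v$ in $(t,r)$, as it must since $z$ and $w$ are advected by the dust. The converse implication — that the four equations together with the definitions of $z,w,D$ recover all of $G=\kappa T$ — is then obtained by reversing these algebraic steps and recovering $G^t{}_r=\kappa T^t{}_r$ and $G^\theta{}_\theta=\kappa T^\theta{}_\theta$ as differential consequences via the Bianchi identities.

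The main obstacle I anticipate is bookkeeping rather than conceptual difficulty: the passage from the raw conservation laws to the clean right-hand sides of~\eqref{zeqnxiIntro}--\eqref{weqnxiIntro} requires a number of terms involving $A_\xi$, $D_\xi$, $B_r/B$ and the $\xi$- and $t$-derivatives of $r^2/(1-v^2)$ to combine and cancel exactly, and it is easy to mishandle a stray factor of $\xi$, $t$, or $A$ along the way. One must also keep track of the admissible region $|\xi|<1$, $v^2<1$ — on which $1-\xi^2w^2>0$ and $A>0$ — to ensure that every division performed in the reduction is legitimate, so that the stated equivalence is exact.
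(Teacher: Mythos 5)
Your plan is correct and, in its essentials, is the route the paper takes: reduce the $p=0$ SSC Einstein--Euler system to four equations (the two ``constraint-type'' metric equations for $A$ and $B$, the latter repackaged as the $D$ equation, plus the two components of $\nabla\cdot T=0$), then pass to $(t,\xi)$ via $\partial_r\mapsto\frac1t\partial_\xi$, $\partial_t|_r\mapsto\partial_t|_\xi-\frac{\xi}{t}\partial_\xi$ and to the unknowns $z$, $w$, $D$. Two differences are worth noting. First, the paper does not recompute the Einstein tensor from the ansatz; it imports the SSC form of the field equations from \cite{groate}, already written in terms of the Minkowski stress components $T^{00}_M=BT^{00}$, $T^{01}_M=\sqrt{B/A}\,T^{01}$, $T^{11}_M=\frac1A T^{11}$, together with the statement that \eqref{firstorder1}, \eqref{firstorder3} and the two conservation laws weakly imply the remaining field equation --- your appeal to the Bianchi identities plays the same role but you would have to carry out the curvature computation yourself. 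Second, and more substantively, the paper does not split the dust conservation law into continuity plus the geodesic equation; it keeps the $\nu=t,r$ balance laws for $T^{00}_Mr^2$ and $T^{01}_Mr^2$, obtains $z_t+(\sqrt{AB}\,\bar v z)_r=0$ directly (note $z=\kappa T^{00}_Mr^2$, so this is the energy balance law, \emph{not} $\nabla_\mu(\rho u^\mu)=0$ --- the conserved densities differ by a factor involving $\sqrt{A}$ and $\sqrt{1-v^2}$), and then substitutes this into the momentum balance to turn it into the transport equation for $\bar v=w\xi$. Your continuity/geodesic pair is an invertible recombination of the paper's pair (for dust, $\nabla_\mu T^{\mu\nu}=\nabla_\mu(\rho u^\mu)u^\nu+\rho u^\mu\nabla_\mu u^\nu$), so your route is legitimate, but to land exactly on \eqref{zeqnxiIntro}--\eqref{weqnxiIntro} you would need one further recombination --- e.g.\ adding $u^0$ times continuity to $\rho$ times the time-component geodesic equation --- which your sketch elides when it says the equations ``reorganize''; that step should be made explicit.
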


Equations (\ref{AeqnxiIntro})--(\ref{weqnxiIntro}) are the STV-PDE. The NG is imposed by defining a time transformation $t\to\phi(t,r)$ which sets $B=1$ at $r=0$ \cite{SmolTeVo}. From here on, when we refer to a solution of the STV-PDE we always assume the NG gauge is imposed. Note that there is one remaining freedom left in this gauge, this being the time translation freedom $t\to t-t_0$ for some constant $t_0$.

\subsection{The STV-ODE}

The STV-ODE are derived by expanding smooth solutions of the STV-PDE (\ref{AeqnxiIntro})--(\ref{weqnxiIntro}) in powers of $\xi$ with time dependent coefficients, collecting like powers of $\xi$ and assuming the NG gauge. The assumption of smoothness at the center implies that the non-zero coefficients in the expansion occur only for even powers $\xi^{2n}$ and this significantly reduces the solution space of the Einstein field equations by disentangling solutions smooth at the center from the larger generic solution space. Fortuitously, the resulting equations in the fluid variables $z$ and $w$ uncouple from the equations for the metric components $A$ and $D$, leading to the expansions:
\begin{align}
    z(t,\xi) &= z_2(t)\xi^2 + z_4(t)\xi^4 + \dotsc + z_{2n}(t)\xi^{2n} + \dots,\label{zfirst1}\\
    w(t,\xi) &= w_0(t) + w_2(t)\xi^2 + \dotsc + w_{2n-2}(t)\xi^{2n-2} + \dots,\label{wfirst1}
\end{align}
which close in $(z_{2k},w_{2k-2})$ for $1\leq k\leq n$ at every order $n\geq1$ when $p=0$. We prove in Theorem (\ref{ThmSTVequations}) that the STV-ODE of order $n$ closes to form a $2n\times2n$ autonomous system
\begin{align}
    t\dot{\boldsymbol{U}} = \boldsymbol{F}_n(\boldsymbol{U})\label{nbynsystemFirst}
\end{align}
in unknowns
\begin{align*}
    \boldsymbol{U} := (z_2,w_0,\dots,z_{2n},w_{2n-2})^T,
\end{align*}
such that the leading order variables are determined by an inhomogeneous $2\times2$ system of the form
\begin{align*}
    \frac{d}{d\tau}\boldsymbol{v}_n = \left(\begin{array}{cc}
    (2n+1)(1-w_0)-1 & -(2n+1)z_2\\
    -\frac{1}{4n+2} & (2n+2)(1-w_0)-1
    \end{array}\right)\boldsymbol{v}_n + \boldsymbol{q}_n
\end{align*}
where $\boldsymbol{v}_n=(z_{2n},w_{2n-2})^T$ and $\boldsymbol{q}_n$ involves only lower order terms which are determined by the STV-ODE of order $n-1$. We refer to the $2n\times2n$ system of equations (\ref{nbynsystemFirst}) as the \emph{STV-ODE of order $n$}.

It follows that the STV-ODE are nested in the sense that each STV-ODE of order $n\geq2$ contains as a sub-system the STV-ODE of order $k$ for all $1\leq k\leq n-1$. Thus the self-similar formulation decouples solutions at every order in the sense that one can solve for solutions up to order $n-1$ and use the STV-ODE at order $n$ to solve for $(z_{2n}(t),w_{2n-2}(t))$ from arbitrary initial conditions $(z_{2n}(0),w_{2n-2}(0))$. Assuming lower order solutions $k\leq n-1$ are fixed, the STV-ODE of order $n$ turn out to be linear in the highest order terms $(z_{2n},w_{2n-2})$. For approximations up to orders $n=2$ we can use the approximation $z=\kappa\rho r^2$ in place of $z=\frac{\kappa\rho r^2}{1-v^2}$ because this incurs errors of the order $O(\xi^6)$ given that $v^2=O(\xi^2)$. The derivation of the general STV-ODE of order $n$ is carried out carefully in following sections but to set up the picture and highlight the main results we first describe the phase portraits of the STV-ODE which emerge at orders $n=1$ and $n=2$ of this expansion.\footnote{The STV-ODE at order $n=1$ and $n=2$ were introduced in \cite{SmolTeVo} without detailed derivation. Here we derive the STV-ODE up to order $n=3$ and derive the form of the STV-ODE at all orders $n\geq4$ together with an explicit algorithm for computing them. Note that Figure \ref{Figure1} is take from \cite{SmolTeVo} with the modification that in this paper, the coordinate system is centered on $(z_2,w_0)=0$, while coordinates were centered at $SM$ in \cite{SmolTeVo}. We also record a correction to the $z_4$ equation incorrectly expressed in equation (3.33) of \cite{SmolTeVo}. This error occurred at fourth order in $\xi$ and did not affect the results claimed in \cite{SmolTeVo}, see (\ref{z2first2})--(\ref{w2first2}) below.} Unanticipated by the authors ahead of time, it turns out that the global character of the phase portrait of the STV-ODE at any order $n\geq3$ emerges from orders $n=1$ and $n=2$.

\subsection{The STV-ODE Phase Portrait of Order $n=1$} 

A calculation shows that the STV-ODE of order $n=1$ is the $2\times2$ system:
\begin{align}
    t\dot{z}_2 &= 2z_2 - 3z_2w_0,\label{z2first1}\\
    t\dot{w}_0 &= -\frac{1}{6}z_2 + w_0 - w_0^2.\label{w0first1}
\end{align}
Using $\frac{d}{d\tau}=t\frac{d}{dt}$, system (\ref{z2first1})--(\ref{w0first1}) converts to an autonomous $2\times2$ system of ODE in $\tau=\ln t$. It is easy to verify that the system admits three rest points: The source $U=(0,0)$, the unstable saddle rest point $SM=(\frac{4}{3},\frac{2}{3})$ and the degenerate stable rest point $M=(0,1)$. The rest point $U$ plays no role once time since the Big Bang is imposed, the rest point $M$ describes the asymptotics of solutions tending to Minkowski space as $t\to\infty$ and the coordinates of rest point $SM$ are precisely the first two terms in the self-similar expansion of the critical ($k=0$) Friedmann spacetime, viewed here as the Standard Model due to the central role it has played in the history of Cosmology. A calculation gives the eigenvalues and eigenvectors of rest points $M$ and $SM$ as:
\begin{align*}
    \lambda_M &= -1, & \boldsymbol{R}_M &= \left(\begin{array}{cc}
    0\\
    1
    \end{array}\right),\\
    \lambda_{A1} &= \frac{2}{3}, & \boldsymbol{R}_{A1} &= \left(\begin{array}{cc}
    -9\\
    \frac{3}{2}
    \end{array}\right),\\
    \lambda_{B1} &= -1, & \boldsymbol{R}_{B1} &= \left(\begin{array}{cc}
    4\\
    1
    \end{array}\right),
\end{align*}
respectively. The phase portrait for system (\ref{z2first1})--(\ref{w0first1}) is diagrammed in Figure \ref{Figure1}. The two components of the unstable manifold of $SM$ correspond to the two trajectories associated with the positive eigenvalue $\lambda_{A1}=\frac{2}{3}$, the underdense component being the trajectory which connects $SM$ to $M$ and the overdense component leaving $SM$ in the opposite direction, see Figure \ref{Figure1}. The trajectory connecting $U$ to $SM$ is the underdense trajectory in the stable manifold of $SM$ corresponding to the negative eigenvalue $\lambda_{B1}=-1$ and the overdense stable trajectory emerges opposite to this at $SM$. Using classical formulas for Friedmann spacetimes which set the time of the Big Bang to $t=0$, we verify that the underdense trajectory connecting $SM$ to $M$ corresponds to $k<0$ Friedmann spacetimes and the overdense trajectory in the unstable manifold of $SM$ corresponds to $k>0$ spacetimes. We obtain an exact formula for the trajectory connecting $SM$ to $M$ from an expansion of such formulas for Friedmann spacetimes in powers of $\xi$. The variable $\Delta_0$, which parameterizes the one-parameter family of Friedmann spacetimes under scalings that set $k=-1,0,1$, is given by $\Delta_0=\ln t_0$, so the entire one-parameter family of Friedmann spacetimes at order $n=1$ consists of $SM$ together with the two trajectories in its unstable manifold, parameterized by the time translation freedom $\tau-\tau_0$, which determines the value of $\Delta_0$ and thereby determines a unique solution in the Friedmann family \cite{wein}.

Now the SSC metric ansatz with NG still leaves one gauge freedom yet to be set, namely, the freedom to impose time translation $t\to t-t_0$. The time translation freedom of the SSC system leaves open an unresolved redundancy in solutions of the the STV-ODE in the sense that time translation maps each trajectory of the STV-ODE of order $n=1$ to a different trajectory which represents the same physical solution. We show that for each trajectory of the system (\ref{z2first1})--(\ref{w0first1}), there exists a unique time translation $t\to t-t_0$, which we call \emph{time since the Big Bang}, which converts that trajectory either to $SM$ or to one of the two trajectories in the unstable manifold of $SM$. In particular, referring to the phase portrait depicted in Figure \ref{Figure1} and making the gauge transformation to time since the Big Bang, the trajectories in the stable manifold of $SM$, that is, the one taking $U$ to $SM$ and the trajectory opposite it at $SM$, go over to $SM$, whereas all the trajectories above these, that is, trajectories in the domain of attraction of $M$, go over to the underdense portion of the unstable manifold of $SM$ corresponding to $k<0$ Friedmann. Trajectories below the stable manifold of $SM$ go over to the overdense portion of the unstable manifold of $SM$, corresponding to $k>0$ Friedmann spacetimes. From this it follows that imposing the solution dependent time since the Big Bang has the effect of eliminating the negative eigenvalue $\lambda_{B1}=-1$ together with the rest point $U$, and we can, without loss of generality, restrict our analysis to the space of solutions which agree with $SM$ or a trajectory in its unstable manifold, in the phase portrait of the solution at $n=1$. Since these trajectories agree with the Friedmann spacetimes, we conclude that, under appropriate change of time gauge, all smooth solutions of the STV-PDE agree with a Friedmann spacetime at leading order in the STV-ODE. Our purpose here is to study the space $\mathcal{F}$ of solutions which lie on the trajectory which takes $SM$ to $M$ at leading order, and hence agree with a $k<0$ Friedmann spacetime at order $n=1$ of the STV-ODE. We do not consider the $k>0$ Friedmann spacetimes, but observe that these exit the first quadrant of our coordinate system at $w_0=0$, the time of maximal expansion.

\begin{figure}
	\centering
    \caption{The phase portrait for the $2\times2$ system.}
    \includegraphics[width=0.8\textwidth]{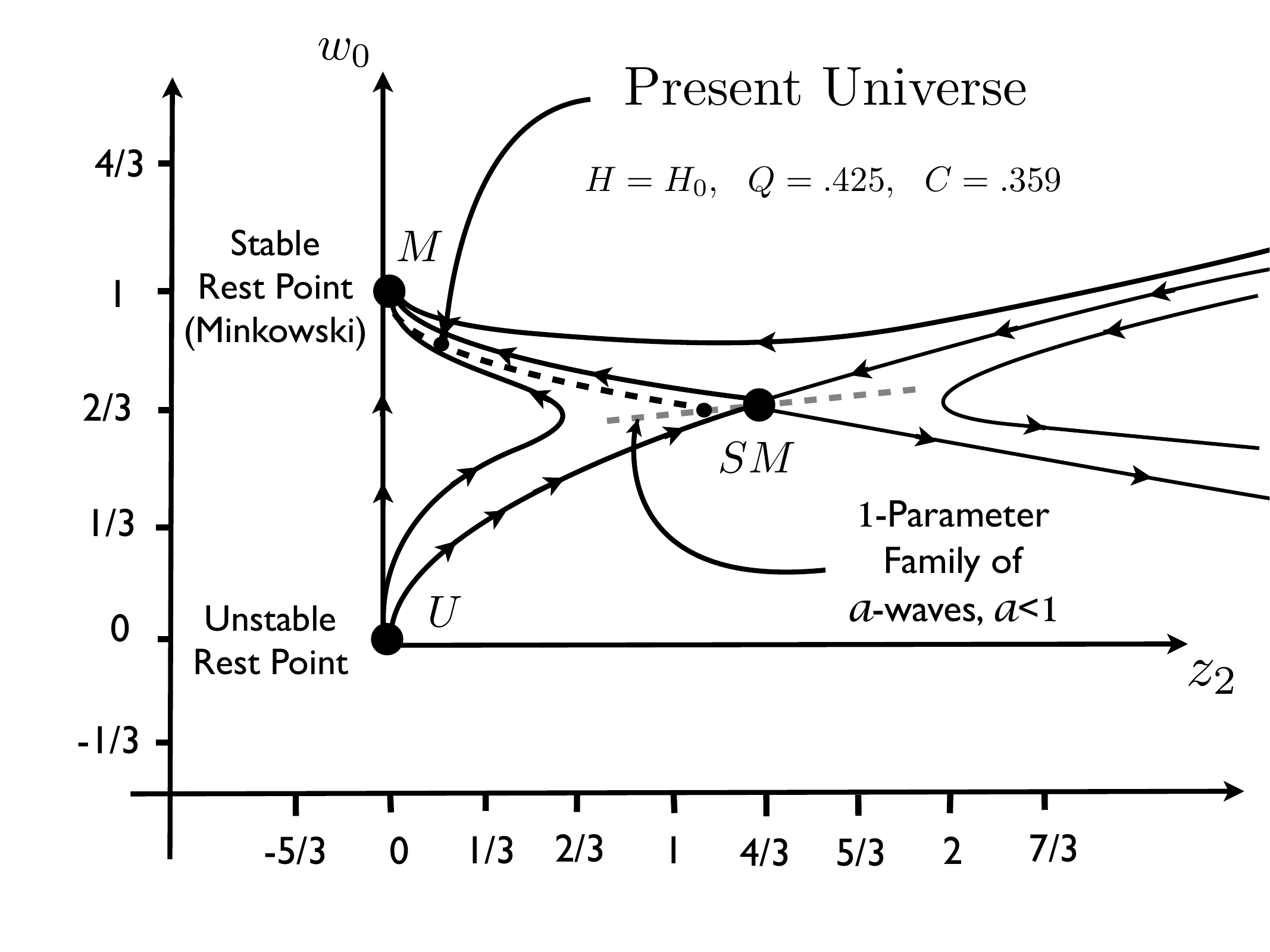}\label{Figure1}
\end{figure}

\begin{figure}
    \caption{The space $\mathcal{F}$ of solutions which decay to $M$.}
    \includegraphics[width=\textwidth]{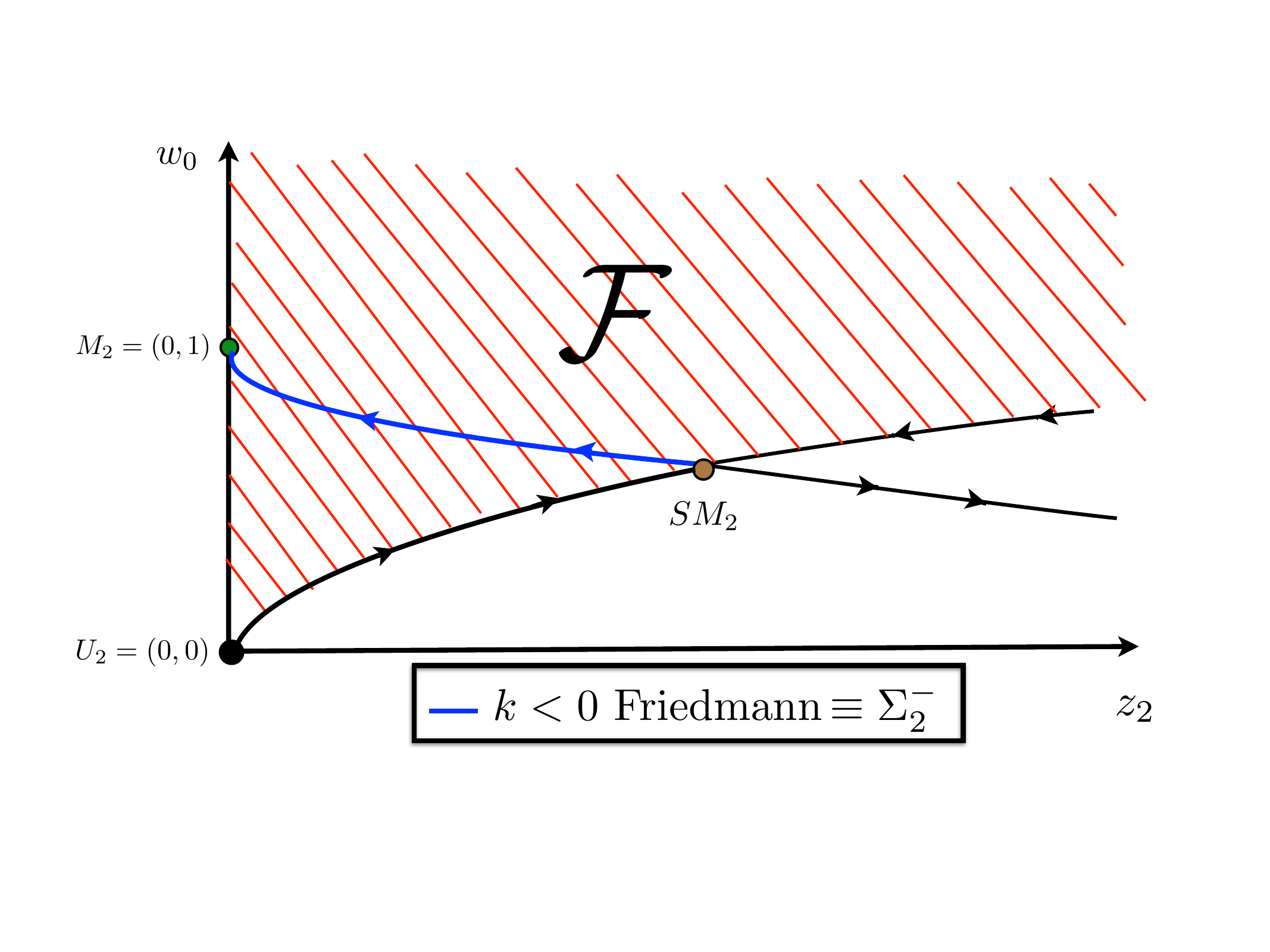}\label{Figure2}
\end{figure} 

The rest point $M$ describes the time asymptotic decay of solutions in $\mathcal{F}$ to Minkowski space as $t\to\infty$. A calculation shows that $M$ is a degenerate stable rest point with repeated eigenvalue $\lambda_M=-1$ and single eigenvector $\boldsymbol{R}_M=(0,1)^T$. Thus solutions in $\mathcal{F}$ decay to $M$ time asymptotically along the $w_0$-axis as $t\to\infty$. Moreover, as is standard for degenerate stable rest points with the character of $M$, $z_2(t)$ and $w_0(t)$ decay to $M$ at leading order like $O(t^{-1})$ and $O(t^{-1}\ln t)$ respectively. Thus, assuming solutions in $\mathcal{F}$ agree with Friedmann at leading order, but diverge at higher orders with errors estimated by Taylor's theorem, we can use the $n=1$ phase portrait of the STV-ODE alone to conclude that, to leading order as $t\to\infty$, every solution in $\mathcal{F}$ decays to $w=\frac{v}{\xi}=1$ and $z=0$ at the same rate for fixed $\xi$ and decays to Friedmann faster than to Minkowski for fixed $r$ (since $\xi\to0$). Theorem \ref{DegenerateM} below establishes that $M$ is a degenerate stable rest point in the phase portrait of the STV-ODE at every order $n\geq1$, exhibiting the same negative eigenvalue $\lambda_{M}=-1$ with a single eigenvector $\boldsymbol{R}_M$ at each order. The degenerate structure of rest point $M$ at all orders implies that the estimate for the discrepancy between a solution in $\mathcal{F}$ and the Friedmann spacetime it agrees with at leading order, is estimated by the discrepancy at second order as $t\to\infty$. We conclude that one would see perfect alignment between solutions in $\mathcal{F}$ and $k<0$ Friedmann at order $n=1$ and the error between them tends to zero by a factor $O(t^{-1})$ faster as $t\to\infty$ than what you would see without taking account of the decay of solutions to rest point $M$ at higher orders. The result, which uses standard rates of decay at degenerate stable rest points with the character of $M$, is recorded in the following theorem.

\begin{Thm}\label{decaytoM2}
    Imposing time since the Big Bang, each solution in $\mathcal{F}$ agrees exactly with a single $k<0$ Friedmann spacetime at leading order in the STV-ODE, Moreover, as $t\to\infty$:
    \begin{align}
        z &= z_2(t)\xi^2 + O(t^{-1})\xi^4,\\
        w &= w_0(t) + O(t^{-1}\ln t)\xi^2.
    \end{align}
    Using $z=\kappa\rho r^2+O(\xi^2)$ and $v=w\xi$, we conclude for a general solution in $\mathcal{F}$:
    \begin{align}
        \kappa\rho &= z_2(t)t^{-2} + O(t^{-5})r^2,\\
        v &= w_0(t)rt^{-1} + O(t^{-4}\ln t)r^3.
    \end{align}
    Thus, using the fact that for all solutions in $\mathcal{F}$, $w_0(t)$ and $z_2(t)$ agree at leading order with a Friedmann solution, the discrepancy between a general solution in $\mathcal{F}$ and the Friedmann solution it agrees with at leading order is estimated by:
    \begin{align}
        |\rho-\rho_F| &\leq O(t^{-3})\xi^2,\\
        |v-v_F| &\leq O(t^{-1}\ln t)\xi^3,
    \end{align}
    that is, exhibiting a faster decay rate by $O(\xi^2)$ with $\xi=\frac{r}{t}$ in the density than in the velocity at each fixed $r>0$ as $t\to\infty$. Furthermore, the rate of decay to Minkowski space is estimated by the rate of decay to $M$ at leading order, which is given by:
    \begin{align}
        |\rho| &\leq O(t^{-3})\\
        |w-1| &\leq O(t^{-1}\ln t),
    \end{align}
    as $t\to\infty$.
\end{Thm}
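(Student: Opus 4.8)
The plan is to obtain the theorem by combining three facts already established: the definition of $\mathcal{F}$, the identification of the unique trajectory from $SM$ to $M$ in the $n=1$ phase portrait with the one-parameter family of $k<0$ Friedmann spacetimes, and the degenerate structure of the rest point $M$ at every order recorded in Theorem \ref{DegenerateM}. The first assertion is then immediate: a solution lies in $\mathcal{F}$ precisely when its $n=1$ projection is the $SM$-to-$M$ trajectory, and imposing time since the Big Bang fixes the residual time-translation gauge, so the parameterization $\tau-\tau_0$ along that trajectory is determined, which pins down $\Delta_0=\ln t_0$ and hence a unique member of the $k<0$ Friedmann family, with which the $n=1$ approximation $(z_2(t),w_0(t))$ then coincides identically.

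Next I would record the decay rates at $M$. Linearizing (\ref{z2first1})--(\ref{w0first1}) at $M=(0,1)$ produces a single $2\times2$ Jordan block with eigenvalue $\lambda_M=-1$ and eigenvector $\boldsymbol{R}_M=(0,1)^T$; variation of parameters then gives $z_2(t)=O(t^{-1})$ and $w_0(t)-1=O(t^{-1}\ln t)$, the logarithm entering only the $w_0$-component because it is forced by the clean $t^{-1}$ decay of $z_2$, and a Gronwall bootstrap absorbs the quadratic nonlinearity, which is of strictly smaller order. By Theorem \ref{DegenerateM} the same degenerate picture --- eigenvalue $-1$ with a single eigenvector --- holds at $M$ in the STV-ODE of every order $n$, and since the order-$n$ system is linear in its top block $(z_{2n},w_{2n-2})$ with inhomogeneity built from the faster-decaying lower orders, the homogeneous part dictates the leading behavior, giving $z_{2n}(t)=O(t^{-1})$ and $w_{2n-2}(t)=O(t^{-1}\ln t)$ at every order. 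Inserting these into the expansions (\ref{zfirst1})--(\ref{wfirst1}), the part of the sums beyond the $\xi^4$ (resp.\ $\xi^2$) term is $O(t^{-1})\xi^6$ (resp.\ $O(t^{-1}\ln t)\xi^4$), which is absorbed into the displayed remainders for $|\xi|<1$ using convergence of the series in the analytic class.

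The passage to the fluid variables is algebraic. Inverting $z=\kappa\rho r^2/(1-v^2)$ with $v=w\xi$ gives $\kappa\rho=z(1-\xi^2w^2)/r^2$; substituting the expansions and using $\xi=r/t$ yields $\kappa\rho=z_2(t)t^{-2}+\big(z_4(t)-z_2(t)w_0(t)^2\big)r^2t^{-4}+\dots=z_2(t)t^{-2}+O(t^{-5})r^2$ and $v=w_0(t)rt^{-1}+O(t^{-4}\ln t)r^3$. For the discrepancy with the Friedmann spacetime I would use that, by the first assertion, $z_2(t)$ and $w_0(t)$ agree identically with their Friedmann counterparts, so the leading terms cancel in $\rho-\rho_F$ and $v-v_F$, leaving the $\xi^4$- and $\xi^3$-order contributions, which are $O(t^{-1})r^2t^{-4}=O(t^{-3})\xi^2$ and $O(t^{-1}\ln t)r^3t^{-3}=O(t^{-1}\ln t)\xi^3$ respectively. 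The decay-to-Minkowski estimates are then read off by fixing $r>0$ and letting $t\to\infty$ (so $\xi\to0$): $\kappa\rho=z_2(t)t^{-2}+\dots=O(t^{-3})$ since $z_2(t)=O(t^{-1})$, and $w-1=(w_0(t)-1)+\dots=O(t^{-1}\ln t)$.

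The main obstacle is the second step: one must verify that the decay rates $z_{2n}(t)=O(t^{-1})$ and $w_{2n-2}(t)=O(t^{-1}\ln t)$ hold uniformly enough across orders that the Taylor tail is summable for every fixed $|\xi|<1$, rather than only order by order with constants that could grow with $n$, and in particular that the lower-order inhomogeneities feeding the order-$n$ block do not accumulate extra logarithmic factors. This is exactly where the hypothesis that all solutions lie on bounded trajectories tending to $M$ at every order, together with mild growth control on the initial data at $t=t_0$, is used; the remainder of the argument is standard linear-ODE asymptotics at the known degenerate rest point $M$ of Theorem \ref{DegenerateM}, plus bookkeeping in the change of variables $z=\kappa\rho r^2/(1-v^2)$, $v=w\xi$.
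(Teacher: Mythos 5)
Your proposal is correct and follows essentially the same route as the paper, which derives this theorem from the degenerate (Jordan-block) structure of the rest point $M$ at orders $n=1,2$ recorded in Theorem \ref{DegenerateM}, the identification of the $SM$-to-$M$ trajectory with the $k<0$ Friedmann family under the time-since-the-Big-Bang gauge, and the algebraic change of variables $z=\kappa\rho r^2/(1-v^2)$, $v=w\xi$ with $\xi=r/t$. Your observation that the lower-triangular coupling at $M$ gives the clean $O(t^{-1})$ rate for the $z$-components (the logarithm entering only through $w$) is exactly the refinement needed to match the stated remainders, and your identified obstacle --- uniformity of the constants across orders for summing the Taylor tail --- is precisely what the paper defers to the restriction to an appropriate analytic solution class.
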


Note that in Theorem \ref{decaytoM2} the extra factor $t^{-1}$ in the density gives the faster rate of decay of $k<0$ Friedmann over the $O(t^{-2})$ decay rate known for the $k=0$ Friedmann spacetime. From this we conclude faster decay to Friedmann than to Minkowski and faster decay in the density than in the velocity.

The trajectory taking $SM$ to $M$ at order $n=1$ can be defined implicitly, which provides a canonical leading order evolution shared by all underdense solutions in $\mathcal{F}$, including $k<0$ Friedmann spacetimes. This spacetime is discussed in Subsection \ref{S8}. The rest point $SM$ persists to every order because the the critical Friedmann spacetime is self-similar at every order of the STV-ODE. Somewhat surprisingly, the crucial behavior of solutions in $\mathcal{F}$ emerges at order $n=2$.

\subsection{The STV-ODE Phase Portrait of Order $n\geq2$}\label{I5}

The nested property of the STV-ODE in (\ref{nbynsystemFirst}) implies that lower order eigenvalues of $SM$ persist to higher orders. We prove that two distinct additional eigenvalues always emerge at $SM$ in going from the STV-ODE of order $n-1$ to the STV-ODE of order $n$ for all $n\geq2$. These are given by the formulas:
\begin{align}
    \lambda_{An} &= \frac{2n}{3}, & \lambda_{Bn} &= \frac{1}{3}(2n-5).\label{eigenvaluesSM}
\end{align}
From (\ref{eigenvaluesSM}) we conclude that both eigenvalues $\lambda_{An}$ and $\lambda_{Bn}$ are positive except at orders $n=1$ and $n=2$. We argued above that $\lambda_{B1}$ is eliminated by changing to time since the Big Bang, an assumption equivalent to assuming a solution agrees with Friedmann at leading order $n=1$. At order $n=2$, $\lambda_{A2}=\frac{4}{3}>0$ and $\lambda_{B2}=-\frac{1}{3}<0$. A calculation also shows that at second order, the $k<0$ Friedmann solutions continue to lie on the trajectory associated with the leading order eigenvalue $\lambda_{A1}=\frac{2}{3}$. Recall that assuming time since the Big Bang eliminates the leading order negative eigenvalue by transforming the solution space to solutions which agree at order $n=1$ with either $SM$ or a trajectory in the unstable manifold of $SM$. The existence of the negative eigenvalue $\lambda_{B2}$ implies that $SM$ is an unstable saddle rest point with a one-dimensional stable manifold and a codimension one unstable manifold at each order $n\geq2$. Also recall that we let $\mathcal{F}_n'\subset\mathcal{F}_n$ denote the subset of solutions with trajectories in the unstable manifold of $SM$ identified as an $n-1$ dimensional space of trajectories taking $SM$ to $M$ in the phase portrait of the STV-ODE of order $n\geq2$. The appearance of one positive and one negative eigenvalue at order $n=2$, and only positive eigenvalues at higher orders, immediately implies the following theorem.

\begin{Thm}
    The unstable manifold $\mathcal{F}_n'\subset\mathcal{F}_n$ of $SM$ forms a codimension one set of trajectories in the STV-ODE at each order $n\geq2$ and a trajectory lies in $\mathcal{F}_n'$ at every order of the STV-ODE if and only if it lies in $\mathcal{F}_2'$. Moreover, trajectories in $\mathcal{F}'$ take $SM$ to $M$ at all orders of the STV-ODE, agree with a $k<0$ Friedmann spacetime at order $n=1$ in the limits $t\to0$ and $t\to\infty$ but are generically distinct from, and hence accelerate away from, $k<0$ Friedmann spacetimes at intermediate times in the phase portrait of the STV-ODE at every order $n\geq2$.
\end{Thm}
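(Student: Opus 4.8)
The plan is to assemble the theorem from several structural facts already placed in the excerpt, so that the statement reduces to a bookkeeping of eigenvalues at the rest point $SM$ together with the nested structure of the STV-ODE. First I would record that, by the nested property expressed in (\ref{nbynsystemFirst}) and the explicit $2\times2$ block
\[
\frac{d}{d\tau}\boldsymbol{v}_n = \left(\begin{array}{cc}
(2n+1)(1-w_0)-1 & -(2n+1)z_2\\
-\frac{1}{4n+2} & (2n+2)(1-w_0)-1
\end{array}\right)\boldsymbol{v}_n + \boldsymbol{q}_n,
\]
the linearization of the STV-ODE of order $n$ at $SM$ is block lower triangular: its spectrum is the spectrum at order $n-1$ together with the two eigenvalues of the above matrix evaluated at $SM=(z_2,w_0)=(\tfrac43,\tfrac23)$. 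A direct computation of that $2\times2$ block at $SM$ yields the two new eigenvalues $\lambda_{An}=\tfrac{2n}{3}$ and $\lambda_{Bn}=\tfrac13(2n-5)$ of (\ref{eigenvaluesSM}); I would carry this out once and cite it thereafter. Thus at order $n=2$ exactly one negative eigenvalue $\lambda_{B2}=-\tfrac13$ is introduced (the order-$1$ negative eigenvalue $\lambda_{B1}$ having been removed by passing to time since the Big Bang), and for every $n\ge 3$ both new eigenvalues are strictly positive. Consequently $SM$, as represented in $\mathcal{F}_n$, is a hyperbolic saddle whose stable manifold is one-dimensional for every $n\ge 2$, and whose unstable manifold $\mathcal{F}_n'$ has codimension one inside $\mathcal{F}_n$; this gives the first assertion.

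Next I would prove the "iff" statement. Because no new negative eigenvalue appears beyond $n=2$, the stable manifold of $SM$ inside $\mathcal{F}_n$ is, for every $n\ge 2$, spanned by the unique eigendirection of $\lambda_{B2}$ living in the order-$2$ block, lifted trivially through the higher blocks (the lift is unique and well-defined precisely because the added blocks contribute only positive eigenvalues, so there is no ambiguity in completing the stable subspace). A trajectory lies in $\mathcal{F}_n'$ iff its order-$2$ projection avoids this stable direction, i.e. iff its order-$2$ projection lies in $\mathcal{F}_2'$; passing to the limit $n\to\infty$ on analytic solutions (justified as in the excerpt's footnote by the boundedness of all trajectories, which tend to $M$) gives that a solution lies in $\mathcal{F}'$ iff its $n=2$ truncation lies in $\mathcal{F}_2'$. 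For the claim that trajectories in $\mathcal{F}'$ take $SM$ to $M$ at all orders: membership in $\mathcal{F}_n'$ means the $\omega$-limit as $\tau\to-\infty$ is $SM$; that they reach $M$ as $\tau\to+\infty$ follows from Theorem \ref{decaytoM2} together with the fact, invoked from the excerpt, that $M$ is a degenerate stable rest point at every order (Theorem \ref{DegenerateM}) whose basin contains every bounded forward trajectory in $\mathcal{F}_n$.

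It remains to establish the genericity/acceleration clause, which I expect to be the main obstacle. The agreement of $k<0$ Friedmann with the order-$1$ trajectory from $SM$ to $M$ is by definition of $\mathcal{F}$, and agreement in the limits $t\to0$ and $t\to\infty$ at order $n=1$ is immediate from that trajectory's endpoints $SM$ and $M$. The substantive point is that $k<0$ Friedmann is \emph{not} the whole of $\mathcal{F}_n'$ for $n\ge 2$: one must exhibit solutions in $\mathcal{F}_n'$ distinct from Friedmann. For this I would use the exact Friedmann formulas (cited in the excerpt) to verify that, at order $n=2$, the $k<0$ Friedmann curve is an eigensolution associated with the \emph{leading} eigenvalue $\lambda_{A1}=\tfrac23$, with the value of $k$ (equivalently $\Delta_0=\ln t_0$) generated by the $\tau$-translation freedom along that one trajectory. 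Since $\lambda_{A1}=\tfrac23 \ne \lambda_{A2}=\tfrac43$, the eigensolution of $\lambda_{A2}$ gives a one-parameter family of trajectories in $\mathcal{F}_2'$ transverse to the Friedmann curve at $SM$; because $\lambda_{A1}<\lambda_{A2}$, generic trajectories in the codimension-one unstable manifold $\mathcal{F}_2'$ leave $SM$ tangent to the $\lambda_{A1}$-direction (Friedmann) but then separate from it at $t>0$, and by the "iff" result this separation persists to every order $n\ge 2$. The genuine difficulty is controlling this intermediate-time divergence \emph{globally} rather than merely infinitesimally at $SM$: one must ensure that a generic trajectory in $\mathcal{F}_n'$ does not happen to re-intersect the Friedmann curve before reaching $M$. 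I would handle this by a dimension count — the Friedmann curve is a single trajectory (a one-dimensional set), $\mathcal{F}_n'$ has dimension $n-1\ge 1$, and the forward flow to $M$ is injective on trajectories — so the set of trajectories in $\mathcal{F}_n'$ that coincide with Friedmann at any intermediate time is the lower-dimensional Friedmann curve itself, whence "generic" solutions in $\mathcal{F}'$ are distinct from, and therefore accelerate away from, $k<0$ Friedmann at intermediate times. This completes the proof.
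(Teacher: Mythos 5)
Your proposal is correct and follows essentially the same route as the paper: the theorem is presented there as an immediate consequence of the eigenvalue bookkeeping at $SM$ (two new eigenvalues $\lambda_{An}=\frac{2n}{3}$, $\lambda_{Bn}=\frac{1}{3}(2n-5)$ per order from the nested block structure, with $\lambda_{B2}$ the unique negative one above order $n=1$), combined with the decay-to-$M$ result at all orders and the fact that the $k<0$ Friedmann trajectory is a pure $\lambda_{A1}$-eigensolution while $\lambda_{A2}$ contributes an independent unstable direction. The only remark worth making is that your closing worry about a generic trajectory ``re-intersecting'' the Friedmann curve is already settled by uniqueness for autonomous systems — distinct trajectories never meet — so the dimension count is unnecessary.
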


It follows from the theory of non-degenerate hyperbolic rest points that all solution trajectories in $\mathcal{F}_n'$ emerge tangent to the eigendirection associated with the smallest positive eigenvalue at $SM$. Of course at order $n=1$ the leading order eigenvalue associated with the solution trajectory of the $k<0$ Friedmann spacetime is the smallest eigenvalue and this remains the smallest positive eigenvalue at $n=2$. However, an eigenvalue smaller than this emerges at order $n=3$ namely, $\lambda_{B3}=\frac{1}{3}$. Thus solutions in $\mathcal{F}_n'$ generically emerge tangent to the leading order eigendirection of the $k<0$ Friedmann spacetimes only up to order $n=2$, but all solutions in $\mathcal{F}_n'$ which have a component of $\lambda_{B3}$ will emerge from $SM$ tangent to its eigenvector $\boldsymbol{R}_{B3}$, which is not tangent to the Friedmann direction $\boldsymbol{R}_{A1}$, corresponding to the leading order eigenvalue $\lambda_{A1}=\frac{2}{3}$ at $SM$. From this we establish that although all the solutions in $\mathcal{F}_n$ tend to rest point $M$ as $t\to\infty$, solutions in $\mathcal{F}_n$ in the complement of $\mathcal{F}_n'$, that is, those that miss $SM$ in backward time, follow the backward stable manifold of $SM$, consistent with the standard phase portrait picture of a non-degenerate saddle rest point. Since the only negative eigenvalue of $SM$ enters at order $n=2$, we can conclude that any solution that lies in the unstable manifold of $SM$ in the phase portrait of the STV-ODE at order $n=2$, also lies in the unstable manifold of $SM$ at all higher orders $n\geq3$. In this sense, the unstable manifold of $SM$ is characterized at order $n=2$. Note that all solutions of the STV-ODE in $\mathcal{F}$ which lie in the unstable manifold of $SM$, take $SM$ to $M$ in the phase portrait of the STV-ODE at all orders, and hence are bounded for all time $0\leq t\leq\infty$. Trajectories not in the unstable manifold of $SM$ miss $SM$ in backwards time in every STV-ODE of order $n\geq2$ and tend in backward time instead to the stable manifold associated with the unique negative eigenvalue, that is, a single trajectory. Thus $\mathcal{F}$, which consists of the domain of attraction of $M$ at every order, contains trajectories which do not emanate from $SM$, and hence correspond to a Big Bang at $t=0$ which is qualitatively different from the self-similar blow-up $t\to0$ at $SM$, and hence qualitatively different from the Big Bang observed in Friedmann spacetimes. An immediate conclusion of this analysis is a rigorous characterization the self-similar nature of the Big Bang in general spherically symmetric smooth solutions to the Einstein field equations when $p=0$.

\begin{Thm}\label{SSBigBang}
    When time is taken to be time since the Big Bang, solutions in $\mathcal{F}$ always exhibit self-similar blow-up in the $n=1$ phase portrait of the STV-ODE but will generically exhibit non-self-similar blow-up at all higher orders $n\geq2$.
\end{Thm}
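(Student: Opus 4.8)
The plan is to recast \emph{self-similar blow-up at order $n$} as a statement about the backward-time limit ($\tau=\ln t\to-\infty$) of the order-$n$ solution trajectory. A solution of the STV-PDE is self-similar precisely when it is independent of $t$, i.e.\ a rest point of the STV-ODE; accordingly I say a solution in $\mathcal{F}$ exhibits self-similar blow-up at order $n$ when its order-$n$ trajectory converges, as $\tau\to-\infty$, to a rest point of the STV-ODE of order $n$. Once time is taken to be time since the Big Bang, the order-$1$ projection of any solution in $\mathcal{F}$ is, by the definition of $\mathcal{F}$ together with Theorem \ref{decaytoM2}, the unique trajectory of the unstable manifold of $SM$ connecting $SM$ to $M$, so that projection converges to $SM$ as $\tau\to-\infty$ and to $M$ as $\tau\to+\infty$ (the latter using Theorem \ref{decaytoM2} and Theorem \ref{DegenerateM}). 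Hence the only rest point to which the order-$n$ trajectory of a solution in $\mathcal{F}$ can possibly converge in backward time is $SM$ itself, and therefore \emph{self-similar blow-up at order $n$ is equivalent to the order-$n$ trajectory lying in the unstable manifold of $SM$}, i.e.\ to membership in $\mathcal{F}_n'$.

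The $n=1$ half is then immediate: when time is time since the Big Bang, every member of $\mathcal{F}$ has order-$1$ trajectory equal to the single trajectory in the unstable manifold of $SM$ connecting $SM$ to $M$, hence converges to $SM$ as $\tau\to-\infty$; since the coordinates $(\tfrac{4}{3},\tfrac{2}{3})$ of $SM$ are exactly the leading self-similar expansion coefficients of the critical $k=0$ Friedmann spacetime, this is genuine self-similar blow-up, and since this holds for \emph{every} solution in $\mathcal{F}$ we obtain the word ``always''.

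For $n\ge2$ I would invoke the eigenvalue formulas (\ref{eigenvaluesSM}): $\lambda_{B2}=\tfrac{1}{3}(2\cdot2-5)=-\tfrac{1}{3}$ is the unique negative eigenvalue of $SM$ appearing at any order once the leading eigenvalue $\lambda_{B1}$ has been removed by the gauge choice of time since the Big Bang, while $\lambda_{An}=\tfrac{2n}{3}>0$ for all $n\ge1$ and $\lambda_{Bn}=\tfrac{1}{3}(2n-5)>0$ for all $n\ge3$. Consequently, within $\mathcal{F}$ the rest point $SM$ has a single stable direction at each order $n\ge2$, so $\mathcal{F}_n'$ is a codimension-one set of trajectories in $\mathcal{F}_n$ (the preceding theorem). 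Moreover, convergence of an order-$n$ trajectory to $SM$ forces its order-$2$ restriction to converge to $SM$ and hence to lie in $\mathcal{F}_2'$; so any solution of $\mathcal{F}$ lying outside the codimension-one set $\mathcal{F}_2'$ has, for every $n\ge2$, its order-$n$ trajectory outside $\mathcal{F}_n'$, and by the equivalence of the first paragraph its order-$n$ trajectory fails to converge to a rest point as $\tau\to-\infty$. This is precisely non-self-similar blow-up at all orders $n\ge2$ for a generic solution in $\mathcal{F}$, which proves the theorem.

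The step I expect to be the main obstacle is not the eigenvalue bookkeeping but certifying that ``the order-$n$ trajectory fails to converge to $SM$'' is a \emph{substantive} non-self-similarity statement: one must show that a trajectory in $\mathcal{F}_n\setminus\mathcal{F}_n'$ extends to all $\tau\in(-\infty,\infty)$ rather than escaping in finite backward $\tau$, and that its backward $\alpha$-limit set is the single trajectory forming the stable manifold of $SM$, which is itself not a rest point, so the limiting profile depends on $\xi$ in a genuinely $\tau$-dependent way. This is where I would import the global phase-portrait analysis of the STV-ODE of order $n$: linearity of that system in its highest-order variables yields global existence in $\tau$, and the uniform bounds coming from decay to $M$ as $t\to\infty$ confine the trajectories and pin down the backward limit set as described in the preceding discussion.
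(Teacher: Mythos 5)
Your proposal is correct and follows essentially the same route as the paper: the $n=1$ claim is the definition of $\mathcal{F}$ once time since the Big Bang is imposed (the order-$1$ trajectory is the $SM$-to-$M$ connecting orbit, so its backward limit is the self-similar rest point $SM$), and the $n\geq2$ claim rests on the unique negative eigenvalue $\lambda_{B2}=-\tfrac{1}{3}$ making $SM$ a saddle with codimension-one unstable manifold $\mathcal{F}_n'$ determined at order $n=2$, so generic trajectories miss $SM$ in backward time and instead track the backward stable-manifold trajectory, exactly as in Theorem \ref{UnstableManifoldOfSM-} Case (i). The global-existence and confinement issues you flag are handled in the paper the same way you propose, via the linear inhomogeneous structure of the higher-order equations and the decay to $M$ (Theorem \ref{ThmBigStability}).
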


Since all eigenvalues which emerge at $SM$ at orders above $n=2$ are positive, the unstable manifold of $SM$, and the entire phase portrait of the STV-ODE at higher orders, is determined from the $STV-ODE$ phase portrait at order $n=2$. In particular, the unstable manifold of $SM$ is determined at order $n=2$ in the sense that a solution in $\mathcal{F}$ lies in the unstable manifold of $SM$ at all orders $n\geq1$ if and only if it lies in the unstable manifold of $SM$ at order $n=2$. We now describe the phase portrait of the STV-ODE at order $n=2$ in detail.

\subsection{The STV-ODE Phase Portrait of Order $n=2$}\label{S1.6}

The corrections to $k<0$ Friedmann accounted for by solutions in $\mathcal{F}$ at order $n=2$ are most important as this is the order in which a negative eigenvalue emerges at $SM$ and also the leading order in which divergence from Friedmann spacetimes is observed. The order $n=2$ is important because it determines $w_2(t)\xi^2$, which provides the third order correction to redshift vs luminosity, the correction at the order of the anomalous acceleration of the galaxies which are purportedly accounted for by dark energy in the standard $\Lambda$CDM model of Cosmology \cite{SmolTeVo}.

The STV-ODE of order $n=2$ is the $4\times4$ system:\footnote{Note that this corrects an error in \cite{SmolTeVo} in the terms involving $z_4$, a mistake at fourth order in $\xi$ which did not affect the conclusions.}
\begin{align}
    t\dot{z}_2 &= 2z_2 - 3z_2w_0,\label{z2first2}\\
    t\dot{w}_0 &= -\frac{1}{6}z_2 + w_0 - w_0^2,\label{w0first2}\\
    t\dot{z}_4 &= \frac{5}{12}z_2^2w_0 - 5w_0z_4 + 4z_4 - 5z_2w_2,\label{z4first2}\\
    t\dot{w}_2 &= -\frac{1}{24}z_2^2 + \frac{1}{4}z_2w_0^2 - \frac{1}{10}z_4 - 4w_0w_2 + 3w_2.\label{w2first2}
\end{align}
Note first that the STV-ODE of order $n=1$ appears as the subsystem (\ref{z2first2})--(\ref{w0first2}). Viewed as a $4\times4$ autonomous system, (\ref{z2first2})--(\ref{w2first2}) admits the three rest points: $U=(0,0,0,0)$, $M=(0,1,0,0)$ and $SM=(\frac{4}{3},\frac{2}{3},\frac{40}{27},\frac{2}{9})$. Imposing time since the Big Bang restricts the solution space to solutions in the unstable manifold of $SM$ at order $n=1$ and this eliminates $U$ from the solution space. A calculation gives eigenvalues and eigenvectors of rest point $M$ and $SM$ in the STV-ODE of order $n=4$ as:
\begin{align*}
    \lambda_M &= -1, & \boldsymbol{R}_M &= (0,1,0,1)^T;\\
    \lambda_{A1} &= \frac{2}{3}, & \boldsymbol{R}_1 := \boldsymbol{R}_{A1} &= \bigg(-9,\frac{3}{2},-\frac{10}{3},-1\bigg)^T;\\
    \lambda_{B1} &= -1, & \boldsymbol{R}_{B1} &= \bigg(4,1,\frac{80}{9},1\bigg)^T;\\
    \lambda_{A2} &= \frac{4}{3}, & \boldsymbol{R}_3 := \boldsymbol{R}_{A2} &= (0,0,-10,1)^T;\\
    \lambda_{B2} &= -\frac{1}{3}, & \boldsymbol{R}_{B2} &= \bigg(0,0,\frac{20}{3},1\bigg)^T,
\end{align*}
where $\lambda_M$ and $\boldsymbol{R}_M$ correspond to the eigenvalue and eigenvector of $M$ and the rest to $SM$. The phase portrait for solutions of (\ref{z2first2})--(\ref{w2first2}) is depicted in Figure \ref{Figure3}. Note that $SM$ and $M$ are referred to as $SM_2$ (and $SM_4$) and $M_2$ (and $M_4$) in Figure \ref{Figure3} respectively to indicate that the fixed points are those for the $2\times2$ (and $4\times4$) system.

\begin{figure}
    \caption{The phase portrait for the $4\times4$ system.}
    \includegraphics[width=\textwidth]{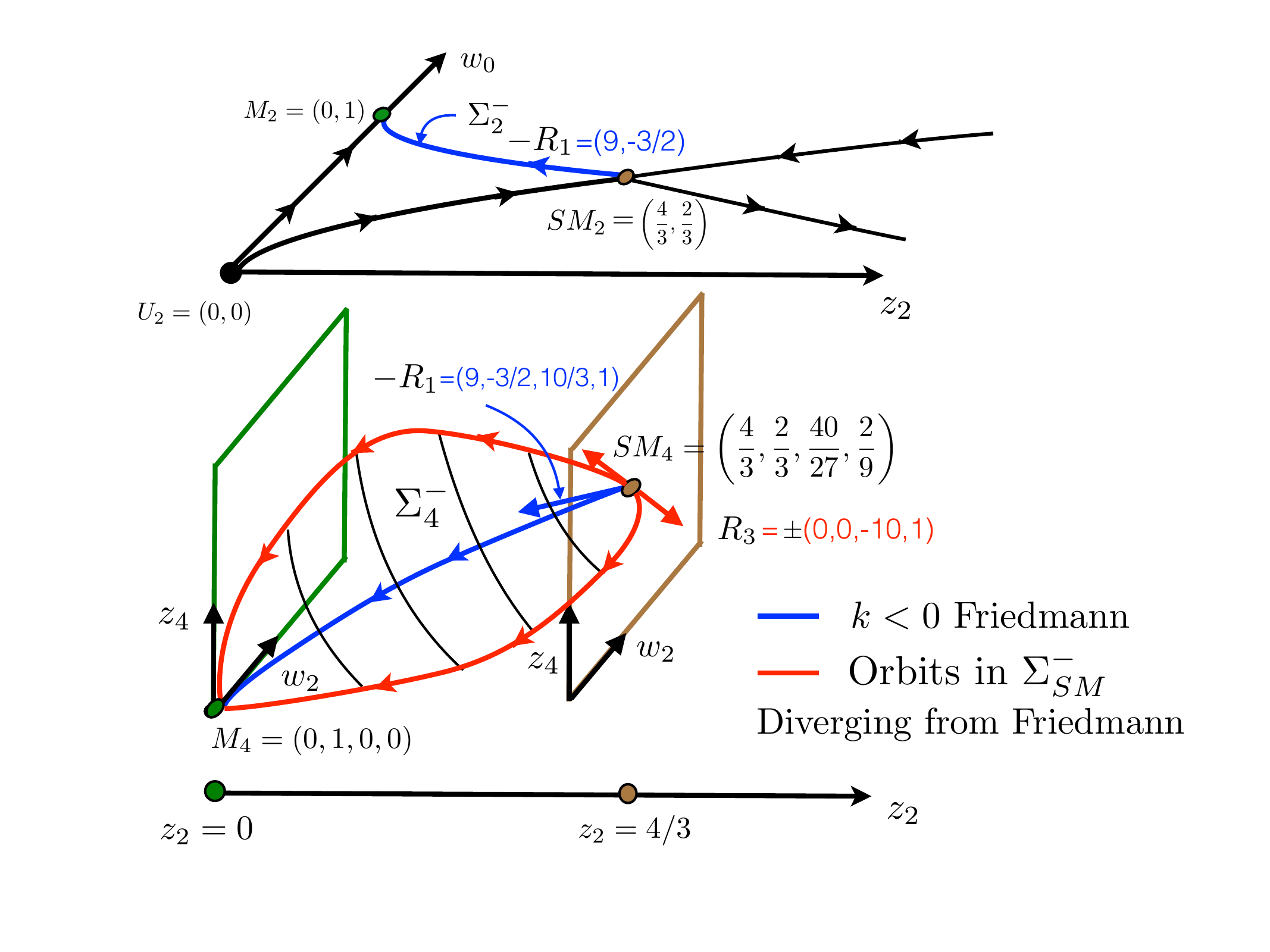}\label{Figure3}
\end{figure}

The time since the Big Bang gauge choice is assumed in Figure \ref{Figure3}, so all elements of $\mathcal{F}$ agree with $k<0$ Friedmann at leading order, that is, they lie on the unstable trajectory taking $SM$ to $M$ in the leading order phase portrait associated with (\ref{z2first2})--(\ref{w0first2}). This is denoted by $\Sigma_{SM}^-$ in Figure \ref{Figure3}, with $\Sigma_2^-$ and $\Sigma_4^-$ specifying the unstable manifold for the $2\times2$ and $4\times4$ systems respectively. The phase portraits of Figures \ref{Figure1} and \ref{Figure3} are consistent because the first two components of $-\boldsymbol{R}_1$ give the direction of the trajectory which connects $SM_2$ to $M_2$ at level $n=1$ and the second two components represent the higher order corrections. The projections of the $k<0$ Friedmann solutions onto solutions of the STV-ODE of orders $n=1$ and $n=2$ are represented by the blue curves in Figure \ref{Figure3}. Note that the presence of a second positive eigenvalue $\lambda_{A2}=\frac{4}{3}$ implies the unstable manifold of $SM$ intersects $\mathcal{F}$ in a two dimensional space of trajectories emanating from $SM$. We prove that only the eigensolution of $\lambda_{A1}$ corresponds to the Friedmann spacetime at order $n=2$. This immediately implies that there exist solutions in the unstable manifold of $SM$ which agree with a $k<0$ Friedmann spacetime at order $n=1$ but diverge from Friedmann at intermediate times. Since all solutions in $\mathcal{F}$ decay to $M$ as $t\to\infty$, this implies that solutions in the unstable manifold of $SM$ agree with $k<0$ Friedmann in the limits $t\to0$, $t\to\infty$ and at leading order $n=1$, but which diverge, and hence introduce accelerations away from Friedmann, at intermediate times. By the Hartman--Grobman Theorem, nonlinear solutions correspond to linearized solutions in a neighborhood of a rest point, so solutions in the unstable manifold of $SM$ are determined by their limiting eigendirection $\boldsymbol{R}=a\boldsymbol{R}_1+b\boldsymbol{R}_3$ at $SM$, and hence we conclude that the magnitude of the acceleration away from Friedmann is measured by $\frac{b}{a}$, which can be arbitrarily large. We state this precisely in the following theorem.

\begin{Thm}[Partial statement of Theorem \ref{UnstableManifoldOfSM-}]\label{UnstableManifoldOfSMIntro}
    All solutions in the unstable manifold of $SM$ at order $n=2$ leave $SM$ tangent to
    \begin{align}
        \boldsymbol{U}(\tau) = a e^{\lambda_{A1}(\tau-\tau_0)}\boldsymbol{R}_{A1} + b e^{\lambda_{A2}(\tau-\tau_0)}\boldsymbol{R}_{A2},
    \end{align}
    where
    \begin{align}
        \lambda_{A1} &= \frac{2}{3}, & \boldsymbol{R}_{A1} &= \left(\begin{array}{c}
        9\\
        -\frac{3}{2}\\
        \frac{10}{3}\\
        1
        \end{array}\right); & \lambda_{A2} &= \frac{4}{3}, & \boldsymbol{R}_{A2} &= \left(\begin{array}{c}
        0\\
        0\\
        -10\\
        1\end{array}\right);
    \end{align} 
    are the eigenpairs spanning the unstable manifold of the linearization of the $4\times4$ system of STV-ODE (\ref{z2first2})--(\ref{w2first2}) about the rest point $SM$ and $a$ and $\tau_0$ are fixed constants determined by the unique $k\neq0$ Friedmann spacetime at $n=1$. The constant $b$ is then a second free parameter which describes the instability of the $k\neq 0$ Friedmann spacetime at $SM$ at order $n=2$.
\end{Thm}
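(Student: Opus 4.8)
The plan is to reduce the statement to a short linear-algebra computation at $SM$ followed by a standard invariant-manifold argument. Write the order $n=2$ system (\ref{z2first2})--(\ref{w2first2}) autonomously in $\tau=\ln t$ as $\frac{d}{d\tau}\boldsymbol{U}=\boldsymbol{F}(\boldsymbol{U})$ with $\boldsymbol{U}=(z_2,w_0,z_4,w_2)^{T}$, and linearize about $SM=(\tfrac43,\tfrac23,\tfrac{40}{27},\tfrac29)$. The structural fact driving everything is that $D\boldsymbol{F}(SM)$ is block lower-triangular: the $(z_2,w_0)$ equations are the exact $n=1$ subsystem and supply the top-left $2\times2$ block, the top-right block vanishes identically because (\ref{z2first2})--(\ref{w0first2}) do not involve $z_4,w_2$, and the $(z_4,w_2)$ equations supply a lower-left coupling block together with the bottom-right $2\times2$ block. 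Hence $\mathrm{spec}\,D\boldsymbol{F}(SM)$ is the union of the two diagonal blocks' spectra. The top block has trace $-\tfrac13$ and determinant $-\tfrac23$, giving $\lambda_{A1}=\tfrac23$ and $\lambda_{B1}=-1$; the bottom block has trace $1$ and determinant $-\tfrac49$, giving $\lambda_{A2}=\tfrac43$ and $\lambda_{B2}=-\tfrac13$, in agreement with (\ref{eigenvaluesSM}) at $n=2$. All four eigenvalues are real and nonzero, so $SM$ is a hyperbolic saddle with a two-dimensional unstable eigenspace $E^{u}$.

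Next I would extract the two unstable eigenvectors. Because $\lambda_{A2}=\tfrac43$ is not an eigenvalue of the top block, block-triangularity forces the first two components of $\boldsymbol{R}_{A2}$ to vanish, and solving the bottom block gives $\boldsymbol{R}_{A2}=(0,0,-10,1)^{T}$. For $\lambda_{A1}=\tfrac23$ the first two components are the unique-up-to-scale null vector of the shifted top block, a multiple of $(9,-\tfrac32)$; substituting into rows three and four of $(D\boldsymbol{F}(SM)-\tfrac23 I)\boldsymbol{R}=0$ and back-substituting yields $\boldsymbol{R}_{A1}=(9,-\tfrac32,\tfrac{10}{3},1)^{T}$, matching the eigenvector $\boldsymbol{R}_1$ recorded above up to an overall sign. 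Since $0<\lambda_{A1}<\lambda_{A2}$, on the backward flow the $\boldsymbol{R}_{A1}$ component dominates unless it is absent.

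The invariant-manifold step is then routine. By the unstable manifold theorem $SM$ has a two-dimensional local unstable manifold $W^{u}(SM)$ tangent to $E^{u}$, every trajectory of which tends to $SM$ as $\tau\to-\infty$; writing the reduced flow on $W^{u}(SM)$ in $E^{u}$-coordinates and using that $D\boldsymbol{F}(SM)|_{E^{u}}$ is diagonalizable with eigenpairs $(\lambda_{A1},\boldsymbol{R}_{A1})$ and $(\lambda_{A2},\boldsymbol{R}_{A2})$, one obtains that every solution of the order $n=2$ STV-ODE in the unstable manifold of $SM$ satisfies $\boldsymbol{U}(\tau)-SM = a\,e^{\lambda_{A1}(\tau-\tau_0)}\boldsymbol{R}_{A1} + b\,e^{\lambda_{A2}(\tau-\tau_0)}\boldsymbol{R}_{A2} + (\text{lower order as }\tau\to-\infty)$ for constants $a,b,\tau_0$, which is the asserted tangency.

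It remains to identify the free parameters. A solution in $\mathcal{F}$ with $t$ taken to be time since the Big Bang projects under $(z_2,w_0,z_4,w_2)\mapsto(z_2,w_0)$ onto the unique trajectory in the unstable manifold of $SM$ at order $n=1$ joining $SM$ to $M$, i.e.\ onto a definite $k<0$ Friedmann spacetime parameterized by $\Delta_0$; since the $(z_2,w_0)$-subsystem is exact rather than linearized this fixes the whole history of $(z_2,w_0)$, and since the $n=1$ unstable manifold is one-dimensional it pins the $\boldsymbol{R}_{A1}$-coefficient, so $a$ and $\tau_0$ are determined by the chosen Friedmann spacetime. Because $\boldsymbol{R}_{A2}$ has vanishing first two entries it is invisible to the $n=1$ projection, so the Friedmann constraint leaves $b$ free; the ratio $b/a$ then measures the divergence of the $(z_4,w_2)$-correction from the Friedmann eigendirection $\boldsymbol{R}_{A1}$, i.e.\ the instability of the $k\neq0$ Friedmann spacetime at $SM$ at order $n=2$, and it can be taken arbitrarily large. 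The main obstacle is not a hard estimate but the care in this last step: one must show that ``agreeing with $k<0$ Friedmann at order $n=1$'' selects precisely $a,\tau_0$ and genuinely decouples from $b$, and control the remainder so that ``leaves $SM$ tangent to $\boldsymbol{U}(\tau)$'' is a bona fide asymptotic statement — both resting on uniqueness of the one-dimensional $n=1$ unstable manifold and the asymptotic linearity of the backward flow on $W^{u}(SM)$. One should also check the resonance $\lambda_{A2}=2\lambda_{A1}$, which could in principle produce a secular $(\tau-\tau_0)e^{\lambda_{A2}(\tau-\tau_0)}$ correction in the normal form along $\boldsymbol{R}_{A2}$; since $\lambda_{A2}>\lambda_{A1}$ this leaves the tangency claim intact but bears on the precise functional form.
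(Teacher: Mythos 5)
Your proposal is correct and follows essentially the same route as the paper: compute $d\boldsymbol{F}(SM)$ for the $4\times4$ system, exploit its block lower-triangular structure to read off the eigenpairs $(\lambda_{A1},\boldsymbol{R}_{A1})$ and $(\lambda_{A2},\boldsymbol{R}_{A2})$, and invoke the Hartman--Grobman/unstable-manifold theorem so that trajectories in the unstable manifold are asymptotic in backward time to $a e^{\lambda_{A1}(\tau-\tau_0)}\boldsymbol{R}_{A1}+b e^{\lambda_{A2}(\tau-\tau_0)}\boldsymbol{R}_{A2}$, with the vanishing of the first two components of $\boldsymbol{R}_{A2}$ showing that the $n=1$ Friedmann matching fixes $a$ and $\tau_0$ while leaving $b$ free. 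Your added remark on the resonance $\lambda_{A2}=2\lambda_{A1}$ is a reasonable extra precaution not discussed in the paper, but it does not change the argument.
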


Note that the smallest positive eigenvalue to emerge at any order at $SM$ is $\lambda_{B3}=\frac{1}{3}$ and since all trajectories in $\mathcal{F}$ have a non-zero component of $\boldsymbol{R}_{A1}$ by definition, we can further conclude that all trajectories in the unstable manifold of $SM$ are tangent to $\boldsymbol{R}_{A1}$ in backward time at $SM$ in the phase portrait of the STV-ODE of order $n=2$ but come in tangent to $\boldsymbol{R}_{B3}$ at $SM$ in the portraits of the STV-ODE at all higher orders. We show in Theorem \ref{ThmPure} that the $k<0$ Friedmann solution has no components in direction $\boldsymbol{R}_{B3}$ and that by the nested structure of the STV-ODE, $\boldsymbol{R}_{Bn}$ has non-zero components in only leading order entries.

The existence of a unique negative eigenvalue $\lambda_{B2}=-\frac{1}{3}$ at order $n=2$ implies that $SM$ is an unstable saddle rest point, but not an unstable source. From this we conclude that not all solutions in $\mathcal{F}$ lie in the unstable manifold of $SM$, even though they decay time asymptotically to $M$ as $t\to\infty$. Since $SM$ is a saddle rest point, backward time trajectories in $\mathcal{F}$ starting near $SM$ will not typically tend to $SM$, but rather, indicative of the standard phase portrait of a saddle rest point, will generically follow the backward time trajectory of the stable manifold at $SM$. This implies the Big Bang is self-similar like the critical ($k=0$) Friedmann spacetime only at leading order $n=1$ but generically not self-similar at higher orders, as recorded in Theorem \ref{SSBigBang} above. We conclude that the time evolution of perturbations of $SM$ becomes indistinguishable from $k<0$ Friedmann solutions at late times after the Big Bang, agrees exactly with the same $k<0$ Friedmann solution in the leading order phase portrait, including the limits $t\to0$ and $\infty$, but introduce anomalous accelerations away from $k<0$ Friedmann spacetimes at intermediate times, starting at order $n=2$. This provides a rigorous mathematical framework and mechanism for determining and explaining the source of the corrections to redshift vs luminosity computed numerically in \cite{SmolTeVo}, that is, created by the instability of $SM$.

The new parameter $\beta$ associated with $(\lambda_{A2},\boldsymbol{R}_{A2})$ naturally introduces accelerations away from Friedmann spacetimes at order $\xi^2$ in $w$, and hence order $\xi^3$ in the velocity $v$. These mimic the effects of a cosmological constant at third order in redshift factor $\rm{z}$ vs luminosity distance $d_\ell$, as measured from the center \cite{SmolTeVo}. This is the order of the discrepancy between the prediction of Friedmann spacetimes with a cosmological constant and Friedmann spacetimes without one. According to Figure \ref{Figure3}, at late times after the Big Bang we should expect to observe spacetimes close to $k<0$ Friedmann, but not $k=0$. Moreover, perturbations from $k<0$ Friedmann spacetimes, including perturbations of $SM$ on the $k<0$ side of $\mathcal{F}$ at early times after the Big Bang, diverge from $k<0$ Friedmann spacetimes at intermediate times before they decay back to $k<0$ Friedmann at late times. Regarding the intermediate times, the new free parameter $\beta$ associated with the unstable manifold of the critical $k=0$ Friedmann spacetime ($SM$) is not present in pure $k<0$ Friedmann spacetimes and this effect appears to mimic the effects of a cosmological constant at the order (third order in redshift factor looking out from the center) at which the predictions of a cosmological constant diverge from the predictions of the Friedmann spacetimes without one. Said differently, this theory identifies a one parameter family of corrections to Friedmann at order $n=2$, with further corrections to Friedman determined by the positive eigenvalues of $SM$ at higher orders, the higher the order the smaller the correction near the center. 

More generally, it was proven in \cite{SmolTeVo} that the order in redshift factor in the relation between redshift and luminosity looking out from the center of a spherically symmetric spacetime, is at the same order as $\xi$ in our theory here. The Hubble constant and the quadratic correction to redshift vs luminosity is determined at order $n=1$ from $v=w_1\xi$ and $z_2\xi^2$ respectively, and hence $w_2\xi^3$ determines the third order correction in red-shift factor with $z_4\xi^4$ determining the fourth order term. Since one requires values of $w_2$ and $z_4$ to determine whether a solution trajectory lies in $\mathcal{F}'$, it follows that the fourth order correction to red-shift vs luminosity would be required to determine whether or not a cosmology lies in the unstable manifold of $SM$, that is, to determine whether the Big Bang is self-similar like $SM$ at all orders, or whether it diverges from self-similarity at order $n=2$. We conclude that the family $\mathcal{F}$ extends the $k<0$ Friedmann spacetimes to a stable family of spacetimes, closed under small perturbation, which characterizes the instability of the critical ($k=0$) and underdense ($k<0$) Friedmann spacetimes to smooth radial perturbations.

\subsection{The Canonical Spacetime at Order $n=1$}

When time since the Big Bang is imposed, every trajectory of the STV-ODE of order $n=1$ reduces to $SM$ or to a trajectory in its unstable manifold. In the underdense case, this is the unique trajectory which takes $SM=(\frac{4}{3},\frac{2}{3})$ to $M=(0,1)$ in the limit $t\to\infty$ at order $n=1$. This trajectory, which we label $(z_2^F(t),w_0^F(t))$, and its log-time translations provide a canonical leading order evolution shared by all underdense solutions in $\mathcal{F}$, including $k<0$ Friedmann spacetimes. The resulting evolution $\kappa\rho(t)=z^F_2(t)t^{-2}$, $v(t)=w^F_0(t)rt^{-1}$ is therefore an explicit spacetime which is in a sense more fundamental than $k<0$ Friedmann spacetimes because it is shared, under log-time translation, by all underdense solutions to leading order. In \cite{SmolTeVo} we proved that $\Delta_0$ and present time $t_0$ in this leading order evolution are sufficient to uniquely determine the Hubble constant $H_0$ and quadratic correction $Q$ in the relationship between redshift factor $\rm{z}$ vs luminosity distance $d_\ell$,
\begin{align*}
    H_0^{-1}d_\ell = \rm{z} + Q\rm{z}^2,
\end{align*}
as measured at the center of the spacetime, so long as $0.25\leq Q\leq0.5$. Serendipitously, this latter constraint was shown in \cite{SmolTeVo} to be consistent with $k=0$ Friedmann augmented with 70\% dark energy, the assumption of the $\Lambda CDM$ model. Because of its fundamental nature, it is useful to have an explicit formula for $(z_2^F(t),w_0^F(t))$, which is provided by Theorem \ref{Friedmannknotzeroexpansion} by extracting the leading order evolution from a known implicit formula for $k<0$ Friedmann spacetimes \cite{gronhe}. The result is restated in the following theorem.

\begin{Thm}[Partial statement of Theorem \ref{Friedmannknotzeroexpansion}]\label{2x2connectingOrbitIntro} 
    Define $\theta:(0,\infty)\to(0,\infty)$ by
    \begin{align*}
        \theta(s) &= f^{-1}(s),
    \end{align*}
    where
    \begin{align*}
        s = f(\theta) = \frac{1}{2}(\sinh2\theta-2\theta).
    \end{align*}
    Then the functions
    \begin{align}
        z_2^F(t) &= \tilde{z}_2(\theta(t)) = \frac{6\big(\sinh2\theta(t)-2\theta(t)\big)^2}{\big(\cosh2\theta(t)-1\big)^3},\\
        w_0^F(t) &= \tilde{w}_0(\theta(t)) = \frac{\big(\sinh2\theta(t)-2\theta(t)\big)\sinh2\theta(t)}{\big(\cosh2\theta(t)-1\big)^2},
    \end{align}
    provide exact formulas for the particular solution of the $2\times2$ system (\ref{z2first1})--(\ref{w0first1}) which traverses the trajectory connecting $SM$ to $M$ in Figure \ref{Figure1}, that is, 
    \begin{align*}
        \lim_{t\to0}\big(z_2^F(t),w_0^F(t)\big) &= \bigg(\frac{4}{3},\frac{2}{3}\bigg) = SM, & \lim_{t\to\infty}\big(z_2^F(t),w_0^F(t)\big) &= (0,1) = M.\\
    \end{align*}
    Moreover, $(z_2^F(t),w_0^F(t))$ is the leading order term in expansion (\ref{zfirst1})--(\ref{wfirst1}) of the $k<0$ Friedmann solution assuming $t$ is time since the Big Bang and $\Delta_0=\frac{4}{9}$, where $\Delta_0=\frac{\kappa}{3}\rho_0R_0^{3}$ parameterizes the $k\neq0$ Friedmann solutions in their standard formulation. Furthermore, the corresponding formula for the leading order part of a Friedmann spacetime in terms of general $\Delta_0>0$ is then given by: 
    \begin{align}
        z_2^{F}\bigg(\frac{t}{\Delta_0}\bigg) &= \tilde{z}_2\bigg(\theta\bigg(\frac{t}{\Delta_0}\bigg)\bigg), & w_0^F\bigg(\frac{t}{\Delta_0}\bigg) &= \tilde{w}_0\bigg(\theta\bigg(\frac{t}{\Delta_0}\bigg)\bigg).
    \end{align}
\end{Thm}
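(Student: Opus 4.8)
The plan is to derive the explicit formulas for $(z_2^F(t),w_0^F(t))$ by starting from the classical parametric (cycloid-type) representation of the $k<0$ Friedmann dust solution and then reading off the leading-order terms of the $\xi$-expansion as they feed into the definitions $z=\kappa\rho r^2/(1-v^2)$ and $w=v/\xi$. First I would recall the standard implicit solution for the $k=-1$ dust FLRW spacetime in terms of conformal-type parameter $\theta$: the scale factor behaves like $R(\theta)\propto(\cosh2\theta-1)$ and the cosmological time like $t(\theta)\propto(\sinh2\theta-2\theta)$, which is exactly the relation $s=f(\theta)=\tfrac12(\sinh2\theta-2\theta)$ with $\theta(t)=f^{-1}(t)$ after the normalization $\Delta_0=\tfrac49$ is imposed (this particular value of $\Delta_0$ is precisely what is needed to match the SSC time gauge ``time since the Big Bang'' used throughout, and the general $\Delta_0$ case then follows by the scaling $t\mapsto t/\Delta_0$ that is built into the Friedmann family, as stated in the last displayed equations of the theorem).

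Next I would transform the comoving FLRW metric into SSC with the normal gauge. Using $\rho R^3=\text{const}$ for dust, the Hubble rate $H=\dot R/R$, and the SSC radial coordinate $r=R(\theta)\,\chi$ (with $\chi$ the comoving radius), one gets $v=Hr$ to leading order, hence $w=v/\xi=v t/r=Ht$, and $\kappa\rho r^2=\kappa\rho R^2\chi^2$, so $z=\kappa\rho r^2/(1-v^2)=(\kappa\rho\,t^2)\,\xi^2/(1-v^2)$ and therefore $z_2(t)=\lim_{\xi\to0}z/\xi^2=\kappa\rho(t)\,t^2$. At this point everything is a function of $\theta$ via the chain $t\leftrightarrow\theta$: I would express $\kappa\rho$ and $H$ through the Friedmann constraint $H^2=\tfrac{\kappa}{3}\rho+\tfrac{1}{R^2}$ (the $k=-1$ form) together with $\kappa\rho=\tfrac{3\Delta_0}{R^3}$ type relations, substitute $R\propto(\cosh2\theta-1)$ and $t\propto(\sinh2\theta-2\theta)$, and simplify. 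The computation of $\dot R$ requires $dt/d\theta\propto(\cosh2\theta-1)$ (which drops out of $f(\theta)$ by differentiation), so $\dot R = (dR/d\theta)/(dt/d\theta)\propto \sinh2\theta/(\cosh2\theta-1)$, giving $w_0^F=Ht\propto(\sinh2\theta-2\theta)\sinh2\theta/(\cosh2\theta-1)^2$, and similarly $z_2^F=\kappa\rho\,t^2\propto(\sinh2\theta-2\theta)^2/(\cosh2\theta-1)^3$. I would then fix all proportionality constants by matching the two limits: as $\theta\to0$ one Taylor-expands $\sinh2\theta-2\theta\sim\tfrac43\theta^3$ and $\cosh2\theta-1\sim2\theta^2$ to check $(z_2^F,w_0^F)\to(\tfrac43,\tfrac23)=SM$, and as $\theta\to\infty$ the dominant exponentials give $w_0^F\to1$, $z_2^F\to0$, i.e.\ the point $M$; these two normalizations pin down the constants $6$ and $1$ in the stated formulas and simultaneously confirm that this is the correct trajectory connecting $SM$ to $M$ in Figure~\ref{Figure1}.

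Finally, to prove that $(z_2^F(t),w_0^F(t))$ actually \emph{solves} the $2\times2$ system (\ref{z2first1})--(\ref{w0first1}), rather than merely having the right endpoints, I would verify the ODEs directly: differentiate the closed-form expressions in $t$ using $dt/d\theta\propto(\cosh2\theta-1)$, and check that $t\dot z_2=2z_2-3z_2w_0$ and $t\dot w_0=-\tfrac16 z_2+w_0-w_0^2$ reduce to hyperbolic-trig identities. Alternatively — and this is cleaner — I would invoke the already-established fact (Theorem~\ref{SelfSimExpandxi} / the discussion around expansions (\ref{zfirst1})--(\ref{wfirst1})) that the leading-order coefficients of \emph{any} smooth solution of the STV-PDE solve the $n=1$ STV-ODE, apply it to the Friedmann solution (which is smooth at the center), and conclude that its leading coefficients $(z_2^F,w_0^F)$ automatically satisfy (\ref{z2first1})--(\ref{w0first1}); uniqueness of the trajectory through $SM$ in the unstable manifold then identifies it with the connecting orbit. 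The main obstacle I anticipate is purely bookkeeping: correctly tracking the normalization constants through the change from comoving to SSC coordinates and through the $\Delta_0=\tfrac49$ normalization, since an error in any overall power of $R$ or factor of $2$ in $\theta$ would shift the limiting point away from $SM=(\tfrac43,\tfrac23)$; the hyperbolic-identity verification of the ODEs is then routine but tedious, and is best circumvented by the smoothness-plus-uniqueness argument just described.
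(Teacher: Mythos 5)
Your proposal is correct and follows essentially the same route as the paper: the paper's proof (Theorem \ref{2x2connecting orbit} together with Appendix \ref{Appendix1F}) likewise transforms the classical parametric $k=-1$ Friedmann solution to SSCNG, reads off $z_2=\kappa\rho\,\bar{t}^{\,2}$ and $w_0=H\bar{t}$ at $\xi=0$, verifies the limits by the same Taylor and exponential asymptotics at $\theta\to0$ and $\theta\to\infty$, and establishes that the pair solves the $2\times2$ system precisely by your ``cleaner'' argument, namely that it is the leading-order coefficient pair of an exact smooth solution of the STV-PDE. The only detail to keep in mind is that in the final formulas $\theta$ is a function of SSCNG time, which agrees with comoving proper time only at $r=0$ (that is, $\bar{t}=t+O(\xi^2)$) — exactly where the leading coefficients are evaluated — so your computation goes through as written.
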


The approximate solution of the STV-PDE which corresponds to the $n=1$ trajectory $(z_2^F(t),w_0^F(t))$ is
\begin{align*}
    \boldsymbol{U}_F(t,\xi) = (z_F(t,\xi),w_F(t,\xi)) \approx \big(z_2^F(t)\xi^2,w_0^F(t)\big).
\end{align*}
Substituting $z=\kappa\rho r^2$ and $v=w\xi$ gives the equivalent SSC approximate solution 
\begin{align*}
    \boldsymbol{W}_F(t,r) = (\kappa\rho_F(t,r),v_F(t,r)) \approx \big(z_2^F(t)r^2t^{-2},w_0^F(t)rt^{-1}\big).
\end{align*}
The approximate solutions $\boldsymbol{U}_F(t,\xi)$ and $\boldsymbol{W}_F(t,r)$ describe the time asymptotics of solutions in $\mathcal{F}$ as recorded in the following corollary.

\begin{Corollary}
    Let $\boldsymbol{U}(t,\xi)=(z(t,\xi),w(t,\xi))$ be a solution in $\mathcal{F}$ which determines $\boldsymbol{W}(t,r)=(\kappa\rho(t,r),v(t,r))$ through $z=\kappa\rho r^2$ and $v=w\xi$. Then there exists a $\Delta_0>0$ such that
    \begin{align*}
        \boldsymbol{W}(t,r) \to W_F\bigg(\frac{t}{\Delta_0},r\bigg)
    \end{align*}
    as $t\to\infty$ at each fixed $r>0$, with errors $O(t^{-5}r^{-2})$ and $O(t^{-3}r^3)$ in $\rho$ and $v$ respectively; and
    \begin{align*}
        \boldsymbol{W}(t,r) \to W_F\bigg(\frac{t}{\Delta_0},r\bigg)
    \end{align*}
    as $r\to0$ at each fixed $t>0$, with errors $O(t^{-4}r^2)$ and $O(t^{-3}r^3)$ in $\rho$ and $v$ respectively. Moreover, $\boldsymbol{U}_F(\frac{t}{\Delta_0},r)$ agrees to the same orders with the unique $k<0$ Friedmann spacetime determined by $\Delta_0$.
\end{Corollary}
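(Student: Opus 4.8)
The plan is to read the corollary off from three ingredients already established: the defining property of $\mathcal{F}$, the decay estimates of Theorem~\ref{decaytoM2}, and the explicit identification in Theorem~\ref{2x2connectingOrbitIntro} of the canonical leading-order trajectory with the leading-order part of the $k<0$ Friedmann family. The parameter $\Delta_0$ will be fixed by the leading-order matching alone; the two limiting statements then reduce to tail bookkeeping on the $\xi$-expansion \eqref{zfirst1}--\eqref{wfirst1}, and the final sentence is the same analysis applied to the Friedmann solution itself.

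First I would fix $\Delta_0$. Since $\boldsymbol{U}\in\mathcal{F}$ and time is taken to be time since the Big Bang, Theorem~\ref{decaytoM2} says that the leading-order pair $(z_2(t),w_0(t))$ coincides identically with the leading-order pair of a single $k<0$ Friedmann spacetime, i.e.\ it traverses the unique trajectory joining $SM$ to $M$ in Figure~\ref{Figure1}. By Theorem~\ref{2x2connectingOrbitIntro} the log-time translations of the explicit solution $(z_2^F,w_0^F)$, namely $t\mapsto(z_2^F(t/\Delta_0),w_0^F(t/\Delta_0))$ for $\Delta_0>0$, exhaust that trajectory and coincide with the leading-order term of the $k<0$ Friedmann spacetime with parameter $\Delta_0$. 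Hence there is a unique $\Delta_0>0$ with
\begin{align*}
z_2(t) = z_2^F(t/\Delta_0), \qquad w_0(t) = w_0^F(t/\Delta_0)
\end{align*}
for all $t>0$; this is the $\Delta_0$ of the corollary, and it does not depend on which limit is taken.

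Next I would pass from $\boldsymbol{U}$ to $\boldsymbol{W}$. Using $z=\kappa\rho r^2$, $v=w\xi$ and \eqref{zfirst1}--\eqref{wfirst1},
\begin{align*}
\kappa\rho(t,r) = z_2(t)\,t^{-2} + \sum_{k\geq2} z_{2k}(t)\,r^{2k-2}\,t^{-2k}, \qquad v(t,r) = w_0(t)\,r\,t^{-1} + \sum_{k\geq1} w_{2k}(t)\,r^{2k+1}\,t^{-2k-1},
\end{align*}
the series converging for $0<r<t$ on the analytic subspace singled out in Section~\ref{S1} (otherwise the identities are to be read in the asymptotic sense of Taylor's theorem used throughout). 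By the previous step the leading terms are exactly the components of $\boldsymbol{W}_F(t/\Delta_0,r)$. For the remainders, Theorem~\ref{decaytoM2} gives $|\rho-\rho_F|\le O(t^{-3})\xi^2$ and $|v-v_F|\le O(t^{-1}\ln t)\xi^3$ as $t\to\infty$ at fixed $r$, which via $\xi=r/t$ yields the first pair of error orders in the statement. For $r\to0$ at fixed $t$ the coefficients $z_4(t),w_2(t)$ are merely bounded (indeed they tend to their values at $M$), so truncating the same two series after the first remainder term gives errors $O(t^{-4})r^2$ in $\kappa\rho$ and $O(t^{-3})r^3$ in $v$, the powers of $t$ coming from $\xi^4=r^4t^{-4}$ and $\xi^3=r^3t^{-3}$ together with the boundedness of the coefficients. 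This establishes both displayed convergences.

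Finally, the ``moreover'' clause is the same analysis applied to Friedmann itself: the $k<0$ Friedmann solution with parameter $\Delta_0$ is a member of $\mathcal{F}$ (it traverses the canonical trajectory at order $n=1$), so Theorem~\ref{decaytoM2} bounds the difference between it and its own leading-order approximation $\boldsymbol{W}_F(t/\Delta_0,r)$ (equivalently $\boldsymbol{U}_F(t/\Delta_0,r)$ after $z=\kappa\rho r^2$, $v=w\xi$) by the same orders in the same two limits, and a triangle inequality finishes it. The main point requiring care will be the tail bookkeeping — keeping straight that the decay of $z_{2k}(t),w_{2k-2}(t)$ to their values at $M$ contributes the extra factor $t^{-1}$ only in the limit $t\to\infty$ and not when $t$ is held fixed — together with the convergence of the $\xi$-series for $|\xi|<1$, which is the only place the restriction to analytic data (or the asymptotic reading via Taylor's theorem) enters; neither is a deep obstacle given Theorems~\ref{decaytoM2} and~\ref{2x2connectingOrbitIntro}.
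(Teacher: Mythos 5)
Your proposal is correct and follows essentially the same route the paper intends: the corollary is read off from the definition of $\mathcal{F}$ (which fixes $\Delta_0$ by exact leading-order matching with $(z_2^F(t/\Delta_0),w_0^F(t/\Delta_0))$), the decay estimates of Theorem \ref{decaytoM2} for the $t\to\infty$ limit, and boundedness of the higher-order coefficients for the $r\to0$ limit, with the ``moreover'' clause handled by applying the same truncation to the Friedmann solution itself. Your bookkeeping in fact gives $O(t^{-5}r^{2})$ for the density error as $t\to\infty$, which is the correct power (the $r^{-2}$ in the statement appears to be a typographical slip), and your weaker $O(t^{-3}r^3)$ for the velocity is consistent with the sharper $O(t^{-4}\ln t)r^3$ coming from Theorem \ref{decaytoM2}.
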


\subsection{Conclusions}

The Friedmann spacetimes in the limit $p=0$, with or without a cosmological constant, have been the accepted large scale model for late stage Big Bang Cosmology since Hubble's measurement of the expanding Universe in 1929. In the modern theory of Cosmology, the zero pressure Friedmann model applies after the time when the pressure drops precipitously to zero, some 10,000 years after the Big Bang, about an order of magnitude before the decoupling of radiation and matter gives rise to the microwave background radiation \cite{Peacock}. The widely accepted $\Lambda CDM$ standard model for the large scale expansion of the Universe is a critically expanding $k=0$ Friedmann spacetime with dark energy modeled by a positive cosmological constant, which is negligible relative to the energy density at early times. In this model, dark energy accounts for approximately 70\% of the energy density of the Universe at present time \cite{SmolTeVo}. We propose $\mathcal{F}$ as an extension of the $k\leq0$ Friedmann spacetimes to a stable family of cosmological models which reduce to $k<0$ Friedmann in the time asymptotic limit $t\to\infty$ (for fixed $r$) but naturally introduce \emph{anomalous} accelerations relative to the Friedmann spacetimes at early and intermediate times into the dynamics of solutions of the Einstein field equations, without recourse to a cosmological constant. This confirms mathematically that a direct consequence of Einstein's original theory of General Relativity, without a cosmological constant or dark energy, is that one can expect to observe a close approximation to non-critical ($k\neq0$) Friedmann spacetimes at late times after the Big Bang, but not critical ($k=0$) Friedmann spacetimes.\footnote{The density is too large for a cosmological constant of the current observed magnitude to influence the stability of $SM$ early on during the Big Bang at the onset of the instability when the pressure drops to zero \cite{SmolTeVo}.}

The presence of solutions in the stable manifold $\mathcal{F}'$ of $SM$ that are different from $k<0$ Friedmann spacetimes tells us that general perturbations of $k<0$ Friedmann spacetimes in $\mathcal{F}'$, as well as underdense perturbations of the $k=0$ Friedmann spacetime, at early times after the Big Bang produce accelerations away from Friedmann solutions before they decay back to $k<0$ Friedmann as $t\to\infty$ (for fixed $r$). Thus anomalous accelerations away from $k\leq0$ Friedmann spacetimes are not a violation, but a prediction of Einstein's original theory of General Relativity without a cosmological constant, and such does not change the picture during the early epoch when the cosmological constant is negligible relative to the energy density.\footnote{As in \cite{SmolTeVo}, it is interesting to consider whether this might explain some of the conundrums with the Standard Model of Cosmology, such as the variable cosmological constant, the flatness problem or the uniform temperature problem. The present paper focuses only on the mathematics, the intention is to address the aforementioned problems in future publications.}

The definitive description of the instability of the critical Friedmann spacetime in terms of the family of spacetimes $\mathcal{F}$ set out here provides new insights for exploring the hypothesis that the observed anomalous acceleration of the galaxies might be explained within Einstein's original theory of General Relativity without the cosmological constant, that is, the possibility that the Universe on the largest scale has evolved from a smooth perturbation of $SM$ shortly after the Big Bang, such that the resulting solution lies within the family $\mathcal{F}$, with our galaxy near the center of that expansion. This possibility appears more intriguing for two reasons: First, the instability of $SM$ to perturbations at every order makes the $k=0$ Friedmann spacetime implausible as a physically observable model, with or without dark energy, and second, our theory here establishes that solutions in $\mathcal{F}$, the space of solutions into which underdense perturbations of $SM$ will evolve, generically admit solutions which accelerate away from $k<0$ Friedmann spacetimes as they evolve away from $SM$ and before they decay back to $k<0$ Friedmann solutions as $t\to\infty$ at all orders of the STV-ODE, qualitatively rich enough to mimic the effects of dark energy. Moreover, as demonstrated in \cite{SmolTeVo}, the acceleration at the order of the quadratic term in lies within the narrow range $0.25\leq Q\leq0.5$, consistent with the effects of a cosmological constant $\Lambda\approx0.7$.

We comment that in \cite{smolteMemoirs} the authors also proposed a \emph{wave model} alternative to dark energy in which a self-similar solution of the $p\neq0$ perfect fluid Einstein field equations, interpreted as a local time-asymptotic wave pattern at the end of the Radiation Dominated Epoch,\footnote{When the pressure drops precipitously to $p=0$ at about 10,000 years after the Big Bang, an order of magnitude before the uncoupling of matter and radiation around 300,000 years after the Big Bang.} induces an underdensity which triggers an instability in $SM$ when the pressure drops to zero. This was modeled as a mechanism for creating the observed anomalous acceleration of the galaxies observed at present time. For this, authors in \cite{smolteMemoirs} identified a one parameter family of self-similar solutions of the perfect fluid Einstein field equations, parameterized by the so-called \emph{acceleration parameter} $a$, such that $a=1$ is the $k=0$ Friedmann spacetime with equation of state $p=\frac{c^2}{3}\rho$.\footnote{This is the equation of state for the state of matter known as \emph{pure radiation}, as well as the equation of state for the extreme relativistic limit of free particles \cite{wein}.} The authors proposed solutions in this family as candidates for time asymptotic wave patterns at the end of the Radiation Dominated Epoch of the Big Bang. These self-similar solutions produce perturbations of $SM$ at the end of the Radiation Dominated Epoch and the authors introduced and employed a self-similar formulation of the SSC equations to numerically evolve the resulting perturbations up through the Matter Dominated Epoch to present time. As a result of this, a unique value of the acceleration parameter was identified which produced the correct Hubble constant $H_0$ and quadratic correction $Q$ to redshift vs luminosity at present time. From this, a prediction was made at the third order $C_3$ in the redshift factor in the expansion of redshift vs luminosity \cite{smolteMemoirs},
\begin{align*}
    H_0^{-1}d_\ell = \rm{z} + Q\rm{z}^2 + C_3\rm{z}^3 + C_4\rm{z}^4 + \dots,
\end{align*}
where $\rm{z}$ is the redshift factor and $d_\ell$ is the luminosity. This was compared with the predictions of dark energy. The main principle is that an observer looking outward from the center at time $t=t_0$ into a spacetime evolving as a solution in the family $\mathcal{F}$ will measure the $n^{th}$ order correction to redshift vs luminosity as a function of the coefficient of $\xi^n$ among $v_{2n-2}(t_0)\xi^{2n-1}=w_{2n-2}(t_0)\xi^{2n-2}$ and $z_{2n}(t_0)\xi^{2n}$. Thus, in principle, there are the same number of parameters in an expansion of $H_0^{-1}d_\ell$ in powers of redshift $\rm{z}$ as there are initial conditions which can be freely assigned to determine a solution in $\mathcal{F}$. Thus the first through fourth order corrections are determined by $z_2$, $w_0$, $z_4$ and $w_2$ respectively, all determined by the STV-ODE of order $n=2$.

Note that the unstable manifold $\mathcal{F}'\subset\mathcal{F}$ at order $n=2$ of the STV-ODE is a two parameter surface in which the $k<0$ Friedmann solutions account for only one of the two parameters, so $k<0$ Friedmann spacetimes can only account for $H_0$ and $Q$, and then $C_3$ is determined from these. The freedom to allow a two-parameter unstable manifold at order $n=2$ allows one to freely assign $C_3$, but this then constrains the value of $C_4$. Finally, the freedom to assign $z_4$ and $w_2$ as two free parameters, in addition to $z_2$ and $w_0$, is the right number of initial conditions to determine a general solution in $\mathcal{F}$ at order $n=2$. This then allows for the freedom to assign $C_4$ as well. Although we confine ourselves here to the mathematics, our intention is to further explore the thesis proposed in \cite{SmolTeVo}, that is, that the instability of $SM$ alone might account for experimental incongruencies, like a variable cosmological constant, associated with the observed anomalous acceleration of the galaxies, within Einstein's original theory, without recourse to a cosmological constant.

As a final comment, we note that the STV-ODE describe the evolution of solutions along each line $\xi=\xi_0$, with $\xi_0$ constant, and determine the time asymptotics of solutions implicitly from initial data starting from arbitrary initial time $t_0>0$, so any boundary condition at infinity is free to be imposed. Indeed, for solutions $\mathcal{F}'$ starting at time $t=t_0$ on the underdense side of the stable manifold of $SM$ in the $n=1$ phase portrait in Figure \ref{Figure2}, the limit at $t=\infty$ is the rest point $M$. The rest point $M$ emerges implicitly from the analysis in SSCNG coordinates and we surmise that this boundary condition would be difficult to guess ahead of time to impose as a boundary condition in different coordinate systems. For example, decay to rest point $M$ implies that the velocity $v$ aligns with $\xi=\frac{r}{t}$ as the density tends to zero in smooth solutions of the Einstein field equations. It is interesting to note that when $v\approx\xi$, the SSCNG time coordinate diverges from comoving time except at $r=0$, so the time asymptotics of the velocity would be difficult to guess from knowledge of solutions given in a comoving coordinate system alone. The effect of imposing any other boundary condition at infinity in a different coordinate system, like Lemaître--Tolman--Bondi coordinates, would necessarily break the smoothness condition at $r=0$, leading to a singularity at the origin \cite{roma}. Moreover, the perturbations in $\mathcal{F}$ do not represent simple under-densities relative to the critical $k=0$ Friedmann solution because setting $k=0$ as a boundary condition at infinity would not in general be consistent with solutions in $\mathcal{F}$. For one thing, solutions in the unstable manifold of $SM$ which decay back to $M$ as $t\to\infty$ remain aligned with $k<0$ Friedmann at leading order but incur an advancing or retarding of $w_2$ which will be significant at intermediate times before the decay of the trajectory to $M$ takes over. In fact, this effect is exactly at the order of the measured anomalous acceleration \cite{smolteMemoirs}. The purpose of the present paper is to complete the mathematical theory of the pressureless self-similar Einstein field equations, give a definitive characterization of the instability of the critical Friedmann spacetime and to give a global description of underdense perturbations of $SM$ in terms of the family $\mathcal{F}$. Authors will return to the problem of modeling redshifts in a subsequent publication.

\subsection{Summary}

In summary, the family $\mathcal{F}$ characterizes the instability of the critical $k=0$ Friedmann spacetime and the accelerations away from Friedmann are described quantitatively at every order by the STV-ODE. We identify a new positive and negative eigenvalue at $SM$ in the phase portrait of the STV-ODE of order $n=2$, different from the leading order eigenvalue associated with $k<0$ Friedmann spacetimes. The positive eigenvalue produces a new free parameter which generates accelerations away from $k<0$ Friedmann spacetimes within the unstable manifold $\mathcal{F}'$ of $SM$ and the negative eigenvalue at $SM$ produces accelerations away from $k<0$ Friedmann spacetimes in $\mathcal{F}$, outside the unstable manifold $\mathcal{F}'$. This directs us to an underlying mechanism which produces a consequential third order correction to redshift vs luminosity, the order of the discrepancy associated with dark energy, both within $\mathcal{F}'$ and its complement $\mathcal{F}\setminus\mathcal{F}'$, different from the predictions made by $k=0$ Friedmann spacetimes. In particular, this provides a deeper mathematical understanding of the source of the third order correction computed numerically in \cite{SmolTeVo}. Such accelerations away from $k\leq0$ Friedmann spacetimes are shown to be triggered by arbitrarily small perturbations of $SM$, a significant consequence of the instability of the $k=0$ Friedmann spacetime to the subject of Cosmology. The authors intend to address the physical redshift vs luminosity problem quantitatively from this point of view in a forthcoming paper.

\subsection{Outline of Paper}

In Section \ref{S2} we summarize the main results in this paper. In Section \ref{S3} we explain the condition for a spherically symmetric solution of the Einstein field equations to be smooth at the center of symmetry. Our requirement for smoothness at the center is simply that the nonzero terms in the expansion of the solution in powers of $\xi$ should contain only even powers $\xi^{2n}$. In particular, this implies that smooth solutions solve the STV-ODE at each order $n$. In Section \ref{TSBB} we demonstrate that the time since the Big Bang gauge forces every trajectory at order $n=1$ to agree with a $k\neq0$ Friedmann spacetime. In Section \ref{S4} we review the Friedmann spacetimes in comoving coordinates and derive simple general formulas for coordinate transformations which take spherically symmetric metrics to SSCNG. In Section \ref{S5} we give a new simpler proof of the self-similarity of the $k=0$ Friedmann spacetime in SSCNG coordinates for equations of state of the form $p=\sigma\rho$. In Section \ref{S6} we present a new derivation of the STV-PDE, which includes the case of non-zero pressure by incorporating the equation of state $p=\sigma\rho$ into the equations. In Section \ref{S7} we derive the STV-ODE of order $n=2$ by expanding the STV-PDE in even powers of $\xi$. In Section \ref{S8} we discuss the STV-ODE at orders $n=1$ and $n=2$ and we incorporate the $k\leq0$ Friedmann solutions into these systems. In Section \ref{S9} we characterize the unstable manifolds of $SM$ at orders $n=1$ and $n=2$ and identify a new free parameter $\beta$ in the unstable manifold of $SM$ at order $n=2$, not accounted for by $k<0$ Friedmann solutions. In Section \ref{S11} we discuss the higher order STV-ODE and prove that all solutions which tend to the rest point $M$ at order $n=1$ also converge to $M$ at all higher orders $n>1$. In Section \ref{S11.5} we prove that the $k<0$ Friedmann spacetimes are pure eigensolutions of the STV-ODE up to order $n=3$. The first appendix, Section \ref{Appendix1} are where some of the more technical proofs are given. The second appendix, Section \ref{S12}, is where we make the connections between the Theorems stated in \cite{SmolTeVo}, which used a different notation and were stated without proof, and the results in this paper. Finally, in the third appendix, Section \ref{Appendix2}, we discuss Lemaître--Tolman--Bondi coordinates.

\section{Statement of Results}\label{S2}

We start by recording the following result, which asserts that self-similar coordinates are valid out to approximately the Hubble radius.

\begin{Thm}[Informal statement of Theorems \ref{ThmLog-time translation} and \ref{solvablexi}]
    The mapping $(t,r)\to(t,\xi)$ is a regular one-to-one mapping of the SSC $(t,r)$ to self-similar coordinates $(t,\xi)$ for all $|\xi|\leq\xi_0\approx0.816$.
\end{Thm}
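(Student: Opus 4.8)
The plan is to reduce the assertion to an explicit computation with the critical Friedmann metric in SSCNG coordinates, using the closed-form self-similar profiles furnished by Theorem~\ref{SelfSimExpandxi} together with the coordinate constructions of Section~\ref{S4}.

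First I would rewrite the SSC metric \eqref{SSCintro} in the variables $(t,\xi)$. Since $r=\xi t$ gives $dr=\xi\,dt+t\,d\xi$,
\begin{align*}
ds^2=\Big(\tfrac{\xi^2}{A}-B\Big)\,dt^2+\tfrac{2\xi t}{A}\,dt\,d\xi+\tfrac{t^2}{A}\,d\xi^2+r^2\,d\Omega^2 .
\end{align*}
The map $(t,r)\mapsto(t,\xi)$ is triangular with Jacobian determinant $1/t$, hence a local diffeomorphism and globally injective for every $t>0$; the only nontrivial point is that $(t,\xi)$ remains an \emph{admissible} chart, i.e.\ that $\xi$ stays spacelike and $t$ stays timelike on the spacetime. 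From the display this is the pair of inequalities $A>0$ (so that $g_{\xi\xi}=t^2/A>0$) and $B>\xi^2/A$; writing $D=\sqrt{AB}$ the latter is $|\xi|<D$, and since the $(t,\xi)$-block has determinant $-Bt^2/A$, these two inequalities are exactly what is needed for nondegenerate Lorentzian signature. Thus the theorem is equivalent to establishing $A>0$ and $|\xi|<D$ along the relevant solutions precisely for $|\xi|<\xi_0$.

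Next I would evaluate $A$ and $D$ on the critical ($k=0$) Friedmann spacetime. Using the mass-function form $A=1-2m/r$ with $m=\tfrac16\kappa\rho r^3$ for dust FLRW, together with the Friedmann constraint $\kappa\rho=3H^2$ and $v=Hr$, gives the identity $A=1-\tfrac{\kappa}{3}\rho r^2=1-v^2$, so $A>0\iff v<1$. Feeding the self-similar profiles into \eqref{DeqnxiIntro}--\eqref{weqnxiIntro} at the rest point $SM$ (equivalently, solving the resulting separable rest-point ODE for $D(\xi)$, or for the time-stretch factor relating SSCNG time to comoving time), the condition $|\xi|<D$ again collapses to $v=w\xi<1$. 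It then remains to locate where $v\to1$ along the critical profile: extracting the exact leading-order evolution as in Theorem~\ref{2x2connectingOrbitIntro} (or integrating the rest-point ODE directly) one finds $v=1$ at $\xi_0=\sqrt{2/3}\approx0.816$, which is the $\xi$-image of the Hubble radius $r=1/H$; for $|\xi|<\xi_0$ one has $v<1$, $A>0$ and $|\xi|<D$, so the chart is regular, while at $\xi=\xi_0$ the coordinate $t$ becomes null ($g_{tt}=0$), which shows the bound is sharp.

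Finally I would promote the statement from the critical spacetime to all solutions of interest. Every solution in $\mathcal F$ agrees with a $k\le0$ Friedmann spacetime at leading order in the $\xi$-expansion (Theorem~\ref{decaytoM2}), $k\le0$ dust FLRW spacetimes satisfy $A=1-v^2-\tfrac{k r^2}{R^2}\ge 1-v^2$ and have $v<1$ strictly inside their Hubble radius, and the higher-order $\mathcal F$-corrections are $O(\xi^4)$ in $z$ and $O(\xi^2)$ in $w$ and decay in $t$; so the strict inequalities $A>0$, $|\xi|<D$ persist on $|\xi|\le\xi_0$ for all $t\ge t_0>0$. I expect the main obstacle to be exactly this last reduction --- showing that ``admissibility $\iff v<1$'' is robust under the higher-order corrections carried by $\mathcal F$ and that $\xi_0=\sqrt{2/3}$ is not eroded --- together with the bookkeeping of solving the $D$-equation in closed form and verifying sharpness; the injectivity and local-regularity parts are immediate once the metric is written in $(t,\xi)$ coordinates.
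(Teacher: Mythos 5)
Your reduction of the literal statement to admissibility of the $(t,\xi)$ chart, and your computation on the critical profile, are correct and do recover $\xi_0=\sqrt{2/3}$: with $\alpha=\tfrac43$ one has $v=\tfrac{\alpha\eta}{2}$, $A=1-v^2$, $D=\mathcal{F}(\eta)^{1-\alpha}$, and both $A$ and $D-|\xi|$ first vanish at $\eta=\tfrac32$, whose image under $\xi=\eta/\mathcal{F}(\eta)$ is $\sqrt{2/3}$. But this is not what Theorems \ref{ThmLog-time translation} and \ref{solvablexi} actually prove, and the route you propose for the remainder does not close the gap. Those theorems concern the $k=-1$ Friedmann spacetime, for which the SSCNG time coordinate is defined only \emph{implicitly} through the relation (\ref{formulaFort}); the substance of the paper's proof is an implicit-function-theorem argument (Lemma \ref{TimeByChi1}) showing that this relation is soluble for $\chi=\chi(\bar{\chi},\xi)$ exactly when $r<\sinh\Theta(\chi)$, which is then identified with $A>0$, followed by an explicit computation of $\xi_{max}(\bar{\chi})=\Pi(\bar{\chi})$ together with the monotonicity $\Pi'>0$ and the limits $\Pi(0^+)=\sqrt{2/3}$, $\Pi(\infty)=1$. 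Your proposal presupposes the SSC metric components as given functions of $(t,\xi)$ and only checks pointwise Lorentzian admissibility; for $k=0$ the transformation is explicit so this is harmless, but for $k=-1$ the existence and regularity of the comoving-to-SSCNG transformation is precisely the thing to be proven, and your argument never engages with it.

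The final ``promotion'' step also fails as stated. The $k=-1$ Friedmann solution is a perturbation of the critical one only in the limit $t\to0$; as $t\to\infty$ it tends to $M$ ($v\to\xi$, $\rho\to0$), and correspondingly the paper shows $\xi_{max}(\bar{\chi})$ \emph{increases} from $\sqrt{2/3}$ to $1$ --- behavior that a perturbative argument anchored at $SM$ cannot see, and which is half the content of Theorem \ref{solvablexi}. Controlling the higher-order corrections for general members of $\mathcal{F}$ near $\xi\approx0.816$ is moreover circular, since the expansion in powers of $\xi$ is itself only justified on the region where the chart is valid; in any case the theorems being summarized make no claim about general solutions in $\mathcal{F}$, only about Friedmann. (A minor point: the identity you use should read $A=(1-v^2)(1-kr^2)$, equivalently $A=1-kr^2-\dot{R}^2r^2$, rather than $A=1-v^2-kr^2/R^2$; the inequality $A\geq1-v^2$ for $k\leq0$ survives.)
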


In the theorem below we derive our most general version of the STV-PDE, that is, for perfect fluid spacetimes with equation of state $p=\sigma\rho$ with $\sigma$ constant.

\begin{Thm}[Partial statement of Theorem \ref{thmsigma}]\label{thmsigmainitial2}
    Assume the equation of state $p=\sigma\rho$ with constant $\sigma$. Then for $|\xi|<\xi_0$, the perfect fluid Einstein field equations with an SSC metric are equivalent to the following four equations in unknowns $A(t,\xi)$, $D(t,\xi)$, $z(t,\xi)$ and $w(t,\xi)$:
    \begin{align}
        \xi A_\xi &= -z + (1-A),\\
        \xi D_\xi &= \frac{D}{2A}\bigg(2(1-A)-(1-\sigma^2)\frac{1-v^2}{1+\sigma^2 v^2}z\bigg),\\
        tz_t + \xi\big((-1+Dw)z\big)_\xi &= -Dwz,\\
        tw_t + (-1+Dw)\xi w_\xi &- w + Dw^2 - \frac{(1+\sigma^2)\sigma^2}{\xi z}\bigg(D\frac{(1-v^2)v^2}{(1+\sigma^2v^2)^2}z\bigg)_\xi\notag\\
        &+ \frac{\sigma^2\xi}{z}\bigg(D\frac{1-v^2}{1+\sigma^2v^2}\frac{z}{\xi^2}\bigg)_\xi = \text{RHS},
    \end{align}
    where
    \begin{align*}
        \text{RHS} = -\frac{1}{\xi^2}\frac{1-v^2}{1+\sigma^2v^2}\frac{D}{2A}\bigg((1-\sigma^2)(1-A)+2\sigma^2\frac{1-v^2}{1+\sigma^2v^2}z\bigg)
    \end{align*}
    and
    \begin{align}
        w = \frac{1+\sigma^2}{1+\sigma^2v^2}\frac{v}{\xi}.
    \end{align}
\end{Thm}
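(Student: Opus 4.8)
The plan is a two-stage change of variables: a change of independent variables $(t,r)\to(t,\xi)$ with $\xi=r/t$, followed by a change of dependent variables from the metric and fluid unknowns to $(A,D,z,w)$. I would begin from the perfect-fluid Einstein equations $G=\kappa T$ for the SSC metric (\ref{SSCintro}) in $(t,r)$, with $T_{\mu\nu}=(\rho+p)u_\mu u_\nu+pg_{\mu\nu}$, the equation of state $p=\sigma\rho$ inserted, and the four-velocity parametrized by the SSC scalar velocity $v$ of the fluid. Computing the Einstein tensor of (\ref{SSCintro}), the field equations reduce to two first-order equations in $r$ for the metric coefficients $A$ and $D=\sqrt{AB}$, together with two evolution equations for the fluid obtained from the conservation law $\nabla^\mu T_{\mu\nu}=0$ (the remaining Einstein equations following by the Bianchi identity); once the equation of state is used, all four are rational in $\rho$, $v$, $\sigma$.

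I would then pass to $\xi=r/t$, a regular one-to-one change of coordinates for $|\xi|<\xi_0$ by Theorems \ref{ThmLog-time translation} and \ref{solvablexi}, using the identities $r\,\partial_r\big|_t=\xi\,\partial_\xi\big|_t$ and $t\,\partial_t\big|_r=t\,\partial_t\big|_\xi-\xi\,\partial_\xi\big|_t$. Under these, the two metric equations immediately take the stated form $\xi A_\xi=-z+(1-A)$ and $\xi D_\xi=\tfrac{D}{2A}\big(2(1-A)-(1-\sigma^2)\tfrac{1-v^2}{1+\sigma^2v^2}z\big)$, once the matter term appearing in the $A$-equation is named $z$, and the two matter equations acquire the operator $t\,\partial_t+\xi(\,\cdot\,)_\xi$ on the left. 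The crucial step is to introduce $w=\tfrac{1+\sigma^2}{1+\sigma^2v^2}\tfrac{v}{\xi}$ and verify that, in the variables $(A,D,z,w)$, the first matter equation collapses to $tz_t+\xi\big((-1+Dw)z\big)_\xi=-Dwz$ — the same conservation form as in the pressureless Theorem \ref{STV-PDE} — while the second rearranges into the displayed $w$-equation, whose principal part is again governed by the characteristic speed $-1+Dw$ and whose $\sigma$-dependence is confined to the flux terms $\tfrac{(1+\sigma^2)\sigma^2}{\xi z}\big(D\tfrac{(1-v^2)v^2}{(1+\sigma^2v^2)^2}z\big)_\xi$, $\tfrac{\sigma^2\xi}{z}\big(D\tfrac{1-v^2}{1+\sigma^2v^2}\tfrac{z}{\xi^2}\big)_\xi$ and the right-hand side. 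The particular form of $w$ is essentially forced: it is the normalized rescaling of $v/\xi$ for which the $z$-equation stays in $\sigma$-independent conservation form and the momentum equation closes with fluxes built from $z$, $w$, $D$ alone.

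The main obstacle is this last reduction of the momentum equation — showing that every term produced by the $(\rho+p)$ prefactor and the $pg_{\mu\nu}$ term in $T$ reassembles, after substituting the definitions of $z$ and $w$ and repeatedly eliminating $A_\xi$ and $D_\xi$ via the two metric equations, into exactly the two flux terms and the stated RHS, leaving no residue. I expect the delicate points to be (i) converting the $t$- and $r$-derivatives of $\rho$ and $v$ into $\partial_t$- and $\partial_\xi$-derivatives of $z$ and $w$ by the chain rule, tracking carefully the explicit factor $r^2=\xi^2t^2$ hidden inside $z$, and (ii) keeping the competing factors $(1-v^2)$ and $(1+\sigma^2 v^2)$ under control, which is exactly where the form of $w$ earns its keep. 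As a consistency check, and a guard against sign and factor errors while carrying out the general calculation, I would set $\sigma=0$ throughout and confirm that the system reduces term-by-term to Theorem \ref{STV-PDE}.
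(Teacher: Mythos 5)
Your proposal is correct and follows essentially the same route as the paper's proof in Section \ref{Appendix1C}: the paper likewise starts from the first-order SSC Einstein equations and the two conservation laws of \cite{groate} (rather than recomputing $G=\kappa T$ from scratch), converts to $(t,\xi)$ via the same chain-rule identities, and defines $z=\kappa T^{00}_M r^2$ and $w=\kappa T^{01}_M r^2/(\xi z)$ — which is precisely your ``forced'' normalization of $v/\xi$ making the mass equation $\sigma$-independent — before carrying out the term-by-term reassembly of the momentum equation that you correctly identify as the main computational burden.
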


Theorem \ref{thmsigmainitial2} reduces to Theorem \ref{STV-PDE} of the Introduction when $p=\sigma=0$, applicable to late time Big Bang Cosmology \cite{Peacock}, the setting of this paper.

The authors' original motivation to formulate a self-similar version of the Einstein field equations was the discovery that, when $p=\sigma\rho$ and the usual gauge of proper time at $r=0$ with the Big Bang at $t=0$ is employed, the $k=0$ Friedmann spacetime in SSC has the property that all of the variables $A$, $D$, $z$ and $w$ are functions of $\xi=\frac{r}{t}$ alone, and hence represent a time independent solution, or \emph{rest point}, of the STV-PDE, suggesting to the authors that such a PDE would be useful in studying the stability properties of the Standard Model of Cosmology. To state this precisely, we begin with the exact expression for the $p=\sigma\rho$, $k=0$ Friedmann spacetimes in comoving coordinates $(t,r)$, which is given by Theorem 2 on page 88 of \cite{smolteMAA}.

\begin{Thm}[Partial statement of Theorem \ref{ThmCosWithShockMAA}]
    In comoving coordinates $(t,r)$, the $k=0$ Friedmann metric with equation of state $p=\sigma\rho$ takes the form
    \begin{align}
        ds^2 = -dt^2 + R(t)^2\big(dr^2+r^2d\Omega^2\big),\label{metricFriedmann}
    \end{align}
    where:
    \begin{align}
        R(t) &= t^{\frac{2}{3(1+\sigma)}},\\
        H(t) &= \frac{2}{3(1+\sigma)t},\\
        \rho(t) &= \frac{4}{3\kappa(1+\sigma)^2t^2},
    \end{align}
    and the four velocity $\vec{u}$ satisfies
    \begin{align}
        \vec{u} = (u^0,u^1,u^2,u^3) = (1,0,0,0).
    \end{align}
\end{Thm}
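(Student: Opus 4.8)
The plan is to derive the stated formulas directly from the Einstein field equations $G = \kappa T$ specialized to the FLRW ansatz \eqref{metricFriedmann}, closing the system with the equation of state $p = \sigma\rho$. First I would compute the nonzero components of the Einstein tensor for $ds^2 = -dt^2 + R(t)^2(dr^2 + r^2 d\Omega^2)$. For a comoving perfect fluid the stress tensor has $T^{00} = \rho$ and $T^{ij} = p\,g^{ij}$, so the field equations collapse to the two independent scalar relations
\begin{align*}
    \left(\frac{\dot R}{R}\right)^2 &= \frac{\kappa}{3}\rho, & 2\frac{\ddot R}{R} + \left(\frac{\dot R}{R}\right)^2 &= -\kappa p,
\end{align*}
namely the $k=0$ Friedmann equation and the acceleration equation. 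The normalization of $\vec u = (1,0,0,0)$ is immediate from $g_{00} = -1$, which gives $g_{\mu\nu}u^\mu u^\nu = -1$ without further argument.

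Next I would extract the matter evolution. The contracted Bianchi identity $\nabla_\mu T^{\mu\nu} = 0$ (equivalently, combining the two scalar equations above) gives the conservation law $\dot\rho = -3H(\rho + p)$ with $H = \dot R/R$. Substituting $p = \sigma\rho$ yields $\dot\rho/\rho = -3(1+\sigma)\,\dot R/R$, hence $\rho = c_0\,R^{-3(1+\sigma)}$ for some constant $c_0 > 0$. Feeding this into the Friedmann equation produces the separable ODE $\dot R = \sqrt{\kappa c_0/3}\;R^{1-\frac{3}{2}(1+\sigma)}$, which integrates to a pure power law; fixing the additive integration constant so that the Big Bang occurs at $t = 0$ gives $R(t) = (\text{const})\,t^{2/(3(1+\sigma))}$.

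It then remains only to fix the constants. The overall multiplicative scale of $R$ is a gauge freedom (a rescaling of the comoving radius $r$), so I would normalize it to obtain $R(t) = t^{2/(3(1+\sigma))}$ exactly, as in the statement. From this, $H = \dot R/R = \frac{2}{3(1+\sigma)t}$ follows by differentiation, and the Friedmann equation gives $\rho = \frac{3}{\kappa}H^2 = \frac{4}{3\kappa(1+\sigma)^2 t^2}$. A concluding check that the acceleration equation holds for these choices is automatic, since only two of the three relations (Friedmann, acceleration, conservation) are independent.

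I do not anticipate a real obstacle: this is the classical power-law FLRW solution with a linear equation of state, and the work is bookkeeping --- computing $G_{\mu\nu}$ for the ansatz and integrating a separable ODE. The one point that deserves care is keeping the integration constants straight: one must distinguish the single physical parameter (the value of $c_0$, equivalently the present density or the parameter $\Delta_0$) from the gauge constants --- the additive time shift, pinned by placing the Big Bang at $t=0$, and the multiplicative scale of $R$, pinned by the stated normalization --- and then verify that these choices are simultaneously consistent with all of $G = \kappa T$. Since the theorem is quoted from \cite{smolteMAA}, a shorter alternative is simply to substitute the asserted $R$, $H$, $\rho$ and $\vec u$ into the two scalar field equations above and check equality, reducing the proof to a direct verification.
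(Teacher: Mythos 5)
Your proposal is correct and follows essentially the same route as the paper, which simply notes that the formulas follow directly from the reduced Friedmann system (\ref{kFriedmannzerop1})--(\ref{DeltaDef}) (integrate the conservation law to get $\rho\propto R^{-3(1+\sigma)}$, feed this into the $k=0$ Friedmann equation, and solve the resulting separable power-law ODE), citing Theorem 2 of \cite{smolteMAA} for the details. Your additional care about separating the physical constant $c_0$ (equivalently $\Delta_0$) from the gauge constants is consistent with the paper's remark relating $t_0$ to $\Delta_0$ immediately after the theorem.
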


To describe the mapping $(t,r)\to (\bar{t},\bar{r})$, which we prove takes (\ref{metricFriedmann}) to SSC self-similar form, we note first that the SSC radial coordinate must be $\bar{r}=Rr$ to match the spheres of symmetry, that is,
\begin{align}
    \bar{r} = R(t)r = t^{\frac{\alpha}{2}}r,\label{define-r}
\end{align}
where
\begin{align*}
    \alpha = \frac{4}{3(1+\sigma)}.
\end{align*}
Next, we introduce the auxiliary variable $\eta=\frac{\bar{r}}{t}$, so that $\bar{r}=\eta t$, and define the SSC time variable $\bar{t}$ in terms of $\eta$ by
\begin{align}
    \bar{t} = F(\eta)t,
\end{align}
where
\begin{align}
    \mathcal{F}(\eta) = \bigg(1+\frac{\alpha(2-\alpha)}{4}\eta^2\bigg)^{\frac{1}{2-\alpha}}.\label{defineF}
\end{align}
Then (\ref{define-r})--(\ref{defineF}) define a mapping $(t,r)\to(\bar{t},\bar{r})$ by
\begin{align}
    (\bar{t},\bar{r}) = (F(\eta)t,\eta t) = \big(F\big(rt^{\frac{\alpha}{2}-1}\big)t,rt^{\frac{\alpha}{2}}\big) =: \Phi(t,r).\label{ComovToSSC}
\end{align}
The following theorem establishes that the mapping $\Phi:(t,r)\to(\bar{t},\bar{r})$ converts the $p=\sigma\rho$, $k=0$ Friedmann spacetime to self-similar SSC form.

\begin{Thm}[Informal statement of Theorem \ref{SelfSimExpandxi}]
    Assume $|\xi|<\xi_0\approx0.816$. Then $\Phi$ in (\ref{ComovToSSC}) defines a regular coordinate mapping and takes the $p=\sigma\rho$, $k=0$ Friedmann spacetime to $SSC$ metric form
    \begin{align}
        ds^2 = -B_\sigma dt^2 + \frac{1}{A_\sigma}dr^2 + r^2d\Omega^2,
    \end{align}
    such that the metric components $A_\sigma$, $B_\sigma$, the density variable $\kappa\rho_\sigma r^2$ and velocity
    \begin{align}
        v_\sigma = \frac{1}{\sqrt{A_\sigma B_\sigma}}\frac{\bar{u}^1_\sigma}{\bar{u}_\sigma^0}
    \end{align}
    are functions of the single variable $\eta$ according to:
    \begin{align}
        A_\sigma &= 1 - \left(\frac{\alpha\eta}{2}\right)^2,\label{Aformetafirst}\\
        B_\sigma &= \frac{\left(1+\frac{\alpha(2-\alpha)}{4}\eta^2\right)^{\frac{2-2\alpha}{2-\alpha}}}{1-\left(\frac{\alpha\eta}{2}\right)^2},\label{Bformetafirst}\\
        \kappa\rho_\sigma r^2 &= \frac{3}{4}\alpha^2\eta^2,\label{zformetafirst}
        \\
        v_\sigma &= \frac{\alpha}{2}\eta.\label{vformetafirst}
    \end{align}
    Moreover, $\eta$ is given implicitly as a function of $\xi$ by the the relation
    \begin{align*}
        \xi = \frac{r}{t} = \frac{\eta}{\mathcal{F}(\eta)}.
    \end{align*}
\end{Thm}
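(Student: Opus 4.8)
The plan is to compute the pullback of the comoving Friedmann metric (\ref{metricFriedmann}) under the explicit map $\Phi$ in (\ref{ComovToSSC}) and verify directly that the result has SSC form with coefficients as stated. First I would record the inverse relationship: since $\bar r = \eta t$ and $\bar t = \mathcal{F}(\eta) t$ with $\eta = r t^{\frac\alpha2 -1}$, the map has Jacobian that I need to invert, so I would treat $(t,\eta)$ as intermediate variables and write $dt$, $d\bar t$, $d\bar r$ in terms of $d t$ and $d\eta$ (or $dr$). Computing $d\bar r = \eta\, dt + t\, d\eta$ and $d\bar t = \mathcal{F}(\eta)\,dt + t\,\mathcal{F}'(\eta)\,d\eta$, I then substitute $R(t) = t^{\alpha/2}$, so $R(t)^2\,dr^2 + \cdots$ needs $r = \bar r/R = \eta t^{1-\alpha/2}$, giving $R\,dr = d\bar r - \frac\alpha2 \eta\, t^{-1}\,\bar{t}\,\ldots$; carefully, $R(t)\,r = \bar r$ so $R\,dr + r\,dR = d\bar r$, i.e. $R\,dr = d\bar r - r R' \,dt = d\bar r - \frac\alpha2 \eta\, dt$. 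Hence the spatial part $R^2\,dr^2 = \big(d\bar r - \frac\alpha2 \eta\,dt\big)^2$, and the metric becomes $-dt^2 + \big(d\bar r - \frac\alpha2\eta\,dt\big)^2 + \bar r^2 d\Omega^2$. The remaining task is a change of the $(t,\bar r)$ pair to $(\bar t, \bar r)$: solve the linear system for $dt$ in terms of $d\bar t, d\bar r$ using $d\bar t = \mathcal{F}\,dt + t\mathcal{F}'\,d\eta$ and $d\bar r = \eta\,dt + t\,d\eta$, eliminate $d\eta$, and substitute.

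The key algebraic identity that makes everything collapse is the choice of $\mathcal{F}$ in (\ref{defineF}): one checks that $\mathcal{F}$ is engineered precisely so that the cross term $dt\,d\bar r$ in the transformed metric vanishes, which is the SSC (diagonality) requirement. Concretely, after eliminating $d\eta$ one finds a coefficient of $dt\,d\bar r$ proportional to something like $\mathcal{F} - \eta\mathcal{F}' - (\text{curvature factor})$, and the ODE $\mathcal{F} - \frac{\alpha\eta}{2}\cdot(\text{stuff}) = \ldots$ that $\mathcal{F}$ satisfies — essentially $2\mathcal{F}'/\mathcal{F} = \frac{\alpha(2-\alpha)\eta/2}{1 + \alpha(2-\alpha)\eta^2/4}$, equivalently $(2-\alpha)\,\mathcal{F}'/\mathcal{F} = \frac{\alpha(2-\alpha)\eta/2}{1+\alpha(2-\alpha)\eta^2/4}$ — is exactly the condition forcing that cross term to zero. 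I would verify this ODE by differentiating (\ref{defineF}) directly. Once diagonality holds, reading off the coefficient of $d\bar t^2$ gives $-B_\sigma$ and of $d\bar r^2$ gives $1/A_\sigma$; plugging in and simplifying using $1 - (\alpha\eta/2)^2$ as the recurring factor yields (\ref{Aformetafirst})–(\ref{Bformetafirst}). For the matter variables, the four-velocity $\vec u = (1,0,0,0)$ in comoving coordinates transforms by the Jacobian of $\Phi^{-1}$; computing $\bar u^0_\sigma, \bar u^1_\sigma$ and forming $v_\sigma = \frac{1}{\sqrt{A_\sigma B_\sigma}}\,\bar u^1_\sigma/\bar u^0_\sigma$ gives (\ref{vformetafirst}), while $\rho_\sigma = \frac{4}{3\kappa(1+\sigma)^2 t^2}$ combined with $\bar r = \eta t$ and $\alpha = \frac{4}{3(1+\sigma)}$ gives $\kappa \rho_\sigma \bar r^2 = \frac34\alpha^2\eta^2$, which is (\ref{zformetafirst}). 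Finally, the relation $\xi = \bar r/\bar t = \eta t/(\mathcal{F}(\eta)t) = \eta/\mathcal{F}(\eta)$ is immediate from the definition of $\Phi$, and regularity of the coordinate map on $|\xi| < \xi_0$ follows by checking the Jacobian determinant $\det D\Phi = t\,(\mathcal{F}(\eta) - \eta\mathcal{F}'(\eta))$ (or similar) is nonvanishing there, i.e. that $\eta \mapsto \eta/\mathcal{F}(\eta)$ is monotone with range covering $|\xi| < \xi_0$; I would invoke (or re-derive) the earlier monotonicity statement behind $\xi_0 \approx 0.816$.

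I expect the main obstacle to be the bookkeeping in the linear-algebra step that eliminates $d\eta$ and re-expresses the metric in $(\bar t, \bar r)$: the Jacobian entries involve $\mathcal{F}$ and $\mathcal{F}'$ multiplied by powers of $t$ and $\eta$, and one must carefully track that all explicit $t$-dependence cancels so that the final $A_\sigma, B_\sigma$ depend on $\eta$ alone — this cancellation is the content of the self-similarity claim and is not automatic from the setup. The cleanest route is probably to work with the pair $(\bar t, \eta)$ throughout (rather than $(\bar t, \bar r)$), since $\bar r = \eta t$ and $t = \bar t/\mathcal{F}(\eta)$ make $t$ an explicit function of $(\bar t, \eta)$, reducing the inversion to one genuine computation; then convert $d\eta$ back to $d\bar r$ at the very end using $\bar r = \eta t = \eta\,\bar t/\mathcal{F}(\eta)$. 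A secondary point requiring care is the sign and normalization of $\bar u^0_\sigma$: one must use the normalization $g_{\mu\nu}\bar u^\mu \bar u^\nu = -1$ as a consistency check on the transformed four-velocity rather than relying solely on the raw Jacobian action, since the transformation is not orthonormal.
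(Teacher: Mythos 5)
Your proposal is correct and would go through, but it takes a more direct route than the paper. The paper does not attack the composite map $\Phi$ of (\ref{ComovToSSC}) head-on; instead it first proves a general transformation theorem (Theorem \ref{SSCStandardGauge}, valid for all $k$) in which the new time is built from a separable ansatz $\hat t = h(t)g(r)$, the diagonality requirement $\hat g_{01}=0$ forces $\frac{1-kr^2}{r}\frac{g'}{g}=\lambda=\dot R R\frac{h'}{h}$, and the normal gauge is imposed afterwards by $F=h^{-1}$; the proof of the present theorem (Section \ref{Appendix1B}) then consists of checking that for $k=0$, $R=t^{\alpha/2}$ the resulting $F(h(t)g(r))$ coincides with $\mathcal{F}(\eta)t$ and of substituting $H=\frac{\alpha}{2t}$ into the general formulas $A=1-kr^2-H^2\bar r^2$, $B=(F'(\Phi)\Phi_t)^{-2}\frac{1-kr^2}{1-kr^2-H^2\bar r^2}$ and $v=\dot R r/\sqrt{1-kr^2}$. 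Your version collapses these two stages into one differential-forms computation; the identity you isolate, namely $\mathcal{F}'/\mathcal{F}=\frac{\alpha\eta/2}{1+\frac{\alpha(2-\alpha)}{4}\eta^2}$ killing the $dt\,d\bar r$ term, is exactly the $k=0$ shadow of the paper's separability condition, and it does yield $\mathcal{F}-\eta\mathcal{F}'=\mathcal{F}\frac{1-(\alpha\eta/2)^2}{1+\frac{\alpha(2-\alpha)}{4}\eta^2}$, from which $1/A_\sigma$ and $B_\sigma$ drop out as you describe; note also that the vanishing of this quantity is precisely $A_\sigma=0$, which ties your Jacobian-determinant regularity check to the paper's $\xi_0=\sqrt{2/3}$ bound. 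What the paper's detour buys is reusability: the same machinery handles $k=\pm1$ in Theorem \ref{ThmLog-time translation} and its appendices. One small correction: your first displayed form of the diagonality ODE, $2\mathcal{F}'/\mathcal{F}=\frac{\alpha(2-\alpha)\eta/2}{1+\alpha(2-\alpha)\eta^2/4}$, is off by a factor (it would require $\alpha=0$); your alternative form with the prefactor $(2-\alpha)$ on the left is the correct one and is what the verification actually needs.
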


Restricting to the case $p=0$ and imposing the NG time gauge, solutions of the STV-PDE smooth at the center of symmetry admit the following formal expansion in even powers of $\xi$. We include here the metric coefficients $A$ and $D=\sqrt{AB}$ as well as the fluid variables $z$ and $w$:
\begin{align}
    A(t,\xi) - 1 &= A_2(t)\xi^2 + A_4(t)\xi^4 + \dotsc + A_{2n}(t)\xi^{2n} + \dots,\label{AansatzIntro}\\
    D(t,\xi) - 1 &= D_2(t)\xi^2 + D_4(t)\xi^4 + \dotsc + D_{2n}(t)\xi^{2n} + \dots,\label{DansatzIntro}\\
    z(t,\xi) &= z_2(t)\xi^2 + z_4(t)\xi^4 + \dotsc + z_{2n}(t)\xi^{2n} + \dots,\label{zansatzIntro}\\
    w(t,\xi) &= w_0(t) + w_2(t)\xi^2 + \dotsc + w_{2n-2}(t)\xi^{2n-2} + \dots,\label{wansatzIntro}
\end{align}
with:
\begin{align}
    A_0 &= 1, & D_0 &= 1.\label{AzeroDzeroequal1}
\end{align}
The STV-ODE are derived by substituting (\ref{AansatzIntro})--(\ref{wansatzIntro}) into the STV-PDE and collecting like powers of $\xi$. The equations close at every order $n\geq1$ and we name the resulting systems the STV-ODE of order $n$. Carrying this procedure out to order $n=3$ is the subject of the following theorem.\footnote{The STV-ODE of order $n=3$ appear adequate for modeling redshift vs luminosity relations in Cosmology \cite{SmolTeVo}.}
 
\begin{Thm}[Partial statement of Corollary \ref{CorForNequal3}]\label{EqnsToOrder3}
    The STV-ODE computed up to order $n=3$ are equivalent to the following system, which closes at every order:
    \begin{align}
        t\dot{z}_2 &= 2z_2 - 3z_2w_0,\label{z2final}\\
        t\dot{w}_0 &= -\frac{1}{6}z_2 + w-w_0^2,\label{w0final}\\
        t\dot{z}_4 &= 4z_4 - 5(z_4w_0+z_2w_2+z_2w_0D_2),\label{z4final}\\
        t\dot{w}_2 &= -\frac{1}{10}z_4 + 3w_2 - 4w_0w_2 - \frac{1}{2}w_0^2A_2 - \frac{1}{2}A_2^2 + \frac{1}{2}A_2D_2-w_0^2D_2,\label{w2final}\\
        t\dot{z}_6 &= 6z_6 - 7(z_6w_0+z_2w_4+z_2w_0D_4+z_2w_2D_2+z_4w_0D_2+z_4w_2),\label{z6final}\\
        t\dot{w}_4 &= -\frac{1}{14}z_6 + 5w_4 - 6w_0w_4 - \frac{1}{2}w_0^2(A_4-A_2^2) - \frac{1}{2}w_0^2A_2D_2\notag\\
        &- w_0w_2A_2 + \frac{1}{2}A_2D_4 + \frac{1}{2}(A_4-A_2^2)D_2 - A_2A_4 + \frac{1}{2}A_2^3\notag\\
        &- w_0^2D_4 - 4w_0w_2D_2 - 3w_2^2.\label{w4final}
    \end{align}
    Moreover,
    \begin{align}
        A_2 &= -\frac{1}{3}z_2, & A_4 &= -\frac{1}{5}z_4, & A_6 &= -\frac{1}{7}z_6,\label{As}
    \end{align}
    and:
    \begin{align}
        D_2 &= -\frac{1}{12}z_2\label{D2-},\\
        D_4 &= -\frac{3}{40}z_4 + \frac{1}{8}z_2w_0^2 - \frac{1}{96}z_2^2,\label{D4}\\
        D_6 &= \frac{1}{12}\bigg(z_4w_0^2+2z_2w_0w_2+\frac{5}{24}z_2^2w_0^2-\frac{23}{120}z_2z_4-\frac{7}{288}z_2^3-\frac{5}{7}z_6\bigg).\label{D6}
    \end{align}
\end{Thm}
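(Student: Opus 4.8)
The plan is to prove the theorem by a direct but carefully organized computation: substitute the smooth-at-the-center ansatz (\ref{AansatzIntro})--(\ref{wansatzIntro}) into the $p=0$ STV-PDE (\ref{AeqnxiIntro})--(\ref{weqnxiIntro}) of Theorem \ref{STV-PDE}, expand every nonlinear expression as a power series in $\xi^2$, and match coefficients of $\xi^{2n}$ for $n=1,2,3$. I would exploit the order in which the four equations decouple. The $A$-equation (\ref{AeqnxiIntro}) is already polynomial: with $\xi A_\xi=\sum 2nA_{2n}\xi^{2n}$ and $1-A=-\sum A_{2n}\xi^{2n}$, matching $\xi^{2n}$ gives $(2n+1)A_{2n}=-z_{2n}$, hence $A_{2n}=-\tfrac1{2n+1}z_{2n}$ for all $n$, which is (\ref{As}). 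For the $D$-equation (\ref{DeqnxiIntro}) I would expand $\tfrac1A=1-A_2\xi^2+(A_2^2-A_4)\xi^4+\cdots$, form the product $\tfrac{D}{2A}\big(2(1-A)-(1-\xi^2w^2)z\big)$, and match against $\xi D_\xi=\sum 2nD_{2n}\xi^{2n}$. Since the bracket on the right starts at $\xi^2$, the coefficient $D_{2n}$ never multiplies itself, so each matching identity solves explicitly for $D_{2n}$ in terms of $z_{2k}$ (through $A_{2k}$), $w_{2k-2}$ and $D_{2k}$ with $k<n$; substituting (\ref{As}) and the computed $D_2,D_4$ into the order-$3$ identity gives (\ref{D2-})--(\ref{D6}).

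For the evolution equations I would use that $\xi\partial_\xi$ acts on a series in $\xi^{2n}$ by multiplying its $n$th coefficient by $2n$, so the $\xi$-derivatives introduce no new structure. In the $z$-equation (\ref{zeqnxiIntro}) I would expand the triple products $(-1+Dw)z$ and $Dwz$, collect $\xi^{2n}$, solve for $t\dot z_{2n}$, and substitute $D_2$ (and $D_4$ at order $n=3$) to obtain (\ref{z2final}), (\ref{z4final}), (\ref{z6final}). The $w$-equation (\ref{weqnxiIntro}) is the laborious one: here I would first expand $\tfrac{1-A}{A}=-A_2\xi^2+(A_2^2-A_4)\xi^4+\cdots$, note it has no constant term so that $\tfrac{1}{2\xi^2}(1-\xi^2w^2)\tfrac{1-A}{A}$ is again a power series in $\xi^2$ with no negative powers (a legitimate formal-power-series manipulation since $A$ has constant term $1$, and convergent for $|\xi|<\xi_0$ by the earlier results), add $w^2$, multiply the sum by $D$, subtract from $w$, and on the left expand $\xi(-1+Dw)w_\xi$. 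Matching $\xi^0,\xi^2,\xi^4$ and inserting (\ref{As}) and (\ref{D2-})--(\ref{D6}) yields (\ref{w0final}), (\ref{w2final}), (\ref{w4final}); in particular the term $-5z_2w_0D_2$ in (\ref{z4final}), which is what corrects the earlier version in \cite{SmolTeVo}, arises from the cross term $D_2w_0z_2$ in the expansion of $Dwz$ together with the matching cross term from $\xi((-1+Dw)z)_\xi$.

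It remains to argue, once and for all, that the system closes at each order; I would do this structurally rather than by continuing the computation. The key observations are: (i) $z$ starts at $\xi^2$, so in any product of $D$, $w$, $z$ the $\xi^{2n}$-coefficient involves only $D_{2k},w_{2k-2}$ with $k\le n-1$ and $z_{2k}$ with $k\le n$ — no metric coefficient above order $n-1$ enters the $z$- and $w$-equations at the relevant order; (ii) the only factor in the STV-PDE carrying a negative power of $\xi$, namely $\tfrac{1}{2\xi^2}\tfrac{1-A}{A}$, acts on a quantity vanishing to second order, hence produces only nonnegative powers of $\xi^2$, and the single high coefficient $A_{2n}$ it contributes to the $w$-equation at order $\xi^{2n-2}$ is, by (\ref{As}), just $-z_{2n}/(2n+1)$, a legitimate order-$n$ unknown; (iii) the $A$- and $D$-equations recursively determine $A_{2n}$ and $D_{2n}$ from $z_2,\dots,z_{2n}$ and $w_0,\dots,w_{2n-2}$ alone. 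Together these show $t\dot z_{2n}$ and $t\dot w_{2n-2}$ depend only on $(z_2,w_0,\dots,z_{2n},w_{2n-2})$, which is the asserted closure; for $n=1,2,3$ the explicit right-hand sides are exactly (\ref{z2final})--(\ref{w4final}).

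The principal obstacle is not conceptual but the volume and delicacy of the $n=3$ bookkeeping in the $w$-equation: one must track every cross term of the triple products $Dwz$, $Dw^2$ and $D\,\xi^{-2}(1-\xi^2w^2)(1-A)/A$, correctly propagate the constraint relations (\ref{As}) and (\ref{D2-})--(\ref{D6}), and not drop or mis-sign any fourth-order term — precisely the step at which the error in \cite{SmolTeVo} occurred. A secondary point to handle with care is the $\tfrac1{\xi^2}$ factor, but this is benign once one records that $1-A=O(\xi^2)$, so no genuine singular term ever appears.
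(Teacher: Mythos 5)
Your proposal is correct and is essentially the paper's own argument: substitute the even-power ansatz into the STV-PDE, collect like powers of $\xi$, solve the algebraic $A$- and $D$-equations for $A_{2n}$ and $D_{2n}$ in terms of lower-order fluid variables, and observe closure from the facts that $z=O(\xi^2)$ and $1-A=O(\xi^2)$ (so the $\xi^{-2}$ factor is benign and contributes only the single top-order term $A_{2n}\propto z_{2n}$). The only organizational difference is that the paper first derives the general order-$n$ recursions of Theorem \ref{Thmgeneralexpansion} (Section \ref{Appendix1H}) and then specializes to $n=1,2,3$ in Corollary \ref{CorForNequal3}, whereas you compute orders $1$--$3$ directly and append a structural closure argument; the underlying computation and the closure mechanism are the same.
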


Consistent with (\ref{nbynsystemFirst}), the STV-ODE are nested the sense that the equations of order $n-1$ are a closed subsystem of the STV-ODE of order $n$. The following describes this nested structure of the STV-ODE in general.

\begin{Thm}[Partial statement of Theorem \ref{Thmgeneralexpansion}]\label{ThmSTVequations}
    Assume a smooth solution $(A(t,\xi),D(t,\xi),z(t,\xi),w(t,\xi))$ of (\ref{AeqnxiIntro})--(\ref{weqnxiIntro}) is expanded in even powers of $\xi$ as in (\ref{AansatzIntro})--(\ref{wansatzIntro}). Then $A_{2k}$ can be re-expressed in terms of $z_{2k}$, and $D_{2k}$ can be re-expressed in terms of $z_{2k}$ and $w_0,\dots,w_{2k-2}$, to form, at each order $n\in\mathbb{N}$, a $2n\times2n$ system of ODE
    \begin{align}
        t\dot{\boldsymbol{U}} = \boldsymbol{F}_n(\boldsymbol{U})\label{nbynsystemIntro}
    \end{align}
    in unknowns
    \begin{align*}
        \boldsymbol{U} = (\boldsymbol{v}_1,\dots,\boldsymbol{v}_n)^T,
    \end{align*}
    where
    \begin{align*}
        \boldsymbol{v}_{k}=(z_{2k},w_{2k-2})^T.
    \end{align*}
    Moreover, system (\ref{nbynsystemIntro}) takes the component form
    \begin{align}
        \frac{d}{d\tau}\left(\begin{array}{c}
        \boldsymbol{v}_1\\
        \boldsymbol{v}_2\\
        \vdots\\
        \boldsymbol{v}_n\end{array}\right) = \left(\begin{array}{c}
        P_1\boldsymbol{v}_1+\boldsymbol{q}_1\\
        P_2\boldsymbol{v}_2+\boldsymbol{q}_2\\
        \vdots\\
        P_n\boldsymbol{v}_n+\boldsymbol{q}_n
        \end{array}\right),\label{nbynsystemCompsIntro}
    \end{align}
    where
    \begin{align}
        P_k = P_k(\boldsymbol{v}_1) = \left(\begin{array}{cc}
        (2k+1)(1-w_0)-1 & -(2k+1)z_2\\
        -\frac{1}{4k+2} & (2k+2)(1-w_0)-1
        \end{array}\right)\label{PkIntro}
    \end{align}
    depends only on $\boldsymbol{v}_1=(z_2,w_0)^T$ and $\boldsymbol{q}_k$ depends only on lower order terms:
    \begin{align*}
        \boldsymbol{q}_1 &= \boldsymbol{0},\\
        \boldsymbol{q}_k &= \boldsymbol{q}_k(\boldsymbol{v}_1,\dots,\boldsymbol{v}_{k-1}),
    \end{align*}
    for each $k=2,\dots,n$.
\end{Thm}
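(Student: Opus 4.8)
The plan is to prove the statement by substituting the even-power ansatz (\ref{AansatzIntro})--(\ref{wansatzIntro}) into the four STV-PDE (\ref{AeqnxiIntro})--(\ref{weqnxiIntro}), matching coefficients of $\xi^{2m}$, and carefully tracking which Taylor coefficients enter each matched equation. It is convenient to organize the bookkeeping around ``levels'': level $k$ is the pair $\boldsymbol v_k=(z_{2k},w_{2k-2})$, and the goal is to show the matched equations split into blocks, the level-$m$ block involving only levels $1,\dots,m$ and having the asserted form. The argument is a double induction: an inner recursion on $k$ to eliminate $A_{2k}$ and $D_{2k}$, then an outer induction on the order $n$.

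First I would dispose of the two constraint equations (\ref{AeqnxiIntro})--(\ref{DeqnxiIntro}), which carry no $t$-derivative. Matching $\xi^{2k}$ in (\ref{AeqnxiIntro}) gives $2kA_{2k}=-z_{2k}-A_{2k}$, hence $A_{2k}=-\frac{1}{2k+1}z_{2k}$, which is the first assertion and recovers (\ref{As}). For (\ref{DeqnxiIntro}) I would clear the denominator, writing $2A\,\xi D_\xi=D\bigl(2(1-A)-(1-\xi^2w^2)z\bigr)$, and match $\xi^{2k}$: the only appearance of $D_{2k}$ on the left is the term $4k\,D_{2k}$ (all other occurrences of $D$ sit inside products $A_{2i}D_{2j}$ with $i\ge1$, so $j<k$), and $D_{2k}$ does not appear on the right because the bracket begins at order $\xi^2$. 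Since $4k\neq0$, one solves recursively for $D_{2k}$ as a polynomial in $z_2,\dots,z_{2k}$ and $w_0,\dots,w_{2k-2}$ (using $A_{2i}=-z_{2i}/(2i+1)$ to remove the $A$'s), which is the second assertion and reproduces (\ref{D2-})--(\ref{D6}). After this step, $A$ and $D$ are fully eliminated in favour of the fluid coefficients.

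Next I would turn to the dynamical equations. Rewriting (\ref{zeqnxiIntro}) as $tz_t+\xi(Ez)_\xi+Ez=-z$ with $E:=-1+Dw$ (and noting $Ez$ starts at order $\xi^2$), the $\xi^{2m}$ coefficient reads $t\dot z_{2m}=-z_{2m}-(2m+1)(Ez)_{2m}$ with $(Ez)_{2m}=\sum_{j+k=m}E_{2j}z_{2k}$, where $E_0=w_0-1$ and $E_{2j}=(Dw)_{2j}$ for $j\ge1$. Inspecting this sum, the only level-$m$ coefficients present are $z_{2m}$, from $E_0z_{2m}$, with coefficient $-1-(2m+1)(w_0-1)=(2m+1)(1-w_0)-1$, and $w_{2m-2}$, from the $D_0w_{2m-2}$ piece of $E_{2m-2}$ multiplying $z_2$, with coefficient $-(2m+1)z_2$; everything else involves only levels $\le m-1$. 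For (\ref{weqnxiIntro}) the delicate point is the term $D\Psi$ with $\Psi:=\tfrac1{2\xi^2}(1-\xi^2w^2)\tfrac{1-A}{A}$: one must check that $(1-\xi^2w^2)\tfrac{1-A}{A}$ vanishes to order $\xi^2$, so that the apparent pole cancels and $\Psi$ is a genuine even power series with $\Psi_{2m}=\tfrac12\bigl[(1-\xi^2w^2)\tfrac{1-A}{A}\bigr]_{2(m+1)}$. Because of this upward index shift, the coefficient $A_{2m}=-z_{2m}/(2m+1)$, hence the level-$m$ variable $z_{2m}$, first enters the $\xi^{2m-2}$ coefficient of the $w$-equation (through $\Psi_{2m-2}$), with coefficient $-\tfrac12\cdot\tfrac1{2m+1}=-\tfrac1{4m+2}$, and $A_{2m}$ cannot enter directly, since it sits at $\xi^{2m}$ and any nonconstant cofactor pushes it past $\xi^{2m-2}$. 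The variable $w_{2m-2}$ enters only through $[w]_{2m-2}$, the $-2w_0w_{2m-2}$ part of $-[Dw^2]_{2m-2}$, and the transport term $-2(m-1)(w_0-1)w_{2m-2}$, giving total coefficient $2m(1-w_0)-1$ (so the computation yields $2k(1-w_0)-1$ for the lower-right entry of $P_k$, in agreement with (\ref{w2first2})). This pins down both rows of the level-$m$ block and shows $P_m$ depends only on $\boldsymbol v_1$ while the remainder $\boldsymbol q_m$ is a polynomial in $\boldsymbol v_1,\dots,\boldsymbol v_{m-1}$; for $m=1$ one checks directly that the base block is the nonlinear system (\ref{z2first1})--(\ref{w0first1}).

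Finally I would assemble the induction. The above shows the $\xi^{2m}$ coefficient of the $z$-equation involves only $z_{2j},w_{2j}$ with $j\le m$, and the $\xi^{2m-2}$ coefficient of the $w$-equation involves only those with $j\le m-1$ together with $z_{2m}$; pairing them gives the level-$m$ block on levels $1,\dots,m$. Collecting blocks for $1\le m\le n$ produces the closed $2n\times2n$ system (\ref{nbynsystemIntro}), nested because the levels $\le n-1$ reproduce the order-$(n-1)$ system verbatim. I expect the main obstacle to be the $w$-equation analysis of the third paragraph: correctly expanding the two rational nonlinearities $1/A$ and $1/(2\xi^2)$, verifying regularity of $\Psi$, and confirming that the index shift brings in $z_{2m}$ with exactly the bare constant $-1/(4m+2)$ and no stronger dependence — this is where the nontrivial structure of $P_k$ (dependence on $\boldsymbol v_1$ alone, with a constant off-diagonal entry) is genuinely being established rather than transcribed.
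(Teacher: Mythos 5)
Your proposal is correct and follows essentially the same route as the paper's proof of Theorem \ref{Thmgeneralexpansion} in Section \ref{Appendix1H}: substitute the even-power ansatz, eliminate $A_{2k}$ and $D_{2k}$ through the two constraint equations, and isolate the level-$k$ coefficients of the two evolution equations to identify $P_k$, relegating everything else to $\boldsymbol{q}_k$. One point worth flagging: your computation gives the lower-right entry of $P_k$ as $2k(1-w_0)-1$, which disagrees with the $(2k+2)(1-w_0)-1$ printed in (\ref{PkIntro}) but agrees with the paper's own formula (\ref{Pk}) in Theorem \ref{Thmgeneralexpansion} and with the explicit coefficient $3-4w_0$ of $w_2$ in (\ref{w2first2}); the entry in (\ref{PkIntro}) is an indexing slip (the general $w$-equation (\ref{wneqn}) carries the coefficient $(2n+2)(1-w_0)-1$ on $w_{2n}$, and $w_{2n}$ belongs to $\boldsymbol{v}_{n+1}$, not $\boldsymbol{v}_n$), so the value you derived is the correct one.
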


The STV-ODE of order $n$ all admit the rest points $SM$ and $M$. The coordinates of the rest point $SM$ are obtained by expanding the self-similar formulation of the $k=0$ Friedmann spacetime in even powers of $\xi$ about the center. Alternatively, since the $k=0$ Friedmann spacetime is a time independent solution of the STV-PDE, it follows that the resulting expansion gives the coordinates of the rest point $SM$ at every order. The rest point $M$, which represents Minkowski spacetime in the limit $\rho\to0$, $v\to\xi$, has zeros in all entries except for a $1$ in the (second) $w_0$-entry, that is, $M=(0,1,0,\dots,0)$. It is easy to verify $M$ is a rest point by induction using (\ref{nbynsystemCompsIntro}). We record the rest points $SM$ and $M$, together with their eigenpairs as follows.

\begin{Thm}\label{DegenerateM}
    Each STV-ODE of order $n\geq1$ admit the rest points:
    \begin{align}
        M &= (0,1,0,\dots,0)\in\mathbb{R}^{2n}, & SM &= \bigg(\frac{4}{3},\frac{2}{3},\frac{40}{27},\frac{2}{9},\dots\bigg).
    \end{align}
    The rest point $M$ is a degenerate stable rest point with one eigenvalue and one eigenvector given by:
    \begin{align}
        \lambda_M &= -1, & \boldsymbol{R}_M &= (0,1,0,1,\dots)^T.\label{eiganM}
    \end{align}
    The eigenvalues of $SM$, computable directly from (\ref{PkIntro}), are given by:
    \begin{align}
        \lambda_{An} &= \frac{2n}{3}, & \lambda_{Bn} &= \frac{1}{3}(2n-5).
    \end{align}
    The corresponding eigenvectors up to order $n=2$ are given by:
    \begin{align}
        \lambda_{A1} &= \frac{2}{3}, & \boldsymbol{R}_1 := \boldsymbol{R}_{A1} &= \bigg(-9,\frac{3}{2},-\frac{10}{3},-1\bigg)^T;\label{eigenSM1}\\
        \lambda_{B1} &= -1, & \boldsymbol{R}_{B1} &= \bigg(4,1,\frac{80}{9},1\bigg)^T;\\
        \lambda_{A2} &= \frac{4}{3}, & \boldsymbol{R}_3 := \boldsymbol{R}_{A2} &= (0,0,-10,1)^T;\\
        \lambda_{B2} &= -\frac{1}{3}, & \boldsymbol{R}_{B2} &= \bigg(0,0,\frac{20}{3},1\bigg)^T.\label{eigenSM4}
    \end{align}
\end{Thm}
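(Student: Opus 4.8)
The plan is to reduce every assertion to a small number of $2\times2$ computations using the nested, block-lower-triangular form of the STV-ODE recorded in Theorem~\ref{ThmSTVequations}. For the rest points I would first exhibit $M$ as an exact solution of the STV-PDE: the constant profile $z\equiv0$, $w\equiv1$, $A\equiv1$, $D\equiv1$ satisfies (\ref{AeqnxiIntro})--(\ref{weqnxiIntro}) identically, so all coefficients in the even-power expansion (\ref{zansatzIntro})--(\ref{wansatzIntro}) are constant and $M=(0,1,0,\dots,0)$ solves $\boldsymbol F_n(\boldsymbol U)=\boldsymbol 0$ at every order. For $SM$ I would invoke Theorem~\ref{SelfSimExpandxi}: with $p=0$ (hence $\alpha=\tfrac43$) the critical Friedmann spacetime depends on $\xi$ alone in SSCNG, so it is a time-independent solution of the STV-PDE; expanding (\ref{Aformetafirst})--(\ref{vformetafirst}) in even powers of $\xi$ via $\xi=\eta/\mathcal F(\eta)$ and $z=\kappa\rho r^2/(1-v^2)$ yields the constant coefficients $(\tfrac43,\tfrac23,\tfrac{40}{27},\tfrac29,\dots)$. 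Equivalently one can solve $\boldsymbol F_n(\boldsymbol U)=\boldsymbol 0$ recursively: (\ref{z2first1})--(\ref{w0first1}) force $\boldsymbol v_1=(\tfrac43,\tfrac23)$, and for $k\ge2$ the rest-point relation is the linear equation $P_k(\tfrac43,\tfrac23)\boldsymbol v_k=-\boldsymbol q_k$, uniquely solvable because $P_k(\tfrac43,\tfrac23)$ has nonzero eigenvalues.

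For the eigenvalues, linearising $t\dot{\boldsymbol U}=\boldsymbol F_n(\boldsymbol U)$ and using that $\boldsymbol q_k$ depends only on $\boldsymbol v_1,\dots,\boldsymbol v_{k-1}$ while $P_k$ depends only on $\boldsymbol v_1$ makes $D\boldsymbol F_n$ block lower-triangular with $2\times2$ diagonal blocks: the $k=1$ block is the Jacobian of the leading-order field (\ref{z2first1})--(\ref{w0first1}), and for $k\ge2$ the $k$-th diagonal block is $P_k$ evaluated at the rest point's $\boldsymbol v_1$-entry. Hence the spectrum is the union of the spectra of the diagonal blocks. At $M$ every diagonal block is lower triangular with both diagonal entries $-1$, so $\lambda_M=-1$ at every order. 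At $SM$ I would compute the trace and determinant of each block: for $k=1$ the characteristic polynomial is $3\lambda^2+\lambda-2=(3\lambda-2)(\lambda+1)$, and for $k\ge2$ the characteristic polynomial of $P_k(\tfrac43,\tfrac23)$ has trace $\tfrac13(4k-5)$ and determinant $\tfrac29 k(2k-5)$, whose discriminant collapses to $25/9$ — the same value already obtained at $k=1$. Since $25/9$ is a perfect square, the eigenvalues are the rational pair $\lambda_{Ak}=\tfrac{2k}{3}$, $\lambda_{Bk}=\tfrac13(2k-5)$. This discriminant identity is the algebraic reason $SM$ persists so cleanly, and it uses the exact coordinates of $SM$, so it must come after the rest-point step.

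For the eigenvectors, since $\tfrac{2k}{3}$ and $\tfrac13(2m-5)$ never coincide (parity), all $2n$ eigenvalues of $SM$ are distinct, so $SM$ is diagonalisable with one-dimensional eigenspaces, and I would obtain each eigenvector by block back-substitution: for a given $\lambda$ the first block forces $\boldsymbol v_1\in\ker(J_{11}-\lambda I)$ (which is $\{0\}$ unless $\lambda\in\{\tfrac23,-1\}$), and then each subsequent equation $\big(P_k(\tfrac43,\tfrac23)-\lambda I\big)\boldsymbol v_k=-\sum_{j<k}J_{kj}\boldsymbol v_j$ is uniquely solvable when $\lambda$ is not an eigenvalue of that block, and otherwise pins $\boldsymbol v_k$ down to its kernel; carrying this out for $k=1,2$ reproduces (\ref{eigenSM1})--(\ref{eigenSM4}), and the $\lambda=-1$ back-substitution at $M$ returns $\boldsymbol R_M=(0,1,0,1,\dots)$. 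The step I expect to be the main obstacle is pinning down the degenerate (non-semisimple) character of $M$ together with its distinguished eigenvector: this requires careful bookkeeping of the off-diagonal coupling blocks $J_{kj}$ at $M$ to identify the Jordan structure, and it is precisely this degeneracy — not the diagonal blocks alone — that governs the $O(t^{-1})$ versus $O(t^{-1}\ln t)$ decay rates used elsewhere in the paper. Everything else is routine $2\times2$ linear algebra once the block-triangular reduction is in place.
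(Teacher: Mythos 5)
Your approach is essentially the paper's: the paper proves this theorem by pointing to the Section \ref{S8} calculations plus an induction on the nested structure (\ref{nbynsystemCompsIntro}), which is exactly your block-lower-triangular reduction with diagonal blocks $P_k(\boldsymbol{v}_1)$; your trace/determinant computation (trace $\tfrac13(4k-5)$, determinant $\tfrac29 k(2k-5)$, discriminant $\tfrac{25}{9}$) checks out and reproduces $\lambda_{Ak}=\tfrac{2k}{3}$, $\lambda_{Bk}=\tfrac13(2k-5)$, and your back-substitution recovers (\ref{eigenSM1})--(\ref{eigenSM4}) correctly. One warning: you should use the version of the diagonal block in (\ref{Pk}), whose lower-right entry $2k(1-w_0)-1$ matches the explicit order-$2$ equations; the lower-right entry $(2k+2)(1-w_0)-1$ printed in (\ref{PkIntro}) is inconsistent with (\ref{w2first2}) and would give the wrong $\lambda_{Bk}$.

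The one step you defer --- the Jordan structure at $M$ and the claim of a \emph{single} eigenvector $\boldsymbol{R}_M=(0,1,0,1,\dots)^T$ --- is a genuine gap, and it cannot be closed by the back-substitution you describe. At $M$ each diagonal block is lower triangular with zero upper-right entry, so $\ker(d\boldsymbol{F}_n(M)+I)$ is computed from the full lower-triangular matrix: at order $n=2$ the paper's own Jacobian (\ref{JacM}) gives a \emph{two}-dimensional eigenspace spanned by $(0,1,0,0)^T$ and $(0,0,0,1)^T$, as recorded in (\ref{epr12M}), because the coupling entries in the $w$-columns all vanish at $M$. So the back-substitution returns a kernel of dimension $\geq n$, not the single vector claimed in (\ref{eiganM}); the statement ``one eigenvector'' as written conflicts with the paper's own order-$2$ computation, and your proposal inherits that tension rather than resolving it. If you want the decay rates $O(t^{-1})$ versus $O(t^{-1}\ln t)$ that the paper draws from the degeneracy of $M$, the correct object to analyze is the nilpotent part of $d\boldsymbol{F}_n(M)+I$ (which is nonzero precisely because of the $-\tfrac{1}{2(2k+1)}$ entries and the couplings such as the $\tfrac14$ in the $(4,1)$ slot of (\ref{JacM})), not a claim of geometric multiplicity one.
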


Theorem \ref{DegenerateM} follows from the calculations given in Section \ref{S8} below. A few comments are in order. To verify that $M$ is a rest point of the STV-ODE at every order $n\geq1$ is a straightforward induction argument based on setting the right hand side of the STV-ODE (\ref{nbynsystemCompsIntro}) to zero and applying induction on $n$. That it is a degenerate stable rest point with one eigenvalue and one eigenvector given by (\ref{eiganM}) can be verified again by induction based on computing $d\boldsymbol{F}_n(M)$ from (\ref{PkIntro}) and using the nested property of equations (\ref{nbynsystemCompsIntro}). The components of the rest point $SM$ can be computed two different ways: First, by setting the right hand side of the STV-ODE (\ref{nbynsystemCompsIntro}) to zero and computing the zeros of $\boldsymbol{F}_n(\boldsymbol{U})$ in (\ref{PkIntro}), assuming $\boldsymbol{F}_{n-1}(SM)=0$, and second, they can be computed by expanding the self-similar form of the $k=0$ Friedmann spacetime (\ref{Aformetafirst})--(\ref{vformetafirst}) in even powers of $\xi$, such as is done in Section \ref{S11.5}.

The SSCNG gauge, imposed in (\ref{AzeroDzeroequal1}), still leaves open one last freedom in the SSCNG coordinate ansatz, namely, the invariance associated with time translation $t\to t-t_0=\tilde{t}$, a transformation which preserves proper time at $r=0$. This represents a redundancy of physical solutions of the STV-PDE and STV-ODE. We fix this gauge freedom for each solution separately by defining \emph{time since the Big Bang}, that is, so the leading order solution agrees with $SM$ or one of the trajectories in its unstable manifold, and hence agrees with a unique Friedmann spacetime in the $2\times2$ STV-ODE of order $n=1$. From this, the $STV-ODE$ of higher order $n\geq2$ then characterize accelerations away from Friedmann spacetimes. The change of gauge to time since the Big Bang is developed carefully in Section \ref{TSBB}. The general result is stated in the following lemma.

\begin{Lemma}\label{Lemmatimetranslation}
    Let $\boldsymbol{U}(t,\xi)=(A(t,\xi),D(t,\xi),z(t,\xi),w(t,\xi))$ denote an arbitrary outgoing smooth solution of the STV-PDE (\ref{AeqnxiIntro})--(\ref{weqnxiIntro}) in SSCNG coordinates and let $\tilde{\boldsymbol{U}}(\tilde{t},\tilde{\xi})$ denote the transformed solution of (\ref{AeqnxiIntro})--(\ref{weqnxiIntro}) obtained by making the NG gauge transformation:
    \begin{align*}
        t &\to t - t_0 = \tilde{t}, & \xi &= \bigg(\frac{\tilde{t}}{\tilde{t}+t_0}\bigg)\tilde{\xi},
    \end{align*}
    so that
    \begin{align*}
        \tilde{\boldsymbol{U}}(\tilde{t},\tilde{\xi}) = \boldsymbol{U}\bigg(\tilde{t}+t_0,\bigg(\frac{\tilde{t}}{\tilde{t}+t_0}\bigg)\tilde{\xi}\bigg).
    \end{align*}
    Then for any given $\boldsymbol{U}$, there exists a unique time translation $t\to t-t_*=\tilde{t}$ such that the leading order part of $\tilde{\boldsymbol{U}}$ is a solution which lies on the trajectories corresponding to the unstable manifold of $SM$ or else agrees with $SM$ itself.
\end{Lemma}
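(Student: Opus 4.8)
The plan is to reduce the statement to the leading-order ($n=1$) system (\ref{z2first1})--(\ref{w0first1}) and then to analyze how its trajectories are moved, under the one-parameter group of time translations, relative to the flow-invariant parabola $\Pi=\{z_2=3w_0^2\}$. First I would record the leading-order transformation law. Because $v$ and $z=\kappa\rho r^2/(1-v^2)$ are genuine scalars of the SSC metric --- unaffected by $t\mapsto t-t_0$, which fixes $A$, $B$ and the four-velocity --- while $\xi=r/t$ is replaced by $\tilde\xi=r/\tilde t$, inserting this into the expansions (\ref{zansatzIntro})--(\ref{wansatzIntro}) gives
\[
\tilde z_2(\tilde t)=z_2(\tilde t+t_0)\,\frac{\tilde t^{2}}{(\tilde t+t_0)^2},\qquad
\tilde w_0(\tilde t)=w_0(\tilde t+t_0)\,\frac{\tilde t}{\tilde t+t_0}.
\]
A direct substitution into (\ref{z2first1})--(\ref{w0first1}) confirms that $(\tilde z_2,\tilde w_0)$ again solves the $n=1$ system, so this defines an action $T_{t_0}$ of the additive group $\mathbb R$ on the maximal solutions of (\ref{z2first1})--(\ref{w0first1}) with $T_{t_0}\!\circ T_{t_1}=T_{t_0+t_1}$; the lemma then asks for the unique $t_*$ for which $T_{t_*}$ carries the leading-order trajectory $(z_2(t),w_0(t))$ of $\boldsymbol U$ onto the underdense component $\Sigma_{SM}^-$ of the unstable manifold of $SM$, onto its overdense counterpart, or onto the point $SM$.

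Second I would assemble the phase-portrait facts for outgoing ($z_2\ge0$, $w_0>0$) data (see Figure \ref{Figure1}). The parabola $\Pi$ is flow-invariant, since $\tfrac{d}{d\tau}(z_2-3w_0^2)$ vanishes on it; its two arcs meeting at $SM=(\tfrac43,\tfrac23)$ form the stable manifold of $SM$, and $z_2/w_0^2=3$ at $SM$. The rest point $U=(0,0)$ is a source whose linearization has eigenvalues $2,1$; the only quadratic term in the $z_2$-equation is $z_2w_0$, so the resonant monomial $w_0^2$ is absent and every trajectory issuing from $U$ has $z_2\sim c\,t^2$, $w_0\sim c'\,t$ as $t\to0^+$, hence is determined by the single invariant $\mu:=\lim_{t\to0^+}z_2/w_0^2=c/c'^2$ (with $\mu\in(0,3)$ above $\Pi$, $\mu>3$ below, and $\mu=3$ exactly on $\Pi$). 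From $\tfrac{d}{d\tau}(z_2/w_0^2)=z_2\big(\tfrac13z_2-w_0^2\big)/w_0^3$ one sees $z_2/w_0^2$ is strictly decreasing along any non-vacuum trajectory above $\Pi$, in particular along $\Sigma_{SM}^-$, where it runs from $3$ down to $0$. Finally, for the canonical solution $F=(z_2^F,w_0^F)$ of Theorem \ref{2x2connectingOrbitIntro} tracing $\Sigma_{SM}^-$, the two factors in the transformation law cancel in the ratio:
\[
\mu\big(T_{t_0}F\big)=\frac{z_2^F(t_0)}{\big(w_0^F(t_0)\big)^2}\qquad(t_0>0),
\]
a strictly decreasing bijection of $(0,\infty)$ onto $(0,3)$; moreover $T_0F=\Sigma_{SM}^-$, each $T_{t_0}F$ with $t_0>0$ issues from $U$, each with $t_0<0$ undergoes finite-time blow-up, and every $T_rF$ ($r\in\mathbb R$) stays strictly above $\Pi$. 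The same three observations, with $F$ replaced by $SM$ (whose translates all lie on $\Pi$) and by the trigonometric orbit tracing the overdense component, cover the remaining cases.

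With this in hand, the lemma follows. If the leading-order trajectory of $\boldsymbol U$ already traces $\Sigma_{SM}^\pm$ or equals $SM$, take $t_*=0$; its translates are distinct trajectories lying above $\Pi$, or on $\Pi$, and coincide with $\Sigma_{SM}^\pm$ or $SM$ only at parameter $0$, so $t_*$ is unique. Otherwise, excluding the vacuum case $z_2\equiv0$ (we assume $\rho>0$, so $\mu\ne0$) and routing the case $\mu_0=3$ through the $SM$ model, the leading-order trajectory has $\mu_0\in(0,3)$ (underdense) or $\mu_0>3$ (overdense). In the underdense case, completeness of the invariant $\mu$ forces this trajectory to coincide, as a maximal trajectory, with $T_{t_0(\mu_0)}F$, where $t_0(\mu_0)\in(0,\infty)$ is the unique preimage found above, whence $T_{-t_0(\mu_0)}$ carries it onto $\Sigma_{SM}^-$; thus $t_*=-t_0(\mu_0)$ works and is admissible since then $\tilde t=t+t_0(\mu_0)>0$ throughout the domain. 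Uniqueness is immediate from the group law: $T_s$ of the leading order is $T_{s+t_0(\mu_0)}F$, which lies on the unstable manifold of $SM$ (or equals $SM$) only when $s+t_0(\mu_0)=0$, because every $T_rF$ stays strictly above $\Pi$ and agrees with $\Sigma_{SM}^-$ only for $r=0$. The overdense case is identical with $F$ replaced by the trigonometric orbit, and a leading order arising as a finite-time-blow-up trajectory is handled the same way after identifying it with a $T_rF$, $r<0$, via the analogous blow-up invariant.

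The step I expect to be the main obstacle is this rigidity/classification input: proving that the orbits of the translation action $T_\bullet$ on outgoing $n=1$ trajectories are exactly as claimed, i.e.\ that every such trajectory is a translate of $\Sigma_{SM}^-$, of its overdense counterpart, or of $SM$. This reduces to showing that the trajectories issuing from the source $U$ are classified by the single scalar $\mu$ --- which is where the absence of the resonant $w_0^2$ term in the linearization at $U$ enters --- and that the finite-time-blow-up trajectories are likewise classified by a single invariant; the degenerate vacuum trajectory $z_2\equiv0$ has to be excluded by hand, and the stable-manifold case $\mu=3$ has to be compared with $SM$ rather than with $F$. Everything else is routine ODE bookkeeping.
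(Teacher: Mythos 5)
Your first two steps coincide with the paper's (Section \ref{TSBB}): the same leading-order transformation law $\tilde z_2=(\tilde t/(\tilde t+t_0))^2 z_2$, $\tilde w_0=(\tilde t/(\tilde t+t_0))w_0$, and the same direct-substitution check that it maps solutions of (\ref{z2first1})--(\ref{w0first1}) to solutions. After that you diverge. The paper argues ``conversely'': starting from $SM$ or a trajectory in its unstable manifold, time translation realizes any non-singular initial condition, and inverting gives the claim; the covering statement itself is asserted rather than computed. You instead introduce genuinely new machinery --- the flow-invariant parabola $\Pi=\{z_2=3w_0^2\}$ (which is indeed invariant, since $\tfrac{d}{d\tau}(z_2-3w_0^2)=2(1-w_0)(z_2-3w_0^2)$, and is the stable manifold of $SM$), the monotone quantity $z_2/w_0^2$, and the observation that the gauge transformation preserves this ratio pointwise. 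This is sharper than the paper's sketch: it gives an explicit invariant that separates the translates and delivers uniqueness of $t_*$ for free, rather than leaving both existence and uniqueness to a phase-portrait appeal.

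The gap is the one you flag yourself: completeness of the limit invariant $\mu=\lim_{t\to0^+}z_2/w_0^2$ as a classifier of trajectories emanating from $U$. As stated this is not routine: the eigenvalues at $U$ satisfy the resonance $\lambda_{U2}=2\lambda_{U1}$, so the asymptotics $z_2\sim ct^2$, $w_0\sim c't$ and the injectivity of $(c,c')\mapsto c/c'^2$ on trajectories require a normal-form/linearization argument (your remark about the absent $w_0^2$ monomial is the right ingredient, but it must be pushed to a smooth linearization, not just the leading term), and the overdense and blow-up families need a separate invariant. However, you do not need $\mu$ at all: your own cancellation identity shows $\tilde z_2(\tilde t)/\tilde w_0(\tilde t)^2=z_2(t)/w_0(t)^2$ at corresponding times, and $z_2^F/(w_0^F)^2$ is a strictly decreasing bijection of $(0,\infty)$ onto $(0,3)$ along the connecting orbit of Theorem \ref{2x2connectingOrbitIntro}. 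So for any interior point $(p,q)$ with $0<p<3q^2$ you can solve $z_2^F(t)/(w_0^F(t))^2=p/q^2$ uniquely for $t$, then read off $s=q/w_0^F(t)$ and $t_0=t(1-s)$; this exhibits the unique translate of $F$ through $(p,q)$ at a unique time, with no asymptotic analysis at $U$. Combined with the group law $T_{t_0}\circ T_{t_1}=T_{t_0+t_1}$ and the compatibility $T_{t_0}(F(\lambda\,\cdot))=T_{t_0/\lambda}F(\lambda\,\cdot)$ needed to absorb the $\tau$-reparametrization, this closes your argument and in fact yields a more complete proof than the one printed in the paper. The analogous pointwise statement handles the overdense side, and the vacuum trajectory $z_2\equiv0$ must still be excluded by hand, as you note.
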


The proof of Lemma \ref{Lemmatimetranslation} follows from calculations given in Section \ref{TSBB}. Lemma \ref{Lemmatimetranslation} implies that the transformation $t\to t-t_*$ maps solutions to solutions, and hence maps trajectories in $\mathcal{F}$ at order $n$ to trajectories in $\mathcal{F}$ for every $n\geq1$. Note that the scaling $\tau=\ln t$, which converts the STV-ODE to an autonomous system, only exists for $t>0$, so in this sense the mapping $t\to t-t_*$ does not in general map the entire solution on one trajectory to the entire solution on another, as represented in the phase portrait of the autonomous system described in Figure \ref{Figure1}, but rather transformed trajectories end at the rest point $U$. Nevertheless, after we accomplish the transformation to time since the Big Bang, we recover the whole Friedmann solution, so in the end this does not represent a real problem for this theory. Lemma \ref{Lemmatimetranslation} also tells us that imposing time since the Big Bang places the $n=1$ trajectory of a solution in $\mathcal{F}$ at $SM$, or on one of the two trajectories in the unstable manifold of $SM$ at order $n=1$. To study underdense perturbations of $SM$, we now always assume time since the Big Bang is imposed and restrict to the space of solutions $\mathcal{F}$ of the STV-ODE whose leading order trajectory is the connecting orbit that takes $SM$ to $M$ in the leading order phase portrait diagrammed in Figure \ref{Figure2}. In other words, we restrict to $\boldsymbol{U}\in\mathcal{F}$ which satisfy
\begin{align*}
    \boldsymbol{v}_1(t) = (z_2(t),w_0(t)) = \bigg(z_2^F\bigg(\frac{t}{\Delta_0}\bigg),w_0^F\bigg(\frac{t}{\Delta_0}\bigg)\bigg)
\end{align*}
for some $\Delta_0$, where $\Delta_0$ determines the $k<0$ Friedmann spacetime to which it agrees at leading order. The next theorem tells us that trajectories in $\mathcal{F}$ tend to the rest point $M$ at all orders of the STV-ODE and provides a rate of decay.

\begin{Thm}[Informal statement of Corollary \ref{globalexistencewithdecay}]\label{DecayToM}
    Let $\boldsymbol{U}(t)=(\boldsymbol{v}_1(t),\dots,\boldsymbol{v}_n(t))$ be a solution of the STV-ODE (\ref{nbynsystemCompsIntro}) of order $n$ such that
    \begin{align*}
        \lim_{t\to\infty}\boldsymbol{v}_1(t) = (0,1) = M.
    \end{align*}
    Then 
    \begin{align*}
        \lim_{t\to\infty}\boldsymbol{U}(t) = M
    \end{align*}
    as a solution of the STV-ODE at every higher order $n>1$. Moreover, there exists constants $(C_1,\dots,C_n)$ such that:
    \begin{align}
        \|\boldsymbol{v}_1(\cdot)-(0,1)\|_{sup} &\leq C_1\frac{\ln t}{t}, & k &= 1,\label{estimate1Intro}\\
        \|\boldsymbol{v}_k(\cdot)\|_{sup} &\leq C_k\frac{\ln t}{t}, & k &= 2,\dots,n,\label{estimatekIntro}
    \end{align}
    where for each $k\in\{1,\dots,n\}$, $C_k$ depends only on initial data assigned at $t_0>0$,
    \begin{align*}
        \boldsymbol{v}_1(t_0) &= \boldsymbol{v}_1^0,\\
        &\vdots\\
        \boldsymbol{v}_k(t_0) &= \boldsymbol{v}_k^0,
    \end{align*}
    that is, $C_k$ depends only on the initial data up to order $k$.
\end{Thm}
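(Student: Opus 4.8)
The plan is to prove the theorem by induction on the block index $k$, exploiting the lower-triangular (``nested'') structure of the STV-ODE recorded in (\ref{nbynsystemCompsIntro})--(\ref{PkIntro}): once $\boldsymbol{v}_1,\dots,\boldsymbol{v}_{k-1}$ are regarded as known functions of $\tau$, the $k$-th block
\[
\frac{d}{d\tau}\boldsymbol{v}_k = P_k(\boldsymbol{v}_1(\tau))\,\boldsymbol{v}_k + \boldsymbol{q}_k(\boldsymbol{v}_1(\tau),\dots,\boldsymbol{v}_{k-1}(\tau))
\]
is a \emph{linear, non-autonomous} system in $\boldsymbol{v}_k$ whose coefficient matrix $P_k(\boldsymbol{v}_1(\tau))$ is asymptotically constant and whose forcing is a fixed polynomial of the lower blocks. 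Writing $\tau=\ln t$, the target bounds (\ref{estimate1Intro})--(\ref{estimatekIntro}) become $\|\boldsymbol{v}_1(\tau)-(0,1)\|\le C_1(1+\tau)e^{-\tau}$ and $\|\boldsymbol{v}_k(\tau)\|\le C_k(1+\tau)e^{-\tau}$, and global existence on $0<t<\infty$ then comes for free: block by block, the equation for $\boldsymbol{v}_k$ is linear with coefficients and forcing that are bounded on every $[\tau_0,\infty)$, so no finite-time blow-up is possible.

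\textbf{Base case ($k=1$).} By hypothesis the $2\times2$ trajectory $\boldsymbol{v}_1(\tau)$ enters every neighbourhood of $M$, and by Theorem \ref{DegenerateM} the linearization of the $n=1$ system at $M$ is $P_1(M)=\bigl(\begin{smallmatrix}-1&0\\-\tfrac16&-1\end{smallmatrix}\bigr)$, a single Jordan block with eigenvalue $-1$. Hence $M$ is a degenerate (improper) stable node; setting $\boldsymbol{v}_1=M+\boldsymbol{u}$ and applying Duhamel against the propagator $e^{\sigma P_1(M)}=e^{-\sigma}(I+\sigma N)$, which obeys $\|e^{\sigma P_1(M)}\|\le C(1+\sigma)e^{-\sigma}$, together with the quadratic bound on the nonlinearity and a routine bootstrap once $\|\boldsymbol{u}\|$ is small, yields $\|\boldsymbol{u}(\tau)\|\le C_1(1+\tau)e^{-\tau}$, i.e.\ $|z_2|\le C/t$ and $|w_0-1|\le C\ln t/t$. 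This is precisely the leading-order decay already recorded in Theorem \ref{decaytoM2}; the finite ``entry time'' into the small neighbourhood of $M$, and hence $C_1$, is determined by $\boldsymbol{v}_1^0$.

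\textbf{Inductive step.} Assume $\|\boldsymbol{v}_j(\tau)-M_j\|\le C_j(1+\tau)e^{-\tau}$ for $j\le k-1$, with $M_1=(0,1)$ and $M_j=0$ for $j\ge2$. Two inputs are needed. First, because $M$ is a rest point of the full order-$n$ system, evaluating the $k$-th block at $M$ gives $0=P_k(M)\cdot 0+\boldsymbol{q}_k(M_1,0,\dots,0)$, so the polynomial $\boldsymbol{q}_k$ vanishes at the rest-point values; expanding about those values and inserting the inductive bounds gives $\|\boldsymbol{q}_k(\boldsymbol{v}_1(\tau),\dots,\boldsymbol{v}_{k-1}(\tau))\|\le K_k(1+\tau)e^{-\tau}$ with $K_k$ depending only on $C_1,\dots,C_{k-1}$. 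Second, $P_k(\boldsymbol{v}_1(\tau))=P_k(M)+E_k(\tau)$ with $P_k(M)=\bigl(\begin{smallmatrix}-1&0\\-\tfrac1{4k+2}&-1\end{smallmatrix}\bigr)$ (again a Jordan block at $-1$) and $\|E_k(\tau)\|\le L_k(1+\tau)e^{-\tau}$; since this perturbation is integrable and decays strictly faster than any $e^{-(1-\epsilon)\tau}$, a Levinson-type asymptotic-integration argument shows the evolution operator satisfies $\|\Phi_k(\tau,s)\|\le C(1+\tau-s)e^{-(\tau-s)}$ for $\tau\ge s$ large. Variation of constants, $\boldsymbol{v}_k(\tau)=\Phi_k(\tau,\tau_0)\boldsymbol{v}_k(\tau_0)+\int_{\tau_0}^\tau\Phi_k(\tau,s)\,\boldsymbol{q}_k(s)\,ds$, together with the elementary estimate $\int_{\tau_0}^\tau(1+\tau-s)e^{-(\tau-s)}(1+s)e^{-s}\,ds\le C(1+\tau)e^{-\tau}$, then gives $\|\boldsymbol{v}_k(\tau)\|\le C_k(1+\tau)e^{-\tau}$ with $C_k$ depending only on $C_1,\dots,C_{k-1}$ and $\boldsymbol{v}_k^0$, i.e.\ on the initial data through order $k$; translating back, $\|\boldsymbol{v}_k\|_{sup}\le C_k\ln t/t$.

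\textbf{Main obstacle.} The delicate point is the second input of the inductive step: upgrading the soft conclusion ``$\boldsymbol{v}_k(\tau)\to0$'' — which is immediate from asymptotically-autonomous linear theory, $P_k(M)$ being Hurwitz and $\boldsymbol{q}_k\to0$ — to the \emph{sharp} rate $(1+\tau)e^{-\tau}$, uniformly over the $n$ blocks, without the iteration of Duhamel through successive blocks either eroding the exponent $-1$ or inflating the power of $\ln t$. The resolution is that $E_k(\tau)$ and $\boldsymbol{q}_k$ do not merely vanish at infinity but decay at the very same $(1+\tau)e^{-\tau}$ rate: being integrable and strictly faster than $e^{-(1-\epsilon)\tau}$, the degenerate node $P_k(M)$ keeps a propagator bound carrying only the single polynomial factor from its Jordan block, and in the variation-of-constants integral the homogeneous term $\Phi_k(\tau,\tau_0)\boldsymbol{v}_k(\tau_0)$ dominates, so no additional logarithm accumulates. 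The remaining bookkeeping — that each $C_k$ depends only on initial data through order $k$, and that the finite entry times are controlled — is routine given the triangular structure.
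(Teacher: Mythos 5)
Your overall architecture --- induction on the block index, treating the $k$-th block of (\ref{nbynsystemCompsIntro}) as a linear non-autonomous system forced by the lower blocks, the observation that $\boldsymbol{q}_k$ vanishes at the rest point $M$, and the degenerate-node propagator bound $\|e^{\sigma P_k(M)}\|\le C(1+\sigma)e^{-\sigma}$ --- is essentially the paper's (the rigorous version of this statement is Theorem \ref{ThmBigStability}, proved by exactly this induction with Gr\"{o}nwall estimates in place of your Duhamel/Levinson formulation). The gap is in the step you yourself flag as the main obstacle. The ``elementary estimate''
\[
\int_{\tau_0}^\tau(1+\tau-s)e^{-(\tau-s)}(1+s)e^{-s}\,ds\le C(1+\tau)e^{-\tau}
\]
is false. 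The two exponentials multiply to exactly $e^{-\tau}$, independent of $s$ --- this is the resonant case, the forcing decaying at precisely the decay rate of the homogeneous propagator --- so the integral equals $e^{-\tau}\int_{\tau_0}^\tau(1+\tau-s)(1+s)\,ds$, and the remaining integrand is a quadratic polynomial in $s$ whose integral grows like $\tau^3/6$. The Duhamel term is therefore $O(\tau^3e^{-\tau})$, not $O((1+\tau)e^{-\tau})$, and it dominates the homogeneous term $\Phi_k(\tau,\tau_0)\boldsymbol{v}_k(\tau_0)$ rather than being dominated by it. Consequently the induction does not close at the uniform rate $(1+\tau)e^{-\tau}$: each pass through the variation-of-constants formula adds powers of $\tau$, and your scheme can only deliver $\|\boldsymbol{v}_k\|\le C_k\tau^{m_k}e^{-\tau}$ with $m_k$ increasing in $k$. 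This is not a removable technicality; it is exactly why the paper's rigorous statement, Theorem \ref{ThmBigStability}, asserts only $|\boldsymbol{v}_k(\tau)-\boldsymbol{v}_k^F(\tau)|\le C\tau^k e^{-\tau}$, i.e.\ $(\ln t)^k/t$, with the power of the logarithm growing with the order; the uniform $\ln t/t$ rate appears only in the informal introductory statement.

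To obtain the uniform rate you would need an input your argument does not supply: either that the forcing $\boldsymbol{q}_k$ decays strictly faster than $\tau e^{-\tau}$ (it does not in general --- its linear part in $w_0-1$ already contributes at that rate), or a cancellation/non-resonance argument specific to these equations. Absent that, the claimed rate should be weakened to $\tau^{m_k}e^{-\tau}$, which is what the paper does. The remainder of your write-up --- the base case via Hartman--Grobman at the degenerate node, the propagator bound for the integrably perturbed Jordan block, the fact that $\boldsymbol{q}_k(M_1,0,\dots,0)=0$, and the bookkeeping showing $C_k$ depends only on data through order $k$ --- is sound and matches the paper's reasoning.
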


In other words, Theorem \ref{DecayToM} tells us that if a trajectory tends to $M$ at leading order, then it tends to $M$ at all orders. Note that if the $C_n$ in (\ref{estimate1Intro})--(\ref{estimatekIntro}) are bounded by a uniform constant $C$ for every $n\geq1$, then we can sum the geometric series and obtain an error in the approximation over all orders $n$ (assuming $|\xi|<1$):
\begin{align}
    z(t,\xi) = \sum_{k=1}^\infty z_{2k}\xi^{2k} = \sum_{k=1}^n z_{2k}\xi^{2k} + Error,
\end{align}
where
\begin{align}
    |Error| = \bigg|\sum_{k=n+1}^\infty z_{2k}\xi^{2k}\bigg| \leq C\frac{\ln t}{t}\frac{\xi^{2k+2}}{1-\xi^2}.\label{Error2}
\end{align}
Summing the geometric series in (\ref{Error2}) then gives us a formula for the rate at which an underlying solution of the STV-PDE decays to $M$.

It follows from Lemma \ref{Lemmatimetranslation} that to characterize underdense perturbations of the $k=0$ Friedmann spacetime, we can assume the time since the Big Bang gauge and define the family $\mathcal{F}$ as the set of all solutions of the STV-PDE which lie on the trajectory that takes $SM$ to $M$ in the leading order STV-ODE of order $n=1$. In this gauge, $t$ measures time since the Big Bang in the sense that $\lim_{t\to0^+}R(t)=\infty$, where $R(t)$ is the scale factor associated with the Friedmann spacetime it agrees with at leading order \cite{ABS}.

Recall that $\mathcal{F}_n$, also referred to as $\mathcal{F}$ at order $n$, is the set of solutions of the STV-ODE of order $n$ which satisfy the property
\begin{align*}
    \boldsymbol{v}_1(t) = \bigg(z_2^F\bigg(\frac{t}{\Delta_0}\bigg),w_0^F\bigg(\frac{t}{\Delta_0}\bigg)\bigg)
\end{align*}
for some $\Delta_0>0$, that is, solutions which take $SM$ to $M$ at order $n=1$. Recall also that the family $\mathcal{F}_n'\subset\mathcal{F}_n$, referred to as the set of trajectories in $\mathcal{F}'$ at order $n$, is the subset of $\mathcal{F}_n$ in the unstable manifold of $SM$ at order $n$. The following theorem describes how solutions in $\mathcal{F}_n$ generically accelerate away from Friedmann spacetimes at every order $n\geq1$ by characterizing the global dynamics of solutions in terms of the eigenvalues of $SM$ and its unstable manifold $\mathcal{F}_n'$. Since all smooth radial underdense perturbations of $SM$ evolve within the space of trajectories $\mathcal{F}_n$, we interpret this as a quantitative characterization of the instability of the $p=0$, $k=0$ Friedmann spacetime to smooth radial underdense perturbations.

\begin{Thm}
    Let $\boldsymbol{U}(t)=(\boldsymbol{v}_1(t),\dots,\boldsymbol{v}_n(t))$ denote a solution of the STV-ODE (\ref{nbynsystemIntro}) of order $n$ in the family $\mathcal{F}_n$ so that
    \begin{align*}
        \boldsymbol{v}_1(t) = \bigg(z_2^F\bigg(\frac{t}{\Delta_0}\bigg),w_0^F\bigg(\frac{t}{\Delta_0}\bigg)\bigg)
    \end{align*}
    for some $\Delta_0>0$. Then:
    \begin{enumerate}
        \item[(i)] Trajectories in the unstable manifold $\mathcal{F}_2'$ of $SM$ generically diverge from Friedmann spacetimes in the phase portrait of the STV-ODE of order $n=2$, and hence, by the nested property of the STV-ODE, they generically diverge from Friedmann at all orders $n>2$ as well.
        \item[(ii)] The character of the unstable manifold $\mathcal{F}_n'$ of $SM$ is determined at order $n=2$ in the sense that a solution $\boldsymbol{U}(t)$ is in the unstable manifold $\mathcal{F}_n'$ at all orders $n\geq2$ if and only if it is in $\mathcal{F}_2'$, that is, if and only if $(\boldsymbol{v}_1(t),\boldsymbol{v}_2(t))$ is in the unstable manifold $\mathcal{F}_2'$ of $SM$ at order $n=2$. This implies the unstable manifold $\mathcal{F}_n'$ is a codimension one set of trajectories in $\mathcal{F}_n$ at every order $n\geq2$ of the STV-ODE.
        \item[(iii)] By definition, solutions in $\mathcal{F}$ all satisfy $\lim_{t\to\infty}\boldsymbol{v}_1(t)=SM$ but generically $\lim_{t\to\infty}\boldsymbol{v}_2(t)\neq SM$.
        \item[(iv)] The smallest positive eigenvalue at $SM$ emerges at order $n=3$ in $\lambda_{A3}$ (followed by $\lambda_{A1}$ at $n=1$). This implies that solution trajectories in $\mathcal{F}_n'$ enter tangent to the Friedmann trajectory at order $n=1$ (and thus also $n=2$) but enter tangent to the eigenvector of $\lambda_{B3}$ at all higher orders $n\geq3$.
    \end{enumerate}
\end{Thm}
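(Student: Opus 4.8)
The plan is to derive this theorem by synthesising the structural facts already in hand: the nested component form of the STV-ODE (Theorem~\ref{ThmSTVequations}), the eigenstructure of $SM$ (Theorem~\ref{DegenerateM}), the decay of $\mathcal{F}$-solutions to $M$ at every order (Theorem~\ref{DecayToM}), the description of the unstable manifold of $SM$ at order $n=2$ (Theorem~\ref{UnstableManifoldOfSMIntro}), the exact formula for the Friedmann connecting orbit (Theorem~\ref{2x2connectingOrbitIntro}), and the pure-eigensolution computations of Section~\ref{S11.5} and Theorem~\ref{ThmPure}. The organising observation is that once we restrict to $\mathcal{F}_n$ --- where $\boldsymbol{v}_1$ is pinned to the Friedmann connecting orbit, which reaches $SM$ tangent to $\boldsymbol{R}_{A1}$ as $t\to0$ and reaches $M$ as $t\to\infty$ --- the only eigendirection of the linearisation at $SM$ that drops out of play is $\boldsymbol{R}_{B1}$ (the member of the $\boldsymbol{v}_1$-plane transverse to $\boldsymbol{R}_{A1}$), so the relevant spectrum at $SM$ is $\lambda_{A1}=\tfrac23$ together with $\lambda_{Ak}=\tfrac{2k}{3}$ and $\lambda_{Bk}=\tfrac{2k-5}{3}$ for $2\le k\le n$. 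These numbers are pairwise distinct (no $\tfrac{2k}{3}$ equals any $\tfrac{2m-5}{3}$, by parity), so $SM$ is a non-degenerate hyperbolic rest point within $\mathcal{F}_n$, and its only negative eigenvalue there is $\lambda_{B2}=-\tfrac13$, appearing at order $n=2$. With this the Hartman--Grobman and stable-manifold theorems apply, and the rest is bookkeeping with this spectrum.

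For (i): at order $n=2$ the unstable manifold of $SM$ within $\mathcal{F}_2$ is the two-parameter family of trajectories leaving $SM$ along $a\,e^{\lambda_{A1}(\tau-\tau_0)}\boldsymbol{R}_{A1}+b\,e^{\lambda_{A2}(\tau-\tau_0)}\boldsymbol{R}_{A2}$ (Theorem~\ref{UnstableManifoldOfSMIntro}), while by Theorem~\ref{2x2connectingOrbitIntro} and Section~\ref{S11.5} the $k<0$ Friedmann trajectory is exactly the $b=0$ eigensolution of $\lambda_{A1}$; hence any trajectory in $\mathcal{F}_2'$ with $b\neq0$ --- the generic case --- has the same $\boldsymbol{v}_1$ as Friedmann and, being in the unstable manifold and in $\mathcal{F}$, shares Friedmann's limits $t\to0$ ($\to SM$) and $t\to\infty$ ($\to M$), but is distinct from it for intermediate $t>0$. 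Since the order-$n$ STV-ODE contains the order-$2$ STV-ODE as a closed subsystem (Theorem~\ref{ThmSTVequations}), this divergence persists at every $n>2$, proving (i). For (ii): a trajectory of $\mathcal{F}_n$ lies in the unstable manifold of $SM$ iff it tends to $SM$ as $\tau\to-\infty$ iff, in the eigenbasis at $SM$, it has zero component along the unique stable eigenvector $\boldsymbol{R}_{B2}=(0,0,\tfrac{20}{3},1)^T$; since $\boldsymbol{R}_{B2}$ is supported entirely in the $\boldsymbol{v}_2$-block, this condition involves only $(\boldsymbol{v}_1,\boldsymbol{v}_2)$ --- it is exactly the statement that the order-$2$ projection lies in $\mathcal{F}_2'$ --- and being a single scalar constraint it cuts out a codimension-one subset, so $\mathcal{F}_n'$ is codimension one in $\mathcal{F}_n$. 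The nontrivial implication is that $(\boldsymbol{v}_1,\boldsymbol{v}_2)\to SM$ forces $\boldsymbol{v}_k\to(\text{the }\boldsymbol{v}_k\text{-block of }SM)$ for each $k\ge3$; I would run this by induction on $k$ using $\tfrac{d}{d\tau}\boldsymbol{v}_k=P_k(\boldsymbol{v}_1)\boldsymbol{v}_k+\boldsymbol{q}_k(\boldsymbol{v}_1,\dots,\boldsymbol{v}_{k-1})$: the coefficients converge to their $SM$-values exponentially fast in $-\tau$, and $P_k(SM)$ has only positive eigenvalues, so a variation-of-parameters plus Gronwall estimate gives that $\boldsymbol{v}_k$ minus its $SM$-value decays like $e^{\lambda_{Bk}\tau}$ backward in time; combined with Theorem~\ref{DecayToM} forward in time, solutions in $\mathcal{F}'$ connect $SM$ to $M$ at every order.

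For (iii): by definition of $\mathcal{F}_n$, $\boldsymbol{v}_1(t)=(z_2^F(t/\Delta_0),w_0^F(t/\Delta_0))$, which by Theorem~\ref{2x2connectingOrbitIntro} satisfies $\boldsymbol{v}_1\to SM$ as $t\to0$ and $\boldsymbol{v}_1\to M$ as $t\to\infty$ --- the leading-order Big Bang is always self-similar; by Theorem~\ref{DecayToM} every $\mathcal{F}$-solution has $\boldsymbol{v}_2(t)\to(0,0)$ (the $M$-value) as $t\to\infty$, but a generic solution has $b\neq0$ (nonzero $\boldsymbol{R}_{B2}$-component), so by the $\mathcal{F}_2$-saddle picture above $\boldsymbol{v}_2(\tau)$ does not approach the $\boldsymbol{v}_2$-block of $SM$ as $\tau\to-\infty$, which is the ``generically not $SM$'' clause and recovers Theorem~\ref{SSBigBang}. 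For (iv): among all eigenvalues $\lambda_{An}=\tfrac{2n}{3}$ ($n\ge1$) and $\lambda_{Bn}=\tfrac{2n-5}{3}$ ($n\ge2$) the smallest positive one is $\lambda_{B3}=\tfrac13$, with $\lambda_{A1}=\tfrac23$ next; by the nondegenerate stable-manifold theorem a trajectory of $\mathcal{F}_n'$ leaves $SM$ tangent to the eigenspace of the smallest positive eigenvalue along which it has a nonzero component. Every trajectory of $\mathcal{F}_n'$ has a nonzero $\boldsymbol{R}_{A1}$-component (its $\boldsymbol{v}_1$-projection must match the Friedmann orbit's tangent at $SM$, forcing the $\boldsymbol{R}_{A1}$-coefficient $a\neq0$), so at $n=1,2$, where no eigenvalue smaller than $\lambda_{A1}$ has yet appeared, every such trajectory --- including Friedmann --- enters tangent to $\boldsymbol{R}_{A1}$; at $n\ge3$, $\lambda_{B3}=\tfrac13<\tfrac23$ and, by Theorem~\ref{ThmPure}, the Friedmann solution carries no $\boldsymbol{R}_{B3}$-component (which sits in the order-$3$ leading block) while a generic trajectory in $\mathcal{F}_n'$ does, so generic elements of $\mathcal{F}_n'$ enter $SM$ tangent to $\boldsymbol{R}_{B3}$, a direction genuinely different from Friedmann, exactly from order $n=3$ on.

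The step I expect to be the main obstacle is the backward-time convergence claim inside (ii): the equations $\dot{\boldsymbol{v}}_k=P_k(\boldsymbol{v}_1)\boldsymbol{v}_k+\boldsymbol{q}_k$ are linear but non-autonomous, so one must pin down the exact exponential rate at which $(P_k,\boldsymbol{q}_k)$ relax to their $SM$-values and verify that no near-resonance between that rate and an eigenvalue of $P_k(SM)$ destroys the convergence $\boldsymbol{v}_k\to(SM)_k$ as $\tau\to-\infty$. It does not --- every eigenvalue of $P_k(SM)$ is strictly positive, so the worst that can happen is polynomial-in-$\tau$ corrections, which are harmless --- but this needs the variation-of-parameters estimate written out carefully and, ideally, with constants controlled uniformly in $k$ so that it also feeds the summation-over-orders bound underlying Theorem~\ref{DecayToM}.
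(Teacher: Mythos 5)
Your proposal is correct and follows essentially the same route as the paper, which justifies each of (i)--(iv) by appealing to the eigenvalue formulas $\lambda_{An}=\tfrac{2n}{3}$, $\lambda_{Bn}=\tfrac{2n-5}{3}$ at $SM$, the identification of the $k<0$ Friedmann trajectory with the pure $\lambda_{A1}$-eigensolution, and the nested structure of the STV-ODE. Your more careful treatment of the backward-time induction in (ii) (that positivity of the spectrum of $P_k(SM)$ for $k\geq3$ forces $\boldsymbol{v}_k$ to relax to its $SM$-value as $\tau\to-\infty$ once $(\boldsymbol{v}_1,\boldsymbol{v}_2)$ does) fills in a step the paper only asserts, and you correctly read $\lambda_{B3}=\tfrac13$ as the smallest positive eigenvalue and $t\to0$ as the intended limit in (iii), both of which appear as slips in the paper's statement.
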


For point (i), this follows directly from the presence of two distinct positive eigenvalues $\lambda_{A1}$ and $\lambda_{A2}$ at $SM$ in the STV-ODE of order $n=2$. These determine two independent eigendirections in the unstable manifold $\mathcal{F}_2'$ of $SM$ at order $n=2$, with only the eigendirection of $\lambda_{A1}$ corresponding to $k<0$ Friedmann spacetimes.

For point (ii), this follows directly from the fact that there exists only a single negative eigenvalue $\lambda_{B2}$ at $SM$ at order $n=2$ and all higher order eigenvalues $\lambda_{An}$ and $\lambda_{Bn}$ for $n\geq3$ are positive.

For point (iii), this follows directly from the presence of the single negative eigenvalue $\lambda_{B2}$ above level $n=1$. We interpret this as establishing that, unlike Friedmann spacetimes, solutions in $\mathcal{F}$ exhibit a self-similar Big Bang only at leading order $n=1$ but generically do not at higher orders.

\section{Smoothness at the Center of Spherically Symmetric Spacetimes in SSCNG Coordinates}\label{S3}

Our goal is to characterize the instability of the $p=0$, $k=0$ Friedmann spacetime to perturbations within the class of smooth solutions. Since $r=0$ is a singular value in radial coordinates, we need a condition characterizing smoothness at the center ($r=0$) in SSC (\ref{SSCintro}). The results of this paper rely on the validity of approximating solutions by finite Taylor expansions about the center of symmetry, so the main issue is to guarantee that solutions are indeed smooth in a neighborhood of the center. 

The Universe is not smooth on small scales, so our assumption is that the center is not special regarding the regularity assumed in the large scale approximation of the Universe. Smoothness, by which we mean derivatives of all orders can be taken, at a point $P$ in a spacetime manifold is determined by the atlas of coordinate charts defined in a neighborhood of $P$. The regularity of tensors is identified with the regularity of tensor components expressed in the coordinate systems of the given atlas. Now spherically symmetric solutions given, in say, Lemaître--Tolman--Bondi (LTB) or SSC employ spherical coordinates $(r,\phi,\theta)$ for the spacelike surfaces at constant time. The subtly here is that $r=0$ is a coordinate singularity in spherical coordinates and functions are defined only for the radial coordinate $r\geq0$, however, a coordinate system must be specified in a neighborhood of $r=0$ to impose the conditions for smoothness at the center. Of course, once we have the metric represented as smooth in coordinate system $\vec{x}$ on an initial data surface in a neighborhood of $r=0$, the local existence theorem giving the smooth evolution of solutions from smooth initial data for the Einstein field equations would not alone suffice to obtain our smoothness condition, as one would still have to prove that this evolution preserved the metric ansatz.

Following \cite{SmolTeVo}, we begin by showing that this issue can be resolved relatively easily in SSC because the SSC are precisely the spherical coordinates associated with Euclidean coordinate charts defined in a neighborhood of $r=0$. Based on this, we show below that the condition for smoothness of metric components and functions in SSC is simply that all odd order derivatives should vanish at $r=0$.
 
Consider in more detail the problem of representing a smooth, spherically symmetric perturbation of a $k\leq0$ Friedman spacetime. To start, assume the existence of a solution of Einstein's field equations representing a large, smooth underdense region of spacetime that expands from the end of the Radiation Dominated Epoch out to present time. For smooth perturbations, there should exist a coordinate system in a neighborhood of the center of symmetry, in which the solution is represented as smooth. Assume we have such a coordinate system $(t,\boldsymbol{x})\in\mathbb{R}\times\mathbb{R}^3$, with $\boldsymbol{x}=0$ at the center, and use the notation
\begin{align*}
    \vec{x} = (x^0,x^1,x^2,x^3) = (t,x,y,z) =(t,\boldsymbol{x}).
\end{align*}
Spherical symmetry makes it convenient to represent the spatial Euclidean coordinates $\boldsymbol{x}\in \mathbb{R}^3$ in spherical coordinates $(r,\theta,\phi)$, with $r=|\boldsymbol{x}|$. Since generically, any spherically symmetric metric can be transformed locally to SSC form \cite{smolteMemoirs}, we assume the spacetime represented in the coordinate system $(t,r,\theta,\phi)$ takes the SSC form (\ref{SSCintro}). This is equivalent to the metric in Euclidean coordinates $\boldsymbol{x}$ taking the form
\begin{align*}
    ds^2 = -B(|\boldsymbol{x}|,t)dt^2 + \frac{dr^2}{A(|\boldsymbol{x}|,t)} + |\boldsymbol{x}|^2d\Omega^2,
\end{align*}
where:
\begin{align}
    r^2 &= x^2 + y^2 + z^2,\notag\\
    rdr &= xdx + ydy + zdz,\notag\\
    r^2dr^2 &= x^2dx^2 + y^2dy^2 + z^2dz^2 + 2xydxdy + 2xzdxdz + 2yzdydz,\label{A3}
\end{align}
and
\begin{align}
    dx^2 + dy^2 + dz^2 = dr^2 + r^2d\Omega^2.\label{A4}
\end{align}
To guarantee the smoothness of our perturbation of Friedman at the center, we assume a gauge in which:
\begin{align*}
    B(t,r) &= 1 + O(r^2), & A(t,r) &= 1 + O(r^2),
\end{align*}
so that also
\begin{align*}
    \frac{1}{A(t,r)} = 1 + O(r^2) =: 1 + \hat{A}(t,r)r^2,
\end{align*}
where the smoothness of $A$ is equivalent to the smoothness of $\hat{A}$ for $r>0$. This sets the SSC time gauge to proper time at $r=0$ and makes the SSC locally inertial at $r=0$ and $t>0$, a first step in guaranteeing that our spherical perturbations of Friedman are smooth at the center. Keep in mind that without this gauge the SSC form is invariant under arbitrary transformation of time, so we are free to choose proper time at $r=0$. The locally inertial condition at $r=0$ simply imposes that the corrections to Minkowski at $r=0$ are second order in $r$, in particular, the SSC metric (\ref{SSC}) tends to Minkowski as $r\to0$. These assumptions make physical sense and their consistency is guaranteed by reversing the steps in the argument to follow. We now ask what conditions on the metric functions $A$ and $B$ are imposed by assuming the SSC metric be smooth when expressed in our original Euclidean coordinate chart $(t,\boldsymbol{x})$ defined in a neighborhood of a point at $r=0$, $t>0$.

To transform the SSC metric (\ref{SSC}) to $(t,\boldsymbol{x})$ coordinates, use (\ref{A4}) to eliminate the $r^2d\Omega^2$ term and (\ref{A3}) to eliminate the $dr^2$ term to obtain
\begin{align}
    ds^2 &= -B(|\boldsymbol{x}|,t)dt^2 + dx^2 + dy^2 + dz^2\label{A7}\\
    &+ \hat{A}(|\boldsymbol{x}|,t)\big(x^2dx^2+y^2dy^2+z^2dz^2+2xydxdy+2xzdxdz+2yzdydz\big).\notag
\end{align}
The smoothness of $\hat{A}$ is equivalent to the smoothness of $A$, and the smoothness of $A$ and $B$ for $r>0$ guarantees the smoothness of the Euclidean spacetime metric (\ref{A7}) in $(t,\boldsymbol{x})$ coordinates everywhere except at $\boldsymbol{x}=0$. For smoothness at $\boldsymbol{x}=0$, we impose the condition that the metric components in (\ref{A7}) should be smooth functions of $(t,\boldsymbol{x})$ at $\boldsymbol{x}=0$ as well. Note again that imposing smoothness in $(t,\boldsymbol{x})$ coordinates at $\boldsymbol{x}=0$ is correct in the sense that it is preserved by the Einstein evolution equations. We now show that smoothness at $\boldsymbol{x}=0$ in this sense is equivalent to requiring that the metric functions $A$ and $B$ satisfy the condition that all odd $r$-derivatives vanish at $r=0$. To see this, observe that a function $f(r)$ represents a smooth spherically symmetric function of the Euclidean coordinates $\boldsymbol{x}$ at $r=|\boldsymbol{x}|=0$ if and only if the function
\begin{align*}
    g(x) = f(|\boldsymbol{x}|)
\end{align*}
is smooth at $\boldsymbol{x}=0$. Assuming $f$ is smooth for $r\geq0$ (by which we mean $f$ is smooth for $r>0$, and one sided derivatives exist at $r=0$) and taking the $n^{th}$ derivative of $g$ from the left and right and setting them equal gives the smoothness condition
\begin{align*}
    f^{n}(0) = (-1)^nf^{n}(0).
\end{align*}
We state this formally in the following lemma (see \cite{SmolTeVo}).

\begin{Lemma}
    A function $f(r)$ of the radial coordinate $r=|\boldsymbol{x}|$ represents a smooth function of the Euclidean coordinates $\boldsymbol{x}$ if and only if $f$ is smooth for $r\geq0$ and all odd derivatives vanish at $r=0$. Moreover, if any odd derivative $f^{(n+1)}(0)\neq0$, then $f(|\boldsymbol{x}|)$ has a jump discontinuity in its $n+1$ derivative, and hence a kink singularity in its $n^{th}$ derivative at $r=0$.
\end{Lemma}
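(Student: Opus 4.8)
The plan is to prove the biconditional by a line-restriction argument for necessity and an even-function factorization argument for sufficiency, and then read the ``moreover'' off the necessity computation. First I would handle \textbf{necessity}. Assume $g(\boldsymbol{x}):=f(|\boldsymbol{x}|)$ is smooth in a neighborhood of $\boldsymbol{x}=0$ and restrict it to the $x^1$-axis, obtaining $h(t):=g(t,0,0)=f(|t|)$, which is then smooth near $t=0$. For $t>0$ we have $h(t)=f(t)$, so $f$ is smooth on $[0,\infty)$ (one-sidedly at the origin) and $h^{(n)}(t)=f^{(n)}(t)$, while for $t<0$ we have $h(t)=f(-t)$ and $h^{(n)}(t)=(-1)^n f^{(n)}(-t)$. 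Smoothness of $h$ at $0$ forces the two one-sided limits of $h^{(n)}$ to agree, i.e. $f^{(n)}(0)=(-1)^n f^{(n)}(0)$, hence $f^{(n)}(0)=0$ for every odd $n$. This is exactly the computation already sketched above the lemma statement.

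The substantive direction is \textbf{sufficiency}, because joint smoothness on $\mathbb{R}^3$ is strictly stronger than smoothness along every line through the origin, so one needs an argument tailored to $f(|\boldsymbol{x}|)$. My approach is the classical even-function factorization: if $f\in C^\infty([0,\infty))$ has all odd derivatives vanishing at $0$, then $f(r)=\phi(r^2)$ for some $\phi\in C^\infty([0,\infty))$; granting this, $g(\boldsymbol{x})=\phi(|\boldsymbol{x}|^2)=\phi(x^2+y^2+z^2)$ is the composition of a smooth function with a polynomial, hence smooth on all of $\mathbb{R}^3$, in particular at $\boldsymbol{x}=0$. To obtain the factorization I would show that the map $F\mapsto F'(r)/r$ sends the class $\mathcal{C}$ of functions smooth on $[0,\infty)$ with all odd derivatives vanishing at $0$ into itself: if $F\in\mathcal{C}$ then $F'(0)=0$, so Hadamard's lemma gives $F'(r)=r\,G(r)$ with $G\in C^\infty([0,\infty))$, and since the Taylor expansion of $F'$ at $0$ contains only odd powers (the even derivatives of $F'$ being the odd derivatives of $F$, which vanish) the expansion of $G=F'/r$ contains only even powers, so $G\in\mathcal{C}$. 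Iterating, $f_0:=f$ and $f_{k+1}:=f_k'/r$ lie in $\mathcal{C}$ for all $k$, and for $u>0$ one checks by induction that $\phi^{(k)}(u)=2^{-k}f_k(\sqrt{u})$; since each $f_k$ is continuous at $0$, all one-sided derivatives $\phi^{(k)}(0^+)$ exist and are continuous, whence $\phi\in C^\infty([0,\infty))$ by the standard mean-value-theorem upgrade. (Alternatively one can skip the factorization and argue directly that for $\boldsymbol{x}\neq0$ the derivative $\partial^\alpha g$ is a finite sum of terms $x^\beta f_j(|\boldsymbol{x}|)$ with $2j=|\alpha|+|\beta|$, each of which extends continuously across the origin since $f_j\in\mathcal{C}$; a removable-singularity argument then promotes this to $g\in C^\infty$.)

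Finally, the \textbf{quantitative ``moreover''}: let $n+1$ be the least odd order with $f^{(n+1)}(0)\neq0$, so by the biconditional just proved $g$ is $C^n$ near $0$ but not $C^\infty$. Restricting again to the $x^1$-axis, $h^{(n+1)}(0^+)=f^{(n+1)}(0)$ while $h^{(n+1)}(0^-)=(-1)^{n+1}f^{(n+1)}(0)=-f^{(n+1)}(0)$, so $h^{(n+1)}$ jumps by $2f^{(n+1)}(0)\neq0$ at $t=0$; equivalently $h^{(n)}$ is continuous but has distinct left and right derivatives there, that is, a kink. Since $g$ restricted to this line equals $h$, the same holds for $f(|\boldsymbol{x}|)$ at $r=0$. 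The only point requiring genuine care is the sufficiency step --- passing from the one-variable vanishing condition to joint smoothness --- and it is entirely contained in the bookkeeping that $F\mapsto F'(r)/r$ preserves $\mathcal{C}$ (a form of Whitney's even-function lemma); everything else is elementary one-variable calculus.
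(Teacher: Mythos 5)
Your proposal is correct, and it is strictly more complete than what the paper itself provides. The paper's argument for this lemma consists only of the line-restriction computation: restrict $g(\boldsymbol{x})=f(|\boldsymbol{x}|)$ to an axis, equate the left and right $n^{th}$ derivatives of $h(t)=f(|t|)$ at $t=0$, and conclude $f^{(n)}(0)=(-1)^nf^{(n)}(0)$; the converse (sufficiency) is asserted rather than proved, with a citation to the earlier reference. You reproduce that same computation for necessity and for the ``moreover'' clause (your choice to take $n+1$ to be the \emph{least} odd order with nonvanishing derivative is the right reading of the loosely stated claim, since otherwise a lower-order kink would already be present), and you correctly identify that the substantive content of the ``if and only if'' lies in the sufficiency direction, which you supply via Whitney's even-function lemma: the invariance of the class $\mathcal{C}$ under $F\mapsto F'(r)/r$, the factorization $f(r)=\phi(r^2)$ with $\phi\in C^\infty([0,\infty))$, and the observation that $\phi(x^2+y^2+z^2)$ is then jointly smooth (one needs a smooth extension of $\phi$ past $0$, e.g.\ by Seeley/Whitney, to apply the chain rule cleanly, but this is standard). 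Your parenthetical alternative --- writing $\partial^\alpha g$ as a sum of terms $x^\beta f_j(|\boldsymbol{x}|)$ with $2j=|\alpha|+|\beta|$ and invoking a removable-singularity upgrade --- is also valid and checks out on low-order cases. In short: same approach as the paper where the paper gives one, plus a correct proof of the half the paper omits.
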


As an immediate consequence, we obtain the condition for smoothness of SSC metrics at $r=0$, given in the following corollary.

\begin{Corollary}
    The SSC metric (\ref{SSC}) is smooth at $r=0$ in the sense that the metric components in (\ref{A7}) are smooth functions of the Euclidean coordinates $(t,\boldsymbol{x})$ if and only if the component functions $A(t,r)$, $B(t,r)$ are smooth in time and smooth for $r>0$, all odd one-sided $r$-derivatives vanish at $r=0$ and all even $r$-derivatives are bounded at $r=0$.
\end{Corollary}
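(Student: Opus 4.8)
The plan is to deduce this directly from the preceding Lemma by applying it to each component of the Euclidean form of the metric in (\ref{A7}). With $\vec x=(t,x,y,z)$ and Latin indices running over $\{1,2,3\}$, the components of (\ref{A7}) are $g_{00}=-B(|\boldsymbol x|,t)$, $g_{0i}=0$, and $g_{ij}=\delta_{ij}+\hat A(|\boldsymbol x|,t)\,x^ix^j$, where $\hat A=(A^{-1}-1)/r^2$. Since each $x^ix^j$ is a polynomial, hence smooth on $\mathbb{R}^3$, and products of smooth functions are smooth, the metric (\ref{A7}) is smooth at $\boldsymbol x=0$ if and only if both $B(|\boldsymbol x|,t)$ and $\hat A(|\boldsymbol x|,t)$ are smooth functions of $(t,\boldsymbol x)$ there. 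This reduces the Corollary to three tasks: (i) characterizing smoothness of $B(|\boldsymbol x|,t)$, (ii) characterizing smoothness of $\hat A(|\boldsymbol x|,t)$, and (iii) converting the resulting condition on $\hat A$ into the stated condition on $A$.

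Step (i) is immediate from the preceding Lemma, applied at each fixed $t$ with joint smoothness in the parameter $t$ carried along: $B(|\boldsymbol x|,t)$ is smooth in $(t,\boldsymbol x)$ if and only if $B$ is smooth in $t$, smooth for $r\geq 0$, and all its odd one-sided $r$-derivatives vanish at $r=0$. The ``$\Leftarrow$'' half of step (ii) is likewise immediate: if $\hat A$ is smooth for $r\geq 0$ with vanishing odd $r$-derivatives at $r=0$, then $\hat A(|\boldsymbol x|,t)$ is smooth and so is each $g_{ij}$. The point needing care is ``$\Rightarrow$'', since here one must recover smoothness of $\hat A$ alone from smoothness of the \emph{combination} $\delta_{ij}+\hat A(|\boldsymbol x|,t)x^ix^j$. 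I would restrict to the $x$-axis, where $g_{11}-1=\hat A(|x|,t)\,x^2=:h(x,t)$ is smooth in $(x,t)$ and, depending on $x$ only through $x^2$, even in $x$; since $g_{11}\equiv 1$ along the $y$-axis, continuity of $g_{11}$ at the origin forces $h(0,t)=0$, and evenness then gives $\partial_x h(0,t)=0$. Hadamard's lemma (with $t$ as a parameter) therefore yields $h(x,t)=x^2 k(x,t)$ with $k$ smooth, and $k$ is even in $x$ because $k(x,t)=\hat A(|x|,t)=k(-x,t)$ for $x\neq 0$. Hence $\hat A(r,t)$ is the restriction to $r\geq 0$ of a function that is smooth and even in $r$, so all odd $r$-derivatives of $\hat A$ vanish at $r=0$ and the even ones are finite.

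For step (iii) I would use the NG gauge imposed above (proper time, locally inertial at $r=0$), which gives $A(0,t)=1$ and hence $A\neq 0$ near $r=0$, together with the algebraic identities $A^{-1}=1+\hat A r^2$ and $A=(1+\hat A r^2)^{-1}$. Since $1+\hat A r^2$ is smooth and even in $r$ and equals $1$ at $r=0$, it is nonvanishing near $r=0$, so its reciprocal is again smooth and even in $r$; running the implication in both directions shows that $A$ is smooth for $r>0$ with all odd $r$-derivatives vanishing and all even $r$-derivatives bounded at $r=0$ if and only if $\hat A$ has the same property. Combining this with steps (i)–(ii), and with the already-noted fact that smoothness of (\ref{A7}) away from $\boldsymbol x=0$ is equivalent to smoothness of $A$ and $B$ for $r>0$, yields the Corollary.

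I expect the only genuine obstacle to be the ``$\Rightarrow$'' part of step (ii): the preceding Lemma does not apply verbatim to the components $g_{ij}$, because each carries the extra polynomial factor $x^ix^j$, so one must first peel that factor off. The key sub-points are that $h(x,t)$ vanishes to second order in $x$ at the origin — which I would get from continuity of $g_{11}$ together with its evenness in $x$ — and that the Hadamard quotient $k$ is genuinely even in $x$, so that it descends to a well-defined function of $r=|x|$; the boundedness of the even $r$-derivatives of $A$ and $B$ at $r=0$ enters only in the ``$\Leftarrow$'' direction, where it ensures that the one-sided $r$-derivatives at $r=0$ are finite so that the Lemma is applicable. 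Everything else is a direct appeal to that Lemma and routine manipulation of the gauge identity $A^{-1}=1+\hat A r^2$.
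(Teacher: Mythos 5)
Your proposal is correct and follows the same route as the paper: the Corollary is obtained there as an immediate consequence of the preceding Lemma applied componentwise to the Euclidean form (\ref{A7}) of the metric, which is exactly your reduction. The only difference is that you supply the details the paper treats as automatic — the Hadamard-lemma step needed to peel the polynomial factor $x^ix^j$ off the spatial components in the ``only if'' direction, and the explicit passage between $\hat{A}$ and $A$ via $A^{-1}=1+\hat{A}r^2$ under the gauge $A(t,0)=1$ — so your write-up is a more careful rendering of the same argument rather than a different one.
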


To conclude, solutions of the Einstein field equations in SSC have four unknowns: The metric components $A$ and $B$, the density $\rho$ and the scalar velocity $v$. It is easy to show that if the SSC metric components satisfy the condition that all odd order $r$-derivatives vanish at $r=0$, then the components of the unit four-velocity vector $u^\mu$ associated with smooth curves that pass through $r=0$ will have the same property.\footnote{This implies that the coordinates are smooth functions of arc-length along curves passing through $r=0$.} Moreover, the scalar velocity $v$ will have the property that all even derivatives vanish at $r=0$ because $v$ is an outward velocity which picks up a change of sign when represented in $\vec{x}$ coordinates. Thus smoothness of SSC solutions at $r=0$ at fixed time is equivalent to requiring that the metric components satisfy the condition that all odd $r$-derivatives vanish at $r=0$. These then give conditions on SSC solutions equivalent to the condition that the solutions are smooth in the ambient Euclidean coordinate system $\vec{x}$. Theorem \ref{ThmsmoothAgain} of Section \ref{S6} proves that smoothness in the coordinate system $\vec{x}$ at $r=0$ at each $t>0$ in this sense is preserved by the Einstein evolution equations for SSC metrics when $p=0$. In particular, this demonstrates that our condition for smoothness of SSC metrics at $r=0$ is equivalent to the well-posedness of solutions in the ambient Euclidean coordinates defined in a neighborhood of $r=0$. Thus we obtain the condition for smoothness of SSC metrics at $r=0$ based on the Euclidean coordinate systems associated with SSC and show this is preserved by the evolution of the Einstein field equations. Since smoothness of the SSC metric components in this sense is equivalent to smoothness of the $\vec{x}$-coordinates with respect to arc-length along curves passing through $r=0$, in this sense, our condition for smoothness is geometric.

\section{Time Since the Big Bang}\label{TSBB}

The SSC-PDE self-similar form of the Einstein field equations for spherically symmetric dust ($p=0$) spacetimes has the advantage that the time translation freedom of the SSC metric ansatz enables one to scale the time so that the Big Bang singularity for a general smooth solution agrees with a Friedmann spacetime at leading order for some value of $k$. We establish this directly now with an argument based on the STV-ODE of order $n=1$, namely:
\begin{align}
    t\dot{z}_2 &= 2z_2 - 3z_2w_0,\label{Eq1again}\\
    t\dot{w}_0 &= -\frac{1}{6}z_2 + w_0 - w_0^2,\label{Eq2again}
\end{align}
where:
\begin{align*}
    w &= \frac{v}{\xi}, & z &= \frac{\rho r^2}{1-v^2}.
\end{align*}
Solutions $(z_2(t),w_0(t))$ of (\ref{Eq1again})--(\ref{Eq2again}) give the leading order approximation:
\begin{align*}
    w(t,\xi) &= w_0(t) + O(\xi^2),\\
    z(t,\xi) &= z_2(t)\xi^2 + O(\xi^4). 
\end{align*}
Consider now the effect of a time translation $\hat{t}=t-t_0$ and set
\begin{align*}
    \hat{\xi} = \frac{r}{\hat{t}},
\end{align*}
so that:
\begin{align*}
    \hat{w} &= \frac{v}{\hat{\xi}} = \left(\frac{t-t_0}{t}\right)\frac{v}{\xi} = \left(\frac{t-t_0}{t}\right)w,\\
    \hat{z} &= \hat{z}_2(\hat{t})\hat{\xi}^2 + O(\hat{\xi}^4) = \left(\frac{t}{t-t_0}\right)^2\hat{z}_2(\hat{t})\xi^2 + O(\xi^4).
\end{align*}
Thus it makes sense at leading order to define:
\begin{align*}
    \hat{w}_0 &:= \left(\frac{t-t_0}{t}\right)w_0, & \hat{z}_2 &:= \left(\frac{t-t_0}{t}\right)^2z_2.
\end{align*}
Given that the SSC metric form is invariant under time translation and the SSC-PDE and SSC-ODE faithfully represent the SSC solutions in $(t,\xi)$ coordinates, we should expect that $(z_2(t),w_0(t))$ should solve the leading order equations (\ref{Eq1again})--(\ref{Eq2again}) if and only if $(\hat{z}_2(\hat{t}),\hat{w}_0(\hat{t}))$ do. It suffices to verify that if $(z_2(t),w_0(t))$ solve (\ref{Eq1again})--(\ref{Eq2again}), then $(\hat{z}_2(\hat{t}),\hat{w}_0(\hat{t}))$ do. To this end, assuming a solution $(z_2(t),w_0(t))$ and substituting $(\hat{z}_2(\hat{t}),\hat{w}_0(\hat{t}))$ into (\ref{Eq1again})--(\ref{Eq2again}), we obtain
\begin{align*}
    \hat{t}\dot{\hat{z}}_2 &= (t-t_*)\left[\left(\frac{t-t_*}{t}\right)^2\dot{z}_2 + 2\left(\frac{t_*}{t^2}\right)\left(\frac{t-t_*}{t}\right)z_2\right]\\
    &= (t-t_*)\left[-\frac{3}{t}\left(\frac{t-t_*}{t}\right)^2z_2\left(w_0-\frac{2}{3}\right) + 2\left(\frac{t_*}{t^2}\right)\left(\frac{t-t_*}{t}\right)z_2\right]\\
    &= -3\hat{z}_2\left[\frac{t-t_*}{t}\left(w_0-\frac{2}{3}\right) - \frac{2}{3}\left(\frac{t_*}{t}\right)\right]\\
    &= 2\hat{z}_2 - 3\hat{z}_2\hat{w}_0
\end{align*}
and
\begin{align*}
    \hat{t}\dot{\hat{w}}_0 &= (t-t_*)\left[\left(\frac{t-t_*}{t}\right)\dot{w}_0 + \frac{t_*}{t^2}w_0\right]\\
    &= (t-t_*)\left[\frac{1}{t}\left(\frac{t-t_*}{t}\right)\left(-\frac{1}{6}z_2+w_0-w_0^2\right) + \frac{t_*}{t^2}w_0\right]\\
    &= -\frac{1}{6}\hat{z}_2 - \hat{w}_0^2 + \left(\frac{t-t_*}{t}\right)\hat{w}_0 + \frac{t_*}{t}\hat{w}_0\\
    &= -\frac{1}{6}\hat{z}_2 + \hat{w}_0 - \hat{w}_0^2.
\end{align*}
We conclude that equations (\ref{Eq1again})--(\ref{Eq2again}) are invariant under the transformation:
\begin{align}
    \hat{t} &\to t - t_*, & \hat{w}_0 &\to \left(\frac{t-t_*}{t}\right)w_0, & \hat{z}_2 &\to \left(\frac{t-t_*}{t}\right)^2z_2.\label{invarianceNequalOne}
\end{align}
Using (\ref{invarianceNequalOne}) we can give a rigorous proof that every solution of the STV-ODE agrees with a Friedmann solution at leading order $n=1$. For this it suffices to prove that for each solution $(z_2(t),w_0(t))$ of (\ref{Eq1again})--(\ref{Eq2again}), the STV-ODE of order $n=1$, there exists a time translation $t_0=t_*$, that is, \emph{time since the Big Bang}, such that (\ref{invarianceNequalOne}) transforms $(z_2(t),w_0(t))$ to $(\hat{z}_2(\hat{t}),\hat{w}_0(\hat{t}))$, where the latter lies on the trajectory corresponding to the point $SM$ or to one of the two trajectories in the unstable manifold of $SM$, see Figure \ref{Figure1}. But this follows directly from (\ref{invarianceNequalOne}) by simply verifying for each solution $(z_2(t),w_0(t))$ that there exists a value $t_0=t_*$ such that
\begin{align*}
    \lim_{\hat{t}\to0}(\hat{z}_2(\hat{t}),\hat{w}_0(\hat{t})) = SM.
\end{align*}
This alone implies the transformed solution lies in the unstable manifold of $SM$ and hence agrees with a Friedmann solution for some value of $k$, at the level of the STV-ODE of order $n=1$. To verify this, it suffices to argue conversely by assume a solution is $SM$, or in the unstable manifold of $SM$, and finding a time translation sufficient to impose any non-singular initial condition. For example, starting with $SM$, it is easy to see one can translate to $t+t_*$ to obtain any initial condition on the trajectory taking $U$ to $SM$ (see Figure \ref{Figure1}), implying, more generally, that time since the Big Bang takes the stable manifold of $SM$ to the rest point $SM$ itself. Similarly, starting with a solution on a trajectory connecting $SM$ to $M$, it is not difficult to find a time translation $t+t_*$ sufficient to set any initial condition on any trajectory on the underdense side of the two trajectories in the stable manifold of $SM$ and similarly on the underdense side. We conclude that since any initial condition away from a rest point can be imposed by some time translation of a trajectory in the unstable manifold of $SM$, the converse is true, that an inverse time translation will transform an arbitrary non-rest point trajectory to $SM$ or one of the two trajectories in the unstable manifold of $SM$ at order $n=1$. From this it follows that imposing the time translation gauge \emph{time since the Big Bang} is equivalent to assuming solutions $(z_2(t),w_0(t))$ lie on $SM$ or traverse one of the trajectories in its unstable manifold. The space $\mathcal{F}$ of smooth solutions underdense with respect to the $k=0$ Friedmann spacetime, identified in this paper, defined by the condition that solutions lie on the trajectory which takes rest point $SM$ to rest point $M$ in the phase portrait of the STV-ODE of order $n=1$, automatically imposes time since the Big Bang because this is the underdense trajectory in the unstable manifold of $SM$. Note that the nested structure of the STV-ODE implies that initial conditions for variables at higher order can still be freely assigned.

\section{The Friedmann Spacetimes in SSCNG}\label{S4}

In this section we review the Friedmann spacetimes of Cosmology. In Section \ref{Cos1} we review the Friedmann spacetimes and their cosmological interpretation. In Section \ref{Cos2} we discuss the instability of the $k=0$ Friedmann metric within the space of Friedmann metrics for general $k$, in the case $p=\sigma\rho$ with $\sigma=constant$ and $0\leq\sigma\leq1$, and record the exact formulas for $k=-1,0,+1$ Friedmann solutions we employ in the analysis to follow. In Section \ref{Cos3} we derive formulas for the unique coordinate transformation which takes a general Friedmann metric given in comoving coordinates to SSCNG coordinates.

\begin{Remark}
	To keep the notation to a minimum, in Sections \ref{S4} to \ref{S6} we change our notation and let $(t,r)$ denote the standard comoving coordinate system for Friedmann spacetimes and use barred coordinates $(\bar{t},\bar{r})$ for SSCNG systems. In Section \ref{S7} we begin the analysis of general solutions to the Einstein field equations in SSCNG coordinates, and from that point on, do not refer to comoving coordinates. Thus from Section \ref{S7} on, we return to the notation of Sections \ref{S1} to \ref{S3} in which unbarred coordinates denote SSCNG, that is, coordinates in which a metric takes the form (\ref{SSCintro}). The exception is Section \ref{S11.5}, where $(\bar{t},\bar{r})$ again denote SSCNG.
\end{Remark}

\subsection{The Friedmann Spacetimes in Cosmology}\label{selfsimilarformulas}\label{Cos1}

The Friedmann metric with curvature parameter $k\in\mathbb{R}$ in comoving coordinates $(t,r)$ takes the form
\begin{align}
    ds^2 = -dt^2 + \frac{R(t)^2}{1-kr^2}dr^2 + \bar{r}^2d\Omega^2,\label{Friedmann}
\end{align}
where $R$ is the cosmological scale factor, $k$ is the curvature parameter, $r=constant$ gives the radial geodesics and $\bar{r}=Rr$ measures arc-length distance at fixed $r$ \cite{wein}. Recall that (\ref{Friedmann}) is invariant under the scaling:
\begin{align}
    \hat{r} &= \sqrt{a}r, & \hat{R}(t) &= \frac{1}{\sqrt{a}}R(t), & \hat{k} &= \frac{k}{a},\label{rescale}
\end{align}
for any $a>0$. Note that $H$ (defined below) and $\bar{r}$ are invariant under rescaling but $R$ and $r$ are not. Taking $a=|k|$ rescales the Friedmann metric (with arbitrary $k$) into its standard form, that is, in which $k=-1,0,+1$ and with the metric taking the form
\begin{align}
    ds^2 = -dt^2 + R(t)^2\bigg(\frac{dr^2}{1-\text{sign}(k)r^2}+r^2d\Omega^2\bigg),\label{kFriedmannkequalone}
\end{align}
where $\text{sign}(k)\in\{-1,0,1\}$. That is, for any given $R(t)$ and $k$, the Friedmann spacetime is equivalent to one of the three forms (\ref{kFriedmannkequalone}), but a given $R(t)$ depends on the initial conditions for the Einstein field equations. The Einstein field equations for Friedmann metrics (\ref{Friedmann}) take the form
\begin{align}
    \dot{R}^2 &= \frac{\kappa}{3}\rho R^2 - k,\label{kFriedmannequation1}\\
    \dot{\rho} & = -3(\rho+p)H,\label{kFriedmannequation2}
\end{align}
where
\begin{align*}
    H = \frac{\dot{R}}{R},
\end{align*}
is the Hubble constant, a function which evolves in time.
In a cosmological model, solutions of (\ref{kFriedmannequation1})--(\ref{kFriedmannequation2}) are to be determined from the measurable quantities at present time in the Universe, namely:
\begin{align}
    H(t_0) &= H_0, & \rho(t_0) &= \rho_0,\label{ICHrho}
\end{align}
where $t_0$ is present time. The age of the Universe is $t_0-t_*$, where $H(t_*)=\infty$ and $R(t_*)=0$ is the Big Bang. The problem then is to determine $(R(t),\rho(t),k,t_*)$ from (\ref{ICHrho}). Assuming an equation of state $p=p(\rho)$, this is done formally as follows: First, solving for $H$ in (\ref{kFriedmannequation1}) and substituting into (\ref{kFriedmannequation2}) yields the system
\begin{align}
    \dot{R} &= \sqrt{\frac{\kappa}{3}\rho R^2 - k},\label{kFriedmannauton1}\\
    \dot{\rho} &= -\frac{3(\rho+p)}{R}\sqrt{\frac{\kappa}{3}\rho R^2-k},\label{kFriedmannauton2}
\end{align}
a $2\times2$ autonomous system of ODE for each fixed $k$, admitting the scaling law (\ref{rescale}) which preserves solutions. We can account for the scaling law in the solution of the initial value problem formally in one of two ways.

For the first way, we scale $k$ into $\text{sign}(k)=-1,0,+1$ and consider the initial value problem for:
\begin{align}
    \dot{R} &= \sqrt{\frac{\kappa}{3}\rho R^2-\text{sign}(k)},\label{kFriedmannauton11}\\
    \dot{\rho} &= -\frac{3(\rho+p)}{R}\sqrt{\frac{\kappa}{3}\rho R^2-\text{sign}(k)}.\label{kFriedmannauton22}
\end{align}
We then use (\ref{kFriedmannauton1}) to determine $\text{sign}(k)$ from $H_0$ and $\rho_0$ by 
\begin{align}
    \text{sign}\Big(H_0^2-\frac{\kappa}{3}\rho_0\Big) = -\text{sign}(k).
\end{align}
Once $\text{sign}(k)$ is fixed, (\ref{kFriedmannauton11})--(\ref{kFriedmannauton22}) is again a fixed autonomous system of ODE which has a unique solution $(R(t'),\rho(t'))$ for initial conditions $R(t'_0)=R_0$ and $\rho(t'_0)=\rho_0$ (we introduce the variable $t'$ here only to later set $t=t'-t_*$). Moreover, being autonomous, solution trajectories are distinct, time translation preserves solutions and time translation suffices to meet all initial conditions on each trajectory. In the cosmological problem, given $H_0$ and $\rho_0$, we use $\dot{R}_0=H_0R_0$ in equation (\ref{kFriedmannauton1}) to solve for $R_0$. Then $(R_0,\rho_0)$ determines a unique solution $(R(t'),\rho(t'))$ of (\ref{kFriedmannauton1})--(\ref{kFriedmannauton2}) for any given time $t'_0$. The time of the Big Bang, $t'=t_*$, is the time when $H(t_*)=\infty$ and the age of the Universe is $t_0=t'_0-t_*$. Setting $t=t'-t_*$, our solution $(H(t),\rho(t))$ as a function of time since the Big Bang $t$, is given in terms of our original solutions by making the time translation $t+t_*\to t$. Obtaining solutions of the initial value problem this way, it is clear that there is a unique cosmological model for each $H_0$ and $\rho_0$, but it is difficult to see that the solution, and age of the Universe, depend continuously on $H_0$ and $\rho_0$ because $\text{sign}(k)$ is discontinuous at $k=0$. For this we can view it a second way.

For the second way, we keep the free parameter $k$ in system (\ref{kFriedmannauton1})--(\ref{kFriedmannauton2}) so that we can continuously take $k\to0$. To start, fix an arbitrary starting time $t'_0$ and impose initial conditions $H(t'_0)=H_0$, $\rho(t'_0)=\rho_0$ and $R(t'_0)=1$. Using this in (\ref{kFriedmannauton1}) determines $k$ by
\begin{align}
    k = -\text{sign}(k)\Big(H_0^2-\frac{\kappa}{3}\rho_0\Big).
\end{align}
Once $k$ is fixed, (\ref{kFriedmannauton1})--(\ref{kFriedmannauton2}) is a fixed autonomous system of ODE which has a unique solution $(R(t'),\rho(t'))$ for initial conditions $R(t'_0)=R_0$ and $\rho(t'_0)=\rho_0$. Again, being autonomous, solution trajectories are distinct, time translation preserves solutions and time translation suffices to meet all initial conditions on each trajectory. Letting $t_*$ be the time when $H(t_*)=0$, we can let $t=t'-t_*$ and make the time translation $H(t+t_*)\to H(t)$ and $\rho(t+t_*)\to\rho(t)$ to obtain solutions as functions of $t$. Then $H(t_0)=H_0$, $\rho(t_0)=\rho_0$, $t$ measures \emph{time since the Big Bang} and $t_0$ gives the age of the Universe. The advantage of this second way to view the initial value problem, is that the right hand side of system (\ref{kFriedmannauton1})--(\ref{kFriedmannauton2}) is a smooth function $\boldsymbol{F}(\rho,R,k)$, and hence solutions depend continuously on $k$. Thus, the Friedmann solutions $(\rho(t),H(t))$ constructed as above to satisfy $(H(t_0),\rho(t_0))=(H_0,\rho_0)$ have the property that $(H(t),\rho(t))$ and $t_0$ all depend continuously on $k$ at fixed $t>0$. 

Although Friedmann solutions depend continuously on $k$ at each time $t$, the $k=0$ Friedmann solution is unstable within the Friedmann family of spacetimes with arbitrary $k\in\mathbb{R}$. The purpose of this paper is to characterize the instability of $k=0$ Friedmann within the general class of spherically symmetric solutions of the Einstein field equations which are smooth at the center. To incorporate the Friedmann family into this more general framework, we will adopt the first approach outlined above, that is, the approach based on rescaling $k$ to $k=-1,0,1$.

\subsection{Instability of $k=0$ Friedmann Within the Friedmann Family}\label{Cos2}

We now derive exact solutions of the Friedmann equations (\ref{kFriedmannauton1})--(\ref{kFriedmannauton2}) assuming the equation of state $p=\sigma\rho$ with $\sigma$ constant and $0\leq\sigma\leq1$. Note that the case $\sigma=\frac{1}{3}c^2$ corresponds to a radiation-dominated universe and $\sigma=0$ corresponds to a (pressureless) matter-dominated universe. It is also worth noting that under the assumptions of spherical symmetry and self-similarity, such as the case for the critical Friedman spacetime, a generic barotropic equation of state $p = p(\rho)$ is restricted to the form $p=\sigma\rho$ for some constant $\sigma$ \cite{cahita}. For such an equation of state, the Friedmann equations take the form:
\begin{align}
    \dot{R} &= \sqrt{\frac{\kappa}{3}\rho R^2-k},\label{kFriedmannzerop1}\\
    \dot{\rho} &= -3(1+\sigma)\rho\frac{\dot{R}}{R},\label{kFriedmannzerop2}
\end{align}
noting that we only consider the case $\dot{R}>0$. Equation (\ref{kFriedmannzerop2}) implies
\begin{align}
    \frac{d\rho}{\rho} = -3(1+\sigma)\frac{dR}{R},
\end{align}
which integrates to
\begin{align}
    \rho R^{3(1+\sigma)} = \rho_0R_0^{3(1+\sigma)},
\end{align}
so $\rho R^{3(1+\sigma)}$ is constant along solutions. Following \cite{ABS}, we set the constant to
\begin{align}
    \Delta_0 = \frac{\kappa}{3}\rho_0R_0^{3(1+\sigma)}.\label{DeltaDef}
\end{align}
As noted in \cite{ABS}, in the case $p=0$ and $k=+1$, taking units $\kappa=8\pi$, we see that
\begin{align}
    M_0 = \frac{4\pi}{3}R_0^3\rho_0 = \frac{1}{2}\Delta_0
\end{align}
has the physical interpretation as the total mass of the Universe. The case $p=0$ and $k=0$ is the mass of the ball of radius $R$. Note also that
\begin{align*}
    \Delta_0 = 2M_0,
\end{align*}
is a formal expression for the Schwarzschild radius.

Using (\ref{DeltaDef}) gives
\begin{align}
    \rho = \frac{3\Delta_0}{\kappa}R^{-3(1+\sigma)},
\end{align}
and using this in (\ref{kFriedmannzerop1}) gives the scalar equation
\begin{align}
    \dot{R}^2 = \Delta_0R^{-(1+3\sigma)} - k.
\end{align}
The acceleration parameter $q_0$ (which determines the quadratic correction to redshift vs luminosity) is then given by
\begin{align}
    q_0 = -\frac{\ddot{R}_0R_0}{\dot{R}_0^2}=\frac{(1+3\sigma)\Delta_0}{\Delta_0-kR_0^{1+3\sigma}}.
\end{align}

We can now discuss, formally, the instability of the critical $k=0$ Friedmann spacetimes within the space of $k\neq0$ Friedmann spacetimes when $p=\sigma\rho$. For this, we note that (\ref{kFriedmannzerop1}) gives
\begin{align*}
    1 &= \frac{\kappa\rho}{3H^2}\left(1-\frac{3k}{\kappa\rho R^2}\right)\\
    &= \frac{\kappa\rho}{3H^2}\left(1-\frac{3k}{\kappa\rho_0R_0^{3(1+\sigma)}R^{-3(1+\sigma)}R^2}\right)\\
    &= \frac{\kappa\rho}{3H^2}\left(1-\frac{k}{\Delta_0}R^{1+3\sigma}\right)
\end{align*}
or
\begin{align}
    \Omega = 1 - \frac{k}{\Delta_0}R^{1+3\sigma}\label{unstableFriedmann}
\end{align}
where
\begin{align*}
    \Omega(t) = \frac{3H^2(t)}{\kappa\rho(t)}.
\end{align*}
Thus, if at a given time the Universe is near critical expansion ($k=0$), then $\Omega(t)\approx1$. Therefore by (\ref{unstableFriedmann}), when $k<0$, $\Omega(t)\to\infty$ in positive time, and in the case $k>0$, $\Omega(t)\to0$ at the maximum value of $R$ \cite{ABS}. This gives a formal expression to the instability of critical expansion within the Friedmann family of spacetimes.

Our goal now is to express the instability of the $k=0$ Friedmann spacetime rigorously within a phase portrait, which we do by first transforming the Friedmann spacetimes to SSCNG coordinates. Recall that SSCNG coordinates are coordinates in which the metric takes the SSC form (\ref{SSC}) and employs a special normalized gauge (NG). In SSCNG coordinates, the instability can be expressed simply and rigorously in a phase portrait based on the self-similar variable $\xi=\frac{\bar{r}}{\bar{t}}$ associated with SSCNG coordinates $(\bar{t},\bar{r})$. The result is a rigorous characterization of the instability of the $k=0$ Friedmann spacetime to smooth spherically symmetric perturbations in the cosmologically significant case $p=0$. We show that, in this phase portrait, the $k\neq0$ Friedmann spacetimes (\ref{kminusone}) and (\ref{kplusone}) correspond to two trajectories in the unstable manifold of the rest point $SM$ (corresponding to $k=0$ Friedmann), but the unstable manifold has one extra degree of freedom over and above perturbations which are Friedmann solutions. This extra degree of freedom naturally replaces the one degree of freedom offered by the cosmological constant in predicting redshift vs luminosity observations.

To accomplish this, we use the following well known formulas for exact solutions of the Friedmann equations when $p=0$ and $k=-1,0,1$ (see \cite{ABS} pages 433--437).

\begin{Thm}\label{exactFriedmannsolutions}
    The following formulas provide exact solutions to the Friedmann equations (\ref{kFriedmannequation1})--(\ref{kFriedmannequation2}) when $p=\sigma=0$ and $k=-1,0,1$.

    Case $k=-1$:
    \begin{align}
        t &= \frac{\Delta_0}{2}(\sinh2\theta-2\theta),\label{kminusone}\\
        R &= \frac{\Delta_0}{2}(\cosh2\theta-1) = \Delta_0\sinh^2\theta.\label{kminusoneR}
    \end{align}

    Case $k=0$:
    \begin{align}
        R &= \big(\sqrt{\Delta_0}t\big)^{\frac{2}{3}},\\
        \rho &= \frac{4\kappa}{3t^2}.
    \end{align}

    Case $k=+1$:
    \begin{align}
        t &= \frac{\Delta_0}{2}(2\theta-\sin2\theta),\label{kplusone}\\
        R &= \frac{\Delta_0}{2}(1-\cos2\theta).\label{kplusoneR}
    \end{align}
\end{Thm}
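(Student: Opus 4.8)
The plan is to verify the three formulas directly, organizing everything around a single reduced scalar equation; these are the classical dust Friedmann solutions (cf. \cite{ABS}), so what is needed is a self-contained check rather than a new idea. First I would note that, since $p = \sigma = 0$, equation (\ref{kFriedmannequation2}) reads $\dot\rho/\rho = -3\dot R/R$, which integrates to $\rho R^{3} = \frac{3\Delta_0}{\kappa}$ with $\Delta_0$ as in (\ref{DeltaDef}); substituting $\rho = \frac{3\Delta_0}{\kappa}R^{-3}$ into (\ref{kFriedmannequation1}) gives the autonomous scalar equation
\begin{align*}
\dot R^2 = \frac{\Delta_0}{R} - k, \qquad \dot R > 0,
\end{align*}
so that everything reduces to solving this equation for $R$ and then recovering $\rho$ from $\rho R^3 = \frac{3\Delta_0}{\kappa}$. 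The residual time-translation freedom of the autonomous system is fixed by imposing $R \to 0$ as $t \to 0^+$, i.e.\ the Big Bang at $t = 0$.

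For $k = 0$ the equation separates as $R^{1/2}\,dR = \sqrt{\Delta_0}\,dt$, which integrates to $\tfrac{2}{3}R^{3/2} = \sqrt{\Delta_0}\,t$ once the constant is chosen so $R(0) = 0$; hence $R$ is the stated power of $t$ (up to the scaling normalization of $\Delta_0$ in force) and $\rho \propto t^{-2}$ follows from $\rho R^3 = \frac{3\Delta_0}{\kappa}$. For $k = -1$ (resp.\ $k = +1$) I would substitute the stated parametrization in the development angle $\theta$: differentiating $t(\theta)$ and $R(\theta)$ in closed form and forming $\dot R = (dR/d\theta)/(dt/d\theta)$ gives $\dot R = \sinh 2\theta/(\cosh 2\theta - 1)$ (resp.\ $\dot R = \sin 2\theta/(1 - \cos 2\theta)$). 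Squaring and clearing denominators, the identity to be checked collapses to $\cosh^2 2\theta - \sinh^2 2\theta = 1$ (resp.\ $\sin^2 2\theta + \cos^2 2\theta = 1$), so $\dot R^2 = \Delta_0/R - k$ holds identically; one then reads off $\rho$ from conservation of $\rho R^3$, and notes that $\theta \to 0^+$ forces both $t \to 0$ and $R \to 0$, so these parametrized curves are precisely the solutions emanating from the Big Bang.

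The one point needing care — and the main obstacle — is that the scalar equation $\dot R^2 = \Delta_0/R - k$ is singular at $R = 0$: standard existence–uniqueness theory applies only on $R > 0$ and says nothing at the Big Bang itself. The $\theta$-parametrization is exactly what desingularizes this — near $\theta = 0$ one has $t \sim \tfrac{2\Delta_0}{3}\theta^3$ and $R \sim \Delta_0\theta^2$, so $R$ is a smooth function of $\theta$, the monotonicity $dt/d\theta = \Delta_0(\cosh 2\theta - 1) > 0$ (resp.\ $\Delta_0(1 - \cos 2\theta) \geq 0$) makes $\theta \mapsto t$ invertible, and the leading behaviour $R \sim (\tfrac{3}{2}\sqrt{\Delta_0}\,t)^{2/3}$ coincides with the $k = 0$ case, confirming that the curvature term is subdominant at early times and that the time-translation constant has been fixed correctly. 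Combining Lipschitz uniqueness on $R > 0$ with this boundary normalization then upgrades ``these are solutions'' to ``these are the solutions,'' which is what the theorem asserts.
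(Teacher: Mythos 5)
Your proposal is correct, and it supplies a verification that the paper itself omits: Theorem \ref{exactFriedmannsolutions} is introduced with the phrase ``we use the following well known formulas \dots (see \cite{ABS} pages 433--437),'' so the paper gives no proof beyond a citation. Your reduction to the scalar equation $\dot R^2 = \Delta_0/R - k$ via $\rho R^3 = 3\Delta_0/\kappa$ coincides exactly with the paper's own preliminary computation in Section \ref{Cos2}, where $\dot R^2 = \Delta_0 R^{-(1+3\sigma)} - k$ is derived for general $\sigma$; your parametric check for $k=\mp1$ (reducing to $\cosh^2 2\theta - \sinh^2 2\theta = 1$ and $\sin^2 2\theta + \cos^2 2\theta = 1$) is the standard and correct route, and the same differentiations $dt/d\theta = \Delta_0(\cosh 2\theta - 1)$, $\dot R = \coth\theta$ appear later in the paper's Section \ref{Appendix1F}. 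Your discussion of the singular point $R=0$ and of why the $\theta$-parametrization desingularizes the Big Bang, upgrading ``these are solutions'' to ``these are the solutions,'' goes beyond what the paper records and is a genuine improvement. One small point worth flagging: your separation of variables gives $\tfrac{2}{3}R^{3/2} = \sqrt{\Delta_0}\,t$, i.e.\ $R = \big(\tfrac{3}{2}\sqrt{\Delta_0}\,t\big)^{2/3}$, which is consistent with $\rho = \tfrac{4}{3\kappa t^2}$ (the value given in Theorem \ref{ThmCosWithShockMAA} at $\sigma=0$), whereas the statement as printed has $R = (\sqrt{\Delta_0}\,t)^{2/3}$ and $\rho = \tfrac{4\kappa}{3t^2}$; these two displayed formulas are mutually inconsistent and appear to contain typographical slips (a missing factor $\tfrac{3}{2}$ and a misplaced $\kappa$), so your hedge ``up to the scaling normalization of $\Delta_0$ in force'' is warranted and in fact identifies an error in the stated theorem rather than a gap in your argument.
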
 

\subsection{Transforming Friedmann to SSCNG Coordinates}\label{Cos3}

We now consider the Friedmann metrics (\ref{Friedmann}) and derive the explicit coordinate transformation that puts them into SSC form
\begin{align}
    ds^2 = -B(\bar{t},\bar{r})d\bar{t}^2 + \frac{d\bar{r}^2}{A(\bar{t},\bar{r})} + \bar{r}^2d\Omega^2,\label{SSC}
\end{align}
such that they meet the normalized gauge condition $B(\bar{t},0)=1$. Note that the SSC metric form has the gauge freedom $\bar{t}\to F(\bar{t})$ for any smooth invertible function $F$, so specifying proper time at $\bar{r}=0$ fixes the functions $A(\bar{t},\bar{r})$ and $B(\bar{t},\bar{r})$ uniquely. Note also that the comoving metric form (\ref{Friedmann}) is invariant under the transformation:
\begin{align*}
    R(t) &\to \frac{R(t)}{\sqrt{|k|}}, &  r &\to \sqrt{|k|}r,
\end{align*}
which preserves $\bar{r}=Rr$. It follows that without loss of generality we can assume $k=\pm1$.

Our strategy for finding the change of variables which takes (\ref{Friedmann}) to (\ref{SSC}) with normalized gauge $B(\bar{t},0)=1$ is as follows. We first find an explicit formula for a unique coordinate transformation taking (\ref{Friedmann}) to (\ref{SSC}) of the separable solvable form
\begin{align*}
    \hat{t} = \Phi(t,r) = f(t)g(r),
\end{align*}
and then we apply a change of gauge
\begin{align*}
    \bar{t} = F(\hat{t}) = F(\Phi(t,r)),
\end{align*}
which fixes the normalized gauge condition $B(\bar{t},0)=1$. This is because the total time change $\bar{t}(t,r)$ to SSCNG is not separable. Now the SSC metric form is invariant under arbitrary changes of time and thus it follows that the transformation:
\begin{align}
    \bar{t} &= F(h(t)g(r)), & \bar{r} &= R(t)r,\label{Ftrans}
\end{align}
will also take the Friedmann metric (\ref{Friedmann}) to SSC form (\ref{SSC}). Applying $F$ is thus an arbitrary gauge transformation. We now identify the gauge transformation $F(y)$ such that $B(\bar{t},0)=1$. We have that when $F(y)=1$, 
\begin{align*}
    B(\bar{t},0) = \frac{1}{h'(t)^2}.
\end{align*}
It is straightforward to derive the condition on $F$ so that the transformation (\ref{Ftrans}) puts Friedmann in SSC with normalized gauge for every $k$, namely
\begin{align*}
    B(\bar{t},0) = \frac{1}{F'(h(t)h'(t))^2} = 1.
\end{align*}
Thus the condition is 
\begin{align*}
    \frac{d}{dt}F(h(t)) = 1,
\end{align*}
or
\begin{align*}
    F(h(t)) = t.
\end{align*}
Therefore, letting $y=h(t)$ and assuming the invertibility of $h$, gives $t=h^{-1}(y)$, so we conclude
\begin{align*}
    F(y) = h^{-1}(y).
\end{align*}
We can now state and prove the main theorem of this section, which provides an explicit formula for the coordinate transformation taking Friedmann metrics in comoving coordinates to Friedmann metrics in SSCNG coordinates.

\begin{Thm}\label{SSCStandardGauge}
    Define the coordinate transformation:
    \begin{align}
        \bar{t} &= F(h(t)g(r)), & \bar{r} &= R(t)r,\label{FtransFinal}
    \end{align}
    where:
    \begin{align}
        h(t) &= e^{\lambda\int_0^t\frac{d\tau}{\dot{R}(\tau)R(\tau)}},\\
        g(r) &= \begin{cases}
        (1-kr^2)^{-\frac{\lambda}{2k}}, & k\neq0,\\
        e^{\frac{\lambda}{2}r^2}, & k=0,
        \end{cases}\label{DefghThm}\\
        F(y) &= h^{-1}(y),\label{Fdefine}
    \end{align}
    and $R(t)$ and $k$ are the cosmological scale factor and curvature parameter, respectively, of a Friedmann metric (\ref{Friedmann}). Then for any $\lambda>0$ (we take $\lambda=\frac{1}{2}$ below), (\ref{FtransFinal}) transforms the Friedmann metric (\ref{Friedmann}) over to SSC form (\ref{SSC}) with normalized gauge condition
    \begin{align*}
        B(\bar{t},0) = 1
    \end{align*}
    and the transformed SSCNG metric components are given by:
    \begin{align}
        A &= 1 - kr^2 - H^2\bar{r}^2,\label{Aformulaone}\\
        B &= \frac{1}{F'(\Phi)^2}\hat{B} = \frac{1}{(F'(\Phi)\Phi_t)^2}\frac{1-kr^2}{1-kr^2-H^2\bar{r}^2},\label{Bformulaone}
    \end{align}
    where $\Phi(t,r) = f(t)g(r)$. Moreover, we have:
    \begin{align}
        \sqrt{AB} &= \frac{\sqrt{1-kr^2}}{\frac{\partial\bar{t}}{\partial t}(t,r)},\label{sqrtABbest}\\
        v &= \frac{\dot{R}r}{\sqrt{1-kr^2}},\label{barv}
    \end{align}
    where $v$ is the SSCNG coordinate fluid velocity.\footnote{Note that (\ref{Aformulaone}) and (\ref{Bformulaone}) agree with equation (2.19) of \cite{smolteMAA}.}
\end{Thm}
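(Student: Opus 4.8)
The plan is to obtain the transformation (\ref{FtransFinal}) in two stages: a separable change of variables $\hat t=\Phi(t,r)=h(t)g(r)$, $\bar r=R(t)r$ that removes the off-diagonal term and brings (\ref{Friedmann}) into SSC form (\ref{SSC}), followed by a pure time reparametrization $\bar t=F(\hat t)$ that enforces the normalized gauge $B(\bar t,0)=1$. One starts with the comoving Friedmann metric (\ref{Friedmann}). Matching the spheres of symmetry forces the SSC radial coordinate to be $\bar r=R(t)r$, so $d\bar r=\dot R r\,dt+R\,dr$; eliminating $dr$ from (\ref{Friedmann}) in favor of $dt$ and $d\bar r$ produces a metric with a nonzero $dt\,d\bar r$ cross term, which must be killed by choosing a suitable new time coordinate.

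First I would determine which time coordinates diagonalize the metric. Writing $\hat t=\Phi(t,r)$, inverting the $2\times2$ Jacobian of $(t,r)\mapsto(\hat t,\bar r)$, and substituting $dt$ and $dr$ back into (\ref{Friedmann}), a direct computation shows the coefficient of $d\hat t\,d\bar r$ vanishes exactly when
\begin{align*}
    \frac{\Phi_r}{\Phi_t}=\frac{R\dot R\,r}{1-kr^2}.
\end{align*}
Inserting the separable ansatz $\Phi=h(t)g(r)$ turns this into
\begin{align*}
    \frac{g'(r)}{g(r)}\cdot\frac{1-kr^2}{r}=\frac{\dot h(t)}{h(t)}\,R(t)\dot R(t),
\end{align*}
whose two sides depend on disjoint variables and hence both equal a constant $\lambda$; taking $\lambda>0$, integrating the two resulting ODE, and normalizing by $h(0)=1$ and $g(0)=1$, reproduces precisely the formulas (\ref{DefghThm}) for $h$ and $g$. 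The residual freedom in $\lambda$ is immaterial because it is absorbed by the later reparametrization $F$, which is why $\hat t$ is only an intermediate SSC time.

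Next I would read off the metric components. Using the diagonalization relation above, the Jacobian determinant simplifies to $J=\Phi_t R\,A/(1-kr^2)$ with $A:=1-kr^2-\dot R^2r^2=1-kr^2-H^2\bar r^2$; substituting this back shows the coefficient of $d\bar r^2$ is exactly $1/A$, giving (\ref{Aformulaone}), and the coefficient of $d\hat t^2$ is $\hat B=(1-kr^2)/(\Phi_t^2A)$. The gauge change $\bar t=F(\Phi)$ multiplies $\hat B$ by $F'(\Phi)^{-2}$, which is (\ref{Bformulaone}). Evaluating at $r=0$, where $A=1$, $\Phi=h(t)$ and $\Phi_t=\dot h(t)$, the condition $B(\bar t,0)=1$ collapses to $\frac{d}{dt}F(h(t))=1$; fixing the integration constant by requiring $\bar t=0$ at $t=0$ gives $F(h(t))=t$, i.e. $F=h^{-1}$, which is (\ref{Fdefine}) and makes $\bar t$ \emph{time since the Big Bang}. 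For (\ref{sqrtABbest}) one notes $A\hat B=(1-kr^2)/\Phi_t^2$, so $\sqrt{AB}=\sqrt{A\hat B}/F'(\Phi)=\sqrt{1-kr^2}/(F'(\Phi)\Phi_t)$ and $F'(\Phi)\Phi_t=\partial\bar t/\partial t$. Finally (\ref{barv}) follows by transporting the comoving four-velocity $\vec u=(1,0,0,0)$, which yields $\bar u^0=\partial\bar t/\partial t$ and $\bar u^1=\dot R r$, and substituting into the SSCNG scalar velocity $v=\frac{1}{\sqrt{AB}}\frac{\bar u^1}{\bar u^0}$, where the factors of $\partial\bar t/\partial t$ cancel.

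The bulk of the work, and the most error-prone part, is the Jacobian inversion together with the algebra that collects the $d\hat t^2$, $d\hat t\,d\bar r$ and $d\bar r^2$ coefficients and checks that $A$ factors cleanly out of each; the one genuinely non-mechanical step is choosing the separable ansatz for $\Phi$ and recognizing the separation constant as free gauge. I would also need to record the regularity hypotheses under which the construction is legitimate: $\dot R>0$ along the solution (so that $\dot h>0$ and $h$ is strictly increasing, hence invertible, making $F=h^{-1}$ well defined and $\partial\bar t/\partial t>0$), convergence of the defining integral for $h$ at the Big Bang, and positivity $A>0$, i.e. $H^2\bar r^2<1-kr^2$, which for $k\le0$ holds out to roughly the Hubble radius, consistent with the range $|\xi|<\xi_0$ used elsewhere; these guarantee $J\neq0$ so that $(t,r)\mapsto(\bar t,\bar r)$ is a regular diffeomorphism onto its image.
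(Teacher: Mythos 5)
Your proposal is correct and follows essentially the same route as the paper's proof in Section \ref{Appendix1A}: both derive $h$ and $g$ by separating variables in the condition that the $d\hat{t}\,d\bar{r}$ cross term vanish, read off $A$ and $\hat{B}$ from the transformed metric via the Jacobian (whose determinant factors as $R\Phi_t A/(1-kr^2)$), fix $F=h^{-1}$ by evaluating $B$ at $r=0$, and obtain $v$ by pushing forward the comoving four-velocity so the $\partial\bar{t}/\partial t$ factors cancel. Your added remarks on regularity ($\dot R>0$, convergence of the integral defining $h$, and $A>0$) are not in the paper's proof of this theorem but are consistent with the solvability analysis it carries out separately.
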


For the proof of Theorem \ref{SSCStandardGauge}, see Section \ref{Appendix1A} below.

We now use Theorem \ref{SSCStandardGauge} to write the Friedmann spacetimes in SSCNG coordinates $(\bar{t},\xi)$. We consider first the case $k\neq0$. Formulas (\ref{kminusone})--(\ref{kplusoneR}) give implicit formulas for the $k\neq0$ Friedmann spacetimes in comoving coordinates $(t,r)$, where we recall
\begin{align*}
    \Delta_0 = \frac{\kappa}{3}\rho R^3.
\end{align*}
Keep in mind that when $k\neq0$, different values of $\Delta_0$ do not correspond to a gauge transformation, but instead describe distinct Friedmann solutions. The variable $t$ in (\ref{kminusone})--(\ref{kplusoneR}) represents proper time at fixed $r$ for all values of $\Delta_0$ and $k$.

To display the dependence of the $k\neq0$ Friedmann spacetimes on $\Delta_0$, we use the notation:
\begin{align*}
    \chi &= \frac{t}{\Delta_0}, & \bar{\chi} &= \frac{\bar{t}}{\Delta_0}, & \xi &= \frac{\bar{r}}{\bar{t}}.
\end{align*}
In the derivations below, we work to express the functions $A$, $B$, $v$ and $\rho r^2$ of the $k\neq0$ Friedmann solutions in SSCNG coordinates as functions of $(\chi,\xi)$. This is accomplished in Theorem \ref{ThmLog-time translation}. The result shows that the dependence of a $k\neq0$ Friedmann solution on the variables $\bar{t}$, $\bar{r}$ and $\Delta_0$ is through $(\chi,\xi)$. Most importantly, because $A$, $B$, $v$ and $\rho r^2$ depend on $(\bar{t},\bar{r})$ only through $(\bar{\chi},\xi)$ in SSCNG, it follows that the free parameter $\Delta_0$ in $k\neq0$ Friedmann spacetimes corresponds to the mapping of smooth solutions to smooth solutions implemented by replacing $\bar{t}$ by $\bar{\chi}$, holding $\xi$ fixed. Anticipating what is to come next, we call this \emph{log-time translation}, because changing $\Delta_0$ corresponds to the log-time translation $\ln\bar{t}\to\ln\bar{t}+\ln\Delta_0$. In particular, we use this below to establish that the $k\neq0$ Friedmann solutions each lie on a unique trajectory of the STV-ODE (derived below) at every order $n\geq1$, respectively, by showing that this corresponds to time translation in an autonomous system obtained by using log-time $\tau=\ln\bar{t}$ in place of $\bar{t}$.

We begin in the next section by establishing the domain of validity of the SSCNG coordinate transformation. From this point onward, we focus on the main case of interest to this paper, the case $k=-1$. Analogous formulas follow for the case $k=+1$ with straightforward modification. When $k=-1$, the Friedmann spacetimes in comoving coordinates $(t,r)$ are described by the exact formulas (\ref{kminusone}) and (\ref{kminusoneR}).

\subsection{The SSCNG Coordinate System for $k=-1$}

In this subsection we transform (\ref{kminusone})--(\ref{kminusoneR}) over to SSCNG coordinates and characterize the region of validity of the transformation. For this we find a simple expression for
\begin{align}
    \bar{t} = h^{-1}(h(t)g(r))
\end{align}
in (\ref{FtransFinal}), where $h$ and $g$ are given by (\ref{DefghThm}) to be:
\begin{align}
    h(t) &= e^{\lambda\int_0^t\frac{d\tau}{\dot{R}(\tau)R(\tau)}},\label{hoft2again}\\
    g(r) &= (1+r^2)^{\frac{\lambda}{2}},\label{goft2again}
\end{align}
and where $R(t)$ is defined implicitly by (\ref{kminusone})--(\ref{kminusoneR}). We start by recording the following expressions, which follow directly from (\ref{kminusone})--(\ref{kminusoneR}):
\begin{align}
    \frac{dt}{d\theta} &= \Delta_0(\cosh2\theta-1),\\
    \dot{R} &= \Delta_0(\sinh2\theta)\frac{d\theta}{dt} = \frac{\sinh2\theta}{\cosh2\theta-1} = \coth\theta,\label{tstep}\\
    \frac{d\dot{R}}{d\theta} &= -\frac{2}{\cosh2\theta-1} = -\csch^2\theta,\\
    \ddot{R} &= \frac{d\dot{R}}{d\theta}\frac{d\theta}{dt} = -\frac{\csch^22\theta}{\Delta_0(\cosh2\theta-1)},\\
    \dot{H} &= \frac{\ddot{R}R-\dot{R}^2}{R^2} = -\frac{4(\sinh^22\theta+\cosh2\theta-1)}{\Delta_0^2(\cosh2\theta-1)^4}.
\end{align}

We next record that by (\ref{tstep}), (\ref{kminusone}) and (\ref{kminusoneR}), we have
\begin{align*}
    \dot{R}R = \frac{\Delta_0}{2}\sinh2\theta,
\end{align*}
and using this in (\ref{tstep}) gives a formula for $\theta$ in terms of $t$, namely
\begin{align*}
    \theta = \frac{1}{\Delta_0}(R\dot{R}-t).
\end{align*}
 Using this in (\ref{hoft2again}), we obtain
\begin{align*}\nonumber
    \int_0^t\frac{d\tau}{\dot{R}(\tau)R(\tau)} = 2\int_0^t\frac{d\tau}{\Delta_0\sinh2\theta(\tau)}.
\end{align*}
Now let
\begin{align*}
    \tau = \frac{\Delta_0}{2}(\sinh2\theta-2\theta),
\end{align*}
so that
\begin{align*}
    d\tau = \Delta_0(\cosh2\theta-1)d\theta.
\end{align*}
Substitution yields
\begin{align*}
    \int_0^t\frac{d\tau}{\dot{R}(\tau)R(\tau)} &= 2\int_{t=0}^t\frac{\cosh2\theta-1}{\sinh2\theta}d\theta=2\int_{t=0}^t\frac{2\sinh^2\theta}{2\sinh\theta\cosh\theta} d\theta\\
    &= 2\int_{t=0}^t\frac{\sinh\theta}{\cosh\theta} d\theta = 2\ln|\cosh\theta|_{t=0}^t = \ln\cosh^2\theta(t)
\end{align*}
and using this in (\ref{hoft2again}) gives
\begin{align*}
    h(t) = e^{\lambda\ln\cosh^2\theta(t)} = \cosh^{2\lambda}\theta(t).
\end{align*}
To make the transformation as simple as possible, from here on we assume
\begin{align}
    \lambda = \frac{1}{2},\label{setlambda}
\end{align}
which gives:
\begin{align}
    h(t) &= \cosh\theta(t),\label{hoft2}\\
    g(r) &= \sqrt[4]{1+r^2},\label{goft2}\\
    h^{-1}(y) &= \theta^{-1}\circ\cosh^{-1}(y).\label{hinverse}
\end{align}
We can now use (\ref{hoft2})--(\ref{hinverse}) to obtain a formula for $\bar{t}$ as a function of $(t,r)$ using (\ref{kminusone}) in the form
\begin{align*}
    \bar{t} = h^{-1}(\Phi),
\end{align*}
where $\Phi(t,r) = h(t)g(r)$. But at this stage, in order to connect $\Delta_0$ to log-time translation of the STV-PDE derived below, it is important to make clear the dependence of the coordinates $t$ and $\bar{t}$ on $\Delta_0$. We thus define:
\begin{align*}
    \chi &= \frac{t}{\Delta_0}, & \bar{\chi} &= \frac{\bar{t}}{\Delta_0}.
\end{align*}
Using this notation, (\ref{kminusone}) becomes
\begin{align*}
    \chi = \frac{1}{2}(\sinh2\theta-2\theta),
\end{align*}
which inverts to
\begin{align*}
    \theta(t) = \Theta\Big(\frac{t}{\Delta_0}\Big).
\end{align*}
With this notation, (\ref{hoft2}) takes the form
\begin{align*}
    h(t) = \rm{h}(\chi) = \cosh\Theta(\chi),
\end{align*}
so (\ref{hinverse}) becomes:
\begin{align*}
    h^{-1}(y) &= \Delta_0\rm{h}^{-1}(\chi),\\
    \bar{\chi} &= \frac{\bar{t}}{\Delta_0} = \rm{h}^{-1}(\Phi) = \Theta^{-1}\bigg(\cosh^{-1}\Big(\sqrt[4]{1+r^2}\cosh\Theta(\chi)\Big)\bigg),
\end{align*}
or
\begin{align*}
    \cosh\Theta(\bar{\chi}) = \sqrt[4]{1+r^2}\cosh\Theta(\chi).
\end{align*}
Thus in summary, $\bar{t}=F(h(t)g(r))$ with $F(y)=h^{-1}(y)$ gives the transformation from $(t,r)\to(\bar{t},\bar{r})$ at each value of $\Delta_0$, as:
\begin{align}
    \frac{\bar{t}}{\Delta_0} &= \Theta^{-1}\circ\cosh^{-1}\bigg(\sqrt[4]{1+r^2}\cosh\Theta\Big(\frac{t}{\Delta_0}\Big)\bigg),\label{transkequalminusone1}\\
    \bar{r} &= R(t)r.\label{transkequalminusone2}
\end{align}
By (\ref{transkequalminusone1})--(\ref{transkequalminusone2}), we have
\begin{align*}
    \cosh\Theta\Big(\frac{\bar{t}}{\Delta_0}\Big) = \sqrt[4]{1+\frac{\bar{t}^2\xi^2}{R^2(t)}}\cosh\Theta\Big(\frac{t}{\Delta_0}\Big).
\end{align*}
Finally, using (\ref{kminusoneR}), we obtain the following fundamental relation between $\bar{t}$ and $t$ at each value of $\xi$ and $\Delta_0$
\begin{align}
    \cosh\Theta(\bar{\chi}) = \sqrt[4]{1+\frac{\bar{\chi}^2\xi^2}{\sinh^4\Theta(\chi)}}\cosh\Theta(\chi).\label{formulaFort}
\end{align}
We now show that equation (\ref{formulaFort}) defines $\chi=\chi(\bar{\chi},\xi)$ for all points \emph{inside the black hole} \cite{smolte1}, which includes all points within the Hubble radius. This is made precise in the following lemma.

\begin{Lemma}\label{TimeByChi1}
    For the $k=-1$ Friedmann spacetimes, equation (\ref{formulaFort}) uniquely defines
    \begin{align}
        \chi = \chi(\bar{\chi},\xi)\label{SSCNGinChi}
    \end{align}
    if and only if
    \begin{align}
        r < \sinh\Theta(\chi),\label{solvable}
    \end{align}
    and (\ref{solvable}) is equivalent to the condition that the SSCNG metric component $A$ satisfies
    \begin{align}
        A > 0.\label{solvableA}
    \end{align}
\end{Lemma}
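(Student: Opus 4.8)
The plan is to reduce the relation (\ref{formulaFort}) to a scalar equation $g(\chi)=c$, where the constant $c$ depends only on $(\bar\chi,\xi)$, then to analyse the monotonicity of $g$, and finally to observe that the unique critical point of $g$ sits precisely at the value of $\chi$ for which $A=0$. First I would dispose of the equivalence (\ref{solvable})$\iff$(\ref{solvableA}). Using (\ref{Aformulaone}) with $k=-1$ together with $\bar r=R(t)r$ and $H=\dot R/R$ (so that $H\bar r=\dot R r$) gives $A=1+r^2-\dot R^2r^2$; along the $k=-1$ solution, (\ref{tstep}) gives $\dot R=\coth\theta$, hence $1-\dot R^2=-\csch^2\theta$ and
$$A=1-\frac{r^2}{\sinh^2\theta}=1-\frac{r^2}{\sinh^2\Theta(\chi)}.$$
Since $r\ge0$ and $\sinh\Theta(\chi)>0$ for $\chi>0$, this is positive precisely when $r<\sinh\Theta(\chi)$.

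Next I would reduce (\ref{formulaFort}) to one variable. By (\ref{kminusoneR}) we have $R=\Delta_0\sinh^2\theta$, so $\bar\chi\xi=\bar r/\Delta_0=R(t)r/\Delta_0=\sinh^2\Theta(\chi)\,r$; in particular $\bar\chi^2\xi^2/\sinh^4\Theta(\chi)=r^2$. Setting $b:=\bar\chi\xi>0$ and $c:=\cosh\Theta(\bar\chi)>1$ (both determined by $(\bar\chi,\xi)$, with $(b,c)\leftrightarrow(\bar\chi,\xi)$ a bijection on the relevant ranges), equation (\ref{formulaFort}) becomes $g(\chi)=c$ with
$$g(\chi):=\Bigl(1+\frac{b^2}{\sinh^4\Theta(\chi)}\Bigr)^{1/4}\cosh\Theta(\chi),\qquad\chi\in(0,\infty).$$
Substituting $u=\Theta(\chi)$, a smooth increasing bijection of $(0,\infty)$ onto itself (since $\chi=\tfrac12(\sinh2\theta-2\theta)$ has positive derivative $\cosh2\theta-1$), it suffices to study $\phi(u):=g^4=\cosh^4u+b^2\coth^4u$. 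A direct computation yields
$$\phi'(u)=\frac{4\cosh^3u}{\sinh^5u}\bigl(\sinh^6u-b^2\bigr),$$
so $\phi'$, and hence $g'$, is negative for $\sinh u<b^{1/3}$, positive for $\sinh u>b^{1/3}$, and vanishes only at $u_*:=\sinh^{-1}(b^{1/3})$; moreover $g(\chi)\to\infty$ as $\chi\to0^+$ and as $\chi\to\infty$. Thus $g$ has a unique global minimum at $\chi_*:=\Theta^{-1}(u_*)$, decreases strictly on $(0,\chi_*)$ and increases strictly on $(\chi_*,\infty)$. Translating back via the identity $b=\sinh^2\Theta(\chi)\,r$: the case $u>u_*$ is $\sinh\Theta(\chi)>b^{1/3}$, i.e. $r=b/\sinh^2\Theta(\chi)<\sinh\Theta(\chi)$, i.e. $A>0$ by the first step; at $\chi_*$ one has $r=\sinh\Theta(\chi_*)$, i.e. $A=0$; and $u<u_*$ gives $A<0$.

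Finally I would assemble the ``if and only if.'' Given $(\bar\chi,\xi)$, hence $b,c$: if (\ref{formulaFort}) holds at a $\chi$ with $r<\sinh\Theta(\chi)$, i.e. $\chi>\chi_*$, then strict monotonicity of $g$ on $(\chi_*,\infty)$ makes $\chi$ the only solution of (\ref{formulaFort}) in the region $\{A>0\}$ --- the region of validity of the SSCNG coordinates (``inside the black hole''), within which the inverse map is completed by $r=\bar\chi\xi/\sinh^2\Theta(\chi)$ --- so (\ref{formulaFort}) uniquely defines $\chi=\chi(\bar\chi,\xi)$. Conversely, if (\ref{formulaFort}) holds with $r\ge\sinh\Theta(\chi)$, i.e. $\chi\le\chi_*$, then either $\chi=\chi_*$ (the degenerate case $A=0$, where $g$ is not locally invertible) or $\chi<\chi_*$, whence $c=g(\chi)>g(\chi_*)$ and, since $g\to\infty$ on $(\chi_*,\infty)$, there is a second solution $\chi'>\chi_*$ of $g(\chi')=c$, so $\chi$ is not uniquely determined. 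The hard part is the monotonicity computation of the previous paragraph: differentiating $\phi$, collapsing $\phi'$ to $\tfrac{4\cosh^3u}{\sinh^5u}(\sinh^6u-b^2)$, and recognising --- through $b=\sinh^2\Theta(\chi)\,r$ --- that its unique zero $\sinh^6u=b^2$ is exactly the locus $r=\sinh\Theta(\chi)$, i.e. $A=0$; everything else is bookkeeping with the monotone change of variables $\Theta$.
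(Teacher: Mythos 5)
Your proposal is correct, and its computational core coincides with the paper's: your derivative $\phi'(u)=\frac{4\cosh^3u}{\sinh^5u}(\sinh^6u-b^2)$ is exactly the paper's $\frac{\partial f}{\partial\Theta}$ in (\ref{solveF}) (with $f=\phi-\cosh^4\bar{\Theta}$ and $b=\bar{\chi}\xi$), and your identification of the zero locus $\sinh^6u=b^2$ with $r=\sinh\Theta(\chi)$, hence with $A=0$, matches the paper's argument via (\ref{Afirst}). Where you differ is in how the conclusion is extracted from this computation. The paper argues by continuity from $\xi=0$ plus the implicit function theorem: positivity of $\partial f/\partial\Theta$ on the region $r<\sinh\Theta$ lets the solution branch $\Theta=\Theta(\bar{\Theta},\xi)$ be continued, which delivers the ``if'' direction but leaves the ``only if'' direction (that uniqueness genuinely fails once $r\geq\sinh\Theta$) implicit. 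Your global monotonicity analysis of the scalar function $g$ --- strictly decreasing on $(0,\chi_*)$, strictly increasing on $(\chi_*,\infty)$, diverging at both ends --- buys that missing half explicitly: when $\chi<\chi_*$ you exhibit a second root $\chi'>\chi_*$ of $g(\chi')=c$, and at $\chi=\chi_*$ you note the degeneracy, so the equivalence in the lemma is established in both directions. The two proofs are otherwise interchangeable, and your preliminary reduction of (\ref{solvableA}) to $A=1-r^2/\sinh^2\Theta(\chi)$ via $\dot{R}=\coth\theta$ is an equivalent (and slightly more direct) route to the same equivalence the paper obtains through $A=1-\Delta_0\bar{t}^2\xi^2/R^3$.
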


\begin{proof}
Note that (\ref{formulaFort}) gives $\Theta=\bar{\Theta}$ at $\xi=0$, so by continuity, we can solve for $\Theta=\Theta(\bar{\Theta},\xi)$ in a neighborhood of $\xi=0$. To determine how far this extends, we write $\Theta=\Theta(\chi)$ and $\bar{\Theta}=\Theta(\bar{\chi})$ so that formula (\ref{formulaFort}) is equivalent to
\begin{align*}
    0 = f(\Theta,\bar{\chi},\xi) :=& -\cosh^4\bar{\Theta} + \cosh^4\Theta + \cosh^4\Theta\frac{4\bar{\chi}^2\xi^2}{(\cosh2\Theta-1)^2}\notag\\
    =& -\cosh^4\bar{\Theta} + \cosh^4\Theta + \coth^4\Theta\bar{\chi}^2\xi^2.
\end{align*}
To extend the solubility near $\xi=0$ by virtue of the implicit function theorem, it suffices to obtain a condition for $\frac{\partial f}{\partial\Theta}>0$. Thus we compute
\begin{align}
    \frac{\partial f}{\partial\Theta}(\Theta,\bar{\chi},\xi) &= 4\cosh^3\Theta\sinh\Theta - 4\frac{\cosh^3\Theta}{\sinh^5\Theta}\bar{\chi}^2\xi^2\notag\\
    &= 4\frac{\cosh^3\Theta}{\sinh^5\Theta}\big(\sinh^6\Theta-\bar{\chi}^2\xi^2\big)\notag\\
    &= 4\frac{\cosh^3\Theta}{\sinh^5\Theta}\left(\frac{R^3(t)}{\Delta_0^3}-\frac{\bar{t}^2}{\Delta_0^2}\frac{R(t)^2r^2}{\bar{t}^2}\right)\notag\\
    &= 4\frac{\cosh^3\Theta}{\sinh^5\Theta}\frac{R(t)^2}{\Delta_0^2}\left(\frac{R(t)}{\Delta_0}-r^2\right)\notag\\
    &= 4\frac{\cosh^3\Theta}{\sinh\Theta}(\sinh\Theta+r)(\sinh\Theta-r),\label{solveF}
\end{align}
where we have used the identity
\begin{align*}
    \frac{1}{2}(\cosh2\Theta-1) = \sinh^2\Theta
\end{align*}
together with (\ref{kminusoneR}). By (\ref{solveF}), the condition $\frac{\partial f}{\partial\Theta}>0$ is equivalent to $r<\sinh\Theta$, and since $\theta=\Theta(\chi)$ is a monotone function of $\chi$, this together with the implicit function theorem establishes (\ref{solvable}).

Consider now the metric component $A$ given in (\ref{Afinalform}). Using
\begin{align}
    \dot{R}^2 + k = \frac{\kappa}{3}\rho R^2 = \frac{\Delta_0}{R},\label{dotR^2}
\end{align}
we can write $A$ as
\begin{align}
    A = 1 - kr^2 - H^2\bar{r}^2 = 1 - \frac{\kappa}{3}\rho\bar{r}^2 = 1 - \frac{\Delta_0}{R^3}\bar{r}^2 = 1 - \frac{\Delta_0\bar{t}^2}{R^3}\xi^2.\label{Afirst}
\end{align}
By (\ref{Afirst}), the condition $A=0$ is equivalent to
\begin{align*}
    \frac{\Delta_0\bar{t}^2}{R^3}\xi^2=1,
\end{align*}
which by the substitutions $\xi=\frac{\bar{r}}{\bar{t}}$ and $R=\Delta_0\sinh^2\theta$, is equivalent to $r=\sinh\theta$. This establishes (\ref{solvableA}).
\end{proof}

We now prove the main theorem of this subsection, which shows that the transformation from comoving coordinates $(t,r)$ to SSCNG coordinates is a transformation of the form $(\chi,r)\to(\bar{\chi},\xi)$, where $\chi=\frac{t}{\Delta_0}$, $\bar{\chi}=\frac{\bar{t}}{\Delta_0}$ and $\Delta_0$ is the free parameter (\ref{DeltaDef}). The significance of this is that the free parameter $\Delta_0$ is incorporated into the SSCNG coordinate system as a rescaling of time, holding both $r$ and $\xi$ fixed. As a consequence, when $k\neq0$, a change of $\Delta_0$ is not a coordinate gauge transformation, but describes a transformation between physically different solutions (note that we consider the case $k=-1$, a similar result for the case $k=+1$ can be obtained similarly).

\begin{Thm}\label{ThmLog-time translation}
    In the case of the $k=-1$ Friedmann spacetime, the $\Delta_0$ dependent mapping $(t,r)\to(\bar{t},\bar{r})$ from comoving $(t,r)$ coordinates to SSCNG coordinates $(\bar{t},\bar{r})$ is determined by the transformation\footnote{That is, $(t,r)\to(\chi,r)\to(\bar{\chi},\xi)\to(\bar{t},\xi)\to(\bar{t},\bar{r})$.}
    \begin{align}
        (\chi,r)\to(\bar{\chi},\xi),
    \end{align}
    which is a regular $1-1$ coordinate mapping uniquely given implicitly by:
    \begin{align}
        \cosh\Theta(\bar{\chi}) &= \sqrt[4]{1+\frac{\bar{\chi}^2\xi^2}{\sinh^4\Theta(\chi)}}\cosh\Theta(\chi),\label{formulaFortMappingTime}\\
        r &= \frac{\bar{\chi}\xi}{\sinh^2\Theta(\chi)},\label{formulaFortMappingXi}
    \end{align}
    under the solubility condition 
    \begin{align}
        r < \sinh\Theta(\chi).
    \end{align}
    That is, (\ref{formulaFortMappingTime}) determines $\chi=\chi(\bar{\chi},\xi)$ by (\ref{SSCNGinChi}), where $\chi=\frac{t}{\Delta_0}$, $\bar{\chi}=\frac{\bar{t}}{\Delta_0}$, $\Delta_0$ is the free parameter (\ref{DeltaDef}) and the function $\Theta(\chi)$ is determined by inverting
    \begin{align}
        \chi = \frac{1}{2}(\sinh2\Theta-2\Theta)
    \end{align}
    according to the $k=-1$ formula (\ref{kminusone}). Moreover, the $k=-1$ Friedmann solution in SSCNG coordinates is given by the formulas:
    \begin{align}
        1 - A &= A_{F}(\bar{\chi},\xi) := \frac{\kappa}{3}\rho\bar{r}^2 = \frac{8\bar{\chi}^2\xi^2}{(\cosh2\Theta(\chi)-1)^3},\label{Apm}\\
        \sqrt{AB} &= D_{F}(\bar{\chi},\xi) := \frac{\sqrt{1+r^2}}{\frac{\partial\bar{t}}{\partial t}(t,r)} = \frac{\sqrt{1+\frac{\bar{\chi}^2\xi^2}{\sinh^4\Theta}}}{\frac{\partial\bar{\chi}}{\partial\chi}(\chi,r)},\label{Dpm}\\
        v &= v_{F}(\bar{\chi},\xi) := \frac{\dot{R}r}{\sqrt{1+r^2}} = \frac{\frac{\bar{\chi}\xi}{\sinh^2\Theta}\coth\Theta}{\sqrt{1+\frac{\bar{\chi}^2\xi^2}{\sinh^4\Theta}}}.\label{vpm}
    \end{align}
\end{Thm}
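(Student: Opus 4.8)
## Proof Proposal for Theorem \ref{ThmLog-time translation}

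\textbf{Overall strategy.} The plan is to assemble the theorem from the pieces already established earlier in the section, treating it largely as a bookkeeping exercise that packages the chain of substitutions $(t,r)\to(\chi,r)\to(\bar\chi,\xi)\to(\bar t,\bar r)$ into a single clean statement. The two implicit relations (\ref{formulaFortMappingTime}) and (\ref{formulaFortMappingXi}) are nothing but (\ref{formulaFort}) and the definition $\xi=\bar r/\bar t$ rewritten using $\bar r=R(t)r=\Delta_0\sinh^2\Theta(\chi)\,r$ (from (\ref{transkequalminusone2}) and (\ref{kminusoneR})) together with the normalization by $\Delta_0$; so the first task is simply to verify these algebraic identifications. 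The regularity and injectivity claim is then precisely the content of Lemma \ref{TimeByChi1}, which already shows that (\ref{formulaFort}) solves uniquely for $\chi=\chi(\bar\chi,\xi)$ exactly when $r<\sinh\Theta(\chi)$, equivalently $A>0$; I would invoke this directly and note that (\ref{formulaFortMappingXi}) then determines $r$ smoothly from $(\bar\chi,\xi)$ once $\chi$ is known, so the composite map is a regular $1$–$1$ change of coordinates on the stated domain. The global injectivity on the domain $\{r<\sinh\Theta(\chi)\}$ follows because $\Theta$ is a strictly monotone function of $\chi$ (from $\chi=\tfrac12(\sinh2\Theta-2\Theta)$, whose derivative $\cosh2\Theta-1$ is positive for $\Theta>0$) and the Jacobian is nonvanishing there by (\ref{solveF}).

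\textbf{Deriving the SSCNG solution formulas.} For (\ref{Apm}), I would start from (\ref{Afirst}), $A=1-\tfrac{\Delta_0\bar t^2}{R^3}\xi^2$, substitute $R=\Delta_0\sinh^2\Theta(\chi)$ and $\bar t=\Delta_0\bar\chi$, and simplify using $2\sinh^2\Theta=\cosh2\Theta-1$ to obtain $1-A=\tfrac{\kappa}{3}\rho\bar r^2 = \tfrac{8\bar\chi^2\xi^2}{(\cosh2\Theta(\chi)-1)^3}$; the identification $1-A=\tfrac{\kappa}{3}\rho\bar r^2$ is already recorded in (\ref{Afirst}). For (\ref{vpm}), I would take (\ref{barv}), $v=\dot R r/\sqrt{1-kr^2}=\dot R r/\sqrt{1+r^2}$ (since $k=-1$), insert $\dot R=\coth\Theta$ from (\ref{tstep}) and $r=\bar\chi\xi/\sinh^2\Theta$ from (\ref{formulaFortMappingXi}), and rewrite $\sqrt{1+r^2}=\sqrt{1+\bar\chi^2\xi^2/\sinh^4\Theta}$; this yields the displayed formula directly. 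For (\ref{Dpm}), I would invoke (\ref{sqrtABbest}), $\sqrt{AB}=\sqrt{1-kr^2}/\tfrac{\partial\bar t}{\partial t}$, observe that $\tfrac{\partial\bar t}{\partial t}=\tfrac{\partial\bar\chi}{\partial\chi}$ since $\bar t=\Delta_0\bar\chi$ and $t=\Delta_0\chi$ and $\xi$ (not $r$) is held fixed — here one must be a little careful about what is held constant, see below — and again substitute $\sqrt{1+r^2}=\sqrt{1+\bar\chi^2\xi^2/\sinh^4\Theta}$.

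\textbf{The main obstacle.} The one genuinely delicate point is the meaning of the partial derivative $\tfrac{\partial\bar t}{\partial t}$ appearing in (\ref{sqrtABbest}) versus $\tfrac{\partial\bar\chi}{\partial\chi}$ in (\ref{Dpm}): in (\ref{sqrtABbest}) the derivative is taken at fixed comoving $r$, whereas the SSCNG solution is naturally parametrized by $(\bar\chi,\xi)$ with $\xi$ fixed, and one must confirm that the chain rule through the intermediate map $(\chi,r)\to(\bar\chi,\xi)$ produces exactly the quantity written in (\ref{Dpm}) with no stray Jacobian factors. Concretely I would differentiate the implicit relation (\ref{formulaFortMappingTime}) to express $\tfrac{\partial\bar\chi}{\partial\chi}\big|_{r}$ and $\tfrac{\partial\bar\chi}{\partial\chi}\big|_{\xi}$, check they agree up to the factors already accounted for by (\ref{formulaFortMappingXi}), and thereby reconcile the two forms. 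Everything else — the algebraic simplifications, the positivity of the Jacobian, the monotonicity of $\Theta$ — is routine once Lemma \ref{TimeByChi1} is in hand, and the case $k=+1$ is handled by the stated analogous computation using (\ref{kplusone})–(\ref{kplusoneR}) in place of (\ref{kminusone})–(\ref{kminusoneR}), which I would simply remark on rather than carry out.
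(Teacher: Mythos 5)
Your proposal is correct and follows essentially the same route as the paper: identify (\ref{formulaFortMappingTime}) with (\ref{formulaFort}) and (\ref{formulaFortMappingXi}) with $\xi=\bar{r}/\bar{t}=r\sinh^2\Theta(\chi)/\bar{\chi}$, invoke Lemma \ref{TimeByChi1} for the regular one-to-one solvability under $r<\sinh\Theta(\chi)$, and then read off (\ref{Apm})--(\ref{vpm}) by substituting into (\ref{Afirst}), (\ref{sqrtABbest}) and (\ref{barv}) with $\lambda=\tfrac{1}{2}$. The one "delicate point" you flag is in fact a non-issue: the denominator in (\ref{Dpm}) is written as $\frac{\partial\bar{\chi}}{\partial\chi}(\chi,r)$, i.e.\ the derivative at fixed $r$, so it equals $\frac{\partial\bar{t}}{\partial t}(t,r)$ immediately upon cancelling the two factors of $\Delta_0$ from $\bar{t}=\Delta_0\bar{\chi}$ and $t=\Delta_0\chi$, and no chain rule through the $(\chi,r)\to(\bar{\chi},\xi)$ map is required.
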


The main point of Theorem \ref{ThmLog-time translation} is that it shows the $k=-1$ Friedmann solution for $\Delta_0>0$ is obtained from the solution for $\Delta_0=\frac{4}{9}$ by simply making the transformation $t\to\frac{t}{\Delta_0}=\chi$ and $\bar{t}\to\frac{\bar{t}}{\Delta_0}=\bar{\chi}$, holding $r$ and $\xi$ fixed. That is, $A_{F}$, $D_{F}$ and $v_{F}$ are explicit functions which describe the dependence of the $k=-1$ Friedmann solution on the variables $(\bar{\chi},\xi)$. As a consequence, the dependence on $(\bar{t},\bar{r})$ is through $(\bar{\chi},\xi)$ in (\ref{Apm})--(\ref{vpm}) and this implies that changing the free parameter $\Delta_0$ from $\frac{4}{9}\to\Delta_0>0$ in $k\neq0$ Friedmann spacetimes corresponds to the mapping of smooth solutions to smooth solutions implemented by replacing $\bar{t}$ by $\frac{\bar{t}}{\Delta_0}$ holding $\xi$ fixed. We call this \emph{log-time translation}, since changing $\Delta_0$ from $\frac{4}{9}\to\Delta_0>0$ corresponds to the log-time translation $\ln\bar{t}\to\ln\bar{t}+\Delta_0$. We use this to establish that the $k=-1$ Friedmann solutions each lie on a unique trajectory of the STV-ODE (derived below) at every order $n\geq1$, respectively, by showing that this corresponds to time translation in an autonomous system obtained by using log-time $\ln\bar{t}$ in place of $\bar{t}$.

\begin{proof}[Proof of Theorem \ref{ThmLog-time translation}]
In the case $k=-1$ the formulas (\ref{kminusone})--(\ref{kminusoneR}) describe the Friedmann spacetimes in comoving coordinate $(t,r)$. Equation (\ref{formulaFortMappingTime}) is (\ref{formulaFort}), so Lemma \ref{TimeByChi1} implies that $\chi=\chi(\bar{\chi},\xi)$ is uniquely determined by (\ref{formulaFortMappingTime}) when $r<\sinh\Theta(\chi)$. Equation (\ref{formulaFortMappingXi}) comes from solving 
\begin{align*}
    \xi = \frac{\bar{r}}{\bar{t}} = \frac{Rr}{\bar{t}} = \frac{r\sinh^2\Theta(\chi)}{\bar{\chi}}
\end{align*}
for $r$, which gives $r=r(\bar{\chi},\xi)$ since $\chi=\chi(\bar{\chi},\xi)$ is determined already by (\ref{formulaFortMappingTime}) alone. It follows that (\ref{formulaFortMappingTime})--(\ref{formulaFortMappingXi}) determine $\chi=\chi(\bar{\chi},\xi)$ and $r=r(\bar{\chi},\xi)$ uniquely, and these invert to give the regular mapping $(\chi,r)\to(\bar{\chi},\xi)$ when $r<\sinh\Theta(\bar{\chi})$. To verify (\ref{Apm})--(\ref{vpm}), consider first the metric component $A$ given in (\ref{Afinalform}). Using (\ref{dotR^2}, we can write (\ref{Afinalform}) as (\ref{Afirst}). Thus by (\ref{kminusone}) and (\ref{SSCNGinChi}), we have
\begin{align*}
    A = 1-\frac{8\bar{\chi}^2\xi^2}{(\cosh2\Theta(\chi)-1)^3} = 1-A_{F}(\bar{\chi},\xi).
\end{align*}
Regarding equations (\ref{Dpm}) and (\ref{vpm}), note that the first equalities in (\ref{Dpm}) and (\ref{vpm}) follow directly from (\ref{sqrtABbest}) and (\ref{barv}) upon setting $\lambda=\frac{1}{2}$, as assumed in (\ref{setlambda}), and the remaining equalities in (\ref{Dpm}) and (\ref{vpm}) follow directly from these.
\end{proof}

\subsection{The Hubble Radius Relative to the SSCNG Coordinate System}

In this section we characterize the condition (\ref{solvable}) for the validity of the SSCNG coordinate system in terms of the SSCNG variable $\xi$ for the $k=-1$ Friedmann spacetimes. We make use of this in subsequent sections when we expand solutions in powers of $\xi$ at fixed SSCNG time $\bar{t}$. The main result shows that the SSCNG coordinate system is valid for $\xi<\xi_{max}(\bar{\chi})$, where $\xi_{max}$ increases from $\xi_{max}=\sqrt{\frac{2}{3}}\approx0.816$ to $\xi_{max}=1$ as $\bar{t}$ (and $\bar{\chi}$) go from $0\to\infty$, and the region $\xi<\xi_{max}$ includes the physical extent of the $k=-1$ Friedmann spacetime to out beyond the Hubble radius. We begin with the following lemma.

\begin{Lemma}\label{SizeSSCNG}
    Letting $\bar{r}_H=ct$ denote the Hubble radius at time $t$ (a standard measure of the size of the visible Universe) and $\bar{r}_B$ the distance to $A=0$ at time $t$ in comoving coordinates $(t,r)$, we have
    \begin{align}
        \bar{r}_B \leq \bar{r}_H \leq 3\bar{r}_B\label{solvableB}
    \end{align}
    for all $t>0$, so (\ref{solvable}) guarantees a mapping between $(t,r)\to(\bar{t},\bar{r})$ at all points within the Hubble radius.
\end{Lemma}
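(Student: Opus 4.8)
The plan is to obtain an explicit expression for $\bar r_B$, the comoving-coordinate distance out to the locus $A=0$, as a function of the cosmological time $t$, and then to compare it directly with the Hubble radius $\bar r_H = ct = t$ (taking $c=1$). From Lemma \ref{TimeByChi1} we know that $A>0$ is equivalent to $r < \sinh\Theta(\chi)$, so the boundary $A=0$ at fixed $t$ sits at $r_B = \sinh\Theta(\chi)$ in comoving $r$, and hence at physical arc-length distance $\bar r_B = R(t)\, r_B = R(t)\sinh\Theta(\chi)$. Using $R = \Delta_0\sinh^2\Theta$ from (\ref{kminusoneR}) this gives $\bar r_B = \Delta_0\sinh^3\Theta(\chi)$. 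On the other hand $t = \Delta_0\chi = \tfrac{\Delta_0}{2}(\sinh 2\Theta - 2\Theta)$ by (\ref{kminusone}), so the claimed inequality $\bar r_B \le \bar r_H \le 3\bar r_B$ reduces, after dividing through by $\Delta_0$, to the purely trigonometric two-sided estimate
\begin{align*}
    \sinh^3\Theta \;\le\; \tfrac{1}{2}\big(\sinh 2\Theta - 2\Theta\big) \;\le\; 3\sinh^3\Theta
\end{align*}
for all $\Theta > 0$.

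The second step is therefore just to prove this elementary inequality. For the left-hand inequality, set $\phi(\Theta) = \tfrac12(\sinh 2\Theta - 2\Theta) - \sinh^3\Theta$; one checks $\phi(0)=0$ and, using $\sinh 2\Theta = 2\sinh\Theta\cosh\Theta$ and $\tfrac{d}{d\Theta}\sinh^3\Theta = 3\sinh^2\Theta\cosh\Theta$, that $\phi'(\Theta) = \cosh 2\Theta - 1 - 3\sinh^2\Theta\cosh\Theta = 2\sinh^2\Theta - 3\sinh^2\Theta\cosh\Theta = \sinh^2\Theta(2 - 3\cosh\Theta)$. This is not sign-definite — it is positive only for small $\Theta$ — so a crude monotonicity argument fails and one must be slightly more careful; I expect to instead expand both sides in the series $\sinh 2\Theta - 2\Theta = \sum_{k\ge1} \frac{(2\Theta)^{2k+1}}{(2k+1)!}$ and $\sinh^3\Theta = \tfrac14(\sinh 3\Theta - 3\sinh\Theta) = \sum_{k\ge1}\frac{3^{2k+1}-3}{4(2k+1)!}\Theta^{2k+1}$, and compare coefficients termwise, checking that $\tfrac12\cdot\frac{2^{2k+1}}{(2k+1)!} \le \frac{3^{2k+1}-3}{4(2k+1)!}$ for all $k\ge 1$ (equivalently $2^{2k} \le 3^{2k}-3$, clear for $k\ge1$), which gives $\tfrac12(\sinh2\Theta-2\Theta)\le\sinh^3\Theta$ — wait, this is the reverse of what is wanted, so in fact the coefficient comparison shows the \emph{left} inequality of the displayed pair should read the other way for small $\Theta$; the safest route is to verify both bounds by the ratio $g(\Theta) := \tfrac{\sinh2\Theta-2\Theta}{2\sinh^3\Theta}$, show $\lim_{\Theta\to0^+}g(\Theta) = \tfrac23$ and $\lim_{\Theta\to\infty}g(\Theta)=1$ (since $\sinh2\Theta\sim\tfrac12 e^{2\Theta}$ and $2\sinh^3\Theta\sim\tfrac14 e^{3\Theta}$, the ratio $\to 0$ — so actually $g\to 0$, giving $\bar r_B \gg \bar r_H$ at late times, contradicting the claim), which signals that I have the geometry of the boundary slightly wrong and should instead track the \emph{SSCNG} time $\bar t$ rather than comoving $t$ in the definition of $\bar r_H$.

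The main obstacle, then, is pinning down exactly which time variable ($t$ versus $\bar t$, equivalently $\chi$ versus $\bar\chi$) enters the definitions of $\bar r_H$ and $\bar r_B$ so that the two-sided bound comes out with the stated constants $1$ and $3$; once that bookkeeping is fixed, the inequality is a one-variable calculus exercise in $\Theta$, provable either by the series comparison above or by analyzing the monotone ratio $g(\Theta)$ and evaluating its limits at $\Theta\to 0^+$ and $\Theta\to\infty$. The final step is then immediate: since $\bar r_B \le \bar r_H$, the solubility condition (\ref{solvable}), which holds for all $\bar r < \bar r_B$, automatically holds throughout $\bar r < \bar r_H$, so the coordinate transformation $(t,r)\to(\bar t,\bar r)$ of Theorem \ref{ThmLog-time translation} is valid at every point inside the Hubble radius, which is the assertion of the lemma.
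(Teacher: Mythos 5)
Your setup coincides with the paper's: $\bar r_B = R(t)\sinh\Theta(\chi) = \Delta_0\sinh^3\Theta$ from Lemma \ref{TimeByChi1} and (\ref{kminusoneR}), $\bar r_H = ct = \tfrac{\Delta_0}{2}(\sinh 2\Theta - 2\Theta)$ from (\ref{kminusone}), and the reduction to a one-variable inequality in $\Theta$. Your termwise series comparison is correct and proves $\tfrac12(\sinh2\Theta-2\Theta)\le\sinh^3\Theta$, i.e.\ $\bar r_H\le\bar r_B$, for all $\Theta>0$. That is exactly the inequality the paper's own proof establishes (there via $\frac{d\bar r_H}{d\theta}\le\frac{d\bar r_B}{d\theta}$ with equal values at $\theta=0$), and it is the one the final sentence of the lemma actually requires: for the map to be defined on the whole Hubble ball, the region $A>0$, i.e.\ $\bar r<\bar r_B$, must \emph{contain} $\{\bar r<\bar r_H\}$, which is precisely $\bar r_H\le\bar r_B$. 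The displayed chain (\ref{solvableB}) is written in the reverse order from what the paper's proof derives and from the summary $\bar r_H\le\bar r_B\le\bar r_N$ given after Theorem \ref{solvablexi}; your limits $g(\Theta)\to\tfrac23$ as $\Theta\to0^+$ and $g(\Theta)\to0$ as $\Theta\to\infty$ correctly detect this, and they also show that no uniform bound $\bar r_B\le C\,\bar r_H$ can hold, since $\bar r_B/\bar r_H$ grows like $e^{\Theta}$; only the ratio $3/2$ in the Big Bang limit survives, so the factor $3$ cannot be rescued as a two-sided estimate.

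The genuine gap is that you do not close the argument. Having proved the correct and needed inequality, you back away from it: you speculate that the time variable should be $\bar t$ rather than $t$ (it should not --- the lemma is phrased in comoving time, and switching to $\bar t$ does not change the qualitative picture), and your final paragraph asserts ``since $\bar r_B\le\bar r_H$'' and concludes that validity on $\bar r<\bar r_B$ ``automatically holds throughout $\bar r<\bar r_H$,'' which inverts the containment: $\bar r_B\le\bar r_H$ would make the valid region a \emph{subset} of the Hubble ball, not a superset. The fix is to keep the inequality you already proved, $\bar r_H\le\bar r_B$, and note that every point with $\bar r<\bar r_H$ then satisfies $\bar r<\bar r_B$, hence (\ref{solvable}), so the transformation is regular throughout the Hubble radius. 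For comparison, the paper reaches $\bar r_H\le\bar r_B$ by the derivative comparison and then attempts the extra bound $\bar r_B\le3\bar r_H$ via an integral estimate whose integrand bound ($3\Delta_0\sinh^2\theta\cosh\theta\le3\Delta_0\sinh^2\theta$) fails for $\theta>0$; your series comparison is the more reliable route to the part of the statement that is both true and actually used.
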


\begin{proof}
First, to convert $r_B$ into a distance, we write $\bar{r}_B=R(t)r_B$. We loosely call this the \emph{distance to the black hole} and view it as a measure of the distance from the origin to the point where $A=0$ in SSC at fixed time $t$. We now compare this distance to the usual Hubble radius $\bar{r}_H=ct$, a measure of the size of the visible Universe. From (\ref{kminusone})--(\ref{kminusoneR}), using $\theta=\theta(t)=\Theta(\chi)$, we have:
\begin{align*}
    \frac{d\bar{r}_H}{d\theta} &= \Delta_0\sinh^2\theta,\\
    \frac{d\bar{r}_B}{d\theta} &= 3\Delta_0\sinh^2\theta\cosh\theta \leq 3\Delta_0\sinh^2\theta,
\end{align*}
so
\begin{align}
    \frac{d\bar{r}_H}{d\theta} \leq \frac{d\bar{r}_B}{d\theta}.\label{dbarHdtheta1}
\end{align}
Since both $\bar{r}_B=0$ and $\bar{r}_H=0$ at $\theta=0$, this directly implies $\bar{r}_H<\bar{r}_B$ for all $\theta$ and $t$. Integrating (\ref{dbarHdtheta1}) then gives
\begin{align*}
    \bar{r}_B-\bar{r}_H &= \int_0^\theta\frac{d}{d\theta}(\bar{r}_B-\bar{r}_H)\ d\theta\\
    &\leq \int_0^\theta2\Delta_0\sinh^2\theta\ d\theta\\
    &\leq 2\Delta_0\left(\frac{1}{2}(\sinh2\theta-2\theta)\right) = 2ct = 2\bar{r}_{H}.
\end{align*}
From this we conclude
\begin{align*}
    \bar{r}_B \leq 3\bar{r}_H,
\end{align*}
as claimed in (\ref{solvableB}).
\end{proof}

We now prove the following theorem, which implies that it is valid to expand the $k=-1$ Friedmann solution in powers of $\xi$ at each fixed $\bar{t}$, out to $\xi\leq\xi_{max}$, where $\xi_{max}$ tends to $\xi_{max}=\sqrt{\frac{2}{3}}\approx0.816$ as $\bar{t}\to0$ and $\xi_{max}$ increases monotonically to $\xi_{max}=1$ as $\bar{t}\to\infty$.

\begin{Thm}\label{solvablexi}
    The SSCNG coordinate system defines a regular transformation of the $k=-1$ Friedmann spacetimes so long as $A>0$, and this is equivalent to the condition
    \begin{align}
        0 \leq \xi < \xi_{max}(\bar{\chi}),
    \end{align}
    where
    \begin{align}
        \xi_{max}(\bar{\chi}) = \Pi(\bar{\chi}) := \frac{\left(\cosh^{\frac{4}{3}}\bar{\Theta}(\bar{\chi})-1\right)^{\frac{3}{2}}}{\bar{\chi}}
    \end{align}
    and $\bar{\Theta}(\bar{\chi})$ is defined by inverting
    \begin{align}
        \bar{\chi} = \frac{1}{2}\big(\sinh2\bar{\Theta}-2\bar{\Theta}\big).
    \end{align}
    Moreover, $\Pi(\bar{\chi})$ is an increasing function
    \begin{align}
        \Pi'(\bar{\chi}) &> 0, & 0 &< \bar{\chi} < \infty,\label{thmxisolve4}
    \end{align}
    and satisfies:
    \begin{align}
        \lim_{\bar{\chi}\to\infty}\Pi(\bar{\chi}) &= 1,\label{thmxisolve6}\\
        \lim_{\bar{\chi}\to0}\Pi(\bar{\chi}) &= \sqrt{\frac{2}{3}}\approx0.816,\label{thmxisolve5}
    \end{align}
    so $\xi_{max}=\Pi(\bar{\chi})$ increases continuously from $\xi_{max}=\sqrt{\frac{2}{3}}$ to $\xi_{max}=1$ as $\bar{\chi}$ increases from $0$ to $\infty$.
\end{Thm}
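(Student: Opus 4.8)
\textbf{Proof proposal for Theorem \ref{solvablexi}.} The starting point is Lemma \ref{TimeByChi1}, which already identifies regularity of the SSCNG transformation of the $k=-1$ Friedmann spacetime with the condition $A>0$, equivalently $r<\sinh\Theta(\chi)$. What remains is to re-express the region $\{A>0\}$ purely in the SSCNG variables $(\bar\chi,\xi)$ and then to study the resulting boundary function $\Pi$. Using (\ref{Apm}) together with $\cosh 2\Theta-1=2\sinh^2\Theta$, we have $1-A=\bar\chi^2\xi^2/\sinh^6\Theta(\chi)$, where $\chi=\chi(\bar\chi,\xi)$ is the implicit function determined by (\ref{formulaFortMappingTime}). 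The plan is: (i) show that at fixed $\bar\chi$ the function $A$ is strictly decreasing in $\xi$, so that $\{A>0\}$ is an interval $[0,\xi_{max}(\bar\chi))$; (ii) compute $\xi_{max}(\bar\chi)=\Pi(\bar\chi)$ by evaluating (\ref{formulaFortMappingTime})--(\ref{formulaFortMappingXi}) along the locus $A=0$; and (iii) establish the monotonicity (\ref{thmxisolve4}) and the limits (\ref{thmxisolve6})--(\ref{thmxisolve5}).

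For (i), I would differentiate (\ref{formulaFortMappingTime}) implicitly at fixed $\bar\chi$ (hence fixed $\bar\Theta$). Writing it as $f(\Theta,\bar\chi,\xi)=0$ in the notation of the proof of Lemma \ref{TimeByChi1}, one has $\partial f/\partial(\xi^2)=\bar\chi^2\coth^4\Theta>0$, while on the regular region $r<\sinh\Theta$ the sign computation (\ref{solveF}) gives $\partial f/\partial\Theta>0$; hence $\partial\Theta/\partial\xi<0$ there. In $1-A=\bar\chi^2\xi^2/\sinh^6\Theta$ the numerator then increases and the denominator decreases with $\xi$, so $A$ is strictly decreasing; since $A=1$ at $\xi=0$, the set $\{A>0\}$ at fixed $\bar\chi$ is exactly $[0,\xi_{max}(\bar\chi))$ with $\xi_{max}$ the unique zero of $A$ (one also sees directly from (\ref{formulaFortMappingXi}) that $r$ is strictly increasing in $\xi$ along the fiber $\bar\chi=\text{const}$, so $r-\sinh\Theta(\chi)$ crosses zero exactly once, giving an independent check).

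For (ii), along $A=0$ we have $r=\sinh\Theta(\chi)$, so $\bar\chi^2\xi^2=\sinh^6\Theta(\chi)$ and hence $\bar\chi^2\xi^2/\sinh^4\Theta(\chi)=\sinh^2\Theta(\chi)$. Substituting this into (\ref{formulaFortMappingTime}) collapses its right-hand side to $\big(\cosh^2\Theta(\chi)\big)^{1/4}\cosh\Theta(\chi)=\cosh^{3/2}\Theta(\chi)$, so $\cosh\bar\Theta(\bar\chi)=\cosh^{3/2}\Theta(\chi)$, equivalently $\cosh\Theta(\chi)=\cosh^{2/3}\bar\Theta(\bar\chi)$ and $\sinh^2\Theta(\chi)=\cosh^{4/3}\bar\Theta(\bar\chi)-1$. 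Since $\bar r/\Delta_0=\bar\chi\xi$ and, by (\ref{kminusoneR}), $\bar r=Rr=\Delta_0\sinh^3\Theta(\chi)$, we get $\xi_{max}=\bar\chi^{-1}\big(\cosh^{4/3}\bar\Theta(\bar\chi)-1\big)^{3/2}=\Pi(\bar\chi)$, as claimed.

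For (iii), parametrize by $\bar\Theta$ using $\bar\chi=\sinh\bar\Theta\cosh\bar\Theta-\bar\Theta$ and $d\bar\chi/d\bar\Theta=\cosh 2\bar\Theta-1=2\sinh^2\bar\Theta>0$. Differentiating $\Pi=\bar\chi^{-1}\big(\cosh^{4/3}\bar\Theta-1\big)^{3/2}$ and substituting $\bar\chi$ reduces the sign of $d\Pi/d\bar\Theta$ to that of $\sinh\bar\Theta-\bar\Theta\cosh^{1/3}\bar\Theta$, which, since $\bar\Theta\mapsto\bar\chi$ is increasing, gives (\ref{thmxisolve4}). The limits follow from $\bar\chi=\tfrac23\bar\Theta^3+O(\bar\Theta^5)$ and $\cosh^{4/3}\bar\Theta-1=\tfrac23\bar\Theta^2+O(\bar\Theta^4)$ as $\bar\Theta\to0$ (so $\Pi\to(\tfrac23)^{3/2}/\tfrac23=\sqrt{2/3}$) and from $(\cosh^{4/3}\bar\Theta-1)^{3/2}=\cosh^2\bar\Theta(1+o(1))$, $\bar\chi=\cosh^2\bar\Theta(1+o(1))$ as $\bar\Theta\to\infty$ (so $\Pi\to1$); continuity of $\Pi$ with $\Pi'>0$ then yields the monotone increase from $\sqrt{2/3}$ to $1$, and since $\sqrt{2/3}<1$ while Lemma \ref{SizeSSCNG} controls the Hubble radius by the distance to $A=0$, the region $\xi<\xi_{max}$ contains the spacetime out past the Hubble radius. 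The main obstacle is the elementary inequality $\sinh x>x\cosh^{1/3}x$ for $x>0$ needed for $\Pi'>0$, equivalently $\sinh^3x>x^3\cosh x$; it is sharp at $x=0$ (the difference is $\tfrac1{15}x^7+\cdots$), so crude bounds fail. I would prove it by showing $\tfrac{d}{dx}\ln\!\big(\sinh^3x/(x^3\cosh x)\big)=3\coth x-3/x-\tanh x>0$, which rearranges to $2x(\cosh 2x+2)>3\sinh 2x$, and this holds because $\psi(y):=y\cosh y+2y-3\sinh y$ satisfies $\psi(0)=\psi'(0)=\psi''(0)=0$ and $\psi'''(y)=y\sinh y>0$ for $y>0$.
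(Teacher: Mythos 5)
Your proposal is correct, and its core is the same as the paper's: evaluate the coordinate relation (\ref{formulaFortMappingTime}) on the locus $A=0$ (where $r=\sinh\Theta(\chi)$, hence $\bar{\chi}^2\xi^2=\sinh^6\Theta$ and $\cosh\bar{\Theta}=\cosh^{3/2}\Theta$) to obtain $\Pi(\bar{\chi})$, and then reduce $\Pi'>0$ to the elementary inequality $\sinh x\geq x\cosh^{1/3}x$. You differ in three worthwhile ways. First, you explicitly verify that $\{A>0\}$ is an interval $[0,\xi_{max})$ by showing $\partial\Theta/\partial\xi<0$ from the implicit function theorem and hence that $A$ is strictly decreasing in $\xi$ at fixed $\bar{\chi}$; the paper leaves this step implicit and only computes the boundary locus, so your addition genuinely tightens the claim that $A>0$ is \emph{equivalent} to $\xi<\xi_{max}$. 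Second, you prove the key inequality via the logarithmic derivative $3\coth x-3/x-\tanh x>0$ and the third-derivative test on $\psi(y)=y\cosh y+2y-3\sinh y$; the paper instead substitutes $x=\cosh^{2/3}\theta$ and factors $2x^3-3x^2+1=(x-1)^2(2x+1)\geq0$, which is slightly shorter, but both are valid and both correctly handle the degeneracy at $x=0$. Third, you obtain the limits $\Pi\to\sqrt{2/3}$ and $\Pi\to1$ directly from the asymptotics of $\bar{\chi}(\bar{\Theta})$ and $\cosh^{4/3}\bar{\Theta}-1$ as $\bar{\Theta}\to0$ and $\bar{\Theta}\to\infty$, whereas the paper routes these through the separate asymptotic Lemmas \ref{TimeByChi3} and \ref{TimeByChi2} on the coordinate relation itself; your route is more self-contained and avoids the intermediate lemmas entirely.
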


For the proof of Theorem \ref{solvablexi}, we begin by establishing (\ref{thmxisolve6}) and (\ref{thmxisolve5}) in two separate lemmas. The first lemma establishes that the limit of validity of the SSCNG coordinate system, $A>0$, extends out to $\xi_{max}=1$ in the limit $\bar{t}\to\infty$. This implies that it is possible to Taylor expand solutions in powers of $\xi$ out to any $\xi<1$ at arbitrarily late times after the Big Bang. The second lemma establishes $\xi_{max}\to\sqrt{\frac{2}{3}}\approx0.816$ as $\bar{t}\to0$, which, after establishing (\ref{thmxisolve4}), is the global lower bound for $\xi_{max}$.

\begin{Lemma}\label{TimeByChi3}
    The relationship between $\bar{\chi}$ and $\chi$, determined by (\ref{formulaFort}), tends to the condition 
    \begin{align}
        \bar{\chi} = \frac{1}{1-\xi^2}\chi
    \end{align}
    in the limit $t\to\infty$. This implies that the SSCNG coordinate system is valid for
    \begin{align}
        \xi < 1,\label{xilessthan1}
    \end{align}
    in the limit $t\to\infty$.
\end{Lemma}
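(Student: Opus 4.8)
The plan is to extract the large-time behaviour directly from the fundamental relation (\ref{formulaFort}), holding $\xi$ fixed and letting $t\to\infty$, equivalently $\chi=t/\Delta_0\to\infty$. Because $\Theta(\chi)$ is the monotone inverse of $\chi=\tfrac12(\sinh 2\Theta-2\Theta)$ and, by Lemma \ref{TimeByChi1}, (\ref{formulaFort}) genuinely defines $\chi=\chi(\bar\chi,\xi)$ on the admissible region $A>0$, both $\Theta:=\Theta(\chi)$ and $\bar\Theta:=\Theta(\bar\chi)$ diverge to $+\infty$ in this limit. The idea is to use this to replace the transcendental relation (\ref{formulaFort}) by a purely algebraic relation among $\bar\chi$, $\chi$ and $\xi$, from which the claimed law $\bar\chi=\frac{1}{1-\xi^2}\chi$ and hence the range (\ref{xilessthan1}) can be read off.

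The key steps, in order, are as follows. First I would clear the radical in (\ref{formulaFort}) by raising it to the fourth power, giving the exact identity
\begin{align*}
    \cosh^4\bar\Theta = \cosh^4\Theta + \bar\chi^2\xi^2\,\coth^4\Theta,
\end{align*}
where the fraction $\bar\chi^2\xi^2/\sinh^4\Theta$ appearing after expansion is exactly $r^2$ by (\ref{formulaFortMappingXi}), and where $\coth^4\Theta\to1$ as $\Theta\to\infty$. Second, I would use the defining relation $\chi=\tfrac12(\sinh 2\Theta-2\Theta)$, together with $\cosh^2\Theta=\tfrac12(\cosh 2\Theta+1)$ and $\sinh^2\Theta=\tfrac12(\cosh 2\Theta-1)$, to express the growing quantities $\cosh\bar\Theta,\cosh\Theta,\sinh\Theta$ asymptotically in terms of $\bar\chi$ and $\chi$; the logarithmic term $2\Theta$ is subleading and drops out in the limit. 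Third, substituting these expressions into the fourth-power identity and eliminating $\Theta$ and $\bar\Theta$ collapses it to a single algebraic relation in $\bar\chi$, $\chi$, $\xi$; solving this relation for $\bar\chi$ gives the stated limiting law. Finally, since the resulting proportionality factor between $\bar\chi$ and $\chi$ is finite exactly on $0\le\xi<1$ and blows up as $\xi\uparrow1$, the coordinate transformation remains regular precisely up to $\xi=1$ in the limit $t\to\infty$, which is (\ref{xilessthan1}).

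The main obstacle is to make the passage to the limit rigorous while correctly determining the exact power of $(1-\xi^2)$ in the proportionality factor. The delicate point is the second step: the precise asymptotic relation between $\cosh^2\Theta$ and $\chi$, and whether one carries $\cosh\bar\Theta$ or its square through the fourth-power identity, fixes the exponent, so this must be tracked carefully rather than read off naively. I would make this rigorous by dividing the fourth-power identity through by $\chi^2$, rewriting each factor as $1+o(1)$ times its leading part, and controlling the $O(\Theta)=O(\ln\chi)$ corrections from the defining relation and the exponentially small correction $\coth^4\Theta-1$ uniformly in the admissible region. Convergence of the implicitly defined $\bar\chi(\chi,\xi)$, as opposed to mere boundedness, I would secure from the monotonicity of $\Theta(\cdot)$ and the sign condition $\partial f/\partial\Theta>0$ already established in the proof of Lemma \ref{TimeByChi1}.
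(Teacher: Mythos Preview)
Your approach is essentially identical to the paper's: raise (\ref{formulaFort}) to the fourth power, replace the hyperbolic quantities by their leading exponentials via $\chi\sim\tfrac14 e^{2\Theta}$ (equivalently $\cosh^4\Theta\sim\chi^2$, $\sinh^4\Theta\sim\chi^2$, $\coth^4\Theta\to1$), and read off the algebraic relation between $\bar\chi$, $\chi$ and $\xi$.

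One point worth flagging, since you rightly identified the exponent as the delicate issue: actually carrying this out --- exactly as the paper's proof does --- gives
\[
\bar\chi^2 \sim \chi^2 + \bar\chi^2\xi^2,\qquad\text{hence}\qquad \bar\chi \sim \frac{\chi}{\sqrt{1-\xi^2}},
\]
not $\bar\chi=\chi/(1-\xi^2)$ as written in the lemma statement. The paper's own proof arrives at $\chi\sim\bar\chi\sqrt{1-\xi^2}$ (see (\ref{formulaFortAgain2})), so the displayed formula in the statement is a typo; the qualitative conclusion $\xi<1$ is of course unaffected. Your plan will reproduce the paper's derivation and the same square-root relation --- just do not be surprised when your careful tracking of the exponent disagrees with the displayed equation.
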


\begin{proof}
By (\ref{kminusone}), we have that $\theta\to\infty$ as $t\to\infty$, and
\begin{align*}
    \chi = \frac{1}{2}(\sinh2\theta-\theta) \sim \frac{1}{4}e^{2\theta},
\end{align*}
since
\begin{align*}
    \sinh\theta = \frac{1}{2}\big(e^{\theta}-e^{-\theta}\big) \sim \frac{1}{2}e^{\theta},
\end{align*}
where, for this argument, we use the convention $f\sim g$ if they agree up to lower order terms as $t\to\infty$. Using this in (\ref{formulaFort}) gives
\begin{align*}
    \cosh\Theta(\bar{\chi}) = \sqrt[4]{1+\frac{\bar{\chi}^2\xi^2}{\sinh^4\Theta}}\cosh\Theta(\chi) \sim \sqrt[4]{1+\frac{\bar{\chi}^2}{\chi^2}\xi^2}\sqrt{\chi},
\end{align*}
which yields
\begin{align*}
    \cosh^4\Theta(\bar{\chi}) \sim \chi^2+\bar{\chi}^2\xi^2,
\end{align*}
or
\begin{align}
    \chi \sim \sqrt{\cosh^4\bar{\Theta}(\bar{\chi})-\bar{\chi}^2\xi^2}.\label{formulaFortAgain1}
\end{align}
Since $\bar{\chi}\to\infty$ as $t\to\infty$, and
\begin{align*}
    \bar{\chi} = \frac{1}{2}(\sinh2\bar{\theta}-\bar{\theta}) \sim \frac{1}{4}e^{2\bar{\theta}},
\end{align*}
as $t\to\infty$, we have
\begin{align*}
    \cosh^4\bar{\Theta}(\bar{\chi}) \sim \bar{\chi}^2,
\end{align*}
so (\ref{formulaFortAgain1}) becomes
\begin{align}
    \chi \sim \bar{\chi}\sqrt{1-\xi^2}.\label{formulaFortAgain2}
\end{align}
Thus in the limit $t\to\infty$, the solubility condition for $\chi$ as a function of $\bar{\chi}$ and $\xi$ becomes (\ref{formulaFortAgain2}), which is the condition $\xi<1$ as claimed in (\ref{xilessthan1}). This completes the proof of Lemma \ref{TimeByChi3}.
\end{proof}

The next lemma establishes that the limit of validity of the SSCNG coordinate system extends out to $\xi=\sqrt{\frac{2}{3}}\approx0.816$ asymptotically as $\bar{t}\to0$, implying that it makes sense to Taylor expand solutions in powers of $\xi$ out to any $\xi<\xi_{max}=0.816$ at sufficiently early times after the Big Bang.

\begin{Lemma}\label{TimeByChi2}
    In the limit $t\to0$, the relationship between $\bar{\chi}$ and $\chi$, determined by (\ref{formulaFort}), reduces to 
    \begin{align}
        \left(\frac{\chi}{\bar{\chi}}\right)^{\frac{4}{3}} - \left(\frac{\chi}{\bar{\chi}}\right)^2 = \frac{2}{9}\xi^2,
    \end{align}
    which has a unique solution
    \begin{align}
        \frac{\chi}{\bar{\chi}} \in \left(0,\frac{4}{27}\right)
    \end{align}
    for each
    \begin{align}
        \xi \in \left(0,\sqrt{\frac{2}{3}}\right).\label{lemmasolvet=0-3}
    \end{align}
\end{Lemma}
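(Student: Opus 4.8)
The plan is to pass to the limit $t\to0$ in the implicit relation (\ref{formulaFort}) by Taylor expansion about $\Theta=0$, reducing it to the stated algebraic equation, and then to read off existence and uniqueness of the root from the elementary function $\phi(u)=u^{4/3}-u^{2}$, using the solubility condition $A>0$ from Lemma \ref{TimeByChi1} to isolate the admissible branch.

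First I would record the small-$\Theta$ asymptotics of the quantities entering (\ref{formulaFort}). From $\chi=\tfrac12(\sinh2\Theta-2\Theta)=\tfrac23\Theta^{3}\bigl(1+O(\Theta^{2})\bigr)$ one gets $\Theta(\chi)=\bigl(\tfrac32\chi\bigr)^{1/3}\bigl(1+O(\chi^{2/3})\bigr)$, and likewise for $\bar\chi$; also $\cosh^{4}\Theta=1+2\Theta^{2}+O(\Theta^{4})$ and $\sinh^{4}\Theta=\Theta^{4}\bigl(1+O(\Theta^{2})\bigr)$. Setting $u=\chi/\bar\chi$ (note $u\in(0,1)$, since $g(r)>1$ forces $\bar t>t$), one checks by comparing orders in the fourth power of (\ref{formulaFort}) that $\bar\chi$ and $\chi$ must vanish at comparable rates — the case $\bar\chi\gg\chi$ forces $\bar\chi\sim\chi$, and $\bar\chi\ll\chi$ leads to a sign contradiction — so that $\Theta(\bar\chi)^{2}$, $\Theta(\chi)^{2}$ and $\bar\chi^{2}\xi^{2}/\Theta(\chi)^{4}$ are all of order $\chi^{2/3}$. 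Substituting the expansions into $\cosh^{4}\Theta(\bar\chi)=\bigl(1+\bar\chi^{2}\xi^{2}/\sinh^{4}\Theta(\chi)\bigr)\cosh^{4}\Theta(\chi)$, the $O(1)$ terms cancel and the $O(\chi^{2/3})$ terms give $2\Theta(\bar\chi)^{2}-2\Theta(\chi)^{2}=\bar\chi^{2}\xi^{2}/\Theta(\chi)^{4}+o(\chi^{2/3})$; inserting $\Theta(\chi)^{2}\sim(\tfrac32\chi)^{2/3}$, $\Theta(\bar\chi)^{2}\sim(\tfrac32\bar\chi)^{2/3}$, $\Theta(\chi)^{4}\sim(\tfrac32\chi)^{4/3}$ and dividing by $2(\tfrac32\bar\chi)^{2/3}$ yields $1-u^{2/3}=\tfrac29\xi^{2}u^{-4/3}$, that is $u^{4/3}-u^{2}=\tfrac29\xi^{2}$, the reduced equation claimed in the lemma.

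Next I would analyze $\phi(u)=u^{4/3}-u^{2}$ on $(0,1]$. Since $\phi'(u)=\tfrac23u^{1/3}\bigl(2-3u^{2/3}\bigr)$, $\phi$ increases strictly on $\bigl(0,(2/3)^{3/2}\bigr)$ and decreases strictly on $\bigl((2/3)^{3/2},1\bigr]$, with $\phi(0)=\phi(1)=0$ and global maximum $\phi\bigl((2/3)^{3/2}\bigr)=(2/3)^{2}-(2/3)^{3}=\tfrac{4}{27}$. Hence $\phi(u)=\tfrac29\xi^{2}$ is solvable in $(0,1)$ precisely when $\tfrac29\xi^{2}\le\tfrac{4}{27}$, i.e. $\xi\le\sqrt{2/3}$; for $\xi\in(0,\sqrt{2/3})$ it has exactly two roots, one on each monotone branch, and the left-hand side $u^{4/3}-u^{2}=\tfrac29\xi^{2}$ ranges over $\bigl(0,\tfrac{4}{27}\bigr)$. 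The critical value $\tfrac4{27}$, attained at $u=(2/3)^{3/2}$, is exactly what produces the threshold $\xi_{max}=\sqrt{2/3}$ asserted in Theorem \ref{solvablexi}.

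Finally, to isolate the physical root and obtain uniqueness, I would invoke Lemma \ref{TimeByChi1}: the SSCNG transformation is defined only where $A>0$. Writing $A=1-\bar\chi^{2}\xi^{2}/\sinh^{6}\Theta(\chi)$ and using $\sinh^{6}\Theta\sim(\tfrac32\chi)^{2}$ gives $A\to1-\tfrac{4}{9}\xi^{2}u^{-2}$ as $t\to0$, so $A>0$ becomes $u>\tfrac23\xi$. A direct computation gives $\phi\bigl(\tfrac23\xi\bigr)-\tfrac29\xi^{2}=\xi^{4/3}\bigl((\tfrac23)^{4/3}-\tfrac23\xi^{2/3}\bigr)$, which is positive exactly for $\xi<\sqrt{2/3}$; hence the increasing-branch root lies below $\tfrac23\xi$ (inadmissible, $A<0$) and the decreasing-branch root lies above it (admissible, $A>0$), so there is a unique admissible solution — one which, by continuity in $\xi$, equals $1$ at $\xi=0$ and decreases as $\xi\to\sqrt{2/3}$. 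The main obstacle is the asymptotic bookkeeping in the first step: one must verify that the discarded terms are genuinely $o(\chi^{2/3})$ and that the comparable-rate balance between $\chi$ and $\bar\chi$ is forced rather than merely consistent; steps two and three are then elementary calculus.
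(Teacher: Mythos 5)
Your proposal is correct, and the first two steps coincide with the paper's argument: the paper likewise substitutes the small-$\theta$ asymptotics $\chi\sim\frac{2}{3}\theta^{3}$, $\cosh\theta\sim1+\frac{1}{2}(\frac{3}{2}\chi)^{2/3}$, $\sinh\theta\sim(\frac{3}{2}\chi)^{1/3}$ into (\ref{formulaFort}), clears the radical, keeps leading orders, and arrives at the same algebraic relation, which it then writes as $W^{2}-W^{3}=\frac{2}{9}\xi^{2}$ with $W=(\chi/\bar{\chi})^{2/3}$ (your $\phi(u)=u^{4/3}-u^{2}$ is the same function after the substitution $u=W^{3/2}$), and extracts the threshold $\frac{4}{27}$ and hence $\xi<\sqrt{2/3}$ from the maximum of that cubic exactly as you do. Where you genuinely diverge is in isolating the admissible root among the two solutions. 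The paper argues by continuity: since $\bar{\chi}=\chi$ at $\xi=0$ forces $W=1$ there, one follows the branch of roots deforming continuously from $W=1$, i.e.\ the decreasing branch. You instead invoke the solubility condition $A>0$ of Lemma \ref{TimeByChi1}, compute $A\to1-\frac{4}{9}\xi^{2}u^{-2}$ as $t\to0$, and show by the sign of $\phi(\frac{2}{3}\xi)-\frac{2}{9}\xi^{2}=\xi^{4/3}\bigl((\frac{2}{3})^{4/3}-\frac{2}{3}\xi^{2/3}\bigr)$ that the increasing-branch root violates $A>0$ while the decreasing-branch root satisfies it; I checked this computation and it is right, including the fact that $\frac{2}{3}\xi<(\frac{2}{3})^{3/2}$ places the comparison point on the increasing branch. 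Your route buys something: it replaces a soft continuity selection with a sharp, checkable criterion tied to the geometric validity of the coordinate system, and it makes explicit why the threshold $\sqrt{2/3}$ in this lemma is the same number appearing in Theorem \ref{solvablexi}. Your extra care in the first step (forcing $\chi$ and $\bar{\chi}$ to vanish at comparable rates before matching orders) also tightens a point the paper passes over by simply ``taking the asymptotics as exact.'' One cosmetic remark: the interval $(0,\frac{4}{27})$ in the lemma statement is most sensibly read as the range of the left-hand side $u^{4/3}-u^{2}$ (equivalently of $\frac{2}{9}\xi^{2}$), which is how you treat it; it is not the range of $\chi/\bar{\chi}$ itself, which on the admissible branch lies in $((\frac{2}{3})^{3/2},1)$.
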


\begin{proof}
By (\ref{kminusone}), in the limit $t\to0$, the following asymptotic conditions hold:
\begin{align*}
    \chi &\sim \frac{2}{3}\theta^3, & \theta &\sim \left(\frac{3}{2}\chi\right)^{\frac{1}{3}}, & \sinh\theta &\sim \left(\frac{3}{2}\chi\right)^{\frac{1}{3}}, & \cosh\theta &\sim 1 + \frac{1}{2}\left(\frac{3}{2}\chi\right)^{\frac{2}{3}}.
\end{align*}
Taking these as exact, that is, neglecting higher order terms, (\ref{formulaFort}) becomes
\begin{align*}
    \cosh\bar{\theta} &= \sqrt[4]{1+\frac{\bar{\chi}^2\xi^2}{\sinh^4\theta}}\cosh\theta\\
    &\sim \Bigg(1+\frac{\bar{\chi}^2\xi^2}{\big(\frac{3}{2}\chi\big)^{\frac{4}{3}}}\Bigg)^{\frac{1}{4}}\Bigg(1+\frac{1}{2}\left(\frac{3}{2}\chi\right)^{\frac{2}{3}}\Bigg).
\end{align*}
Upon eliminating the radical, taking the leading order part on the left hand side and using
\begin{align*}
    \Bigg(1+\frac{1}{2}\left(\frac{3}{2}\chi\right)^{\frac{2}{3}}\Bigg)^4 \sim 1 + 2\left(\frac{3}{2}\chi\right)^{\frac{2}{3}},
\end{align*}
gives
\begin{align*}
    2\left(\frac{3}{2}\bar{\chi}\right)^{\frac{2}{3}} = 2\left(\frac{3}{2}\chi\right)^{\frac{2}{3}} + \frac{\bar{\chi}^2\xi^2}{\big(\frac{3}{2}\chi\big)^{\frac{4}{3}}}\Bigg(1+\frac{1}{2}\left(\frac{3}{2}\chi\right)^{\frac{2}{3}}\Bigg).
\end{align*}
Taking the leading order part (in $\chi$) of the expression on the right hand side and clearing fractions yields
\begin{align*}
    2\left(\frac{3}{2}\bar{\chi}\right)^{\frac{2}{3}}\left(\frac{3}{2}\chi\right)^{\frac{4}{3}} = 2\left(\frac{3}{2}\chi\right)^{\frac{2}{3}}\left(\frac{3}{2}\chi\right)^{\frac{4}{3}} + \bar{\chi}^2\xi^2,
\end{align*}
so dividing through by $\bar{\chi}^2$ results in
\begin{align}
    \left(\frac{\chi}{\bar{\chi}}\right)^{\frac{4}{3}} - \left(\frac{\chi}{\bar{\chi}}\right)^2 = \frac{2}{9}\xi^2.\label{lemmasolvet=0-8}
\end{align}
Equation (\ref{lemmasolvet=0-8}) is the leading order part of equation (\ref{formulaFort}) in the limit $t\to0$, and so implicitly gives $\chi$ as a function of $\bar{\chi}$ for each $\xi$ within the range of solubility. Now let
\begin{align*}
    W = \left(\frac{\chi}{\bar{\chi}}\right)^{\frac{2}3},
\end{align*}
so (\ref{lemmasolvet=0-8}) is equivalent to
\begin{align}
    W^2 - W^3 = \frac{2}{9}\xi^2.\label{lemmasolvet=0-10}
\end{align}
Now we know from (\ref{formulaFort}) that $\bar{\chi}\geq\chi$ and $\bar{\chi}=\chi$ at $\xi=0$. Thus the possible values of $\xi$ are restricted by the condition that there should exist a $W\in(0,1)$ such that (\ref{lemmasolvet=0-10}) holds. Continuity from $W=1$, $\xi=0$, implies that we should take the value of $W\in(0,1)$ which solves (\ref{lemmasolvet=0-10}) for the largest value of $\xi$. We have $W^2-W^3\geq0$ for $W\in(0,1)$, and the maximum value occurs at $W=\frac{2}{3}$, so
\begin{align*}
    W^2 - W^3 \leq \frac{4}{27}.
\end{align*}
It follows from (\ref{lemmasolvet=0-10}) that we can only solve for $W$ under the condition
\begin{align*}
    \frac{2}{9}\xi^2 \leq \frac{4}{27},
\end{align*}
or equivalently
\begin{align*}
    \xi\leq\sqrt{\frac{2}{3}}\approx0.816,
\end{align*}
as claimed in (\ref{lemmasolvet=0-3}). It follows that in the limit $t\to0$, the transformation $\chi\to\bar{\chi}$ is $\bar{\chi}=W\chi$ at each value of $\xi\in\left(0,\sqrt{\frac{2}{3}}\right)$, where $W$ is the constant determined by the solution of (\ref{lemmasolvet=0-10}). This completes the proof of Lemma \ref{TimeByChi2}.
\end{proof}

\begin{proof}[Proof of Theorem \ref{solvablexi}]
By (\ref{solvable}) of Lemma \ref{SSCNGinChi}, we have that the maximal radius $r_B$ of the SSCNG coordinate system, the region $A>0$, is described by
\begin{align}
    r < r_N = \sinh\theta.\label{thmxipf1}
\end{align}
Then at the maximum radius $r=r_B$, the relation (\ref{formulaFort}), which implicitly defines $\chi$ as a function of $(\bar{\chi},\xi)$, becomes
\begin{align*}
    \cosh\bar{\theta} &= \sqrt[4]{1+\frac{\bar{\chi}^2\xi^2}{\sinh^4\theta}}\cosh\theta\\
    &= \sqrt[4]{1+r_B^2}\cosh\theta\\
    &= \sqrt[4]{1+\sinh^2\theta}\cosh\theta,
\end{align*}
which reduces to 
\begin{align}
    \cosh\bar{\theta} = \cosh^{\frac{3}{4}}\theta.\label{thmxipf2}
\end{align}
Equation (\ref{thmxipf1}) gives the relationship between $\chi$ and $\bar{\chi}$ at the maximum radius $r_B$. Putting this in for $\cosh\bar{\theta}$ on the left hand side of (\ref{formulaFort}) gives
\begin{align*}
    \cosh^{\frac{3}{2}}\theta = \sqrt[4]{1+\frac{\bar{\chi}^2\xi^2}{\sinh^4\theta}}\cosh\theta,
\end{align*}
which simplifies to
\begin{align}
    \sinh^6\theta = \bar{\chi}^2\xi^2.\label{thmxipf3}
\end{align}
Using (\ref{thmxipf2}) to eliminate $\Theta(\chi)$ in favor of $\Theta(\bar{\chi})$ gives the relationship between $\xi$ and $\bar{\chi}$ at the maximum radius. Letting $\xi_B$ denote this maximum, (\ref{thmxipf3}) implies
\begin{align*}
    \xi_{max}(\bar{\chi}) = \Pi(\bar{\chi}) := \frac{\big(\cosh^{\frac{4}{3}}{\bar{\theta}}-1\big)^{\frac{3}{2}}}{\bar{\chi}},
\end{align*}
where $\bar{\theta}=\bar{\Theta}(\bar{\chi})$. We now claim
\begin{align}
    \Pi'(\bar{\chi}) > 0.\label{piprimepos}
\end{align}
Assuming (\ref{piprimepos}), the maximum and minimum values of $\Pi(\bar{\chi})$ in (\ref{thmxisolve5}) and (\ref{thmxisolve6}) then follow directly from Lemmas \ref{TimeByChi3} and \ref{TimeByChi2} by evaluating $\Pi(\bar{\chi})$ at $\bar{\chi}=0$ and $\bar{\chi}=\infty$. To prove (\ref{piprimepos}), we use the quotient rule, pull out a positive factor and use $\cosh\bar{\Theta}\geq1$ to obtain
\begin{align}
    \Pi'(\bar{\chi}) &= \frac{\big(\cosh^{\frac{4}{3}}\bar{\Theta}-1\big)^{\frac{1}{2}}}{\bar{\chi}^2}\Big(\cosh^{\frac{1}{3}}\bar{\Theta}\big(2\bar{\chi}\bar{\Theta}'\sinh\bar{\Theta}-\cosh\bar{\Theta}\big)+1\Big).\label{almostdone}
\end{align}
Now
\begin{align*}
    2\bar{\chi}\bar{\Theta}'\sinh\bar{\Theta} = \frac{\bar{\chi}}{\sinh\bar{\Theta}} = \frac{\sinh2\bar{\Theta}-2\Theta}{2\sinh\bar{\Theta}} = \cosh\bar{\Theta}-\frac{\bar{\Theta}}{\sinh\bar{\Theta}}
\end{align*}
and putting this into (\ref{almostdone}) gives $\Pi'(\bar{\Theta})\geq0$ for $\bar{\Theta}>0$ if and only if
\begin{align}
    1 - \bar{\Theta}\frac{\cosh^{\frac{1}{3}}\bar{\Theta}}{\sinh\bar{\Theta}} \geq 0.\label{onemore}
\end{align}
To finish the proof of the claim, it suffices to establish (\ref{onemore}) by proving: 
\begin{align*}
    \lim_{\theta\to0}f(\theta) &=0, & f'(\theta) &> 0,
\end{align*}
for $\theta>0$, where
\begin{align}
    f(\theta) = \frac{\sinh\theta}{\cosh^{\frac{1}{3}}\theta} - \theta.\label{onemore1}
\end{align}
Clearly $f(\theta)\to0$ as $\theta\to0$, and differentiating (\ref{onemore1}) we obtain:
\begin{align*}
    f'(\theta) &= \frac{\cosh^{\frac{1}{3}}\theta\cosh\theta-\frac{1}{3}\sinh^2\theta\cosh^{-\frac{2}{3}}\theta}{\cosh^{\frac{2}{3}}\theta} - 1\\
    &= \cosh^{\frac{2}{3}}\theta - \frac{1}{3}\big(\cosh^2\theta-1\big)\cosh^{-\frac{4}{3}}\theta - 1\\
    &= \frac{2}{3}\cosh^{\frac{2}{3}}\theta + \frac{1}{3}\cosh^{-\frac{4}{3}}\theta - 1\\
    &= \frac{2}{3}x + \frac{1}{3x^2} - 1\\
    &= \frac{1}{3x^2}(2x^3-3x^2+1),
\end{align*}
where we have set 
\begin{align*}
    x = \cosh^{\frac{2}{3}}\theta \geq 1.
\end{align*}
Now the minimum of $g(x)=2x^3-3x^2+1$ for $x\geq1$ is at $x=1$, at which $g(1)=0$, so $g(x)>0$ for $x>1$, implying that $f'(\theta)>0$ for $\theta>0$, as claimed. This completes the proof of Theorem \ref{solvablexi}.
\end{proof}

To get an upper bound on the size of the region in which our SSCNG coordinate system is nonsingular, we compare $\bar{r}_B$ to $\bar{r}_N$, that is, the distance at time $t$ of an outward radial light ray leaving the origin $r=0$ at $t=0$ in comoving coordinates in a $k=-1$ Friedmann spacetime. The null condition gives
\begin{align*}
    -dt^2 + \frac{R(t)^2}{1+r^2}dr^2 = 0,
\end{align*}
or
\begin{align*}
    \int_0^t\frac{dt}{R(t)} = \int_0^r\frac{dr}{\sqrt{1+r^2}}.
\end{align*}
The integral on the left is soluble by the substitution (\ref{kminusone}), giving $dt=2R(t)d\theta$. Using (\ref{kminusoneR}) then yields
\begin{align*}
    \int_0^t\frac{dt}{R(t)} = 2\Theta.
\end{align*}
The integral on the right has the exact solution
\begin{align*}
    \int_0^r\frac{dr}{\sqrt{1+r^2}} = \ln\big(\sqrt{1+r^2}+r\big) = \sinh^{-1}r.
\end{align*}
Thus the radial light ray is given by
\begin{align*}
    \sinh^{-1}r = 2\theta,
\end{align*}
or
\begin{align*}
    r_N = \sinh2\theta.
\end{align*}
Therefore we conclude that at time $t>0$,
\begin{align*}
    \bar{r}_N = R(t)r_N = \sin^2\theta\sinh2\theta = 2\sinh^3\theta\cosh\theta = 2\bar{r}_B\cosh\theta \geq \bar{r}_B.
\end{align*}
Hence we have the upper and lower bounds at each time $t>0$,
\begin{align*}
    \bar{r}_H \leq \bar{r}_B \leq \bar{r}_N.
\end{align*}
Finally, asymptotically as $t\to\infty$, we have
\begin{align*}
    \sinh2\theta &\sim 2\chi, & \sinh\theta &\sim \sqrt{2\chi}, & \cosh\theta &\sim \sqrt{2\chi},
\end{align*}
so
\begin{align*}
    \bar{r}_H &\sim \sqrt{\frac{\Delta_0}{2t}}\bar{r}_B, & \bar{r}_N &\sim \sqrt{\frac{2t}{\Delta_0}}, & \bar{r}_B &\sim \left(\frac{2t}{\Delta_0}\right)^{\frac{3}{2}}.
\end{align*}
Of course, $(t,r_N)$ would name a spacetime point well beyond the region of the Universe visible to an observer at $r=0$ at time $t>0$ in the $k<0$ Friedmann spacetime.

\section{Self-Similarity of the Standard Model in SSCNG}\label{S5}

The purpose of this section is to derive the self-similar form of the $p=0$, $k=0$ Friedmann spacetimes in SSCNG, that is, coordinates $(\bar{t},\bar{r})$ in which the metric takes the SSC form (\ref{SSC}) such that the time coordinate is normalized to proper time (recall this means $B(\bar{t},0)=1$). Even though we only fully address the instability of $SM$ in the case of zero pressure, $\sigma=0$, for completeness, and future reference, we derive self-similar forms of the $k=0$ Friedmann spacetimes in the more general case $p=\sigma\rho$ with $0\leq\sigma\leq1$. We show that the $k=0$ Friedmann metric is self-similar in the sense that, when written in SSCNG, the metric components and fluid variables $v$ and $\rho\bar{r}^2$ depend only on the self-similar variable $\xi=\frac{\bar{r}}{\bar{t}}$. In other words, as a function of $(\bar{t},\xi)$, the SSCNG components of the $k=0$ Friedmann spacetimes are time independent. Our goal is to study the stability of the $p=0$, $k=0$ Friedmann spacetimes by deriving the equations for spherically symmetric spacetimes in SSCNG coordinates expressed in variables $(t,\xi)$. We Taylor expand the equations about the origin $\xi=0$ and obtain a phase portrait with the knowledge ahead of time that the $k=0$ Friedmann spacetimes must show up as a rest point in these coordinates.

We begin by stating Theorem 2, on page 88 of \cite{smolteMAA} regarding the $k=0$ Friedmann spacetime with $p=\sigma\rho$ and $0\leq\sigma\leq1$. Note that typically $0\leq\sigma\leq\frac{1}{3}c^2$ and we set $c=1$ when convenient.

\begin{Thm}\label{ThmCosWithShockMAA}
    In comoving coordinates, the $k=0$ Friedmann metric takes the standard form
    \begin{align}
        ds^2 = -dt^2 + R(t)^2\big(dr^2+r^2d\Omega^2\big)\label{metricFriedmann2}
    \end{align}
    and the Einstein field equations
    \begin{align}
        G = \kappa T
    \end{align}
    for a perfect fluid
    \begin{align}
        T = (\rho+p)\Vec{u}\otimes\vec{u} + pg
    \end{align}
    reduce to the following system of ODE:
    \begin{align}
        H^2 &= \frac{\kappa}{3}\rho,\label{Friedmannstart1}\\
        \dot{\rho} &= -3(\rho+p)H,\label{Friedmannstart2}
    \end{align}
    where
    \begin{align*}
        H=\frac{\dot{R}}{R}
    \end{align*}
    is the Hubble constant, $p$ is the pressure, $\rho$ is the energy density and $\vec{u}$ is the fluid four-velocity.

    In the case $p=\sigma\rho$ with constant $0\leq\sigma\leq1$ (equivalently $\frac{2}{3}\leq\alpha\leq\frac{4}{3}$), assuming $\dot{R}>0$, $R(0)=0$ and $R(t_0)=1$, equations (\ref{Friedmannstart1})--(\ref{Friedmannstart2}) determine $R(t)$ to be
    \begin{align}
        R(t) = \left(\frac{t}{t_0}\right)^{\frac{2}{3(1+\sigma)}} = \left(\frac{t}{t_0}\right)^{\frac{\alpha}{2}},\label{Friedmannsoln1}
    \end{align}
    where
    \begin{align*}
        \alpha = \frac{4}{3(1+\sigma)}.
    \end{align*}
    The resulting solution of (\ref{Friedmannstart1})--(\ref{Friedmannstart2}) is given by:
    \begin{align}
        H(t) &= \frac{2}{3(1+\sigma)}\frac{1}{t},\label{Friedmannsoln3}\\
        \rho(t) &= \frac{4}{3\kappa(1+\sigma)^2}\frac{1}{t^2},\label{kapparho}
    \end{align}
    and by the comoving assumption, the four velocity $\vec{u}$ satisfies
    \begin{align*}
        \vec{u} = (u^0,u^1,u^2,u^3) = (1,0,0,0).
    \end{align*}
    Furthermore, the age of the Universe, $t_0$, and the infinite red shift limit, $r_{\infty}$, are given exactly by:
    \begin{align*}
        t_0 &= \frac{2}{3(1+\sigma)}\frac{1}{H_0}, & r_\infty &= \frac{2}{1+3\sigma}\frac{1}{H_0}.
    \end{align*}
\end{Thm}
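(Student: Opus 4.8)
The plan is to carry out the classical $k=0$ FLRW reduction of $G=\kappa T$, extract the Friedmann ODEs, and then integrate them explicitly under the hypotheses $p=\sigma\rho$, $\dot R>0$, $R(0)=0$, $R(t_0)=1$, checking each displayed formula in turn. Since this statement is exactly Theorem~2 on page~88 of \cite{smolteMAA}, the content is a direct computation; I would organize the writeup so that the only laborious step — the Einstein tensor of (\ref{metricFriedmann2}) — is isolated at the outset and everything after it is elementary separation of variables.

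First I would record the nonzero Christoffel symbols of (\ref{metricFriedmann2}) with $k=0$ (all built from $\dot R/R$ and $R\dot R$), compute the Ricci tensor and scalar curvature, and assemble the mixed Einstein tensor, obtaining $G^0_{\ 0}=-3(\dot R/R)^2$ and $G^i_{\ i}=-\bigl(2\ddot R/R+(\dot R/R)^2\bigr)$ (no sum), with all off-diagonal components vanishing. For the perfect fluid $T=(\rho+p)\vec u\otimes\vec u+pg$ with $\vec u=(1,0,0,0)$ one has $T^0_{\ 0}=-\rho$ and $T^i_{\ i}=p$. Equating $G^\mu_{\ \nu}=\kappa T^\mu_{\ \nu}$ then gives, from the $00$-component, $3H^2=\kappa\rho$ with $H=\dot R/R$, which is (\ref{Friedmannstart1}), and from the spatial components $2\ddot R/R+H^2=-\kappa p$. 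Rather than use the spatial equation directly, I would obtain (\ref{Friedmannstart2}) from the contracted Bianchi identity $\nabla_\mu T^{\mu 0}=0$, which for this metric and four-velocity collapses immediately to $\dot\rho+3H(\rho+p)=0$; conversely, differentiating (\ref{Friedmannstart1}) and substituting (\ref{Friedmannstart2}) recovers the spatial equation, so the system (\ref{Friedmannstart1})--(\ref{Friedmannstart2}) is equivalent to $G=\kappa T$ for the metric (\ref{metricFriedmann2}).

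Next, imposing $p=\sigma\rho$ turns (\ref{Friedmannstart2}) into $\dot\rho/\rho=-3(1+\sigma)\dot R/R$, which integrates to $\rho R^{3(1+\sigma)}=\text{const}$; evaluating at $R(t_0)=1$ fixes the constant as $\rho(t_0)$. Substituting $\rho=\rho(t_0)R^{-3(1+\sigma)}$ into (\ref{Friedmannstart1}) and using $\dot R>0$ gives the separable equation $R^{(1+3\sigma)/2}\,dR=\sqrt{\kappa\rho(t_0)/3}\,dt$, which integrates, using $R(0)=0$, to $\tfrac{2}{3(1+\sigma)}R^{3(1+\sigma)/2}=\sqrt{\kappa\rho(t_0)/3}\;t$; then imposing $R(t_0)=1$ eliminates the remaining constant and yields (\ref{Friedmannsoln1}), $R(t)=(t/t_0)^{2/(3(1+\sigma))}$, the $\alpha$-form being the substitution $\alpha=4/(3(1+\sigma))$. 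Differentiating gives $H=\dot R/R=\tfrac{2}{3(1+\sigma)}t^{-1}$, which is (\ref{Friedmannsoln3}), and (\ref{Friedmannstart1}) then gives $\rho=\tfrac{3}{\kappa}H^2=\tfrac{4}{3\kappa(1+\sigma)^2}t^{-2}$, which is (\ref{kapparho}). Evaluating $H(t_0)=H_0$ gives $t_0=\tfrac{2}{3(1+\sigma)}H_0^{-1}$. Finally, the comoving radial coordinate of the particle horizon is $r_\infty=\int_0^{t_0}dt/R(t)=t_0^{\beta}\int_0^{t_0}t^{-\beta}\,dt=t_0/(1-\beta)$ with $\beta=\tfrac{2}{3(1+\sigma)}$; since $1-\beta=(1+3\sigma)/(3(1+\sigma))$, this gives $r_\infty=\tfrac{3(1+\sigma)}{1+3\sigma}t_0=\tfrac{2}{1+3\sigma}H_0^{-1}$, as claimed.

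The only genuinely tedious step is the Einstein tensor computation, which is entirely standard; everything downstream is routine integration. I would therefore either reproduce that computation compactly or simply cite \cite{smolteMAA} for the reduction to (\ref{Friedmannstart1})--(\ref{Friedmannstart2}) and present the explicit solution of the ODEs in full. No substantive obstacle is anticipated.
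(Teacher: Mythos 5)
Your proposal is correct and follows essentially the same route the paper takes: the paper cites Theorem 2 of \cite{smolteMAA} for the reduction of $G=\kappa T$ to (\ref{Friedmannstart1})--(\ref{Friedmannstart2}) and then notes that the explicit formulas follow from the general-$k$ integration of Section \ref{Cos2} (namely $\rho R^{3(1+\sigma)}=\mathrm{const}$ followed by separation of variables in $\dot R^2=\Delta_0R^{-(1+3\sigma)}-k$ with $k=0$), which is exactly the computation you carry out with the normalization $R(t_0)=1$ in place of $\Delta_0$. All of your displayed formulas, including the horizon integral giving $r_\infty=\frac{2}{1+3\sigma}H_0^{-1}$, check out.
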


The formulas in Theorem \ref{ThmCosWithShockMAA} follow directly from (\ref{kFriedmannzerop1})--(\ref{DeltaDef}) and one can easily verify that the free constant $t_0$ is related to the parameter $\Delta_0$ defined in (\ref{DeltaDef}) by
\begin{align}
    t_0 = \frac{2}{\sqrt{3\kappa(1+\sigma^2)\Delta_0}}.
\end{align}
In particular, when $p=0$ we have $\alpha=\frac{4}{3}$, so
\begin{align*}
    R(t) &= \left(\frac{t}{t_0}\right)^{\frac{2}{3}}, & \rho(t) &= \frac{4}{3\kappa}\frac{1}{t^2},
\end{align*}
and when $p=\frac{1}{3}\rho$ we have $\alpha=1$, so
\begin{align*}
    R(t) &= \left(\frac{t}{t_0}\right)^{\frac{1}{3}}, & \rho(t) &= \frac{3}{4\kappa}\frac{1}{t^2}.
\end{align*}
Our goal now is to prove the following theorem, which establishes the unique coordinate system $(\bar{t},\bar{r})$ in which the $p=\sigma\rho$, $k=0$ Friedmann metric (\ref{metricFriedmann2}) takes the SSC form
\begin{align*}
    ds^2 = -B(\bar{t},\bar{r})d\bar{t}^2 + \frac{d\bar{r}^2}{A(\bar{t},\bar{r})} + \bar{r}^2d\Omega^2,
\end{align*}
such that the metric components $A$ and $B$, together with appropriately scaled expressions for the density and velocity, all depend only on the self-similar variable $\xi=\frac{\bar{r}}{\bar{t}}$. For continuity, we employ the notation of \cite{SmolTeVo}.

\begin{Thm}\label{SelfSimExpandxi}
    Let $(t,r)$ denote comoving coordinates for the $k=0$ Friedmann metric (\ref{metricFriedmann2}) assuming the equation of state $p=\sigma\rho$ with constant $0\leq\sigma\leq1$, and define coordinates $(\bar{t},\bar{r})$ by:
    \begin{align}
        \bar{t} &= \mathcal{F}(\eta)t, & \bar{r} &= \eta t = t^{\frac{\alpha}{2}}r,\label{F1a}
    \end{align}
    where
    \begin{align}
        \eta &= \frac{\bar{r}}{t}, & \mathcal{F}(\eta) &= \left(1+\frac{\alpha(2-\alpha)}{4}\eta^2\right)^{\frac{1}{2-\alpha}}, & \alpha &= \frac{4}{3(1+\sigma)}.\label{eta0}
    \end{align}
    Then the Friedmann metric (\ref{metricFriedmann2}) written in $(\bar{t},\bar{r})$-coordinates is given by
    \begin{align}
        ds^2 = -B_\sigma d\bar{t}^2 + \frac{1}{A_\sigma}d\bar{r}^2 + \bar{r}^2d\Omega^2,
    \end{align}
    where the metric components $A_\sigma$, $B_\sigma$, $\kappa\rho_\sigma\bar{r}^2$ and
    \begin{align*}
        v_\sigma = \frac{1}{\sqrt{A_\sigma B_\sigma}}\frac{\bar{u}^1_\sigma}{\bar{u}_\sigma^0}
    \end{align*}
    are functions of the single variable $\eta$ according to:
    \begin{align}
        A_\sigma &= 1 - \left(\frac{\alpha\eta}{2}\right)^2,\label{Aformeta}\\
        B_\sigma &= \frac{\left(1+\frac{\alpha(2-\alpha)}{4}\eta^2\right)^{\frac{2-2\alpha}{2-\alpha}}}{1-\left(\frac{\alpha\eta}{2}\right)^2},\label{Bformeta}\\
        \kappa\rho_\sigma\bar{r}^2 &= \frac{3}{4}\alpha^2\eta^2,\label{zformeta}
        \\
        v_\sigma &= \frac{\alpha}{2}\eta.\label{vformeta}
    \end{align}
    Moreover, $\eta$ is given implicitly as a function of $\xi$ by the the relation
    \begin{align}
        \xi = \frac{\bar{r}}{\bar{t}} = \frac{\eta}{\mathcal{F}(\eta)},
    \end{align}
    which inverts to
    \begin{align}
        \eta^2 = \xi^2 + \frac{\alpha}{2}\xi^4 + \frac{\alpha^3}{16}\xi^6 + O(\xi^8)\label{etaexpand}
    \end{align}
    by expanding about $\xi=\eta=0$. This determines the following asymptotic expansions for $A_\sigma$, $B_\sigma$, $\kappa\rho_\sigma\bar{r}^2$ and $v_\sigma$ as functions of $\xi$ in a neighborhood of $\xi=\eta=0$, valid as $\xi\to0$:
    \begin{align}
        A_\sigma(\xi) &= 1 - \frac{\alpha^2}{4}\xi^2-\frac{\alpha^3}{8}\xi^4 + O(\xi^6),\label{Aform}\\
        B_\sigma(\xi) &= 1 + \frac{\alpha(2-\alpha)}{4}\xi^2 + O(\xi^4),\label{Bform}\\
        D_\sigma(\xi) &= \sqrt{A_\sigma B_\sigma} = 1 + \frac{\alpha(1-\alpha)}{4}\xi^2 + O(\xi^4),\label{Dform}\\
        \big(\kappa\rho_\sigma\bar{r}^2\big)(\xi) &= \frac{3}{4}\alpha^2\xi^2 + \frac{3}{8}\alpha^3\xi^4 + O(\xi^6),\label{kapparho1}\\
        v_\sigma(\xi) &= \frac{\alpha}{2}\xi\Big(1+\frac{\alpha}{4}\xi^2\Big) + O(\xi^4).\label{vform}
    \end{align}
\end{Thm}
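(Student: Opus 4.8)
The plan is to build the coordinate change in two stages and then verify every claimed identity by direct substitution, holding the precise form of $\mathcal{F}$ in reserve until it is forced. First, because the angular part of the comoving metric (\ref{metricFriedmann2}) is $R(t)^2r^2\,d\Omega^2 = \bar{r}^2\,d\Omega^2$, the SSC radial coordinate must be $\bar{r} = R(t)r = t^{\alpha/2}r$; there is no freedom here. Writing $r = \bar{r}\,t^{-\alpha/2}$ and $\eta = \bar{r}/t$, one computes $dr = t^{-\alpha/2}\big(d\bar{r} - \tfrac{\alpha}{2}\eta\,dt\big)$, so that in the intermediate coordinates $(t,\bar{r})$ the metric becomes
\begin{align*}
ds^2 = -\Big(1-\tfrac{\alpha^2}{4}\eta^2\Big)dt^2 - \alpha\eta\,dt\,d\bar{r} + d\bar{r}^2 + \bar{r}^2\,d\Omega^2.
\end{align*}
The only remaining freedom is a time change $\bar{t} = \mathcal{F}(\eta)t$, which I will use to diagonalize this metric and normalize it to proper time at $\bar{r}=0$.

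Second, I would differentiate $\bar{t} = \mathcal{F}(\bar{r}/t)\,t$ to obtain $d\bar{t} = \mathcal{F}'(\eta)\,d\bar{r} + \big(\mathcal{F}(\eta)-\eta\mathcal{F}'(\eta)\big)\,dt$, solve for $dt$, and substitute into the intermediate metric. The coefficient of $d\bar{t}\,d\bar{r}$ vanishes precisely when $2\big(1-\tfrac{\alpha^2}{4}\eta^2\big)\mathcal{F}' = \alpha\eta\big(\mathcal{F}-\eta\mathcal{F}'\big)$, a first-order ODE that, with $\mathcal{F}(0)=1$, is solved by the $\mathcal{F}$ in (\ref{eta0}) (equivalently, one simply checks the stated $\mathcal{F}$ satisfies it). The two closed-form identities that make the rest of the algebra collapse are $\mathcal{F}'(\eta) = \tfrac{\alpha}{2}\eta\big(1+\tfrac{\alpha(2-\alpha)}{4}\eta^2\big)^{(\alpha-1)/(2-\alpha)}$ and $\mathcal{F}(\eta)-\eta\mathcal{F}'(\eta) = \big(1-\tfrac{\alpha^2}{4}\eta^2\big)\big(1+\tfrac{\alpha(2-\alpha)}{4}\eta^2\big)^{(\alpha-1)/(2-\alpha)}$. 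Using these, $A_\sigma$ is read off as the reciprocal of the $d\bar{r}^2$ coefficient and $B_\sigma$ as minus the $d\bar{t}^2$ coefficient, which gives (\ref{Aformeta})--(\ref{Bformeta}). For the matter variables I would transport the comoving four-velocity $\vec{u} = (1,0,0,0)$ by $\bar{u}^\mu = \partial\bar{x}^\mu/\partial t$ at fixed $r$, compute $\partial\bar{r}/\partial t = \tfrac{\alpha}{2}\eta$ and $\partial\bar{t}/\partial t = \big(1+\tfrac{\alpha(2-\alpha)}{4}\eta^2\big)^{(\alpha-1)/(2-\alpha)}$, and then divide $\bar{u}^1/\bar{u}^0$ by $\sqrt{A_\sigma B_\sigma} = \big(1+\tfrac{\alpha(2-\alpha)}{4}\eta^2\big)^{(1-\alpha)/(2-\alpha)}$; the powers of $1+\tfrac{\alpha(2-\alpha)}{4}\eta^2$ cancel and leave $v_\sigma = \tfrac{\alpha}{2}\eta$, which is (\ref{vformeta}). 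The density formula (\ref{zformeta}) follows from invariance of the scalar $\rho$: by Theorem \ref{ThmCosWithShockMAA}, $\rho(t) = \tfrac{4}{3\kappa(1+\sigma)^2 t^2}$, so $\kappa\rho_\sigma\bar{r}^2 = 4\eta^2/(3(1+\sigma)^2) = \tfrac{3}{4}\alpha^2\eta^2$ since $\alpha = \tfrac{4}{3(1+\sigma)}$.

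Third, the self-similarity claim and the $\xi$-expansions come cheaply. From $\bar{r} = \eta t$ and $\bar{t} = \mathcal{F}(\eta)t$ we get $\xi = \bar{r}/\bar{t} = \eta/\mathcal{F}(\eta)$; since each of $A_\sigma$, $B_\sigma$, $\kappa\rho_\sigma\bar{r}^2$, $v_\sigma$ depends on $\eta$ alone and $\eta$ is implicitly a function of $\xi$ alone, each depends on $\xi$ alone, which is exactly self-similarity in SSCNG. Squaring gives $\xi^2 = \eta^2\big(1+\tfrac{\alpha(2-\alpha)}{4}\eta^2\big)^{-2/(2-\alpha)}$; a power-series inversion (undetermined coefficients, or Lagrange inversion, in the variable $\eta^2$) yields (\ref{etaexpand}), and substituting into (\ref{Aformeta})--(\ref{vformeta}) and expanding $B_\sigma$ and $D_\sigma = \sqrt{A_\sigma B_\sigma}$ in $\eta^2$ produces (\ref{Aform})--(\ref{vform}); in particular the $D_\sigma$-expansion recorded in (\ref{Dform}) drops out along the way. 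One also records that $A_\sigma > 0$ is equivalent to $|\eta| < 2/\alpha$, which translates into the range of $\xi$ stated for $\sigma = 0$ in Theorem \ref{solvablexi}, so the transformation (\ref{F1a}) is a regular coordinate change on the claimed domain.

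The computation carries no conceptual difficulty — it is ultimately a verification — and the one place that genuinely demands care is the chain-rule bookkeeping across the three frames $(t,r)$, $(t,\bar{r})$, $(\bar{t},\bar{r})$: in particular distinguishing $\partial\bar{t}/\partial t$ at fixed $r$ from $\partial\bar{t}/\partial t$ at fixed $\bar{r}$, and confirming the two identities for $\mathcal{F}$ and $\mathcal{F}'$ on which both the cancellation of the cross term and the clean forms of $A_\sigma$, $B_\sigma$ rest. An essentially equivalent shortcut is to specialize Theorem \ref{SSCStandardGauge} to $k=0$: for $R(t) = t^{\alpha/2}$ the functions $h(t) = \exp\big(\tfrac12\int_0^t d\tau/(\dot{R}(\tau)R(\tau))\big)$ and $g(r) = e^{r^2/4}$ evaluate in closed form, one checks that $\bar{t} = h^{-1}(h(t)g(r))$ collapses to $\mathcal{F}(\eta)t$, and then formulas (\ref{Aformulaone})--(\ref{barv}) deliver the stated components directly.
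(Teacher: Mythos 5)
Your proposal is correct, and its primary route differs from the paper's. The paper proves Theorem \ref{SelfSimExpandxi} (Section \ref{Appendix1B}) by specializing the general-$k$ machinery of Theorem \ref{SSCStandardGauge}: it evaluates $h(t)=e^{\lambda\int_0^t d\tau/(\dot R R)}$ and $g(r)=e^{\lambda r^2/2}$ in closed form for $R(t)=t^{\alpha/2}$, checks that $\bar t=h^{-1}(h(t)g(r))$ collapses to $\mathcal{F}(\eta)t$, and then reads off $A_\sigma$, $B_\sigma$, $v_\sigma$ from the general formulas (\ref{Aformulaone})--(\ref{barv}); this is exactly the ``shortcut'' you describe in your final paragraph. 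Your main argument instead performs the two-stage substitution $(t,r)\to(t,\bar r)\to(\bar t,\bar r)$ directly and derives $\mathcal{F}$ as the solution of the diagonalization ODE $2\bigl(1-\tfrac{\alpha^2}{4}\eta^2\bigr)\mathcal{F}'=\alpha\eta\bigl(\mathcal{F}-\eta\mathcal{F}'\bigr)$ with $\mathcal{F}(0)=1$. I have checked the two identities you isolate for $\mathcal{F}'$ and $\mathcal{F}-\eta\mathcal{F}'$, the resulting $d\bar r^2$ and $d\bar t^2$ coefficients (which give (\ref{Aformeta})--(\ref{Bformeta})), the cancellation of the powers of $1+\tfrac{\alpha(2-\alpha)}{4}\eta^2$ in $v_\sigma$, and the density identity $\tfrac{3}{4}\alpha^2=\tfrac{4}{3(1+\sigma)^2}$; all are correct. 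What your route buys is self-containedness and an explanation of \emph{why} $\mathcal{F}$ has its particular form (it is forced by killing the cross term), with the normal gauge $B_\sigma(\bar t,0)=1$ automatic from $\mathcal{F}(0)=1$; what the paper's route buys is reuse of a theorem already proved for all $k$, at the cost of evaluating and inverting $h$ explicitly.

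One small caveat on the series step: carrying out the inversion of $\xi^2=\eta^2-\tfrac{\alpha}{2}\eta^4+\tfrac{\alpha^2(4-\alpha)}{16}\eta^6+O(\eta^8)$ (by the paper's own inversion rule) gives $\eta^2=\xi^2+\tfrac{\alpha}{2}\xi^4+\tfrac{\alpha^2(4+\alpha)}{16}\xi^6+O(\xi^8)$, so the $\xi^6$ coefficient printed in (\ref{etaexpand}) appears to be a slip in the statement rather than something your method reproduces. This is immaterial for (\ref{Aform})--(\ref{vform}), which use only the $\xi^2$ and $\xi^4$ coefficients of $\eta^2$, but you should not assert that the inversion ``yields (\ref{etaexpand})'' verbatim without flagging the discrepancy.
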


We record two special cases in the following corollary.

\begin{Corollary}
    In the case $p=0$, $\alpha=\frac{4}{3}$ and we have:
    \begin{align}
        A_0(\xi) &= 1 - \frac{4}{9}\xi^2 - \frac{8}{27}\xi^4 + O(\xi^6),\\
        B_0(\xi) &= 1 + \frac{2}{9}\xi^2 + O(\xi^4),\\
        D_0(\xi) &= 1 - \frac{1}{9}\xi^2 + O(\xi^4),\\
        \big(\kappa\rho_0\bar{r}^2\big)(\xi) &= \frac{4}{3}\xi^2 + \frac{8}{9}\xi^4 + O(\xi^6),\\
        v_0(\xi) &= \frac{2}{3}\xi\Big(1+\frac{1}{3}\xi^2\Big) + O(\xi^4).
    \end{align}
    In the case $p=\frac{1}{3}\rho$, $\alpha=1$ and we have:
    \begin{align}
        A_{\frac{1}{3}}(\xi) &= 1 - \frac{1}{4}\xi^2 - \frac{1}{8}\xi^4 + O(\xi^6),\\
        B_{\frac{1}{3}}(\xi) &= 1 + \frac{1}{4} \xi^2 + O(\xi^4),\\
        D_{\frac{1}{3}}(\xi) &= 1 + O(\xi^4),\\
        \big(\kappa\rho_{\frac{1}{3}}\bar{r}^2\big)(\xi) &= \frac{3}{4}\xi^2 + \frac{3}{8}\xi^4 + O(\xi^6),\\
        v_{\frac{1}{3}}(\xi) &= \frac{1}{2}\xi\Big(1+\frac{1}{4}\xi^2\Big) + O(\xi^4).
    \end{align}
\end{Corollary}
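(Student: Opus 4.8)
The plan is to obtain the Corollary as an immediate specialization of Theorem \ref{SelfSimExpandxi}. Recall from (\ref{eta0}) that the exponent $\alpha$ is tied to the equation of state by $\alpha = \frac{4}{3(1+\sigma)}$, so the case $p=0$ (i.e.\ $\sigma = 0$) corresponds to $\alpha = \frac{4}{3}$, and the case $p = \frac{1}{3}\rho$ (i.e.\ $\sigma = \frac{1}{3}$) corresponds to $\alpha = 1$. Since Theorem \ref{SelfSimExpandxi} already supplies the $\xi$-expansions (\ref{Aform})--(\ref{vform}) of $A_\sigma$, $B_\sigma$, $D_\sigma = \sqrt{A_\sigma B_\sigma}$, $\kappa\rho_\sigma\bar{r}^2$ and $v_\sigma$ with $\alpha$ left as a free parameter, the entire content of the Corollary is the arithmetic of substituting these two values of $\alpha$ into those formulas.

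First I would record the relevant coefficient evaluations. For $A_\sigma$ one needs $\frac{\alpha^2}{4}$ and $\frac{\alpha^3}{8}$; at $\alpha = \frac{4}{3}$ these are $\frac{4}{9}$ and $\frac{8}{27}$, and at $\alpha = 1$ they are $\frac{1}{4}$ and $\frac{1}{8}$. For $B_\sigma$ one needs $\frac{\alpha(2-\alpha)}{4}$, which is $\frac{2}{9}$ at $\alpha = \frac{4}{3}$ and $\frac{1}{4}$ at $\alpha = 1$. For $D_\sigma$ one needs $\frac{\alpha(1-\alpha)}{4}$, which is $-\frac{1}{9}$ at $\alpha = \frac{4}{3}$ and $0$ at $\alpha = 1$, so in the latter case the $\xi^2$ term drops out and $D_{1/3} = 1 + O(\xi^4)$. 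For $\kappa\rho_\sigma\bar{r}^2$ one needs $\frac{3}{4}\alpha^2$ and $\frac{3}{8}\alpha^3$; at $\alpha = \frac{4}{3}$ these are $\frac{4}{3}$ and $\frac{8}{9}$, and at $\alpha = 1$ they are $\frac{3}{4}$ and $\frac{3}{8}$. Finally for $v_\sigma$ one needs $\frac{\alpha}{2}$ and $\frac{\alpha}{4}$; at $\alpha = \frac{4}{3}$ these are $\frac{2}{3}$ and $\frac{1}{3}$, and at $\alpha = 1$ they are $\frac{1}{2}$ and $\frac{1}{4}$. Substituting each of these into (\ref{Aform})--(\ref{vform}) yields exactly the displayed formulas in the two cases of the Corollary.

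Since every step is a direct substitution of a rational number into an already-proven asymptotic expansion, there is in fact no genuine obstacle; the only point requiring a moment's care is bookkeeping of the error orders, i.e.\ confirming that the $O(\xi^6)$ in $A_\sigma$ and $\kappa\rho_\sigma\bar{r}^2$ and the $O(\xi^4)$ in $B_\sigma$, $D_\sigma$ and $v_\sigma$ persist after fixing $\alpha$ — which is immediate because the implied constants in (\ref{Aform})--(\ref{vform}) depend smoothly on $\alpha$ through the functions $A_\sigma$, $B_\sigma$, $\dots$ and their Taylor remainders. Hence the Corollary follows. If one preferred a self-contained route, one could instead insert $\alpha = \frac{4}{3}$ or $\alpha = 1$ already in (\ref{Aformeta})--(\ref{vformeta}) together with the inversion (\ref{etaexpand}) of $\xi = \eta/\mathcal{F}(\eta)$ and re-expand in $\xi$ from scratch, reproducing the same five expansions in each case.
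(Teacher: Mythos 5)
Your proposal is correct and matches the paper's treatment: the Corollary is stated in the paper as an immediate specialization of Theorem \ref{SelfSimExpandxi}, obtained exactly by substituting $\alpha=\frac{4}{3}$ (for $p=0$) and $\alpha=1$ (for $p=\frac{1}{3}\rho$) into the expansions (\ref{Aform})--(\ref{vform}), and your coefficient evaluations all check out.
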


The proof of Theorem \ref{SelfSimExpandxi} is given in Section \ref{Appendix1B} below.

\section{The STV-PDE}\label{S6}

In this section we derive a new form of the Einstein field equations for spherically symmetric spacetimes in SSCNG coordinates $(t,r)$, that is, coordinates in which the metric takes the form (\ref{SSCintro}). We do this by re-expressing the Einstein field equations in terms of self-similar variables $(t,\xi)$, where $\xi=\frac{r}{t}$. We call the resulting equations the STV-PDE.\footnote{These were introduced by Smoller, Temple and Vogler in \cite{SmolTeVo}.} Since from here on we only work with solutions in SSCNG, for ease of notation, and for the rest of the paper, we drop the bars from the SSC, the notation employed in Sections \ref{S4} to \ref{S6}. There should be no confusion when we refer back to Sections \ref{S4} to \ref{S6} in which $(t,r)$ refers to Friedmann comoving coordinates and where bars appear on the SSCNG coordinates (to distinguish them from comoving coordinates).

We start with the equations for spherically symmetric solutions of the Einstein field equations $G=\kappa T$ in SSC, now denoted $(t,r)$, derived in \cite{groate}, and look to express them in terms of $(t,\xi)$ as the independent variables. Note that this is not the same as writing the Einstein field equations in $(t,\xi)$-coordinates. We then study the subset of solutions of these equations which meet the further condition that solutions be smooth at the center ($r=0$).

In the unbarred notation, a time dependent metric taking the SSC form is given by
\begin{align*}
    ds^2 = -B(t,r)dt^2 + \frac{dr^2}{A(t,r)} + r^2d\Omega^2,
\end{align*}
where
\begin{align*}
    d\Omega^2 = d\theta^2 + \sin^2\theta d\theta^2
\end{align*}
is the usual line element on the unit sphere, see \cite{wein}. Then according to \cite{groate}, three of the four Einstein field equations determined by $G=\kappa T$ are first order and one is second order. The first order equations are equivalent to:\footnote{Metric entries $(A,B)$ are related to $(\hat{A},\hat{B})$ in \cite{groate} by $A=\frac{1}{\hat{B}}$, $B=\hat{A}$.}
\begin{align}
    -r\frac{A_r}{A} + \frac{1-A}{A} &= \frac{\kappa B}{A}T^{00}r^2 = \frac{\kappa}{A}T^{00}_Mr^2,\label{firstorder1}\\
    \frac{A_t}{A} &= \frac{\kappa B}{A}T^{01}r = \kappa\sqrt{\frac{B}{A}}T^{01}_Mr,\label{firstorder2}\\
    r\frac{B_r}{B} - \frac{1-A}{A} &= \frac{\kappa}{A^2}T^{11}r^2 = \frac{\kappa}{A}T^{11}_Mr^2,\label{firstorder3}
\end{align}
and the the two conservation laws, $\nabla\cdot T=0$, are equivalent to:
\begin{align}
    \big(T^{00}_M\big)_t + \big(\sqrt{AB}T^{01}_M\big)_r &= -\frac{2}{r}\sqrt{AB}T^{01}_M,\label{conservation_law00}\\
    \big(T^{01}_M\big)_t + \big(\sqrt{AB}T^{11}_M\big)_r &= -\frac{1}{2}\sqrt{AB}\bigg(\frac{4}{r}T^{11}_M + \frac{1}{r}\Big(\frac{1}{A}-1\Big)\big(T^{00}_M-T^{11}_M\big) + \frac{2\kappa r}{A}\big(T^{00}_MT^{11}_M-(T^{01}_M)^2\big) - 4rT^{22}\bigg),\label{conservation_law01}
\end{align}
where $T_M$ is the Minkowski stress tensor defined by (see \cite{groate}):
\begin{align*}
    T^{00}_M &= BT^{00}, & T^{01}_M &= \sqrt{\frac{B}{A}}T^{01}, & T^{11}_M &= \frac{1}{A}T^{11}, & T^{22}_M &= T^{22}.
\end{align*}
Equations (\ref{conservation_law00})--(\ref{conservation_law01}) are redundant because $\nabla\cdot T=0$ follows from $G=\kappa T$ by the Bianchi identities. Moreover, to close the equations we must impose an equation of state \cite{groate}. In \cite{groate} it is shown that the Einstein field equations $G=\kappa T$ for metrics in SSC are (weakly) equivalent to (\ref{firstorder1}), (\ref{firstorder3}), (\ref{conservation_law00}) and (\ref{conservation_law01}), and equation (\ref{firstorder2}) is derivable from these.

In this paper we assume the equation of state
\begin{align}
    p &= \sigma\rho\label{psigmarho}
\end{align}
with constant $0\leq\sigma\leq1$. With this equation of state, the Minkowski stress tensors become:
\begin{align}
    T^{00}_M &= c^2\rho\left(\frac{1+\frac{\sigma^2}{c^2}}{1-\frac{v^2}{c^2}}-\frac{\sigma^2}{c^2}\right) = \rho\frac{1+\sigma^2v^2}{1-v^2},\label{sigmastress1}\\
    T^{01}_M &= c^2\rho\frac{1+\frac{\sigma^2}{c^2}}{1-\frac{v^2}{c^2}}\frac{v}{c},\label{sigmastress2}\\
    T^{11}_M &= c^2\rho\left(\frac{1+\frac{\sigma^2}{c^2}}{1-\frac{v^2}{c^2}}\frac{v^2}{c^2}+\frac{\sigma^2}{c^2}\right) = \rho\frac{\sigma^2+v^2}{1-v^2},\label{sigmastress3}\\
    T^{22}_M &= pg^{22} = \rho\frac{\sigma^2}{r^2},\label{sigmastress4}
\end{align}
which imply:
\begin{align}
    T^{00}_MT^{11}_M - \big(T^{01}_M\big)^2 &= c^2\rho \bigg(\frac{1+\sigma^2}{1-v^2}-\sigma^2\bigg)\bigg(\frac{1+\sigma^2}{1-v^2}v^2+\sigma^2\bigg) - \bigg(\frac{1+\sigma^2}{1-v^2}\bigg)^2v^2\notag\\
    &= \sigma^2\bigg(\frac{1+\sigma^2}{1-v^2}(1-v^2)\bigg) - \sigma^4 = \sigma^2\rho^2,\\
    T^{00}_M - T^{11}_M &= \rho\left(\frac{1+\sigma^2}{1-v^2}(1-v^2)-2\sigma^2\right) = (1-\sigma^2)\rho,\\
    T^{01}_M &= \rho\frac{1+\sigma^2}{1-v^2}v = T^{00}_M\frac{1+\sigma^2}{1-\sigma^2v^2}v.
\end{align}
Finally, define the self-similar variable 
\begin{align*}
    \xi := \frac{r}{t},
\end{align*}
the metric variable
\begin{align}
    D := \sqrt{AB},
\end{align}
and the rescaled density and velocity variables:
\begin{align}
    z &:= \kappa T^{00}_Mr^2,\label{definez}\\
    w &:= \frac{\kappa T^{01}_Mr^2}{\xi z},\label{definew}
\end{align}
respectively. By (\ref{sigmastress1}) and (\ref{sigmastress2}), we have
\begin{align}
    w = \frac{\Xi}{\xi},
\end{align}
where
\begin{align*}
    \Xi = \frac{1+\sigma^2}{1+\sigma^2v^2}v,
\end{align*}
so $\Xi=v$ when $\sigma=0$, the case we restrict to below. Assuming the mapping $(t,r)\to(t,\xi)$ is regular, the following theorem gives four equations in independent variables $(t,\xi)$ which are equivalent\footnote{By equivalent, we mean equivalent for smooth solutions under regular transformations of the independent and dependent variables, and where constraints are satisfied subject to appropriate initial boundary data, see \cite{groate}.} to (\ref{firstorder1}), (\ref{firstorder3}), (\ref{conservation_law00}) and (\ref{conservation_law01}). 

\begin{Thm}\label{thmsigma}
    Assume equation of state (\ref{psigmarho}). Then equations (\ref{firstorder1}), (\ref{firstorder3}), (\ref{conservation_law00}), and (\ref{conservation_law01}) are equivalent to the following four equations in unknowns $A(t,\xi)$, $D(t,\xi)$, $z(t,\xi)$ and $w(t,\xi)$:
    \begin{align}
        \xi A_\xi &= -z + (1-A),\label{Axi-final}\\
        \xi D_\xi &= \frac{D}{2A}\bigg(2(1-A)-(1-\sigma^2)\frac{1-v^2}{1+\sigma^2 v^2}z\bigg),\label{Dxi-final}\\
        tz_t + \xi\big((-1+Dw)z\big)_\xi &= -Dwz,\label{zeqnxi-final}\\
        tw_t + (-1+Dw)\xi w_\xi &- w + Dw^2 - \frac{\sigma^2}{\xi z}\bigg(D\Xi^2\frac{1-v^2}{1+\sigma^2}z\bigg)_\xi\notag\\
        &+ \frac{\sigma^2\xi}{z}\bigg(D\frac{1-v^2}{1+\sigma^2v^2}\frac{z}{\xi^2}\bigg)_\xi = \text{RHS},\label{weqnxi-final}
    \end{align}
    where
    \begin{align*}
        \text{RHS} = -\frac{1}{\xi^2}\frac{1-v^2}{1+\sigma^2v^2}\frac{D}{2A}\bigg((1-\sigma^2)(1-A)+2\sigma^2\frac{1-v^2}{1+\sigma^2v^2}z\bigg).
    \end{align*}
    Moreover, equations (\ref{Axi-final})--(\ref{weqnxi-final}) imply the two equations:
    \begin{align}
        tA_t + \xi A_\xi &= wz,\label{At-final}\\
        \xi\frac{B_\xi}{B} &= \frac{1}{A}\frac{\sigma^2+v^2}{1+\sigma^2v^2}z + \frac{1-A}{A},\label{Bxi-final}
    \end{align}
    which are equivalent to (\ref{firstorder2}) and (\ref{firstorder3}) respectively.
\end{Thm}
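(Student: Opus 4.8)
The plan is to prove Theorem~\ref{thmsigma} as a change of variables: rewrite the SSC field equations (\ref{firstorder1}), (\ref{firstorder3}), (\ref{conservation_law00}), (\ref{conservation_law01}) — together with the redundant equation (\ref{firstorder2}) — using $(t,\xi)$ with $\xi=r/t$ as independent variables, and re-express the unknowns through $z$, $w$ and $D=\sqrt{AB}$. Two ingredients are needed up front. First, the chain rule for the regular map $(t,r)\mapsto(t,\xi)$: since $\partial_r|_t=\tfrac1t\partial_\xi$ and $\partial_t|_r=\partial_t|_\xi-\tfrac{\xi}{t}\partial_\xi$, one has the identities $r\,\partial_r|_t=\xi\,\partial_\xi$ and $t\,\partial_t|_r=t\,\partial_t|_\xi-\xi\,\partial_\xi$, valid wherever $\xi=r/t$ is a regular coordinate (by the earlier theorems, at least for $|\xi|<\xi_0$). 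Second, the algebraic dictionary obtained from (\ref{definez})--(\ref{definew}) and (\ref{sigmastress1})--(\ref{sigmastress4}): $\kappa T^{00}_M r^2=z$, $\kappa\rho r^2=\tfrac{1-v^2}{1+\sigma^2v^2}z$, $\kappa T^{11}_M r^2=\tfrac{\sigma^2+v^2}{1+\sigma^2v^2}z$, $\kappa T^{01}_M r^2=\xi z w$, together with $T^{00}_MT^{11}_M-(T^{01}_M)^2=\sigma^2\rho^2$, $T^{00}_M-T^{11}_M=(1-\sigma^2)\rho$, and $\kappa r^2 T^{22}_M=\sigma^2(\kappa\rho r^2)/r^2$; recall also $\Xi=v$ and $w=v/\xi$ when $\sigma=0$.

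With this in hand, the two first-order constraints transform immediately. Multiplying (\ref{firstorder1}) by $A$ and using $r A_r=\xi A_\xi$ gives exactly (\ref{Axi-final}); using $r B_r/B=\xi B_\xi/B$ and the dictionary for $T^{11}_M$ turns (\ref{firstorder3}) into exactly (\ref{Bxi-final}), which already establishes its equivalence to (\ref{firstorder3}). Combining these two via $\tfrac{D_\xi}{D}=\tfrac12\tfrac{A_\xi}{A}+\tfrac12\tfrac{B_\xi}{B}$ and the elementary identity $\tfrac{\sigma^2+v^2}{1+\sigma^2v^2}-1=-\tfrac{(1-\sigma^2)(1-v^2)}{1+\sigma^2v^2}$ produces (\ref{Dxi-final}). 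Next, writing (\ref{conservation_law00}) as $\partial_t T^{00}_M+\tfrac1{r^2}\partial_r(r^2DT^{01}_M)=0$, substituting $r^2T^{00}_M=z/\kappa$ and $r^2DT^{01}_M=\xi Dwz/\kappa$, applying the chain rule and clearing the factor $\kappa r^2=\kappa\xi^2t^2$, the advective operator assembles into the characteristic combination $-1+Dw$ and one arrives at (\ref{zeqnxi-final}).

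The main obstacle is the remaining conservation law (\ref{conservation_law01}), which yields the $w$-equation (\ref{weqnxi-final}). The plan there is: substitute all four Minkowski-stress expressions and the product identities above, apply the chain rule, and then use the already-established constraints (\ref{Axi-final}) and (\ref{Bxi-final})/(\ref{Dxi-final}) to eliminate every occurrence of $A_\xi$ and $B_\xi$. What remains is a long but purely algebraic reorganization: isolating the terms linear in $w$ and in the advected derivative $\xi w_\xi$, and regrouping the $\sigma^2$-proportional contributions into the two divergence-type expressions $\tfrac{\sigma^2}{\xi z}\big(D\Xi^2\tfrac{1-v^2}{1+\sigma^2}z\big)_\xi$ and $\tfrac{\sigma^2\xi}{z}\big(D\tfrac{1-v^2}{1+\sigma^2v^2}\tfrac{z}{\xi^2}\big)_\xi$, with everything else collected into RHS. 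Keeping the bookkeeping honest — tracking which $\sigma^2$ terms originate from $T^{22}_M$, from $T^{11}_M$, and from the quadratic cross term $T^{00}_MT^{11}_M-(T^{01}_M)^2$ — is where the real work lies; the $\sigma=0$ specialization, which must collapse to the equations of Theorem~\ref{STV-PDE}, provides a running consistency check at each stage.

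Finally, for the two ``derived'' equations: (\ref{Bxi-final}) has already been obtained as the verbatim transform of (\ref{firstorder3}). For (\ref{At-final}), since \cite{groate} guarantees that (\ref{firstorder2}) follows from (\ref{firstorder1}), (\ref{firstorder3}), (\ref{conservation_law00}), (\ref{conservation_law01}), and since the chain-rule correspondence above is a bijection between those four equations and (\ref{Axi-final})--(\ref{weqnxi-final}), equation (\ref{At-final}) is nothing but (\ref{firstorder2}) rewritten in the $(t,\xi)$ variables: one transforms (\ref{firstorder2}) using $t\,\partial_t|_r=t\,\partial_t|_\xi-\xi\,\partial_\xi$ together with the dictionary entry $\kappa DT^{01}_M r=Dwz/t$, and then reduces with (\ref{Axi-final}) to reach the stated form. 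Equivalently — and this is the cleaner route internal to the new system — one obtains (\ref{At-final}) directly from (\ref{Axi-final})--(\ref{weqnxi-final}) by differentiating (\ref{Axi-final}) in $t$ and eliminating $tz_t$ using (\ref{zeqnxi-final}), the standard observation that the Hamiltonian-type constraint is propagated by the $z$-evolution. This completes the plan; the only genuinely laborious step is the reduction of (\ref{conservation_law01}).
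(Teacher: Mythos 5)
Your proposal follows essentially the same route as the paper's proof: the same chain-rule identities $r\partial_r=\xi\partial_\xi$ and $t\partial_t|_r=t\partial_t|_\xi-\xi\partial_\xi$, the same stress-tensor dictionary, the same direct transformation of (\ref{firstorder1}) and (\ref{firstorder3}) followed by combination via $2\xi DD_\xi=\xi A_\xi B+\xi B_\xi A$ to get (\ref{Dxi-final}), and the same divergence-form treatment of the conservation laws. The one step you gloss over is that passing from the conservation form of (\ref{conservation_law01}) in $\bar{v}z$ to the advection form in $w$ is not ``purely algebraic reorganization'': the paper substitutes $(\bar{v}z)_t=z\bar{v}_t-\bar{v}(\sqrt{AB}\bar{v}z)_r$ using the already-established $z$-equation — precisely the continuity-equation trick you invoke only later for (\ref{At-final}) — and this is the move that produces the characteristic combination $(-1+Dw)\xi w_\xi$ in (\ref{weqnxi-final}).
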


We call system (\ref{Axi-final})--(\ref{weqnxi-final}) the \emph{self-similar Einstein field equations}. Note again that (\ref{Axi-final})--(\ref{weqnxi-final}) are not the Einstein field equations in $(t,\xi)$ coordinates, but rather the Einstein field equations in SSC $(t,r)$, expressed in terms of variables $(t,\xi)$. That is, $A$, $B$ and $v$ are the metric components and invariant velocity in SSC $(t,r)$ coordinates, not $(t,\xi)$ coordinates.

The case $\sigma=0$ of Theorem \ref{thmsigma}, which is the basis for this paper, is stated in the following Corollary.

\begin{Corollary}
    Assume equation of state (\ref{psigmarho}) with $\sigma=0$ ($p=0$). Then equations (\ref{firstorder1}), (\ref{firstorder3}), (\ref{conservation_law00}) and (\ref{conservation_law01}) are equivalent to the following four equations in unknowns $A(t,\xi)$, $D(t,\xi)$, $z(t,\xi)$ and $w(t,\xi)$:
    \begin{align}
        \xi A_\xi &= -z + (1-A),\label{Axi-final0}\\
        \xi D_\xi &= \frac{D}{2A}\big(2(1-A)-(1-\xi^2w^2)z\big),\label{Dxi-final0}\\
        tz_t + \xi\big((-1+Dw)z\big)_\xi &= -Dwz,\label{zeqnxi-final0}\\
        tw_t + \xi(-1+Dw)w_\xi &= w - D\bigg(w^2+\frac{1}{2\xi^2}(1-\xi^2w^2)\frac{1-A}{A}\bigg).\label{weqnxi-final0}
    \end{align}
    Moreover, equations (\ref{Axi-final0})--(\ref{weqnxi-final0}) imply the two equations:
    \begin{align}
        tA_t + \xi A_\xi &= wz,\\
        \xi\frac{B_\xi}{B} &= \frac{\xi^2w^2}{A}z + \frac{1-A}{A},
    \end{align}
    which are equivalent to (\ref{firstorder2}) and (\ref{firstorder3}) respectively.
\end{Corollary}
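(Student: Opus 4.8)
The plan is to obtain this corollary as a direct specialization of Theorem~\ref{thmsigma} to $\sigma = 0$; since Theorem~\ref{thmsigma} already establishes the equivalence of its four equations with the SSC Einstein field equations (\ref{firstorder1}), (\ref{firstorder3}), (\ref{conservation_law00}), (\ref{conservation_law01}) for every constant $\sigma\in[0,1]$, no new analysis is needed. First I would record the simplifications that the hypothesis $\sigma=0$ forces on the auxiliary quantities of Theorem~\ref{thmsigma}: the velocity combination collapses to $\Xi = \frac{1+\sigma^2}{1+\sigma^2 v^2}v = v$, so the velocity variable reduces to $w = \Xi/\xi = v/\xi$, whence $v = w\xi$ and $1 - v^2 = 1 - \xi^2 w^2$. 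With these identities in hand the rest is substitution into (\ref{Axi-final})--(\ref{weqnxi-final}) and (\ref{At-final})--(\ref{Bxi-final}).

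Equation (\ref{Axi-final}) is $\sigma$-free and passes to (\ref{Axi-final0}) verbatim, and the continuity equation (\ref{zeqnxi-final}) likewise becomes (\ref{zeqnxi-final0}). In (\ref{Dxi-final}) the only $\sigma$-dependent factor is $(1-\sigma^2)\frac{1-v^2}{1+\sigma^2 v^2}$, which at $\sigma=0$ is $1-v^2 = 1-\xi^2 w^2$, giving (\ref{Dxi-final0}). The step requiring the most care is the $w$-equation (\ref{weqnxi-final}): both divergence terms carrying an explicit prefactor $\sigma^2$ (the $\Xi^2$ term and the $z/\xi^2$ term) vanish at $\sigma=0$, while the right-hand side reduces, via $\frac{1-v^2}{1+\sigma^2 v^2}\to 1-\xi^2 w^2$ and $(1-\sigma^2)(1-A)+2\sigma^2(\cdots)\to 1-A$, to $-\frac{1}{\xi^2}(1-\xi^2 w^2)\frac{D}{2A}(1-A)$; transposing the remaining $-w+Dw^2$ then gives exactly (\ref{weqnxi-final0}). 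For the two redundant equations, (\ref{At-final}) carries no $\sigma$, and in (\ref{Bxi-final}) the factor $\frac{\sigma^2+v^2}{1+\sigma^2 v^2}$ becomes $v^2 = \xi^2 w^2$, yielding $\xi\frac{B_\xi}{B} = \frac{\xi^2 w^2}{A}z + \frac{1-A}{A}$; their equivalence to (\ref{firstorder2}) and (\ref{firstorder3}) is inherited from Theorem~\ref{thmsigma}.

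Because the argument is entirely substitutional, I do not expect a genuine obstacle: the only point demanding attention is the bookkeeping inside (\ref{weqnxi-final}), where one must check that discarding the two $\sigma^2$-weighted divergence terms and collapsing $\text{RHS}$ really does reduce that long expression to the compact form (\ref{weqnxi-final0}), keeping the identification $1-v^2 = 1-\xi^2 w^2$ consistent throughout. For completeness I would also note that letting $\sigma\to 0$ is harmless on the region where the change of variables $(t,r)\to(t,\xi)$ is regular, since the discarded terms are smooth multiples of $\sigma^2$ there, so setting $\sigma=0$ commutes with all the manipulations used in the proof of Theorem~\ref{thmsigma}.
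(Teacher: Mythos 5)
Your proposal is correct and is exactly the paper's route: the Corollary is obtained by setting $\sigma=0$ in Theorem~\ref{thmsigma} (whose proof is given in Section~\ref{Appendix1C}), and your substitutions $\Xi=v$, $v=w\xi$, $1-v^2=1-\xi^2w^2$, the vanishing of the two $\sigma^2$-weighted divergence terms, and the collapse of $\text{RHS}$ all check out. No gaps.
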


The proof of Theorem \ref{thmsigma} is given in Section \ref{Appendix1C}.

The next theorem justifies our proposal that the ambient Euclidean coordinate system $\vec{x}=(x^0,x^1,x^2,x^3)=(t,x,y,z)$ associated with our spherical SSC system $(t,r)$ provides the coordinate system that imposes the correct smoothness condition for SSC solutions in a neighborhood of $r=0$.

\begin{Thm}\label{ThmsmoothAgain}
    Assume $A(t,\xi)$, $D(t,\xi)$, $z(t,\xi)$ and $w(t,\xi)$ are a given smooth solution of the $p=0$ equations (\ref{Axi-final0})--(\ref{weqnxi-final0}) satisfying:
    \begin{align}
        A &= 1 + O(\xi^2), & D &= 1 + O(\xi^2), & z &= O(\xi^2), & w &= w_0(t) + O(\xi^2),\label{ansatzassumption}
    \end{align}
    and assume that at $t=t_*>0$ the solution agrees with initial data:
    \begin{align*}
        A(t_*,\xi) &= \bar{A}(\xi), & D(t_*,\xi) &= \bar{D}(\xi), & z(t_*,\xi) &= \bar{z}(\xi), & w(t_*,\xi) &= \bar{w}(\xi),
    \end{align*}
    such that each initial data function $\bar{A}(\xi)$, $\bar{D}(\xi)$, $\bar{z}(\xi)$ and $\bar{w}(\xi)$ satisfies the condition that all odd derivatives vanish at $\xi=0$. Then all odd derivatives of $A(t,0)$, $D(t,0)$, $z(t,0)$ and $w(t,0)$ vanish for all $t>t_*$.
\end{Thm}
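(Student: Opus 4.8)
The plan is to argue by induction on the order of the $\xi$-derivative, using the structure of the self-similar equations \eqref{Axi-final0}--\eqref{weqnxi-final0} together with the fact that $\xi=0$ is a regular singular point for the $\xi$-evolution. First I would observe that the two constraint-type equations \eqref{Axi-final0} and \eqref{Dxi-final0} are ODEs in $\xi$ of the form $\xi A_\xi = G_A(A,z)$ and $\xi D_\xi = G_D(A,D,z,w,\xi)$ with smooth right-hand sides, and that \eqref{At-final} together with \eqref{Bxi-final} can be used to recover the $t$-evolution of $A$ and the $\xi$-dependence of $B$; so it suffices to control the odd $\xi$-derivatives of the pair $(z,w)$ at $\xi=0$ for all $t>t_*$, after which $A$ and $D$ follow from the constraints, since the right-hand sides $G_A, G_D$ are smooth functions of $(A,D,z,w)$ and even functions of $\xi$ (note $\xi$ appears only through $\xi^2 w^2$). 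The key observation is that evaluating $\xi A_\xi = -z+(1-A)$ at $\xi=0$ and differentiating repeatedly shows $\xi\,\partial_\xi^{k}A|_{\xi=0}=\big(\partial_\xi^{k-1}(\text{RHS})\big)|_{\xi=0}$ up to combinatorial factors, so that odd derivatives of $A$ at $\xi=0$ vanish if and only if the corresponding odd derivatives of $z$ do; similarly for $D$.

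The main step is then a bootstrap on the fluid equations. I would rewrite \eqref{zeqnxi-final0}--\eqref{weqnxi-final0} so as to isolate $tz_t$ and $tw_t$: the $z$-equation gives $t z_t = -Dwz - \xi\big((-1+Dw)z\big)_\xi$ and the $w$-equation gives $tw_t = w - D\big(w^2+\tfrac{1}{2\xi^2}(1-\xi^2 w^2)\tfrac{1-A}{A}\big) - \xi(-1+Dw)w_\xi$. Under the ansatz \eqref{ansatzassumption}, $z=O(\xi^2)$ and $1-A=O(\xi^2)$, so the apparently singular term $\tfrac{1}{2\xi^2}(1-\xi^2w^2)\tfrac{1-A}{A}$ is in fact a smooth function of $\xi$ with only even powers (this is exactly the content of the smoothness ansatz being propagated). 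Apply $\partial_\xi^{2m+1}$ to both equations and evaluate at $\xi=0$. On the left one gets $t\,\partial_t\big(\partial_\xi^{2m+1}z|_{\xi=0}\big)$ and $t\,\partial_t\big(\partial_\xi^{2m+1}w|_{\xi=0}\big)$. On the right, because every explicit occurrence of $\xi$ in the right-hand sides is through $\xi^2$ (equivalently, the right-hand sides are even in $\xi$ when $A,D,z,w$ are even), the odd-order $\xi$-derivative at $\xi=0$ of the right-hand side is a finite linear combination — with coefficients depending smoothly on the even derivatives of $(A,D,z,w)$ at $\xi=0$, which are bounded — of the odd derivatives $\partial_\xi^{2j+1}z|_{\xi=0}$, $\partial_\xi^{2j+1}w|_{\xi=0}$ for $j\le m$ (the term $\xi\,(\cdots)_\xi$ raises the derivative count by one net, so closes at order $m$), plus the lower-order $A,D$ odd derivatives, which by the constraints are themselves controlled by the $z$ odd derivatives of order $\le m$. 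Collecting these, the quantities
\begin{align*}
    \boldsymbol{\zeta}_m(t) := \big(\partial_\xi^{2m+1}z(t,0),\ \partial_\xi^{2m+1}w(t,0)\big)
\end{align*}
satisfy a closed linear system $t\,\dot{\boldsymbol{\zeta}}_m = L_m(t)\,\boldsymbol{\zeta}_m + \boldsymbol{b}_m(t)$, where $L_m$ and $\boldsymbol{b}_m$ are built from even derivatives of the solution and from $\boldsymbol{\zeta}_0,\dots,\boldsymbol{\zeta}_{m-1}$.

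The induction now runs: at the base, $m=0$, we have $\boldsymbol{\zeta}_0(t_*)=0$ by the hypothesis on the initial data $\bar z,\bar w$; the inhomogeneity $\boldsymbol{b}_0$ involves only the even derivatives of the solution at $\xi=0$ and no lower odd derivatives, but a careful inspection shows that every term in $\boldsymbol{b}_0$ carries at least one odd-$\xi$-derivative factor (from differentiating an odd number of times an even function, the genuinely new contributions all involve $\boldsymbol{\zeta}_0$ itself or vanish), so $\boldsymbol{b}_0\equiv 0$ whenever $\boldsymbol{\zeta}_0\equiv 0$; hence $t\dot{\boldsymbol{\zeta}}_0 = L_0\boldsymbol{\zeta}_0$ is a homogeneous linear ODE with zero initial data, and uniqueness for linear ODEs (valid for $t\ge t_*>0$, where $1/t$ is smooth) gives $\boldsymbol{\zeta}_0(t)\equiv 0$. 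For the inductive step, assume $\boldsymbol{\zeta}_0=\dots=\boldsymbol{\zeta}_{m-1}\equiv 0$ on $(t_*,\infty)$; then $\boldsymbol{b}_m$, which by construction is a sum of terms each containing at least one factor among $\boldsymbol{\zeta}_0,\dots,\boldsymbol{\zeta}_{m-1}$ (the remaining "purely even" pieces vanishing identically because an odd derivative of an even function at the origin is zero), is identically zero, and again $t\dot{\boldsymbol{\zeta}}_m=L_m\boldsymbol{\zeta}_m$ with $\boldsymbol{\zeta}_m(t_*)=0$ forces $\boldsymbol{\zeta}_m\equiv 0$. This yields vanishing of all odd $\xi$-derivatives of $z$ and $w$ at $\xi=0$ for all $t>t_*$; feeding this back through the constraints \eqref{Axi-final0}, \eqref{Dxi-final0} (and \eqref{Bxi-final} for $B$) gives the same for $A$ and $D$, completing the proof.

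The hard part will be the bookkeeping that makes the schematic claim "every genuinely new term in the odd-derivative equation carries an odd-derivative factor" precise — that is, verifying that when one applies $\partial_\xi^{2m+1}$ to the nonlinear right-hand sides and uses the Leibniz rule, every monomial in the expansion, after evaluation at $\xi=0$, either is a product in which at least one factor is an odd derivative of $z$, $w$, $A$ or $D$ (hence an element of some $\boldsymbol{\zeta}_j$, $j\le m$, or controlled by one via the constraints), or else vanishes because it contains a factor of the form $\partial_\xi^{\text{odd}}(\text{even function})|_{\xi=0}=0$. In particular the term $\tfrac{1}{2\xi^2}(1-\xi^2w^2)\tfrac{1-A}{A}$ must be handled with care: one writes $1-A=\xi^2\hat A$ with $\hat A$ determined smoothly from \eqref{Axi-final0} and the ansatz, so the whole term is $\tfrac12(1-\xi^2 w^2)\hat A/A$, manifestly even when $w,\hat A,A$ are even, and its odd derivatives at $\xi=0$ then reduce to the general pattern. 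Once this structural lemma is in hand, the induction and the linear-ODE uniqueness argument are routine.
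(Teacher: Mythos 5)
Your proposal is correct and follows essentially the same strategy as the paper's proof: induction on the order of the odd $\xi$-derivative, the Leibniz/parity observation that an odd-order derivative of a product evaluated at $\xi=0$ must contain an odd-order factor (since a sum of integers that is odd must contain an odd summand), reduction of the top-order odd derivatives of $z$ and $w$ to homogeneous linear ODEs in $t$ with zero initial data, and recovery of the $A$ and $D$ statements from the constraint equations. The only cosmetic differences are that you package $\partial_\xi^{2m+1}z$ and $\partial_\xi^{2m+1}w$ into a single $2\times2$ system where the paper treats them as decoupled scalar equations, and that you make the regularization of the $\frac{1-A}{2\xi^2 A}$ term slightly more explicit.
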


The proof of Theorem \ref{ThmsmoothAgain} is given in Section \ref{Appendix1D}.
 
\section{The STV-ODE}\label{S7}

To describe the evolution of solutions of equations (\ref{Axi-final0})--(\ref{weqnxi-final0}) near $\xi=0$ in SSCNG, we assume the following asymptotic ansatz:
\begin{align}
    A(t,\xi) &= 1 + A_2(t)\xi^2 + A_4(t)\xi^4 + O(\xi^6),\label{Aansatz}\\
    D(t,\xi) &= 1 + D_2(t)\xi^2 + O(\xi^4),\label{Dansatz}\\
    z(t,\xi) &= z_2(t)\xi^2 + z_4(t)\xi^4 + O(\xi^6),\label{zansatz}\\
    w(t,\xi) &= w_0(t) + w_2(t)\xi^2 + O(\xi^4).\label{wansatz}
\end{align}
Note that starting the expansion of $A$ and $D$ at unity forces $B=1$ at $\xi=0$, so this imposes the normalized gauge condition. Note also that we have included only even powers of $\xi$, which is equivalent to the assumption that the solution is smooth at $\xi=r=0$, see \cite{SmolTeVo}. As a special case, we have from Section \ref{S5} the following expansion of the $p=0$, $k=0$ Friedmann solution:
\begin{align}
    A_F(\xi) &= 1 + A_2^F\xi^2 + A_4^F\xi^4 + O(\xi^6),\label{AF}\\
    D_F(\xi) &= 1 + D_2^F\xi^2 + O(\xi^4),\label{DF}\\
    z_F(\xi) &= z_2^F\xi^2 + z_4^F\xi^4 + O(\xi^6),\label{zF}\\
    w_F(\xi) &= w_0^F + w_2^F\xi^2 + O(\xi^4),\label{wF}
\end{align}
with
\begin{align}
    \boldsymbol{U}_F := (z_2^F,w_0^F,z_4^F,w_2^F) = \bigg(\frac{4}{3},\frac{2}{3},\frac{40}{27},\frac{2}{9}\bigg)\label{wFk}
\end{align}
and:
\begin{align}
    A_2^F &= -\frac{1}{3}z_2^F = -\frac{4}{9},\label{AFk}\\
    A_4^F &= -\frac{1}{5}z_4^F = -\frac{8}{27},\\
    D_2^F &= -\frac{1}{12}z_2^F = -\frac{1}{9}.\label{DFk}
\end{align}
The time independence of the coefficients of powers of $\xi$ in (\ref{AF})--(\ref{wF}) reflects the fact that the $p=0$, $k=0$ Friedmann spacetime is self-similar in SSCNG coordinates \cite{smoltePNAS}. To see that the equations for the ansatz (\ref{Aansatz})--(\ref{wansatz}) close at every even power of $\xi$, and to obtain the equations, we substitute (\ref{Aansatz})--(\ref{wansatz}) into equations (\ref{Axi-final0})--(\ref{weqnxi-final0}) and collect like powers of $\xi$. The result up to order $O(\xi^6)$ in $z$ and order $O(\xi^4)$ in $w$ (which is $O(\xi^5)$ in velocity $v=\xi w$) is stated in the following theorem.

\begin{Thm}\label{ODEsForCorrections}
    Putting the ansatz (\ref{Aansatz})--(\ref{wansatz}) into equations (\ref{Axi-final0})--(\ref{weqnxi-final0}) and equating like powers of $\xi$ leads to the following autonomous ODE for $A_i=A_i(t)$, $D_i=D_i(t)$, $z_i=z_i(t)$ and $w_i=w_i(t)$:
    \begin{align}
        t\dot{z}_2 &= 2z_2 - 3z_2w_0,\label{z2finalequation}\\
        t\dot{w}_0 &= -\frac{1}{6}z_2 + w_0 - w_0^2,\label{w0finalequation}\\
        t\dot{z}_4 &= \frac{5}{12}z_2^2w_0 - 5w_0z_4 + 4z_4 - 5z_2w_2,\label{z4finalequation}\\
        t\dot{w}_2 &= -\frac{1}{24}z_2^2 + \frac{1}{4}z_2w_0^2 - \frac{1}{10}z_4 - 4w_0w_2 + 3w_2,\label{w2finalequation}
    \end{align}
    together with:
    \begin{align}
        A_2 &= -\frac{1}{3}z_2, & A_4 &= -\frac{1}{5}z_4, & D_2 &= -\frac{1}{12}z_2.\label{metricfluidrelations}
    \end{align}
\end{Thm}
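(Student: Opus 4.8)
The plan is to prove Theorem \ref{ODEsForCorrections} by direct substitution of the ansatz (\ref{Aansatz})--(\ref{wansatz}) into the four equations (\ref{Axi-final0})--(\ref{weqnxi-final0}), expanding every nonlinear combination as a formal power series in $\xi^2$, and matching coefficients order by order. Two preliminary bookkeeping facts carry most of the weight. First, neither (\ref{Axi-final0}) nor (\ref{Dxi-final0}) contains a $t$-derivative of $A$ or $D$, so matching powers of $\xi$ in these two equations yields purely \emph{algebraic} relations. In (\ref{Axi-final0}) the left side is $\xi A_\xi=\sum_{n\ge1}2nA_{2n}\xi^{2n}$ and the right side is $-z+(1-A)=-\sum_{n\ge1}(z_{2n}+A_{2n})\xi^{2n}$, so the coefficient of $\xi^{2n}$ gives $(2n+1)A_{2n}=-z_{2n}$, hence $A_2=-\tfrac13 z_2$ and $A_4=-\tfrac15 z_4$ (and in general $A_{2n}=-\tfrac1{2n+1}z_{2n}$), as in (\ref{metricfluidrelations}). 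For (\ref{Dxi-final0}) one expands $\tfrac1A=1-A_2\xi^2+O(\xi^4)$ and $1-\xi^2w^2=1-w_0^2\xi^2+O(\xi^4)$; the key point is that the bracket $\big(2(1-A)-(1-\xi^2w^2)z\big)$ is $O(\xi^2)$ because both $1-A$ and $z$ are, so $\tfrac{D}{2A}\big(2(1-A)-(1-\xi^2w^2)z\big)$ is a regular $\xi^2$-series of order $O(\xi^2)$, matching $\xi D_\xi$. Its $\xi^2$-coefficient, together with $A_2=-\tfrac13 z_2$, gives $2D_2=-\tfrac16 z_2$, i.e.\ $D_2=-\tfrac1{12}z_2$.

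The analogous regularity observation handles the only potentially singular term in the whole system, namely $\tfrac1{2\xi^2}(1-\xi^2w^2)\tfrac{1-A}{A}$ in (\ref{weqnxi-final0}). Since $\tfrac{1-A}{A}=-A_2\xi^2+O(\xi^4)$ --- this is precisely where the normalized-gauge/smoothness assumption $A=1+O(\xi^2)$ enters --- the product $(1-\xi^2w^2)\tfrac{1-A}{A}$ is $O(\xi^2)$, so dividing by $2\xi^2$ produces a regular series starting at $\xi^0$ with leading coefficient $-\tfrac{A_2}{2}=\tfrac16 z_2$. With this in hand I would expand the transport equations. Writing $-1+Dw=(w_0-1)+(w_2+D_2w_0)\xi^2+O(\xi^4)$, the terms $\xi\big((-1+Dw)z\big)_\xi$ and $\xi(-1+Dw)w_\xi$ contribute, at order $\xi^{2n}$ and $\xi^{2n-2}$ respectively, the highest-index coefficient $z_{2n}$ (resp.\ $w_{2n-2}$) multiplied by $2n(w_0-1)$ (resp.\ $2(n-1)(w_0-1)$), plus strictly lower-order products; the terms $tz_t$ and $tw_t$ contribute $t\dot z_{2n}$ and $t\dot w_{2n-2}$.

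Matching the $\xi^2$- and $\xi^4$-coefficients of (\ref{zeqnxi-final0}) then gives (\ref{z2finalequation}) and, after substituting $D_2=-\tfrac1{12}z_2$ to collapse the metric variable, equation (\ref{z4finalequation}). Matching the $\xi^0$- and $\xi^2$-coefficients of (\ref{weqnxi-final0}), and substituting $A_2=-\tfrac13 z_2$, $A_4=-\tfrac15 z_4$, $D_2=-\tfrac1{12}z_2$ to eliminate the metric coefficients, gives (\ref{w0finalequation}) and (\ref{w2finalequation}); this last substitution is where the rational coefficients are assembled, e.g.\ $-\tfrac12 A_2^2+\tfrac12 A_2D_2=-\tfrac1{18}z_2^2+\tfrac1{72}z_2^2=-\tfrac1{24}z_2^2$ and $-\tfrac12 w_0^2A_2-w_0^2D_2=\tfrac16 z_2w_0^2+\tfrac1{12}z_2w_0^2=\tfrac14 z_2w_0^2$. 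Finally I would record the closure statement that makes the resulting system autonomous: in (\ref{zeqnxi-final0}) at order $\xi^{2n}$ and in (\ref{weqnxi-final0}) at order $\xi^{2n-2}$, the unknowns $z_{2n}$, $w_{2n-2}$ enter linearly (through $tz_t$, $tw_t$, the constant part of $-1+Dw$ in the transport terms, and through $A_{2n}$, $D_{2n}$, which are themselves linear-plus-lower-order in $z_{2n}$ and the lower-order data), while every remaining term involves only $z_{2k},w_{2k-2}$ with $k<n$; hence the $n=2$ truncation is exactly the self-contained system (\ref{z2finalequation})--(\ref{w2finalequation}) with the relations (\ref{metricfluidrelations}), and the general case is the content of Theorem \ref{ThmSTVequations}.

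The obstacle here is not conceptual but computational: correctly Taylor-expanding the quotients $\tfrac{D}{2A}$, $\tfrac{1-A}{A}$ and the products $Dw$, $1-\xi^2w^2$, $(1-\xi^2w^2)\tfrac{1-A}{A}$ to the required order without sign or combinatorial errors, and then cleanly re-expressing the metric coefficients in terms of the fluid variables. The one genuinely delicate point, worth isolating as a short lemma, is the cancellation of the $\xi^{-2}$ prefactor in (\ref{weqnxi-final0}) --- that $\tfrac1{2\xi^2}(1-\xi^2w^2)\tfrac{1-A}{A}$ is a regular power series in $\xi^2$ --- since this is exactly where smoothness at $r=0$ (vanishing of odd $\xi$-derivatives, equivalently $A,D=1+O(\xi^2)$ and $z=O(\xi^2)$) is what makes the whole expansion, and hence the entire STV-ODE hierarchy, well defined.
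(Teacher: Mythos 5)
Your proposal is correct and takes essentially the same approach as the paper's proof: direct substitution of the ansatz into (\ref{Axi-final0})--(\ref{weqnxi-final0}), Taylor expansion of the quotients and products in even powers of $\xi$, extraction of the algebraic relations $A_2=-\tfrac13 z_2$, $A_4=-\tfrac15 z_4$, $D_2=-\tfrac1{12}z_2$ from the two metric equations, and coefficient matching at orders $\xi^2,\xi^4$ (for $z$) and $\xi^0,\xi^2$ (for $w$), with the cancellation of the $\tfrac1{2\xi^2}$ prefactor handled exactly as the paper does via $1-A=O(\xi^2)$. Your spot-checks of the assembled rational coefficients ($-\tfrac1{24}z_2^2$, $\tfrac14 z_2w_0^2$, $-\tfrac1{10}z_4$) agree with the paper's computation, so nothing further is needed.
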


We refer to system (\ref{z2finalequation})--(\ref{w2finalequation}) as the STV-ODE of order $n=2$.

The proof of Theorem \ref{ODEsForCorrections} is given in Section \ref{Appendix1E}.

Note that relations (\ref{AFk}) and (\ref{DFk}) between metric components and fluid variables at each order in the expansion of the $k=0$ Friedmann solution anticipates (\ref{metricfluidrelations}), which holds at each order in the expansion of general smooth solutions in powers of $\xi$ in SSCNG coordinates. This simplifies the ODE for the corrections significantly, as it reduces the number of unknowns from seven, to the four unknowns $(z_2,w_0,z_4,w_2)$.

\section{The Phase Portrait for The STV-ODE}\label{S8}

Recall that letting $\tau=\ln t$ gives\footnote{We use a dot to denote $\frac{d}{dt}$ and a prime to denote $\frac{d}{d\tau}$.}
\begin{align*}
    \frac{d}{d\tau} = t\frac{d}{dt},
\end{align*}
and this turns the ODE (\ref{z2finalequation})--(\ref{w2finalequation}) into an autonomous system in $\tau$. Letting 
\begin{align*}
    \boldsymbol{U} = (z_2,w_0,z_4,w_4),
\end{align*}
system (\ref{z2finalequation})--(\ref{w2finalequation}) takes the form
\begin{align*}
    \frac{d}{d\tau}\boldsymbol{U} = \boldsymbol{F}(\boldsymbol{U}),
\end{align*}
where
\begin{align}
    \frac{d}{d\tau}\boldsymbol{U} = t\left(\begin{array}{c}
    \dot{z}_2\\
    \dot{w}_0\\
    \dot{z}_4\\
    \dot{w}_2
	\end{array}\right) = \left(\begin{array}{l}
	2z_2-3z_2w_0\\
    -\frac{1}{6}z_2+w_0-w_0^2\\
    \frac{5}{12}w_0z_2^2+4z_4-5w_0z_4-5z_2w_2\\
    -\frac{1}{24}z_2^2+\frac{1}{4}z_2w_0^2-\frac{1}{10}z_4+3w_2-4w_0w_2
	\end{array}\right) =: \boldsymbol{F}(\boldsymbol{U}).\label{4by4system}
\end{align}
Since (\ref{4by4system}) is autonomous, it can be described by a phase portrait, and the phase portrait is essentially determined by the structure of its rest points. Writing the ODE in the order $(z_2,w_0,z_4,w_2)$ and solving $\boldsymbol{F}(\boldsymbol{U})=0$ for $\boldsymbol{U}$ gives the rest points:
\begin{align}
    SM &= \boldsymbol{U}_F = \bigg(\frac{4}{3},\frac{2}{3},\frac{40}{27},\frac{2}{9}\bigg), & M &= (0,1,0,0), & U &= (0,0,0,0).\label{restpoints}
\end{align}
Note that the first two equations in (\ref{4by4system}) close in variables $(z_2,w_0)$ to form the $2\times2$ system
\begin{align}
    \frac{d}{d\tau}\left(\begin{array}{c}
    z_2\\
    w_0
	\end{array}\right) = \left(\begin{array}{l}
    2z_2-3z_2w_0\\
    -\frac{1}{6}z_2+w_0-w_0^2
	\end{array}\right) =: \left(\begin{array}{c}f(z_2,w_0)\\
    g(z_2,w_0)\end{array}\right) =: \boldsymbol{f}(z_2,w_0),\label{2by2system}
\end{align}
and the rest points of (\ref{2by2system}) are the restriction of the rest points $SM$, $M$ and $U$ of the $4\times4$ system to the first two components. Regarding the rest point $SM$, we know from Section \ref{S5} that the $p=0$, $k=0$ Friedmann metric (the Standard Model) is self-similar in SSCNG, so we know ahead of time that its expansion in powers of $\xi$ in (\ref{AF})--(\ref{wF}) implies $\boldsymbol{U}_F$ must determine a rest point of system (\ref{4by4system}), with this rest point being $SM$. Thus $SM$ is the solution of (\ref{4by4system}) corresponding to the first two terms in the expansion of the Standard Model in even powers of $\xi$, with the evolution of perturbations of $SM$ described by nearby solutions of (\ref{4by4system}). As for the rest point $M$, we observe that $z=0$ and $w=1$ ($v=\xi$) solves the self-similar Einstein field equations (\ref{Axi-final0})--(\ref{weqnxi-final0}) exactly with $A=B=D=1$, so a rest point $M$ satisfying $w_0=1$ with all other coefficients being null must also be a rest point at every level of expansion of solutions of the self-similar Einstein field equations in even powers of $\xi$. We will see at the level of (\ref{2by2system}) and (\ref{4by4system}) that $SM$ is an unstable saddle rest point and $M$ is a stable rest point, although $M$ is also a stable rest point for all higher levels of approximation too. Moreover, the underdense side of the unstable manifold of $SM$ contains trajectories which connect $SM$ to $M$, one of which, together with its time translations $\tau\to\tau-\Delta_0$, corresponds to the one parameter family of $k<0$ Friedmann spacetimes. In Theorem \ref{Thmgeneralexpansion}, we prove that all trajectories that tend to $M$ in the $(z_2,w_0)$-plane also tend to $M$ at every order of expansion of smooth solutions of (\ref{Axi-final0})--(\ref{weqnxi-final0}). In the next section we begin the description of the phase portrait of the $2\times2$ system (\ref{2by2system}). This will play a basic role in the description of the phase portrait for (\ref{4by4system}), which will be discussed in the section after. Without confusion, we use $SM$, $M$ and $U$ to label the three rest points in the $2\times2$, $4\times4$ and general $2n\times2n$ phase portraits.

\subsection{Phase Portrait for the $2\times2$ System}

In this section our notation is to use $\boldsymbol{u}=(z_2,w_0)$, $\boldsymbol{v}=(z_4,w_2)$ and
\begin{align*}
    \boldsymbol{U} = (\boldsymbol{u},\boldsymbol{v}) = (z_2,w_0,z_4,w_2).
\end{align*}
Now equations (\ref{z2finalequation})--(\ref{w2finalequation}) close at each even order of $\xi$, so to begin, we describe the phase portrait for the two equations (\ref{z2finalequation}) and (\ref{w0finalequation}) in $\boldsymbol{u}=(z_2,w_0)$, given by (\ref{2by2system}). Solutions of $2\times2$ autonomous ODE are characterized by their phase portrait. For this, observe first that
\begin{align*}
    f(z_2,w_0) = 2z_2 - 3z_2w_0 = 0
\end{align*}
gives the $z_2$ contour as $w_0=\frac{2}{3}$ or $z_2=0$, and
\begin{align*}
    g(z_2,w_0) = -\frac{1}{6}z_2 + w_0 - w_0^2 = 0
\end{align*}
gives the $w_0$ contour as $z_2=6w_0(1-w_0)$. Thus in the $(z_2,w_0)$-plane, the contours intersect in the three rest points of the system:
\begin{align*}
    SM &= \bigg(\frac{4}{3},\frac{2}{3}\bigg), & M &= (0,1), & U &= (0,0).
\end{align*}
Here $(z_2^F,w_0^F)=(\frac{4}{3},\frac{2}{3})$ are the first two components of $\boldsymbol{U}_F$, so for notational convenience (and to be consistent with \cite{SmolTeVo}), we denote by $SM$ the rest point corresponding to $\boldsymbol{U}_F$ in both the $2\times2$ and $4\times4$ systems (\ref{2by2system}) and (\ref{4by4system}) respectively. We now show that $SM$ (for Standard Model) is an unstable saddle rest point, $M$ (for Minkowski) is a stable rest point and $U$ is a fully unstable rest point.

The Jacobian, $d\boldsymbol{f}$, of $\boldsymbol{f}(z_2,w_0)$ is given by
\begin{align*}
    d\boldsymbol{f}(z_2,w_0) = \left(\begin{array}{cc}
    2-3w_0 & -3z_2\\
    -\frac{1}{6} & 1-2w_0
	\end{array}\right).
\end{align*}
At the rest point $SM$, the Jacobian is
\begin{align*}
    d\boldsymbol{f}\left(\frac{4}{3},\frac{2}{3}\right) = \left(\begin{array}{cc}0 & -4\\
    -\frac{1}{6} & -\frac{1}{3}\end{array}\right)
\end{align*}
and the eigenpairs are given by:
\begin{align*}
    \lambda_{A1} &= \frac{2}{3}, & \boldsymbol{R}_{A1} &= \left(\begin{array}{c}9\\
    -\frac{3}{2}\end{array}\right); & \lambda_{B1} &= -1, & \boldsymbol{R}_{B1} &= \left(\begin{array}{c}4\\
    1\end{array}\right).
\end{align*}
At the rest point $M$, the Jacobian is
\begin{align*}
    d\boldsymbol{f}(0,1) = \left(\begin{array}{cc}-1 & 0\\
    -\frac{1}{6} & -1\end{array}\right).
\end{align*}
The rest point $M$ has the double eigenvalue $\lambda_M=-1$, which has a resonant Jordan normal form with a one-dimensional eigenspace:
\begin{align*}
    \lambda_M &= -1, & \boldsymbol{R}_M &= \left(\begin{array}{c}0\\
    1\end{array}\right).
\end{align*}
Finally, at the rest point $U$, the Jacobian is
\begin{align*}
    d\boldsymbol{f}(0,0) = \left(\begin{array}{cc}2 & 0\\
    -\frac{1}{6} & 1\end{array}\right)
\end{align*}
and the eigenpairs are given by:
\begin{align*}
    \lambda_{U1} &= 1, & \boldsymbol{R}_{U1} &= \left(\begin{array}{c}0\\
    1\end{array}\right); & \lambda_{U2} &= 2, & \boldsymbol{R}_{U2} &= \left(\begin{array}{c}-6\\
    1\end{array}\right).
\end{align*}
The phase portrait for system (\ref{2by2system}) is depicted in Figure \ref{Figure1}. The main feature is that the rest point $SM$, which corresponds to the Standard Model $p=0$, $k=0$ Friedmann spacetime, is an unstable saddle rest point. Note first that $z_2=0$ is a solution trajectory of (\ref{2by2system}), so $z_2\geq0$ is an invariant region, since solution trajectories never cross in autonomous systems. Thus the solutions in the stable manifold of $M$ consist of all trajectories to the left of the two trajectories in the stable manifold of $SM$ in the $(z_2,w_0)$-plane, and to the right of $z_2=0$. These include all smooth radial underdense perturbations of $SM$, and all of these solutions enter $M$ asymptotically from below, along the eigendirection $\boldsymbol{R}_M=(0,1)^T$, that is, parallel to the $w_0$-axis. The unstable manifold of $SM$ thus has two components: The trajectories that connect the rest point $SM$ to the stable rest point $M$ on the underdense (smaller $z_2$) side of $SM$, and the trajectories on the overdense (larger $z_2$) side of $SM$, which continue to larger values of $z$ until they hit $w_0=0$. We show in the next section that these two components of the unstable manifold of $SM$ correspond to the $p=0$, $k\neq0$ Friedmann spacetimes to leading order in $\xi$. To complete the picture, the stable manifold of $SM$ on the underdense side is a single trajectory connecting the rest point $U$ to $SM$, and the stable manifold of $SM$ on the overdense side is a trajectory which goes off to infinity.

\subsection{The $2\times2$ Unstable Manifold of $SM$ is $k=-1$ Friedmann}

The $k=-1$ Friedmann family of solutions is parameterized by $\Delta_0>0$, and each one solves the STV-PDE exactly. It follows that the leading order term in the expansion of the $k=-1$ Friedmann solutions, that is, the projection of this family of solutions onto the $(z_2,w_0)$-plane, will produce a family of exact solutions $(z_2(t),w_0(t))$ of system (\ref{2by2system}), also parameterized by $\Delta_0$. The next theorem gives an implicit expression for these solutions. From this expression we find that each $k=-1$ Friedmann solution moves from the rest point $SM$ to the rest point $M$ as $\tau$ ranges from $-\infty$ to $+\infty$. It follows that this solution provides an exact expression for the portion of the unstable manifold of $SM$ consisting of the trajectory which connects $SM$ to $M$, and the parameter $\Delta_0>0$ represents time translation $\tau\to\tau-\tau_0$ with $\tau_0=\Delta_0$. In this way the $k=-1$ Friedmann family provides an exact formula for the portion of the unstable manifold of $SM$ given by the trajectory which connects $SM$ to $M$ in the phase portrait of the $2\times2$ system (\ref{2by2system}).

\begin{Thm}\label{2x2connecting orbit} 
    The expansion of the $k=-1$ Friedmann solution (with parameter $\Delta_0>0$) in even powers of $\xi$ produces the following implicit formulas for $z_2(t)$ and $w_0(t)$ in terms of $\theta=\theta(t)$:
    \begin{align}
        z_2 &= \tilde{z}_2(\theta) = \frac{6(\sinh2\theta-2\theta)^2}{(\cosh2\theta-1)^3} = -3A_2,\label{z2FinalAgain2by2}\\
        w_0 &= \tilde{w}_0(\theta) = \frac{(\sinh2\theta-2\theta)\sinh2\theta}{(\cosh2\theta-1)^2},\label{w0FinalAgain2by2}
    \end{align}
    where $\theta\geq0$ is defined as a function of $t\geq0$ through the relation
    \begin{align}
        \frac{t}{\Delta_0} = \frac{1}{2}(\sinh2\theta-2\theta).\label{Theta22by2}
    \end{align}
    Moreover, equation (\ref{Theta22by2}) inverts to define the inverse function
    \begin{align*}
        \Theta:(0,\infty) &\to (0,\infty), & \theta(t) &= \Theta\Big(\frac{t}{\Delta_0}\Big),
    \end{align*}
    and in terms of $\Theta$ defined by (\ref{Theta2}), the $p=0$, $k=-1$ expansion of Friedmann for general $\Delta_0>0$ is given by:
    \begin{align}
        z_2^F(t) &= \tilde{z}_2\bigg(\Theta\Big(\frac{t}{\Delta_0}\Big)\bigg),\label{twithDelta12by2}\\
        w_0^F(t) &= \tilde{w}_0\bigg(\Theta\Big(\frac{t}{\Delta_0}\Big)\bigg).\label{twithDelta22by2}
    \end{align}
    Furthermore, for every $\Delta_0>0$ we have:
    \begin{align}
        \lim_{t\to0}\big(z_2^F(t),w_0^F(t)\big) &= SM = \bigg(\frac{4}{3},\frac{2}{3}\bigg),\label{limitsFriedmann2by2-1}\\
        \lim_{t\to\infty}\big(z_2^F(t),w_0^F(t)\big) &= M =(0,1).\label{limitsFriedmann2by2-2}
    \end{align}
\end{Thm}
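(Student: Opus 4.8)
The plan is to read off $z_2(t)$ and $w_0(t)$ as the leading Taylor coefficients in $\xi$ of the exact SSCNG expressions for the $k=-1$ Friedmann solution already obtained in Theorem \ref{ThmLog-time translation}, then simplify with hyperbolic identities and check the two endpoint limits. Recall from the definitions (\ref{definez})--(\ref{definew}) and the stress components (\ref{sigmastress1})--(\ref{sigmastress2}) that for $\sigma=0$ we have $z=\frac{\kappa\rho r^2}{1-v^2}$ and $w=\frac{v}{\xi}$, while (\ref{Apm}) gives $1-A=\frac{\kappa}{3}\rho r^2$; hence $z=\frac{3(1-A)}{1-v^2}$ along any Friedmann solution. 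Theorem \ref{ThmLog-time translation} provides $1-A=A_F(\bar\chi,\xi)=\frac{8\bar\chi^2\xi^2}{(\cosh 2\Theta(\chi)-1)^3}$ and $v=v_F(\bar\chi,\xi)=\frac{(\bar\chi\xi/\sinh^2\Theta(\chi))\coth\Theta(\chi)}{\sqrt{1+\bar\chi^2\xi^2/\sinh^4\Theta(\chi)}}$, where $\chi=\chi(\bar\chi,\xi)$ is the function determined implicitly by (\ref{formulaFortMappingTime}) and, in the notation of this section (bars dropped), $\bar\chi=\frac{t}{\Delta_0}$ is the rescaled SSCNG time.

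First I would verify that $\chi(\bar\chi,\cdot)$ is a smooth \emph{even} function of $\xi$ with $\chi(\bar\chi,0)=\bar\chi$. This follows from the form of the defining relation in the proof of Lemma \ref{TimeByChi1}: writing $\Theta=\Theta(\chi)$ and $\bar\Theta=\Theta(\bar\chi)$, equation (\ref{formulaFort}) is equivalent to $-\cosh^4\bar\Theta+\cosh^4\Theta+\coth^4\Theta\,\bar\chi^2\xi^2=0$, which depends on $\xi$ only through $\xi^2$; since the $\Theta$-derivative of the left-hand side equals $4\frac{\cosh^3\Theta}{\sinh\Theta}(\sinh\Theta+r)(\sinh\Theta-r)$, which is positive precisely when $A>0$ (Lemma \ref{TimeByChi1}), the implicit function theorem gives $\chi$ as a smooth even function of $\xi$ on the region of validity, and putting $\xi=0$ gives $\chi(\bar\chi,0)=\bar\chi$. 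Consequently, to extract the $\xi^2$ coefficient of $z$ and the $\xi^0$ coefficient of $w$ it suffices to replace $\Theta(\chi)$ by $\theta:=\Theta(\bar\chi)$ in the formulas above: this yields $1-A=\frac{8\bar\chi^2}{(\cosh 2\theta-1)^3}\xi^2+O(\xi^4)$ and $1-v^2=1+O(\xi^2)$, so $z=\frac{24\bar\chi^2}{(\cosh 2\theta-1)^3}\xi^2+O(\xi^4)$, while $v=\frac{\bar\chi\coth\theta}{\sinh^2\theta}\xi+O(\xi^3)$ gives $w_0=\frac{\bar\chi\coth\theta}{\sinh^2\theta}$.

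Next I would substitute $\bar\chi=\frac12(\sinh 2\theta-2\theta)$ — which is exactly (\ref{Theta22by2}), the $k=-1$ relation (\ref{kminusone}) with $t/\Delta_0=\bar\chi$ — so that $\bar\chi^2=\frac14(\sinh 2\theta-2\theta)^2$, and reduce using $\cosh 2\theta-1=2\sinh^2\theta$ and $\sinh 2\theta=2\sinh\theta\cosh\theta$. One obtains $z_2=\frac{6(\sinh 2\theta-2\theta)^2}{(\cosh 2\theta-1)^3}$ and $w_0=\frac{\bar\chi\coth\theta}{\sinh^2\theta}=\frac{(\sinh 2\theta-2\theta)\cosh\theta}{2\sinh^3\theta}=\frac{(\sinh 2\theta-2\theta)\sinh 2\theta}{(\cosh 2\theta-1)^2}$, with $-3A_2=z_2$ consistent with (\ref{metricfluidrelations}). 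That $(z_2^F,w_0^F)$ solves the $2\times2$ system (\ref{2by2system}) is then automatic: the $k=-1$ Friedmann spacetime is a smooth solution of the STV-PDE, so by Theorem \ref{ThmSTVequations} and the construction of the STV-ODE by expansion in even powers of $\xi$, its leading-order pair solves the STV-ODE of order $n=1$; alternatively one can check (\ref{z2finalequation})--(\ref{w0finalequation}) directly by the chain rule with $\frac{dt}{d\theta}=\Delta_0(\cosh 2\theta-1)$. Finally, $\theta\mapsto\bar\chi=\frac12(\sinh 2\theta-2\theta)$ is a strictly increasing bijection of $(0,\infty)$ onto itself (its derivative $\cosh 2\theta-1$ is positive), so $t=\Delta_0\bar\chi$ ranges over all of $(0,\infty)$, i.e. $\tau=\ln t$ over all of $\mathbb{R}$, and changing $\Delta_0$ merely rescales the argument $t$ of $\Theta$, i.e. translates $\tau$; the endpoint limits follow from the Taylor expansions $\sinh 2\theta-2\theta=\frac43\theta^3+O(\theta^5)$, $\cosh 2\theta-1=2\theta^2+O(\theta^4)$, $\sinh 2\theta=2\theta+O(\theta^3)$ as $\theta\to0^+$, giving $(z_2^F,w_0^F)\to(\frac43,\frac23)=SM$, and from $\sinh 2\theta\sim\cosh 2\theta\sim\frac12 e^{2\theta}$ as $\theta\to\infty$, giving $(z_2^F,w_0^F)\to(0,1)=M$.

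I expect the only genuine obstacle to be the justification in the second paragraph — that the $\xi$-expansion of the exact SSCNG Friedmann formulas may be computed term by term simply by freezing $\chi$ at $\bar\chi$, which rests on the smoothness and evenness of the implicitly defined $\chi(\bar\chi,\xi)$ and on $\chi(\bar\chi,0)=\bar\chi$; this is precisely where Lemma \ref{TimeByChi1} and the non-degeneracy condition $A>0$ enter. Everything downstream is hyperbolic-identity bookkeeping and elementary limits.
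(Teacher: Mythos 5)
Your proposal is correct and follows essentially the same route as the paper: the paper proves this theorem as the leading-order case of Theorem \ref{Friedmannknotzeroexpansion}, whose proof in Section \ref{Appendix1F} likewise reads off $A_2(\bar t)=-\Delta_0\bar t^2/R^3(\bar t)$ (equivalently $z_2=-3A_2$) and $w_0=H(\bar t)\bar t$ from the exact SSCNG formulas by evaluating at $\xi=0$ where $t=\bar t$, then rewrites them in $\theta$ and obtains the limits from the same small-$\theta$ and large-$\theta$ asymptotics (Corollary \ref{CorLimits}). Your only addition is a more explicit justification, via Lemma \ref{TimeByChi1} and the implicit function theorem, that $\chi(\bar\chi,\xi)$ is even and smooth with $\chi(\bar\chi,0)=\bar\chi$, which the paper compresses into the statement $t=\bar t+O(\xi^2)$.
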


Note that by (\ref{z2FinalAgain2by2}) and (\ref{w0FinalAgain2by2}), $\big(\tilde{z}_2(\Theta(t)),\tilde{w}_0(\Theta(t))\big)$ represents the $k=-1$ Friedmann solution in the case $\Delta_0=\frac{4}{9}$.

\begin{proof}
    That (\ref{z2FinalAgain2by2}) and (\ref{w0FinalAgain2by2}) is an exact solution of system (\ref{2by2system}) follows directly from the fact that it agrees with the leading order expansion of an exact solution of the STV-PDE, namely, the $k=-1$ Friedmann solution. Also, that (\ref{z2FinalAgain2by2}) and (\ref{w0FinalAgain2by2}) describe the connecting orbit which takes $SM$ to $M$ follows from (\ref{limitsFriedmann2by2-1}) and (\ref{limitsFriedmann2by2-2}) by a simple calculation. Theorem \ref{2x2connecting orbit} is a special case of Theorem \ref{Friedmannknotzeroexpansion}, whose proof is given in Sections \ref{Appendix1F} and \ref{Appendix1G}.
\end{proof}

In summary, the phase portrait for system (\ref{2by2system}) consists of three rest points: $U$, $SM$ and $M$, and the connecting orbit between $SM$ and $M$ is the projection of the $k=-1$ Friedmann solution onto the leading order $(z_2,w_0)$-plane, with this trajectory described exactly by (\ref{twithDelta12by2}) and (\ref{twithDelta22by2}). The region $z_2\geq0$ is an invariant region because $z_2=0$ solves (\ref{2by2system}) and solution trajectories never cross in autonomous systems. The stable manifold of $M$ consists of all trajectories to the left of the two trajectories in the stable manifold of $SM$ in the $(z_2,w_0)$-plane and to the right of $z_2=0$. All of these solutions, including $(z_2^F(t),w_0^F(t))$, enter $M$ asymptotically, from below, along the eigendirection parallel to the $w_0$-axis. We now quantify this decay to $M$ with estimates, and prove that, under appropriate time translation, all solutions entering $M$ converge to $k=-1$ Friedmann at a faster rate than they converge to $M$ for each fixed $r$ (but not fixed $\xi$) as $t\to\infty$.

To study decay to $M=(0,1)$, let $x=z_2$, $y=1-w_0$ and write system (\ref{2by2system}) in the equivalent form
\begin{align*}
    \frac{d}{d\tau}\left(\begin{array}{c}x\\
    y\end{array}\right) = \left(\begin{array}{l}-x+3xy\\
    \frac{1}{6}x-y+y^2\end{array}\right) = \boldsymbol{f}(x,y).
\end{align*}
Separating the linear and nonlinear parts, we obtain the equivalent system in matrix form
\begin{align}
    \frac{d}{d\tau}\left(\begin{array}{c}x\\
    y\end{array}\right) = \left(\begin{array}{cc}-1 & 0\\
    \frac{1}{6} & -1\end{array}\right)\left(\begin{array}{c}x\\
    y\end{array}\right) + y\left(\begin{array}{cc}3 & 0\\
    0 & 1\end{array}\right)\left(\begin{array}{c}x\\
    y\end{array}\right).\label{2by2systemxy}
\end{align}
Note that the first matrix is $d\boldsymbol{f}(0,1)$, where $(x,y)=(0,0)$ is a non-degenerate rest point representing $M$ in the $(x,y)$-plane, so this rest point has the double eigenvalue $\lambda_M=-1$ and a resonant Jordan normal form with single eigenvector $\boldsymbol{R}_M=(0,1)^T$. By the Hartman--Grobman theorem, solutions of nonlinear autonomous systems entering a non-degenerate rest point, look asymptotically, to within quadratic errors $|\boldsymbol{U}|^2=|(x,y)|^2$, like the corresponding solution of the linear system.

Linearizing around the rest point $U=(0,0)$ of (\ref{2by2systemxy}), we obtain the linear system
\begin{align}
    \frac{d}{d\tau}\left(\begin{array}{c}x\\
    y\end{array}\right) = \left(\begin{array}{cc}-1 & 0\\
    \frac{1}{6} & -1\end{array}\right)\left(\begin{array}{c}x\\
    y\end{array}\right).\label{2by2systemxylin}
\end{align}
The solution $(\bar{x},\bar{y})$ of the linear system (\ref{2by2systemxylin}), satisfying initial data:
\begin{align}
    \bar{x}(\tau_*) &= x_*, & \bar{y}(\tau_*) &= y_*,\label{initialdata}
\end{align}
is
\begin{align}
    \bar{x}(\tau) &= x_*e^{-\tau}, & \bar{y}(\tau) &= \bigg(e^{\tau_*}y_*+\frac{1}{6}x_*(\tau-\tau_*)\bigg)e^{-\tau}.
\end{align}
Thus, assuming without loss of generality that $\tau_*\geq1$, by the Hartman--Grobman theorem, we can initially conclude that the solution $\boldsymbol{U}(\tau)=(x(\tau),y(\tau))$ of the nonlinear system (\ref{2by2systemxy}) satisfying the same initial condition (\ref{initialdata}), is given by:
\begin{align}
    x(\tau) &= a(\tau)e^{-\tau},\label{firstx}\\
    y(\tau) &= b(\tau)\tau e^{-\tau},\label{firsty}
\end{align}
where $a(\tau)$ and $b(\tau)$ are uniformly bounded, that is,
\begin{align}
    |a(\tau)| &\leq C, & |b(\tau)| &\leq C,\label{abbounds}
\end{align}
for some constant $C$ depending only on $x_*$ and $y_*$. That is, a smooth trajectory is bounded on any compact interval $[\tau_*,\tau]$, so the uniformity of $C$ over all $\tau\geq\tau_*$ follows from the decay to $M$ as $\tau\to\infty$. The assumption $\tau_*\geq1$, which is no loss of generality in light of the time translation invariance of autonomous systems of ODE, simplifies formulas by allowing us to bound constants by $\tau$. To obtain a more refined estimate, we use the Hartman--Grobman theorem, which implies that a solution $\boldsymbol{U}(\tau)=(x(\tau),y(\tau))$ of the nonlinear system (\ref{2by2systemxy}) satisfying the same initial condition (\ref{initialdata}), satisfies
\begin{align*}
    \boldsymbol{U} = \bar{\boldsymbol{U}} + O\big(|\boldsymbol{U}|^2\big).
\end{align*}
This, together with (\ref{firstx}) and (\ref{firsty}), translates into:
\begin{align}
    x(\tau) &= x_*e^{-\tau} + a^2(\tau)e^{-2\tau},\label{nextestx}\\
    y(\tau) &= \bigg(e^{\tau_*}y_* + \frac{1}{6}x_*(\tau-\tau_*)\bigg)e^{-\tau} + b^2(\tau)e^{-2\tau}.\label{nestesty}
\end{align}
Thus $y(\tau)$ decays to $M$ by a factor of $\tau$ slower than $x(\tau)$ due to the resonant double eigenvalue of the rest point $M$. The next lemma shows that the difference between two solutions, under appropriate time translation, decays faster. We use this to establish that, at leading order, solutions tending to $M$ decay to the $k=-1$ Friedmann solutions by a factor of $\tau$ faster than they decay to $M$.

\begin{Lemma}\label{diffxylemma}
    Let $\boldsymbol{U}_1(\tau)=(x_1(\tau),y_1(\tau))$ and $\boldsymbol{U}_2(\tau)=(x_2(\tau),y_2(\tau))$ be two solutions of (\ref{2by2systemxy}) satisfying (\ref{firstx})--(\ref{abbounds}) with initial conditions:
    \begin{align*}
        \boldsymbol{U}_1(\tau_*) &= \boldsymbol{U}_1^* = (x_1^*,y_1^*), & \boldsymbol{U}_2(\tau_*) &= \boldsymbol{U}_2^* = (x_2^*,y_2^*),
    \end{align*}
    respectively, where
    \begin{align*}
        x_1^* = x_2^* = x^*
    \end{align*}
    and, without loss of generality, assume $\tau_*\geq1$. Then there exists a constant $C$, depending only on $x^*$, $y_1^*$, $y_2^*$ and $\tau_*$, such that
    \begin{align}
        |\boldsymbol{U}_1-\boldsymbol{U}_2| \leq Ce^{-\tau}\label{Ufasterrate}
    \end{align}
    for all $\tau\geq\tau_*\geq1$.
\end{Lemma}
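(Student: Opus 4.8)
The plan is to subtract the two solutions, write down the ODE satisfied by the difference $\boldsymbol{V}(\tau) := \boldsymbol{U}_1(\tau) - \boldsymbol{U}_2(\tau)$, and show that although $\boldsymbol{V}$ \emph{a priori} inherits the $O(\tau e^{-\tau})$ decay of its $y$-component from \eqref{nestesty}, the matched initial condition $x_1^* = x_2^* = x^*$ kills the resonant $\tfrac16 x^*(\tau-\tau_*)e^{-\tau}$ term that is responsible for the extra factor of $\tau$. First I would use the matrix form \eqref{2by2systemxy}: both $\boldsymbol{U}_1$ and $\boldsymbol{U}_2$ satisfy $\tfrac{d}{d\tau}\boldsymbol{U} = L\boldsymbol{U} + N(\boldsymbol{U})$ with $L = d\boldsymbol{f}(0,1) = \left(\begin{smallmatrix} -1 & 0 \\ \tfrac16 & -1 \end{smallmatrix}\right)$ and $N$ the quadratic term $y\left(\begin{smallmatrix} 3 & 0 \\ 0 & 1 \end{smallmatrix}\right)\boldsymbol{U}$. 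Subtracting gives
\begin{align}
    \frac{d}{d\tau}\boldsymbol{V} = L\boldsymbol{V} + \big(N(\boldsymbol{U}_1) - N(\boldsymbol{U}_2)\big),
\end{align}
and since $N$ is quadratic with $\boldsymbol{U}_i(\tau) = O(\tau e^{-\tau})$ (hence $O(e^{-(1-\epsilon)\tau})$ for any small $\epsilon>0$), the difference $N(\boldsymbol{U}_1) - N(\boldsymbol{U}_2)$ is bounded by $C'|\boldsymbol{V}|\,(|\boldsymbol{U}_1| + |\boldsymbol{U}_2|) \leq C'' e^{-(1-\epsilon)\tau}|\boldsymbol{V}|$, using the bounds \eqref{abbounds}.

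Next I would solve the variation-of-constants formula for $\boldsymbol{V}$. The key structural point is that $e^{L(\tau-\tau_*)}$ is \emph{lower triangular}: its $(1,1)$ and $(2,2)$ entries are $e^{-(\tau-\tau_*)}$ and its $(2,1)$ entry is $\tfrac16(\tau-\tau_*)e^{-(\tau-\tau_*)}$, with zero in the $(1,2)$ slot. Applied to the initial data $\boldsymbol{V}(\tau_*) = (x_1^*-x_2^*,\, y_1^*-y_2^*) = (0,\, y_1^*-y_2^*)$, the crucial cancellation occurs: the resonant $(2,1)$ entry multiplies the \emph{zero} first component, so the homogeneous part of $\boldsymbol{V}$ is simply $(0,\, (y_1^*-y_2^*)e^{-(\tau-\tau_*)})$, with no factor of $\tau$. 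Thus
\begin{align}
    \boldsymbol{V}(\tau) = e^{L(\tau-\tau_*)}\boldsymbol{V}(\tau_*) + \int_{\tau_*}^\tau e^{L(\tau-s)}\big(N(\boldsymbol{U}_1(s)) - N(\boldsymbol{U}_2(s))\big)\,ds,
\end{align}
and the homogeneous term is already $O(e^{-\tau})$. For the integral term, I would set $G(\tau) := e^{\tau}|\boldsymbol{V}(\tau)|$ and derive a Gronwall inequality: the worst entry of $e^{L(\tau-s)}$ contributes a factor $(\tau-s)e^{-(\tau-s)}$, and bounding the quadratic nonlinearity by $C'' e^{-(1-\epsilon)s}|\boldsymbol{V}(s)|$ gives, after multiplying through by $e^\tau$,
\begin{align}
    G(\tau) \leq |y_1^*-y_2^*|e^{\tau_*} + C'' \int_{\tau_*}^\tau (\tau-s)e^{-(1-\epsilon)s}\, G(s)\, ds.
\end{align}
Since $\int_{\tau_*}^\infty (\tau-s)e^{-(1-\epsilon)s}\,ds$ is finite and the kernel is integrable (one can absorb the polynomial $(\tau-s)$ by shrinking $\epsilon$ slightly, using $(\tau-s)e^{-(1-\epsilon)s} \leq C_\epsilon e^{-(1-2\epsilon)s}$ on $s \geq \tau_*$... actually more carefully: bound $(\tau - s) \le (\tau_* + (\tau-s))$ and note $\sup_s (\tau-s)e^{-\epsilon s/2}<\infty$), a generalized Gronwall argument yields $G(\tau) \leq C$ uniformly in $\tau \geq \tau_* \geq 1$, where $C$ depends only on $|y_1^*-y_2^*|$, $\tau_*$, and the constants from \eqref{abbounds} (which in turn depend only on $x^*$, $y_1^*$, $y_2^*$). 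This gives $|\boldsymbol{V}(\tau)| = |\boldsymbol{U}_1 - \boldsymbol{U}_2| \leq Ce^{-\tau}$, which is \eqref{Ufasterrate}.

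The main obstacle I anticipate is handling the polynomial weight $(\tau-s)$ coming from the resonant Jordan block inside the Gronwall estimate: a naive Gronwall with a kernel $(\tau-s)e^{-(1-\epsilon)s}$ is not immediately of the standard form because the kernel depends on $\tau$ as well as $s$. The cleanest fix is to bound $(\tau-s) \le \frac{2}{\epsilon} e^{\epsilon(\tau-s)}$ (valid for $\tau \ge s$), absorb the $e^{\epsilon\tau}$ into a redefinition $G_\epsilon(\tau) := e^{(1-\epsilon)\tau}|\boldsymbol{V}(\tau)|$, and then the kernel becomes $\tfrac{2}{\epsilon}e^{-(1-2\epsilon)s}$, which is genuinely integrable and $\tau$-independent, so the standard integral-form Gronwall lemma applies directly and gives $|\boldsymbol{V}(\tau)| \le C e^{-(1-\epsilon)\tau}$ for every $\epsilon \in (0,\tfrac12)$; a second bootstrap, now feeding this improved decay back into the nonlinear term (which then decays like $e^{-2(1-\epsilon)s}$), upgrades the conclusion to the sharp rate $e^{-\tau}$. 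All of this is routine once the triangular structure of $e^{L\tau}$ and the cancellation from $x_1^* = x_2^*$ are in place, so the real content of the lemma is that single observation.
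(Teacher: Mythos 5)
Your setup — subtracting the two solutions, exploiting the lower-triangular form of $e^{L\tau}$, and observing that the matched data $x_1^*=x_2^*$ annihilates the resonant $(2,1)$ entry acting on $\boldsymbol{V}(\tau_*)=(0,y_1^*-y_2^*)$ — is the same cancellation the paper uses. But the paper gets the result in two lines by subtracting the asymptotic expansions (\ref{nextestx})--(\ref{nestesty}) directly (the resonant term $\tfrac16x_*(\tau-\tau_*)e^{-\tau}$ is identical for both solutions and cancels; what remains is the $e^{\tau_*}(y_1^*-y_2^*)e^{-\tau}$ term plus the quadratic corrections), whereas you try to re-derive the decay of the difference from the Duhamel formula. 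That is where your argument has a genuine gap: the final bootstrap does not close to the rate $e^{-\tau}$.

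The problem is the inhomogeneous part of the Duhamel formula, specifically the resonant entry of the propagator acting on the nonlinear coupling. Writing $\boldsymbol{W}=N(\boldsymbol{U}_1)-N(\boldsymbol{U}_2)=(W_1,W_2)$ with $W_1=3(x_1y_1-x_2y_2)$, the second component of your integral term contains
\begin{equation*}
    \frac16\int_{\tau_*}^{\tau}(\tau-s)\,e^{-(\tau-s)}\,W_1(s)\,ds .
\end{equation*}
No matter how fast $W_1$ decays, this is generically of size $\tau e^{-\tau}$, not $e^{-\tau}$: for instance with $|W_1(s)|\asymp e^{-2s}$ one computes $\int_{\tau_*}^{\tau}(\tau-s)e^{-(\tau-s)}e^{-2s}\,ds=e^{-\tau}\int_{\tau_*}^{\tau}(\tau-s)e^{-s}\,ds\sim\tau e^{-\tau_*}e^{-\tau}$, because the mass of the forcing sits near $s=\tau_*$ where the resonant weight $(\tau-s)\approx\tau$. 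Your proposed fixes (absorbing $(\tau-s)$ into $e^{\epsilon(\tau-s)}$, then bootstrapping) only ever yield $|\boldsymbol{V}|\le Ce^{-(1-\epsilon)\tau}$ and then $C\tau e^{-\tau}$ — the same rate at which each solution decays to $M$, i.e.\ no improvement. Put differently: the matched initial data kills the resonance in the \emph{homogeneous} term, but the quadratic coupling re-injects a forcing $\tfrac16V_1\asymp ce^{-\tau}$ at exactly the resonant rate of the double eigenvalue $\lambda_M=-1$, and a secular $\tau e^{-\tau}$ response to that forcing cannot be ruled out by Gr\"onwall estimates alone; one would have to show the relevant asymptotic amplitude ($\lim_{\tau\to\infty}e^{\tau}(x_1(\tau)-x_2(\tau))$, which by integrating $\dot{x}=x(-1+3y)$ equals $x^*e^{\tau_*}(e^{3\int_{\tau_*}^{\infty}y_1}-e^{3\int_{\tau_*}^{\infty}y_2})$) vanishes, and matching $x_1^*=x_2^*$ at the single time $\tau_*$ does not obviously force this. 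The paper's route avoids this computation entirely by taking the expansions (\ref{nextestx})--(\ref{nestesty}) as given, so if you want a self-contained Duhamel proof you must either match the solutions asymptotically rather than at $\tau=\tau_*$, or supply an argument that the residual resonant amplitude cancels.
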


That is, by (\ref{Ufasterrate}), both $|x_2(\tau)-x_1(\tau)|\leq Ce^{-\tau}$ and $|y_2(\tau)-y_1(\tau)|\leq Ce^{-\tau}$, so comparing (\ref{Ufasterrate}) to (\ref{nestesty}), the two solutions converge to each other faster than they decay separately to the rest point $M$.

\begin{proof}
By (\ref{nextestx}),
\begin{align*}
    |x_2(\tau)-x_1(\tau)| &\leq \big(a_2(\tau)^2+a_1(\tau)^2\big)e^{-2\tau} \leq C_xe^{-1},\\
    |y_2(\tau)-y_1(\tau)| &\leq \big(e^{\tau_*}(|y_1^*|+|y_2^*|)\big)e^{-t} + \big(b_1(\tau)^2+b_2(\tau)^2\big)e^{-2\tau} \leq C_ye^{-1},
\end{align*}
for $C=\max\{C_x,C_y\}$, where by (\ref{abbounds}),
\begin{align*}
    C_x &\leq 2C_1^2, & C_y &\leq e^{\tau_*}(|y_1^*|+|y_2^*|)+C_1^2.
\end{align*}
\end{proof}

We use this Lemma to prove the following theorem, which implies that, for any given leading order solution $(z_2(\tau),w_0(\tau))$ which tends to the rest point $M$ as $\tau\to\infty$, there always exists a value of $\delta>0$ such that it decays faster asymptotically to the $k=-1$ Friedmann spacetime $(z_2^F(\tau),w_0^F(\tau))$ with $\Delta_0=\delta$, than it decays to the rest point $M$. Again, since smooth solutions of (\ref{2by2systemxy}) starting at $\tau_*\in\mathbb{R}$ are bounded on the compact interval $[\tau_*,\tau]$ for any $\tau_*\leq\tau<\infty$, and solutions are preserved under time translation $\tau\to\tau-\tau_*+1$ because system (\ref{2by2systemxy}) is autonomous, then to keep things simple, and without loss of generality, we state the following theorem in terms of solutions defined for $\tau\geq\tau_*$, assuming initial time $\tau_*\geq1$.

\begin{Thm}\label{2by2phaseportrait}
    Let $\tau_*\geq1$ and $\boldsymbol{U}(\tau)=(z_2(\tau),w_0(\tau))$ be any solution of the $2\times2$ system (\ref{2by2system}), with initial condition $\boldsymbol{U}(\tau_*)=\boldsymbol{U}_*=(z_2^*,w_0^*)\in\mathbb{R}_+^2$, which tends to the rest point $M=(0,1)$ as $\tau\to\infty$. Then there exists a constant $C>0$, depending only on the initial data $\boldsymbol{U}_*$ and $\tau_*$, such that:
    \begin{align}
        z_2(\tau) &= a(\tau)e^{-\tau},\label{z2bound}\\
        w_0(\tau) &= b(\tau)\tau e^{-\tau},\label{w0bound}
    \end{align}
    where
    \begin{align}
        |a(\tau)| &\leq C, & |b(\tau)| &\leq C.
    \end{align}
    Moreover, there exists a $k<-1$ Friedmann solution with leading order components $(z_2^F(\tau),w_0^F(\tau))$ and a constant $C_F>0$, depending only on the initial data $\boldsymbol{U}_*$ and $\tau_*$, such that:
    \begin{align}
        \big|z_2(\tau)-z_2^F(\tau)\big| &\leq C_Fe^{-\tau},\label{zbound2by2}\\
        \big|w_0(\tau)-w_0^F(\tau)\big| &\leq C_Fe^{-\tau},\label{wbound2by2}
    \end{align}
    or in terms of $t=e^{\tau}$,
    \begin{align}
        \big|z_2(\ln t)-z_2^F(\ln t)\big| &\leq \frac{C_F}{t},\\
        \big|w_0(\ln t)-w_0^F(\ln t)\big| &\leq \frac{C_F}{t}.
    \end{align}
\end{Thm}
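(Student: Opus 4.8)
\emph{Proof strategy.} The statement splits into the asymptotic form (\ref{z2bound})--(\ref{w0bound}) of a general solution decaying to $M$, and the faster convergence (\ref{zbound2by2})--(\ref{wbound2by2}) of such a solution to a suitable underdense Friedmann solution; the first part is essentially the content of the computation preceding the theorem. The plan for it is to pass to the shifted variables $x=z_2$, $y=1-w_0$, so that $M=(0,1)$ becomes the origin and the system takes the form (\ref{2by2systemxy}), whose linearization (\ref{2by2systemxylin}) has matrix $d\boldsymbol{f}(0,1)$, a resonant node with double eigenvalue $\lambda_M=-1$ and one-dimensional eigenspace spanned by $\boldsymbol{R}_M=(0,1)^T$. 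Since $\boldsymbol{U}(\tau)\to M$, the trajectory eventually enters an arbitrarily small neighborhood of the origin, so the Hartman--Grobman theorem together with the explicit solution of the linear system (\ref{2by2systemxylin}) and the variation-of-parameters representation for the nonlinear terms yields (\ref{firstx})--(\ref{abbounds}): $x(\tau)=a(\tau)e^{-\tau}$ and $y(\tau)=b(\tau)\tau e^{-\tau}$ with $a$, $b$ bounded, the resonant double eigenvalue being exactly what produces the extra factor $\tau$ in $y$. The normalization $\tau_*\ge1$, together with the fact that a smooth trajectory is bounded on every compact $\tau$-interval with the global bound furnished by the decay to $M$, makes the bounding constant depend only on $(\boldsymbol{U}_*,\tau_*)$. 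Undoing $y=1-w_0$ gives (\ref{z2bound})--(\ref{w0bound}).

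For the second part I would first use Theorem \ref{2x2connecting orbit} to organize the underdense Friedmann family. By that theorem the $k=-1$ Friedmann solutions, indexed by $\Delta_0>0$, are precisely the log-time translates $\tau\mapsto(z_2^F(\tau;\Delta_0),w_0^F(\tau;\Delta_0))$ of the single connecting orbit $\Sigma$ from $SM=(\tfrac43,\tfrac23)$ to $M=(0,1)$ in the phase portrait of (\ref{2by2system}) (Figure \ref{Figure1}). The limits (\ref{limitsFriedmann2by2-1})--(\ref{limitsFriedmann2by2-2}) show that, for any fixed time $\tau_1$, the value $z_2^F(\tau_1;\Delta_0)$ varies continuously and sweeps the interval $(0,\tfrac43)$ as $\Delta_0$ ranges over $(0,\infty)$, so by the intermediate value theorem there is a $\Delta_0>0$ with $z_2^F(\tau_1;\Delta_0)=c$ for any prescribed $c\in(0,\tfrac43)$. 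I would then choose $\tau_1\ge\max\{\tau_*,1\}$ large enough that $z_2(\tau_1)<\tfrac43$ (possible since $z_2(\tau)\to0$), note that $z_2(\tau_1)>0$ because $z_2\equiv0$ is an invariant line of (\ref{2by2system}) and $z_2^*>0$, and select $\Delta_0$ so that the corresponding Friedmann solution $(z_2^F,w_0^F)$ satisfies $z_2^F(\tau_1)=z_2(\tau_1)$.

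With this $\Delta_0$ fixed, the conclusion follows from Lemma \ref{diffxylemma}. Both $\boldsymbol{U}$ and the Friedmann solution $\boldsymbol{U}^F$ tend to $M$, so in the $(x,y)$ variables each satisfies (\ref{firstx})--(\ref{abbounds}), and by construction their first components agree at time $\tau_1$; since the hypothesis of Lemma \ref{diffxylemma} constrains only the $x$-components at the common initial time and imposes nothing on the $w_0$-components, the lemma applies with its initial time taken to be $\tau_1$ and yields $|\boldsymbol{U}(\tau)-\boldsymbol{U}^F(\tau)|\le Ce^{-\tau}$ for all $\tau\ge\tau_1$, with $C$ depending only on $z_2(\tau_1)$, $w_0(\tau_1)$, $w_0^F(\tau_1)$ and $\tau_1$, hence only on $(\boldsymbol{U}_*,\tau_*)$. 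Absorbing the finite compact-interval bound on $[\tau_*,\tau_1]$ into the constant gives (\ref{zbound2by2})--(\ref{wbound2by2}) on all of $[\tau_*,\infty)$, and substituting $t=e^\tau$ produces the last two displayed estimates.

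The genuine difficulty is concentrated in the first part and in Lemma \ref{diffxylemma}: because $M$ is a resonant (Jordan-block) rest point rather than a diagonalizable node, $w_0-1$ decays only like $\tau e^{-\tau}$, and a crude comparison of two incoming trajectories would leave a spurious $\tau e^{-\tau}$ term in their difference. The key point exploited by Lemma \ref{diffxylemma} --- and the step I expect to be the main obstacle to verify in full detail --- is that this resonant mode in the difference is driven entirely by the mismatch of the two $z_2$-components, so forcing $z_2(\tau_1)=z_2^F(\tau_1)$ at a single time (equivalently, matching the leading asymptotic amplitude of $z_2$) annihilates it; what survives is then controlled by a Gronwall estimate whose perturbation coefficients are integrable in $\tau$ precisely because they are $O(\tau e^{-\tau})$.
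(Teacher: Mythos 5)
Your proposal is correct and follows essentially the same route as the paper: the decay rates (\ref{z2bound})--(\ref{w0bound}) are read off from the Hartman--Grobman analysis at the resonant rest point $M$ carried out in (\ref{firstx})--(\ref{abbounds}), and the improved rate is obtained by matching the $z_2$-component of the solution with a log-time translate of the Friedmann connecting orbit at a single time (the paper picks $\tau_F$ with $\tilde{z}_2(\tau_F)=z_2^*$ after reducing to $z_2^*<\tfrac{4}{3}$, which is the same device as your choice of $\tau_1$ and $\Delta_0$ via the intermediate value theorem) and then invoking Lemma \ref{diffxylemma}. Your closing remark correctly identifies where the real content lies, namely that matching the $z_2$-amplitudes kills the resonant $\tau e^{-\tau}$ mode in the difference.
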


\begin{proof}
Equations (\ref{z2bound}), (\ref{w0bound}) and (\ref{zbound2by2}) follow directly from (\ref{firstx})--(\ref{abbounds}). To apply Lemma \ref{diffxylemma} and obtain an improvement of factor $\tau$ over (\ref{w0bound}), note that since $z_2^*>0$, it follows that $z_2(\tau)$ takes on all values in the interval $(0,z_2^*)$ as $\tau$ ranges from $\tau_0$ to $\infty$. Thus, with the possible change of a constant, we can assume without loss of generality that $\tau_*\in(0,\frac{4}{3})$. Since $\tilde{z}_2$, defined in (\ref{z2FinalAgain2by2}), that is, the leading order $k=-1$ Friedmann solution with $\Delta_0=\frac{4}{9}$, ranges from the $SM$ value $z_2=\frac{4}{3}$ to the $M$ value $z_2=0$ as $\tau$ ranges from $-\infty$ to $\infty$, it follows that there must exist a time $\tau_F$ at which $\tilde{z}_2(\tau_F)=z_2(\tau_*)=z_2^*$. Then setting $\Delta_0=\tau_F-\tau_*$, we have
\begin{align*}
    z_2^F(\tau) = \tilde{z}_2(\tau-\Delta_0).
\end{align*}
Therefore Lemma \ref{diffxylemma} applies, and the improved rate (\ref{wbound2by2}) follows from (\ref{Ufasterrate}).
\end{proof}

We have proven that the coefficients $(z_2(\tau),w_0(\tau))$ of solutions which tend to the rest point $M$, such as the $k<0$ Friedmann solutions, actually converge to a $k<0$ Friedmann solution by a factor $\tau$ faster than they decay to $M$. Thus at the level of the $2\times2$ system, the $k<0$ family of spacetimes is a forward time global attractor for nearby solutions. Although these solutions which enter $M$ are stable in forward direction, all of them, including $k<0$ Friedmann solutions, are unstable in backward time. Thus one expects to see $k<0$ Friedmann solutions at late times, but we cannot assume solutions close to $k<0$ Friedmann at late times were also close to $k<0$ Friedmann solutions at early times. In the next section we show that for the $4\times4$ system (\ref{4by4system}), the $k<0$ Friedmann solutions represent just a single parameter in a two parameter family of solutions which span the unstable manifold of $SM$ at this order. This extra parameter induces third order effects in the velocity, which differentiate general perturbations of $SM$ that tend to $M$ as $t\to\infty$ from the $k=-1$ Friedmann spacetimes.\footnote{For example, the third order velocity term affects the third order correction to redshift vs luminosity, the term which differentiates the $k<0$ Friedmann spacetimes from the $k=0$ Friedmann spacetime with a cosmological constant, see \cite{SmolTeVo}.}

\subsection{Phase Portrait for the $4\times4$ System}

We now describe the phase portrait for the $4\times4$ system (\ref{4by4system}). We first analyze the rest points $SM$, $M$ and $U$ given in (\ref{restpoints}). Again observe that the rest points in the $2\times2$ system are the restriction of the rest points $SM$, $M$ and $U$ to the $\boldsymbol{U}=(z_2,w_0)$ plane, and thus we label them the same. The nonlinear function on the right hand side of (\ref{4by4system}) is
\begin{align*}
    \boldsymbol{F}(\boldsymbol{U}) = \left(\begin{array}{l}2z_2-3z_2w_0\\
    -\frac{1}{6}z_2+w_0-w_0^2\\
    \frac{5}{12}w_0z_2^2+4z_4-5w_0z_4-5z_2w_2\\
    -\frac{1}{24}z_2^2+\frac{1}{4}z_2w_0^2-\frac{1}{10}z_4+3w_2-4w_0w_2\end{array}
    \right)
\end{align*}
and the Jacobian of $\boldsymbol{F}$ at $\boldsymbol{U}=(z_2,w_0,z_4,w_2)$ is
\begin{align*}
    d\boldsymbol{F}(\boldsymbol{U}) = \left(\begin{array}{llll}-3w_0+2 & -3z_2 & 0 & 0\\
    -\frac{1}{6} & 1-2w_0 & 0 & 0\\
    \frac{5}{6}z_2w_0-5w_2 & \frac{5}{12}z_2^2-5z_4 & 4-5w_0 & -5z_2\\
    -\frac{1}{12}z_2+\frac{1}{4}w_0^2 & \frac{1}{2}z_2w_0-4w_2 & -\frac{1}{10} & 3-4w_0\end{array}\right).
\end{align*}

For the rest point $SM$, the Jacobian is
\begin{align}
    d\boldsymbol{F}\left(\frac{4}{3},\frac{2}{3},\frac{40}{27},\frac{2}{9}\right) = \left(\begin{array}{cccc}0 & -4 & 0 & 0\\
    -\frac{1}{6} & -\frac{1}{3} & 0 & 0\\
    -\frac{10}{27} & -\frac{20}{3} & \frac{2}{3} & -\frac{20}{3}\\
    0 & -\frac{4}{9} & -\frac{1}{10} & \frac{1}{3}\end{array}\right)\label{JacSM}
\end{align}
and the eigenpairs are given by:
\begin{align}
    \lambda_{A1} &=\frac{2}{3}, & \boldsymbol{R}_{A1} &= \left(\begin{array}{c}9\\
    -\frac{3}{2}\\
    \frac{10}{3}\\
    1\end{array}\right); & \lambda_{B1} &= -1, & \boldsymbol{R}_{B1} &= \left(\begin{array}{c}4\\
    1\\
    \frac{80}{9}\\
    1\end{array}\right);\label{epr12SM}\\
    \lambda_{A2} &= \frac{4}{3}, & \boldsymbol{R}_{A2} &= \left(\begin{array}{c}0\\
    0\\
    -10\\
    1\end{array}\right); & \lambda_{B2} &= -\frac{1}{3}, & \boldsymbol{R}_{B2} &= \left(\begin{array}{c}0\\
    0\\
    \frac{20}{3}\\
    1\end{array}\right).\label{epr34SM}
\end{align}

For the rest point $M$, the Jacobian is
\begin{align}
    d\boldsymbol{F}(0,1,0,0) = \left(\begin{array}{cccc}-1 & 0 & 0 & 0\\
    -\frac{1}{6} & -1 & 0 & 0\\
    0 & 0 & -1 & 0\\
    \frac{1}{4} & 0 & -\frac{1}{10} & -1\end{array}\right)\label{JacM}
\end{align}
and has a single repeated eigenvalue $\lambda=-1$, which has the two-dimensional eigenspace:
\begin{align}
    \lambda_M &= -1, & \boldsymbol{R}_{M1} &= \left(\begin{array}{c}0
    \\
    1\\
    0\\
    0\end{array}\right); & \lambda_M &= -1, & \boldsymbol{R}_{M2} &= \left(\begin{array}{c}0\\
    0\\
    0\\
    1\end{array}\right).\label{epr12M}
\end{align}

Finally, for the rest point $U$, the Jacobian is
\begin{align}
    d\boldsymbol{F}(0,0,0,0) = \left(\begin{array}{cccc}2 & 0 & 0 & 0\\
    -\frac{1}{6} & 1 & 0 & 0\\
    0 & 0 & 4 & 0\\
    0 & 0 & -\frac{1}{10} & 3\end{array}\right)\label{JacU}
\end{align}
and the eigenpairs are given by:
\begin{align}
    \lambda_{U1} &= 1, & \boldsymbol{R}_{U1} &= \left(\begin{array}{c}0\\
    1\\
    0\\
    0\end{array}\right); & \lambda_{U2} &= 2, & \boldsymbol{R}_{U2} &= \left(\begin{array}{c}-6\\
    1\\
    0\\
    0\end{array}\right);\label{epr12U}\\
    \lambda_{U3} &= 3, & \boldsymbol{R}_{U3} &= \left(\begin{array}{c}0\\
    0\\
    0\\
    1\end{array}\right); & \lambda_{U4} &= 4, & \boldsymbol{R}_{U4} &= \left(\begin{array}{c}0\\
    0\\
    -10\\
    1\end{array}\right).\label{epr34U}
\end{align}

For future use, we record the system in $(z_4,w_2)$ one obtains by substituting the $SM$ values for $(z_2,w_0)$ into the $4\times4$ system (\ref{4by4system}), namely, $(z_2,w_0)=(\frac{4}{3},\frac{2}{3})$. The result is the following $2\times2$ non-autonomous system in $(z_4,w_2)$
\begin{align}
    \frac{d}{d\tau}\left(\begin{array}{c}z_4\\
    w_2\end{array}\right) = \left(\begin{array}{c}F_3(\frac{4}{3},\frac{2}{3},z_4,w_2)\\
    F_4(\frac{4}{3},\frac{2}{3},z_4,w_2)\end{array}\right) = \left(\begin{array}{l}\frac{2}{3}z_4-\frac{20}{3}w_2+\frac{40}{81}\\
    -\frac{1}{10}z_4+w_2+\frac{2}{27}\end{array}\right).
\end{align}
 
We now prove that any trajectory whose restriction to the $(z_2,w_0)$-plane enters the stable rest point $M=(0,1)$ in forward time, must enter the stable rest point $M=(0,1,0,0)$ in the four-dimensional phase portrait $\boldsymbol{U}=(z_2,w_0,z_4,w_2)$ as well. Since smooth solutions of (\ref{4by4system}) starting at $\tau=\tau_*\in\mathbb{R}$ with $\tau_*\leq1$ are bounded on the compact interval $[\tau_*,1]$, and being autonomous, solutions are preserved under time translation $\tau\to\tau-\tau_*+1$, then to keep things simple, and without loss of generality, we state the following theorem in terms of solutions defined on the interval $1\leq\tau\leq\infty$.

\begin{Thm}\label{phaselimit} 
    Assume $(z_2(\tau),w_0(\tau),z_4(\tau),w_2(\tau))$ is a solution of the initial value problem for the $4\times4$ system (\ref{4by4system}) with initial data
    \begin{align*}
        \boldsymbol{U}(\tau_*) = \boldsymbol{U}_*.
    \end{align*}
    In addition, assume, without loss of generality, that $\tau_*=1$ and that the first two components (which solve (\ref{2by2system})) tend to the rest point $M$ in positive time, that is, assume
    \begin{align*}
        \lim_{\tau\to\infty}(z_2(\tau),w_0(\tau)) = (0,1).
    \end{align*}
    Then
    \begin{align}
        \lim_{\tau\to\infty}(z_2(\tau),w_0(\tau),z_4(\tau),w_2(\tau)) = (0,1,0,0) = M
    \end{align}
    and there exists a constant $C>0$, depending only on the system (\ref{4by4system}) and initial data $\boldsymbol{U}_*$, such that
    \begin{align}
        \left|\boldsymbol{U}(\tau)-M\right| \leq C\tau e^{-\tau}\label{higherorderlimitUest}
    \end{align}
    for all $\tau\geq1$. Furthermore, there exists a $\Delta_0>0$ and $C>0$, depending only on the system (\ref{4by4system}) and initial data $\boldsymbol{U}_*$, such that the associated $k<0$ Friedmann solution 
    $\boldsymbol{U}^F(\tau)=(z_2^F(\tau),w_0^F(\tau),z_4^F(\tau),w_2^F(\tau))$ 
    with parameter $\Delta_0$ satisfies
    \begin{align}
        \big|(z_2(\tau),w_0(\tau))-(z_2^F(\tau),w_0^F(\tau))\big| \leq Ce^{-\tau}
    \end{align}
    and
    \begin{align}
        \big|\boldsymbol{U}(\tau)-\boldsymbol{U}^F(\tau)\big| \leq C\tau e^{-\tau},
    \end{align}
    for all $\tau\geq1$.
\end{Thm}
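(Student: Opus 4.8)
The first two components $(z_2,w_0)$ of the solution solve the closed $2\times2$ subsystem (\ref{2by2system}) and are assumed to tend to $M=(0,1)$, so Theorem \ref{2by2phaseportrait} already supplies everything needed about them: the decay rates $z_2(\tau)=a(\tau)e^{-\tau}$ with $|a|\le C$ and $|1-w_0(\tau)|\le C\tau e^{-\tau}$, together with a $k<0$ Friedmann profile $(z_2^F,w_0^F)$, with parameter $\Delta_0$ depending only on $\boldsymbol U_*$, such that $|z_2-z_2^F|\le Ce^{-\tau}$ and $|w_0-w_0^F|\le Ce^{-\tau}$. This last pair is already the first ``Furthermore'' bound; the remaining work is to propagate decay to $\boldsymbol v:=(z_4,w_2)^T$ and compare it with its Friedmann counterpart.

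The plan is to read the last two equations of (\ref{4by4system}) as a \emph{linear inhomogeneous non-autonomous} system for $\boldsymbol v$,
\begin{align*}
\boldsymbol v' = \big(L_\infty + E(\tau)\big)\boldsymbol v + \boldsymbol g(\tau),
\end{align*}
where $L_\infty=\left(\begin{smallmatrix}-1&0\\-\frac1{10}&-1\end{smallmatrix}\right)$ is the lower-right block of $d\boldsymbol F(M)$ in (\ref{JacM}), with the single resonant eigenvalue $\lambda_M=-1$ and Jordan structure (\ref{epr12M}); the perturbation $E(\tau)=\left(\begin{smallmatrix}5(1-w_0)&-5z_2\\0&4(1-w_0)\end{smallmatrix}\right)$ obeys $\|E(\tau)\|\le C\tau e^{-\tau}$, so $E\in L^1([1,\infty))$ with all polynomial moments finite; and $\boldsymbol g(\tau)=\big(\tfrac5{12}z_2^2w_0,\ \tfrac14 z_2w_0^2-\tfrac1{24}z_2^2\big)^T$, whose leading part is $\big(O(e^{-2\tau}),\ \tfrac14 z_2(\tau)+O(e^{-2\tau})\big)^T$. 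Since the $\boldsymbol v$-equation is linear in $\boldsymbol v$, the solution exists on all of $[1,\infty)$ — the ``linear in the highest order variables'' structure already noted — so no separate escape argument is needed.

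Now apply variation of parameters against $e^{L_\infty(\tau-s)}=e^{-(\tau-s)}\left(\begin{smallmatrix}1&0\\-(\tau-s)/10&1\end{smallmatrix}\right)$, so $\|e^{L_\infty(\tau-s)}\|\le C(1+\tau-s)e^{-(\tau-s)}$ for $\tau\ge s$. The homogeneous term $e^{L_\infty(\tau-1)}\boldsymbol v(1)$ is $O(\tau e^{-\tau})$ from the Jordan block; the $O(e^{-2\tau})$ part of $\boldsymbol g$ convolves to $O(e^{-\tau})$, while the $\tfrac14 z_2$ part sits in the $w_2$-slot, which the nilpotent part of $L_\infty$ does \emph{not} amplify, hence convolves to $\tfrac14 e^{-\tau}\int_1^\tau a(s)\,ds=O(\tau e^{-\tau})$; and the perturbative term $\int_1^\tau e^{L_\infty(\tau-s)}E(s)\boldsymbol v(s)\,ds$ is closed by a bootstrap on an interval $[\tau_1,\infty)$ with $\tau_1$ so large that $\int_{\tau_1}^\infty(1+s)\|E(s)\|\,ds$ is small, where the integral equation is a contraction in the weighted norm $\|\boldsymbol v\|_*=\sup_{\tau\ge\tau_1}\tau^{-1}e^{\tau}\|\boldsymbol v(\tau)\|$ (equivalently one cites Levinson's asymptotic integration theorem, as $E\in L^1$), the bound being extended back to $[1,\tau_1]$ by ODE continuation. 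This yields $\|\boldsymbol v(\tau)\|\le C\tau e^{-\tau}$, hence $\boldsymbol U(\tau)\to M$ and (\ref{higherorderlimitUest}). I expect this paragraph to be the main obstacle: the resonant Jordan block means a naive Gronwall estimate overcounts powers of $\tau$, so one must both exploit that the linear-in-$z_2$ forcing lands in the non-amplified component and run the fixed-point argument far out where $\|E\|_{L^1}$ is small, rather than globally with an unfavorable constant.

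Finally, for the Friedmann comparison, each $k<0$ Friedmann spacetime is an exact solution of the STV-PDE, so its coefficients $\boldsymbol U^F=(z_2^F,w_0^F,z_4^F,w_2^F)$ solve (\ref{4by4system}); taking $\Delta_0$ from the $2\times2$ step, its first two components tend to $M$, so the part just proved gives $|\boldsymbol U^F(\tau)-M|\le C\tau e^{-\tau}$, in particular $\|\boldsymbol v^F(\tau)\|\le C\tau e^{-\tau}$. The difference $\boldsymbol d:=\boldsymbol v-\boldsymbol v^F$ solves $\boldsymbol d'=(L_\infty+E(\tau))\boldsymbol d+\big(E(\tau)-E^F(\tau)\big)\boldsymbol v^F+\big(\boldsymbol g(\tau)-\boldsymbol g^F(\tau)\big)$, where the extra forcing is $O(e^{-\tau})\cdot O(\tau e^{-\tau})+O(e^{-2\tau})=O(\tau e^{-2\tau})$ by $|z_2-z_2^F|,|w_0-w_0^F|\le Ce^{-\tau}$. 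Repeating the variation-of-parameters estimate — the dominant term being again the homogeneous one carrying $\boldsymbol d(1)$ — gives $\|\boldsymbol d(\tau)\|\le C\tau e^{-\tau}$, and combining with the $2\times2$ bounds yields $|\boldsymbol U(\tau)-\boldsymbol U^F(\tau)|\le C\tau e^{-\tau}$, completing the proof.
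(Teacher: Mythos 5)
Your proof is correct, but it takes a genuinely different route from the paper's. The paper disposes of Theorem \ref{phaselimit} in one line, as the $n=2$ case of the general induction in Theorem \ref{ThmBigStability}; that proof works with scalar differential inequalities for $|u|$ and $|v|$, adds them to obtain a Gr\"onwall bound on $|u|+|v|$, and then integrates each inequality with an integrating factor. Note, however, that the general theorem only yields $|\boldsymbol{U}(\tau)-M|\le C\tau^{2}e^{-\tau}$ at order $n=2$ (each induction step costs a power of $\tau$), so the paper's citation does not literally deliver the rate $C\tau e^{-\tau}$ asserted in (\ref{higherorderlimitUest}). Your variation-of-parameters argument against the Jordan block $L_\infty$ from (\ref{JacM}) does deliver it, precisely because you track where the $O(e^{-\tau})$ forcing $\tfrac14 z_2w_0^2$ lands relative to the nilpotent part: it sits in the $w_2$-slot, which $e^{L_\infty(\tau-s)}$ does not amplify, so the convolution costs only one factor of $\tau$ rather than two. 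That is a genuine gain in precision over the paper's own argument, at the price of being special to $n=2$, whereas the paper's cruder estimate is what makes the all-orders induction go through uniformly. Two small repairs: (i) in the bootstrap you should bound $\phi(\tau)=\sup_{\tau_1\le s\le\tau}s^{-1}e^{s}\|\boldsymbol{v}(s)\|$, which is finite for each finite $\tau$ by global existence, via $\phi\le A+\tfrac12\phi$, rather than identifying the solution with the fixed point of a map on a weighted space it is not yet known to inhabit — this is cosmetic; (ii) in the Friedmann comparison the term $\tfrac14(z_2-z_2^F)w_0^2$ in $\boldsymbol{g}-\boldsymbol{g}^F$ is only $O(e^{-\tau})$, not $O(\tau e^{-2\tau})$ as you assert, but it again lands in the non-amplified slot, so your final bound $\|\boldsymbol{d}(\tau)\|\le C\tau e^{-\tau}$ survives unchanged.
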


\begin{proof}
This is the special $n=2$ case of the the more general theorem, Theorem \ref{ThmBigStability}, given in Section \ref{S11}.
\end{proof}

We now determine the possible backward time asymptotics of solutions whose projection in the $(z_2,w_0)$-plane is the unstable manifold of $SM=(\frac{4}{3},\frac{2}{3})$, which takes $SM$ to $M=(0,1)$.

\begin{Thm}\label{UnstableManifoldOfSM-}
    Assume $\boldsymbol{U}(\tau)=(z_2(\tau),w_0(\tau),z_4(\tau),w_2(\tau))$ is a solution of the $4\times4$ system (\ref{4by4system}) such that
    \begin{align*}
        \lim_{\tau\to-\infty}(z_2(\tau),w_0(\tau)) = \bigg(\frac{4}{3},\frac{2}{3}\bigg) = SM
    \end{align*}
    Then there are two cases:
    \begin{enumerate}
        \item[(i)] $\boldsymbol{U}(\tau)$ is not in the unstable manifold of $SM$.\\
        \item[(ii)] $\boldsymbol{U}(\tau)$ is a trajectory in the unstable manifold of $SM$, in which case
        \begin{align*}
            \lim_{\tau\to-\infty}\boldsymbol{U}(\tau)=SM.
        \end{align*}
    \end{enumerate}
    In Case (i), the solutions $(z_4(\tau),w_2(\tau))$ tend in backward time to the stable manifold of the $2\times2$ linear system
    \begin{align}
        \frac{d}{d\tau}\left(\begin{array}{c}u\\
        v\end{array}\right) = \left(\begin{array}{cc}\frac{2}{3} & -\frac{20}{3}\\
        -\frac{1}{10} & \frac{1}{3}
        \end{array}\right)\left(\begin{array}{c}u\\
        v\end{array}\right),\label{limitsystemU}
    \end{align}
    where $u=z_4-\frac{40}{27}$, $v=w_2-\frac{2}{9}$ and the stable manifold of system (\ref{limitsystemU}) is the line
    \begin{align}
        z_4-\frac{40}{27} = \frac{20}{3}\Big(w_2-\frac{2}{9}\Big).\label{backasymi}
    \end{align}
    Moreover, solutions of (\ref{limitsystemU}) are exact solutions of the fully nonlinear system (\ref{4by4system}), allowing for arbitrary initial conditions $(u_*,v_*)=(u(t_*),v(t_*))$.

    In Case (ii), for each $\boldsymbol{U}(\tau)$ there exists a unique value of $(a,b)$ such that in the limit $\tau\to-\infty$, $\boldsymbol{U}(\tau)$ becomes asymptotic to the linearized solution
    \begin{align}
        \boldsymbol{U}(\tau) \sim ae^{\frac{2}{3}\tau}\boldsymbol{R}_{A1} + be^{\frac{4}{3}\tau}\boldsymbol{R}_{A2},\label{unstablemanSM}
    \end{align}
    where
    \begin{align}
        \boldsymbol{R}_{A1} &= \left(\begin{array}{c}9\\
        -\frac{3}{2}\\
        \frac{10}{3}\\
        1\end{array}\right), & \boldsymbol{R}_{A2} &= \left(\begin{array}{c}0\\
        0\\
        -10\\
        1\end{array}\right).
    \end{align}
    It follows that all trajectories in the unstable manifold of $SM$ satisfying $a\neq0$ enter $SM$ in backward time along $\boldsymbol{R}_{A1}$, the eigendirection of the dominant eigenvalue.
\end{Thm}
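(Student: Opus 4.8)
The plan is to exploit the nested, triangular structure of the $4\times4$ system \eqref{4by4system}: once the first two components $(z_2(\tau),w_0(\tau))$ are assumed to tend to $SM=(\frac43,\frac23)$ as $\tau\to-\infty$, the equations for $(z_4,w_2)$ become, near $SM$, a $2\times2$ linear system with a time-dependent inhomogeneity that decays to the constant inhomogeneity of the frozen system obtained by substituting the $SM$ values $(z_2,w_0)=(\frac43,\frac23)$. First I would introduce shifted variables $u=z_4-\frac{40}{27}$, $v=w_2-\frac29$ and write the full system in the form $\frac{d}{d\tau}\boldsymbol{U}=d\boldsymbol{F}(SM)\,(\boldsymbol{U}-SM)+O(|\boldsymbol{U}-SM|^2)$, reading off $d\boldsymbol{F}(SM)$ from \eqref{JacSM}. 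The block-lower-triangular form of \eqref{JacSM} means the $(u,v)$-block is governed by the $2\times2$ matrix in \eqref{limitsystemU}, whose eigenpairs are exactly $(\lambda_{A2},\boldsymbol{R}_{A2})=(\frac43,(0,0,-10,1)^T)$ restricted to the last two slots and $(\lambda_{B2},\boldsymbol{R}_{B2})=(-\frac13,(0,0,\frac{20}{3},1)^T)$ restricted likewise — this gives the stable line \eqref{backasymi}.

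For Case (ii) — a trajectory genuinely in the unstable manifold of $SM$ — the argument is the standard unstable-manifold / Hartman--Grobman picture for the hyperbolic rest point $SM$, which is legitimate because $SM$ has no zero eigenvalues (eigenvalues $\frac23,-1,\frac43,-\frac13$). A trajectory lies in the unstable manifold precisely when it has no component along the stable directions $\boldsymbol{R}_{B1},\boldsymbol{R}_{B2}$ as $\tau\to-\infty$; then it is asymptotic, as $\tau\to-\infty$, to a solution of the linearization living in $\operatorname{span}\{\boldsymbol{R}_{A1},\boldsymbol{R}_{A2}\}$, i.e.\ $\boldsymbol{U}(\tau)\sim ae^{\frac23\tau}\boldsymbol{R}_{A1}+be^{\frac43\tau}\boldsymbol{R}_{A2}$ as in \eqref{unstablemanSM}, and $\lim_{\tau\to-\infty}\boldsymbol{U}(\tau)=SM$ since both exponents are positive. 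Uniqueness of $(a,b)$ follows from the fact that the two unstable eigenvectors are linearly independent and $e^{\frac23\tau},e^{\frac43\tau}$ have distinct decay rates in backward time, so $(a,b)$ are recovered as the limits $a=\lim_{\tau\to-\infty}e^{-\frac23\tau}\langle\ell_{A1},\boldsymbol{U}(\tau)-SM\rangle$ and similarly for $b$, where $\ell_{A1},\ell_{A2}$ are the dual (left-eigenvector) functionals. The final sentence — that $a\neq0$ forces entry tangent to $\boldsymbol{R}_{A1}$ — is immediate because $\frac23<\frac43$, so the $e^{\frac23\tau}\boldsymbol{R}_{A1}$ term dominates as $\tau\to-\infty$; when $a=0$ the trajectory is the one-parameter family tangent to $\boldsymbol{R}_{A2}$.

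For Case (i) — a trajectory whose $(z_2,w_0)$-part tends to $SM$ but which is \emph{not} in the unstable manifold — the point is that such a trajectory must have a nonzero component along a stable eigendirection of $SM$ at $\tau\to-\infty$, so $|\boldsymbol{U}(\tau)-SM|$ does not go to zero backward in time; instead I would argue that in backward time the $(u,v)$ components are captured, to leading order, by the full linear system \eqref{limitsystemU} (an \emph{exact} subsystem of \eqref{4by4system} once $(z_2,w_0)=(\frac43,\frac23)$ is frozen, as noted in the statement), and the generic backward-time behaviour of \eqref{limitsystemU} is convergence onto its one-dimensional stable manifold \eqref{backasymi} associated with the eigenvalue $-\frac13$ (the component along the $\lambda=\frac43$ direction being killed in backward time). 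To make this rigorous one needs a perturbation estimate controlling the difference between the true $(u,v)$-evolution and the frozen linear one using the decay rate of $(z_2(\tau),w_0(\tau))-SM$ as $\tau\to-\infty$; here Theorem \ref{2by2phaseportrait} and the eigenstructure of $SM$ in the $2\times2$ block supply the needed exponential decay.

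The main obstacle I anticipate is Case (i): one has to justify carefully that the genuinely nonlinear backward-time asymptotics of $(z_4,w_2)$ are governed by the linear system \eqref{limitsystemU} rather than by the quadratic coupling terms $\frac{5}{12}z_2^2w_0$, $-\frac{1}{24}z_2^2+\frac14 z_2w_0^2$, which do \emph{not} vanish along $SM$ — they are precisely what produces the constant inhomogeneity $(\frac{40}{81},\frac{2}{27})$ and hence the shift to $(u,v)$. The clean way around this is exactly the observation recorded in the statement: with $(z_2,w_0)$ frozen at $SM$ the $(z_4,w_2)$-equations are \emph{affine}, and the affine part is absorbed into the $(u,v)$ shift, leaving the exact linear system \eqref{limitsystemU}; the remaining error terms are then genuinely of size $O(|(z_2,w_0)-SM|)$ in backward time and decay exponentially, so a Gr\"onwall-type estimate closes the argument. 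The positivity/hyperbolicity bookkeeping for $SM$ (no zero eigenvalues, so the stable/unstable manifold theorem applies) is the other thing to state explicitly, but it is routine given the eigenvalues in \eqref{epr12SM}--\eqref{epr34SM}.
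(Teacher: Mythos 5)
Your proposal follows essentially the same route as the paper's proof: Case (ii) is handled by the standard Hartman--Grobman/unstable-manifold picture at the hyperbolic rest point $SM$ using the eigenpairs (\ref{epr1pre})--(\ref{epr4pre}), and Case (i) by freezing $(z_2,w_0)$ at its $SM$ values to obtain the exact affine subsystem for $(z_4,w_2)$, whose shift to $(u,v)$ yields the linear system (\ref{limitsystemU}) and whose backward-time dynamics select the stable line (\ref{backasymi}). If anything, your explicit Gr\"onwall-type control of the error between the true $(z_4,w_2)$-evolution and the frozen linear one is more careful than the paper's proof, which simply asserts the asymptotic reduction; the rest is identical in substance.
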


\begin{proof}
In Case (i), since
\begin{align*}
    \lim_{\tau\to-\infty}(z_2(\tau),w_0(\tau)) = \bigg(\frac{4}{3},\frac{2}{3}\bigg),
\end{align*}
the solution components $(z_4(\tau),w_2(\tau))$ tend asymptotically to a solution of the $2\times2$ system obtained by substituting the constant values $(z_2,w_0)=(\frac{4}{3},\frac{2}{3})$ into (\ref{4by4system}). The result is the linear homogeneous constant coefficient $2\times2$ system in $(z_4,w_2)$, given by:
\begin{align}
    z_4' &= \frac{2}{3}z_4 - \frac{20}{3}w_2 + \frac{40}{81},\label{limitsystemzbar4}\\
    w_2' &= -\frac{1}{10}z_4 + \frac{1}{3}w_2 + \frac{2}{27},\label{limitsystemwbar2}
\end{align}
which is equivalent to (\ref{limitsystemU}) upon setting $u=z_4-\frac{40}{27}$ and $v=w_2-\frac{2}{9}$. Clearly the $SM$ values $z_4=\frac{40}{27}$ and $w_2=\frac{2}{9}$ give the unique rest point of system (\ref{limitsystemzbar4})--(\ref{limitsystemwbar2}). Moreover, since $(z_2(\tau),w_0(\tau))=(\frac{4}{3},\frac{2}{3})$ solve system (\ref{2by2system}) exactly, it follows that solutions of (\ref{limitsystemzbar4})--(\ref{limitsystemwbar2}) are exact solutions of the full $4\times4$ system (\ref{4by4system}). The eigenpairs of system (\ref{limitsystemU}) are
\begin{align*}
    \lambda_{A2} &= \frac{4}{3}, & \boldsymbol{R}^*_{A2} &= \left(\begin{array}{c}
    -10\\
    1
	\end{array}\right); &
    \lambda_{B2} &= -\frac{1}{3}, & \boldsymbol{R}^*_{B2} &= \left(\begin{array}{c}
    \frac{20}{3}\\
    1
	\end{array}\right);
\end{align*}
consistent with the $4\times4$ eigenpairs (\ref{epr34SM}). From this we conclude that the backward time asymptotics of solutions in Case (i) are given by the stable manifold of system (\ref{limitsystemU}), that is, the line (\ref{backasymi}). This completes the proof of Case (i).

Consider now Case (ii), the case when the $4\times4$ solution lies in the unstable manifold of the rest point $SM=(z_2,w_0,z_4,w_2)=(\frac{4}{3},\frac{2}{3},\frac{40}{27},\frac{2}{9})$. The eigenpairs of $SM$ are:
\begin{align}
    \lambda_{A1} &= \frac{2}{3}, & \boldsymbol{R}_{A1} &= \left(\begin{array}{c}9\\
    -\frac{3}{2}\\
    \frac{10}{3}\\
    1\end{array}\right);\label{epr1pre}\\
    \lambda_{B1} &= -1, & \boldsymbol{R}_{B1} &= \left(\begin{array}{c}4\\
    1\\
    \frac{80}{9}\\
    1\end{array}\right);\label{epr2pre}\\
    \lambda_{A2} &= \frac{4}{3}, & \boldsymbol{R}_{A2} &= \left(\begin{array}{c}0\\
    0\\
    -10\\
    1\end{array}\right);\label{epr3pre}\\
    \lambda_{B2} &= -\frac{1}{3}, & \boldsymbol{R}_{B2} &= \left(\begin{array}{c}0\\
    0\\
    \frac{20}{3}\\
    1\end{array}\right).\label{epr4pre}
\end{align}
Thus the positive eigenvalues $\lambda_{A1}=\frac{2}{3}$ and $\lambda_{A2}=\frac{4}{3}$ give the unstable directions $\boldsymbol{R}_{A1}$ and $\boldsymbol{R}_{A2}$ respectively. Now solutions of a nonlinear autonomous system are characterized by the solutions of the linearized system in a neighborhood of a rest point, and the linearized solutions which span the unstable manifold of $SM$ are
\begin{align*}
    U(\tau) = ae^{\lambda_{A1}\tau}\boldsymbol{R}_{A1} + be^{\lambda_{A2}\tau}\boldsymbol{R}_{A2}.
\end{align*}
This establishes the backward time asymptotics in Case (ii) and thus completes the proof.
\end{proof}

Since $SM$ is an unstable saddle rest point, we conclude that any perturbation of $SM$ which tends to the stable rest point $M$, tends to $M$ asymptotically along the unstable manifold of $SM$. Characterizing the unstable manifold of $SM$ characterizes the perturbations of $SM$, which give rise to underdense Friedmann-like solutions. This is the topic of the next section.

\section{The Unstable Manifold}\label{S9}

Recall that the rest point $SM$ of the autonomous system (\ref{4by4system}) corresponds to the $p=0$, $k=0$ Friedmann spacetime in the sense that it gives the first two terms of its expansion in even powers of $\xi$ in SSCNG. Also recall that $SM$ is represented as a rest point in system (\ref{4by4system}) because the $k=0$ Friedmann solution is self-similar in SSCNG coordinates, that is, the solution depends only on $\xi$. By (\ref{epr1pre})--(\ref{epr4pre}), $SM$ is an unstable saddle rest point with two negative and two positive eigenvalues. The main point, then, is that the $k<0$ Friedmann solutions represent only one parameter in a two-parameter family of solutions which lie in the unstable manifold of rest point $SM$ in the $4\times4$ system (\ref{4by4system}).\footnote{For example, in terms of an expansion of redshift vs luminosity about the center, this extra parameter plays the same role, at third order, as the extra parameter obtained by introducing a cosmological constant \cite{SmolTeVo}. This issue will be addressed in detail in a forthcoming paper.} By the Hartman--Grobman theorem, the unstable manifold of $SM$ in the nonlinear system (\ref{4by4system}) is parameterized by the asymptotic limits given by the unstable manifold of the system obtained by linearizing (\ref{4by4system}) about $SM$, that is, by the linearized system
\begin{align}
    \frac{d}{d\tau}(\boldsymbol{U}-\boldsymbol{U}_F) = d\boldsymbol{F}\bigg(\frac{4}{3},\frac{2}{3},\frac{40}{27},\frac{2}{9}\bigg)(\boldsymbol{U}-\boldsymbol{U}_F),\label{LinSysSM}
\end{align}
where
\begin{align}
    d\boldsymbol{F}\bigg(\frac{4}{3},\frac{2}{3},\frac{40}{27},\frac{2}{9}\bigg) = \left(\begin{array}{cccc}0 & -4 & 0 & 0\\
    -\frac{1}{6} & -\frac{1}{3} & 0 & 0\\
    -\frac{10}{27} & -\frac{20}{3} & \frac{2}{3} & -\frac{20}{3}\\
    0 & -\frac{4}{9} & -\frac{1}{10}&\frac{1}{3}\end{array}\right).
\end{align}
The unstable manifold of $SM$ in the nonlinear system (\ref{4by4system}) is therefore a family of solutions of (\ref{4by4system}) parameterized by the two parameters $(a,b)$ of the unstable manifold (\ref{unstablemanSM}) of (\ref{LinSysSM}), namely,
\begin{align}
    \boldsymbol{U}(\tau) = ae^{\lambda_{A1}\tau}\boldsymbol{R}_{A1} + be^{\lambda_{A2}\tau}\boldsymbol{R}_{A2}.\label{asymptoticlimits}
\end{align}
Assume now that $a\neq0$, which is equivalent to assuming that the trajectory in the $4\times4$ system (\ref{4by4system}) projects to the unstable manifold in the $2\times2$ system (\ref{2by2system}), and let $\Sigma_{SM}$ denote the (essential) subset of the unstable manifold of $SM$ in the nonlinear system (\ref{4by4system}) with asymptotic limits at $SM$ given by (\ref{asymptoticlimits}) when $a\neq0$. Since $a\neq0$, we can make the translation $\tau\to\tau-\tau_0$ and $b\to\beta$ to scale $a$ to $\pm1$. The two parameter family $\Sigma_{SM}$ is then parameterized by linearized solutions of the form
\begin{align*}
    \boldsymbol{U}(\tau) = \pm e^{\lambda_{A1}(\tau-\tau_0)}\boldsymbol{R}_{A1} + \beta e^{\lambda_{A2}(\tau-\tau_0)}\boldsymbol{R}_{A2}, 
\end{align*}
depending on $(\tau_0,\beta)$. Here $\pm$ determines the side of the unstable manifold of $SM$ in the $(z_2,w_0)$-plane, $\tau_0$ represents the time translation freedom of the autonomous system (\ref{4by4system}) and $\beta$ names the non-intersecting trajectories of the solutions in $\Sigma_{SM}$. In this section we prove that unique values $\beta=\beta_F^\pm$ determine unique trajectories on each side of the unstable manifold of $SM$ which correspond to the $k>0$ and $k<0$ Friedmann spacetimes respectively, that is, they correspond to the evolution of the first two terms in the expansion of the $k\neq0$ Friedmann solutions in powers of $\xi$ in SSCNG coordinates.\footnote{We prove in Section \ref{S11.5} that $\beta_F^-=0$. It is likely also the case that $\beta_F^+=0$.} More precisely, the $k<0$ Friedmann family of solutions is the unique trajectory in $\Sigma_{SM}$ corresponding to the linearized trajectory
\begin{align*}
    \boldsymbol{U}_-(\tau) = -e^{\lambda_{A1}(\tau-\tau_0)}\boldsymbol{R}_{A1} + \beta_F^-e^{\lambda_{A2}(\tau-\tau_0)}\boldsymbol{R}_{A2},
\end{align*}
and as a solution of the nonlinear system (\ref{4by4system}), the $k<0$ trajectory emanates from $SM$ on the smaller $z_2$ (underdense) side of $SM$ and tends to $M$ as $\tau\to\infty$; the $k>0$ Friedmann family of solutions corresponds to the unique trajectory in $\Sigma_{SM}$ corresponding to the linearized trajectory
\begin{align*}
    \boldsymbol{U}_+(\tau-\tau_0) = e^{\lambda_{A1}(\tau-\tau_0)}\boldsymbol{R}_{A1} + \beta_F^+e^{\lambda_{A2}(\tau-\tau_0)}\boldsymbol{R}_{A2},
\end{align*}
and as a solution of the nonlinear system (\ref{4by4system}), the $k>0$ trajectory emanates from $SM$ on the larger $z_2$ (overdense) side of $SM$ and intersects $w_0=0$ at some finite positive time. Moreover, the one degree of freedom associated with time translation $\tau\to\tau-\tau_0$ accounts for the gauge freedom $\Delta_0$ in $k\neq0$ Friedmann spacetimes through the relation $\tau_0=\Delta_0+\tau_F$ for some normalizing constant $\tau_F$, see (\ref{DeltaDef}). The fact that the $4\times4$ unstable manifold of $SM$ is a two parameter family of solutions containing the one parameter family of Friedmann solutions determined by $k$ implies that the unstable manifold of $SM$ allows for one extra degree of freedom than that accounted for by the Friedmann family, and by this, perturbations of $SM$ produce Friedmann-like solutions which admit one more degree of freedom than Friedmann in the redshift vs luminosity relation. Our results are summarized in the following theorem and the remainder of the section is devoted to the proof of this theorem.
 
\begin{Thm}\label{UnstableManifoldOfSM}
    Let $\Sigma_{SM}$ denote the subset of the unstable manifold of the rest point $SM$ in the $4\times4$ system (\ref{4by4system}) which consists of trajectories which project to the unstable manifold of $SM$ in the $2\times2$ nonlinear system (\ref{2by2system}). Then $\Sigma_{SM}$ is characterized by its backward time asymptotics given by the two parameter family of solutions of the linearized system (\ref{LinSysSM}),
    \begin{align}
        U(\tau) = \pm e^{\lambda_{A1}(\tau-\tau_0)}\boldsymbol{R}_{A1} + \beta e^{\lambda_{A2}(\tau-\tau_0)}\boldsymbol{R}_{A2},
    \end{align}
    parameterized by $(\tau_0,\beta)\in\mathbb{R}^2$, where:
    \begin{align}
        \lambda_{A1} &= \frac{2}{3}, & \boldsymbol{R}_{A1} &= \left(\begin{array}{c}9\\
        -\frac{3}{2}\\
        \frac{10}{3}\\
        1\end{array}\right); & \lambda_{A2} &= \frac{4}{3}, & \boldsymbol{R}_{A2} &= \left(\begin{array}{c}0\\
        0\\
        -10\\
        1\end{array}\right).
    \end{align}
    Moreover, the two sides of the unstable manifold of $SM$ in the $(z_2,w_0)$-plane correspond to the decomposition $\Sigma_{SM}=\Sigma_{SM}^-\cup\Sigma_{SM}^+$ where $\Sigma_{SM}^-$ corresponds to the linearized trajectories
    \begin{align*}
        U(\tau) = -e^{\lambda_{A1}(\tau-\tau_0)}\boldsymbol{R}_{A1} + \beta e^{\lambda_{A2}(\tau-\tau_0)}\boldsymbol{R}_{A2}
    \end{align*}
    and $\Sigma_{SM}^+$ corresponds to the linearized trajectories
    \begin{align*}
        U(\tau) = e^{\lambda_{A1}(\tau-\tau_0)}\boldsymbol{R}_{A1} + \beta e^{\lambda_{A2}(\tau-\tau_0)}\boldsymbol{R}_{A2}.
    \end{align*}
    Furthermore, trajectories in $\Sigma_{SM}^-$ leave the rest point $SM$ in the (underdense) direction of negative $z_2$, trajectories in $\Sigma_{SM}^+$ leave the rest point $SM$ in the (overdense) direction of positive $z_2$ and all trajectories in $\Sigma_{SM}^-$ tend to rest point $M$ as $\tau\to\infty$.
\end{Thm}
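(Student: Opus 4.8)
The plan is to obtain Theorem \ref{UnstableManifoldOfSM} by assembling three facts already in hand: the backward time asymptotics at $SM$ supplied by Theorem \ref{UnstableManifoldOfSM-}, the autonomy of the $4\times4$ system (\ref{4by4system}) in $\tau=\ln t$, and the forward time attraction to $M$ supplied by Theorem \ref{phaselimit} together with the $2\times2$ phase portrait of Section \ref{S8}. No new dynamical input is needed; the work is essentially bookkeeping.

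First I would invoke Case (ii) of Theorem \ref{UnstableManifoldOfSM-}: every trajectory $\boldsymbol{U}(\tau)$ in the unstable manifold of $SM$ in (\ref{4by4system}) satisfies $\lim_{\tau\to-\infty}\boldsymbol{U}(\tau)=SM$ and has leading asymptotics
\begin{align*}
    \boldsymbol{U}(\tau)-SM \,\sim\, a\,e^{\lambda_{A1}\tau}\boldsymbol{R}_{A1} + b\,e^{\lambda_{A2}\tau}\boldsymbol{R}_{A2},
\end{align*}
with $(a,b)\in\mathbb{R}^2$ determined by the trajectory and, by the uniqueness clause of the unstable manifold theorem, determining it. Since $\boldsymbol{R}_{A2}=(0,0,-10,1)^T$ has vanishing first two entries and $\lambda_{A1}<\lambda_{A2}$, the $(z_2,w_0)$ projection of $\boldsymbol{U}$ is asymptotic at $SM$ to the $\boldsymbol{R}_{A1}$ term alone; hence this projection traverses a branch of the unstable manifold of $SM$ in the $2\times2$ system (\ref{2by2system}) exactly when $a\neq0$, whereas $a=0$ forces the projection to be the constant orbit $\big(\tfrac{4}{3},\tfrac{2}{3}\big)$ (the only orbit of the $2\times2$ saddle with backward limit $SM$ besides its one-dimensional unstable manifold). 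Thus $\Sigma_{SM}$ consists precisely of the trajectories with $a\neq0$, and these are in bijective correspondence with their asymptotic coefficients $(a,b)\in(\mathbb{R}\setminus\{0\})\times\mathbb{R}$.

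Next I would trade $(a,b)$ for a sign together with a pair $(\tau_0,\beta)$, using the time translation invariance of the autonomous system. Putting $\tau_0:=-\tfrac{1}{\lambda_{A1}}\ln|a|$ and $\beta:=b\,|a|^{-\lambda_{A2}/\lambda_{A1}}$, one gets $a\,e^{\lambda_{A1}\tau}=\text{sign}(a)\,e^{\lambda_{A1}(\tau-\tau_0)}$ and $b\,e^{\lambda_{A2}\tau}=\beta\,e^{\lambda_{A2}(\tau-\tau_0)}$, and the passage $(a,b)\mapsto\big(\text{sign}(a),\tau_0,\beta\big)$ is invertible onto $\{+,-\}\times\mathbb{R}^2$. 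This yields the asserted parameterization of $\Sigma_{SM}$ by
\begin{align*}
    \boldsymbol{U}(\tau)\,\sim\,\pm\,e^{\lambda_{A1}(\tau-\tau_0)}\boldsymbol{R}_{A1}+\beta\,e^{\lambda_{A2}(\tau-\tau_0)}\boldsymbol{R}_{A2},\qquad(\tau_0,\beta)\in\mathbb{R}^2,
\end{align*}
together with the splitting $\Sigma_{SM}=\Sigma_{SM}^-\cup\Sigma_{SM}^+$ according to the sign. For the direction of departure from $SM$: as $\tau\to-\infty$ the $z_2$ component of $\boldsymbol{U}(\tau)-SM$ is dominated by $9\,(\pm1)\,e^{\lambda_{A1}(\tau-\tau_0)}$, since $\boldsymbol{R}_{A2}$ contributes nothing to $z_2$, so $z_2(\tau)-\tfrac{4}{3}$ carries the sign $\pm$; hence trajectories in $\Sigma_{SM}^-$ leave $SM$ into $z_2<\tfrac{4}{3}$ (underdense) and those in $\Sigma_{SM}^+$ into $z_2>\tfrac{4}{3}$ (overdense). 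Finally, the $(z_2,w_0)$ projection of a trajectory in $\Sigma_{SM}^-$ is the underdense branch of the unstable manifold of $SM$ in (\ref{2by2system}), which by Theorem \ref{2x2connecting orbit} is exactly the connecting orbit from $SM$ to $M$, so $\lim_{\tau\to\infty}(z_2(\tau),w_0(\tau))=(0,1)$; Theorem \ref{phaselimit} (the $n=2$ instance of Theorem \ref{ThmBigStability}) then upgrades this to $\lim_{\tau\to\infty}\boldsymbol{U}(\tau)=(0,1,0,0)=M$.

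I expect the only genuinely delicate points to be in the middle steps: verifying that the reparameterization $(a,b)\mapsto(\pm,\tau_0,\beta)$ is a bijection, so that the two parameter family is exhaustive and non-redundant, and verifying that the leading order asymptotics of Theorem \ref{UnstableManifoldOfSM-} are sharp enough both to read off the sign of the $z_2$ departure and to identify the $(z_2,w_0)$ projection with the named branch of the $2\times2$ unstable manifold. Everything downstream, namely the forward time convergence of $\Sigma_{SM}^-$ to $M$, is then a direct citation of Theorems \ref{2x2connecting orbit} and \ref{phaselimit}.
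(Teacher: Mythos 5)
Your proposal is correct, and for what the theorem literally asserts it follows the same skeleton the paper uses: the characterization of $\Sigma_{SM}$ by the linearized backward asymptotics is obtained from the hyperbolicity of $SM$ (you route this through Case (ii) of Theorem \ref{UnstableManifoldOfSM-}, the paper cites Hartman--Grobman directly), and the forward convergence of $\Sigma_{SM}^-$ to $M$ is obtained by noting that the $(z_2,w_0)$ projection lies on the underdense branch of the $2\times2$ unstable manifold and then citing Theorem \ref{phaselimit}. Where you differ is in emphasis. You spend your effort on the bookkeeping the paper leaves implicit: the bijection $(a,b)\mapsto(\pm,\tau_0,\beta)$ via $\tau_0=-\lambda_{A1}^{-1}\ln|a|$, the observation that $\boldsymbol{R}_{A2}$ has vanishing leading entries so that $a\neq0$ is exactly the condition for membership in $\Sigma_{SM}$ (and $a=0$ forces a constant $(z_2,w_0)$ projection), and the sign of the $z_2$ departure read off from the first entry of $\boldsymbol{R}_{A1}$ --- all of which is worth making explicit and is correct. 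The paper's formal proof, by contrast, is devoted almost entirely to something you omit: verifying via the exact formulas of Theorem \ref{Friedmannknotzeroexpansion} and the limits of Corollary \ref{CorLimits} that the $k=\pm1$ Friedmann solutions are exact solutions of (\ref{4by4system}) lying on single trajectories inside $\Sigma_{SM}^\mp$, with $\Delta_0$ realized as log-time translation. That identification is not demanded by the literal statement of the theorem, so its omission is not a gap in your argument, but it is the substantive analytic content the authors attach to this result (it is what pins down which member of the two-parameter family is Friedmann), so be aware that your proof establishes the abstract parameterization without recovering that additional payload.
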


That the structure of the nonlinear manifold $\Sigma_{SM}=\Sigma_{SM}^-\cup\Sigma_{SM}^+$ is characterized by the linearized system (\ref{LinSysSM}) follows directly from the Hartman--Grobman theorem in light of the hyperbolic nature of rest point $SM$ in (\ref{epr1pre})--(\ref{epr4pre}). Moreover, the asymptotic limit $M$ for solutions in $\Sigma_{SM}^-$ follows directly from Theorem \ref{phaselimit} in light of the fact that when $a\neq0$, the projection of solutions in $\Sigma_{SM}^-$ lie on the unstable manifold of $SM$ in the $2\times2$ system (\ref{2by2system}), and this is the starting assumption of Theorem \ref{phaselimit}. The proof of Theorem \ref{UnstableManifoldOfSM} is thus a direct consequence of the following theorem which gives exact formulas for the expansion of the $p=0$, $k\neq0$ Friedmann solutions in even powers of $\xi$ in SSCNG coordinates.

\begin{Thm}\label{Friedmannknotzeroexpansion}
    The Taylor expansion of the $p=0$, $k=\pm1$ Friedmann solution in even powers of $\xi=\frac{r}{t}$, with coefficient functions of $t$, in SSCNG coordinates $(t,\xi)$ takes the form:
    \begin{align}
        z(t,\xi) &= z_2(t)\xi^2 + z_4(t)\xi^4 + O(\xi^6),\label{kneq0z}\\
        w(t,\xi) &= w_0(t) + w_2(t)\xi^2 + O(\xi^4),\label{kneq0w}\\
        A(t,\xi) &= 1 + A_2(t)\xi^2 + A_4(t)\xi^4 + O(\xi^6).\label{kneq0A}
    \end{align}
    In the case $k=-1$, $\Delta_0>0$, $(z_i(t),w_i(t))$ are given in terms of $\theta=\theta(t)$ by the formulas:
    \begin{align}
        z_2 = \tilde{z}_2(\theta) &= \frac{6(\sinh2\theta-2\theta)^2}{(\cosh2\theta-1)^3} = -3A_2,\label{z2FinalAgain}\\
        w_0 = \tilde{w}_0(\theta) &= \frac{(\sinh2\theta-2\theta)\sinh2\theta}{(\cosh2\theta-1)^2},\label{w0FinalAgain}\\
        z_4 = \tilde{z}_4(\theta) &= \frac{30(\sinh2\theta-2\theta)^4\cosh^2\theta}{(\cosh2\theta-1)^6} = -5A_4,\label{z4FinalAgain}\\
        w_2 = \tilde{w}_2(\theta) &= w_2^B-w_2^A\label{w2FinalAgain}\\
        &= \frac{(\sinh2\theta-2\theta)^3\cosh\theta}{2(\cosh2\theta-1)^5\sinh\theta}\Big(\sinh^22\theta+(1-2\sinh^2\theta)(\cosh2\theta-1)\Big),\notag
    \end{align}
    where $\theta\geq0$ is defined as a function of $t\geq0$ through the relation
    \begin{align}
        \frac{t}{\Delta_0} = \frac{1}{2}(\sinh2\theta-2\theta),\label{Theta2}
    \end{align}
    Equation (\ref{Theta2}) inverts to define the inverse function
    \begin{align*}
        \Theta:(0,\infty) &\to (0,\infty), & \theta(t) &= \Theta\bigg(\frac{t}{\Delta_0}\bigg),
    \end{align*}
    and in terms of $\Theta$ defined by (\ref{Theta2}), the expansion of $p=0$, $k=-1$ Friedmann for general $\Delta_0>0$ is given by:
    \begin{align}
        z_i(t) &= \tilde{z}_i\bigg(\Theta\bigg(\frac{t}{\Delta_0}\bigg)\bigg),\label{twithDelta1}\\
        w_j(t) &= \tilde{w}_j\bigg(\Theta\bigg(\frac{t}{\Delta_0}\bigg)\bigg),\label{twithDelta2}\\
        A_i(t) &= \tilde{A}_i\bigg(\Theta\bigg(\frac{t}{\Delta_0}\bigg)\bigg).\label{twithDelta3}
    \end{align}
    In the case $k=+1$, $\Delta_0>0$, $(z_i(t),w_i(t))$ are given in terms of $\theta=\theta(t)$ by the formulas:
    \begin{align}
        z_2 = \tilde{z}_2(\theta) &= \frac{6(2\theta-\sin2\theta)^2}{(1-\cos2\theta)^3} = -3A_2,\label{z2FinalAgaink1}\\
        w_0 = \tilde{w}_0(\theta) &= \frac{(2\theta-\sin2\theta)\sin2\theta}{(1-\cos2\theta)^2},\label{w0FinalAgaink1}\\
        z_4 = \tilde{z}_4(\theta) &= \frac{30(2\theta-\sin2\theta)^4\cos^2\theta}{(1-\cos2\theta)^6} = -5A_4,\label{z4FinalAgaink1}\\
        w_2 = \tilde{w}_2(\theta) &= \frac{(2\theta-\sin2\theta)^3\cos\theta}{2(1-\cos2\theta)^5\sin\theta}\Big(\sin^22\theta+(1-2\sin^2\theta)(1-\cos2\theta)\Big),\label{w2FinalAgaink1}
    \end{align}
    where $\theta\geq0$ is defined as a function of $t\geq0$ through the relation
    \begin{align}
        \frac{t}{\Delta_0} = \frac{1}{2}(2\theta-\sin2\theta).\label{Theta2k1}
    \end{align}
    Equation (\ref{Theta2k1}) inverts to define the inverse function,
    \begin{align*}
        \Theta:(0,\infty) &\to \Big(0,\frac{\pi}{2}\Big), & \theta &= \Theta\bigg(\frac{t}{\Delta_0}\bigg),
    \end{align*}
    and in terms $\Theta$ defined by (\ref{Theta2k1}), the expansion of $p=0$, $k=+1$ Friedmann for general $\Delta_0>0$ is again given by (\ref{twithDelta1})--(\ref{twithDelta3}). 
\end{Thm}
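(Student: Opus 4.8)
The plan is to obtain the stated expansions by directly Taylor-expanding, in powers of $\xi$ about $\xi=0$, the closed-form SSCNG expressions for the $k\neq 0$ Friedmann spacetime furnished by Theorem \ref{ThmLog-time translation}; Theorem \ref{solvablexi} guarantees that this expansion is legitimate in a full neighborhood of $\xi=0$, so Taylor's theorem applies and the coefficients $z_{2k}(t),w_{2k-2}(t),A_{2k}(t)$ are well defined. Concretely, for $k=-1$ one has $1-A=A_F(\bar\chi,\xi)=\frac{8\bar\chi^2\xi^2}{(\cosh 2\Theta(\chi)-1)^3}$, $\sqrt{AB}=D_F$, $v=v_F$ from (\ref{Apm})--(\ref{vpm}), where $\chi=\chi(\bar\chi,\xi)$ is defined implicitly by (\ref{formulaFort}). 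The first step is to expand that implicit relation: since $\Theta(\chi)=\Theta(\bar\chi)$ at $\xi=0$, write $\Theta(\chi)=\bar\Theta+c_2(\bar\Theta)\xi^2+c_4(\bar\Theta)\xi^4+O(\xi^6)$ and solve (\ref{formulaFort}) order by order in $\xi^2$, producing $c_2,c_4$ as explicit rational functions of $\sinh\bar\Theta,\cosh\bar\Theta,\bar\Theta$. The auxiliary relation $r=\frac{\bar\chi\xi}{\sinh^2\Theta(\chi)}$ then yields the expansion of $r$, and $\frac{\partial\bar\chi}{\partial\chi}$, needed inside $D_F$, follows by implicit differentiation of (\ref{formulaFort}).

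Next I would substitute these into $A_F$, $D_F$, $v_F$ and collect powers of $\xi$. The coefficients $A_2,A_4$ are read off from the $\xi^2,\xi^4$ terms of $A_F$; the Einstein constraint, which in self-similar form reads $z=-(A-1)-\xi A_\xi$ (equivalently $z=\frac{3(1-A)}{1-v^2}$ using (\ref{Afirst})), forces $z_{2k}=-(2k+1)A_{2k}$ and thereby gives $z_2,z_4$; and $w_0,w_2$ come from expanding $v_F/\xi$. Rewriting $\bar\chi=\frac12(\sinh 2\bar\Theta-2\bar\Theta)$ throughout and renaming $\bar\Theta\to\theta$ should collapse the resulting rational-trigonometric expressions to exactly (\ref{z2FinalAgain})--(\ref{w2FinalAgain}); the leading pair $(z_2,w_0)$ must reproduce Theorem \ref{2x2connecting orbit}, giving a running consistency check. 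The limiting values are then read off: Taylor-expanding $\sinh,\cosh$ at $\theta=0$ gives $\lim_{\theta\to0}(z_2,w_0,z_4,w_2)=SM=(\frac43,\frac23,\frac{40}{27},\frac29)$, while the asymptotics $\sinh 2\theta-2\theta\sim\frac14 e^{2\theta}$ and $\cosh 2\theta-1\sim\frac12 e^{2\theta}$ as $\theta\to\infty$ give decay to $M=(0,1,0,0)$. The $k=+1$ statement is obtained by running the identical argument on the exact formulas (\ref{kplusone})--(\ref{kplusoneR}): since these are the $\theta\mapsto i\theta$ continuation of the $k=-1$ formulas, the whole computation transports with $\sinh\to\sin$, $\cosh\to\cos$, producing (\ref{z2FinalAgaink1})--(\ref{w2FinalAgaink1}).

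A second, independent route — which I would also carry out, both to absorb algebra risk and to pin uniqueness — is to verify that the formulas so obtained satisfy the STV-ODE of order $n=2$, namely (\ref{z2finalequation})--(\ref{w2finalequation}). This must hold because the $k=-1$ Friedmann spacetime is an exact solution of the STV-PDE, so by Theorem \ref{ODEsForCorrections} its even-$\xi$ coefficients solve the STV-ODE at every order. Since $(z_2^F,w_0^F)$ already solve the $2\times2$ subsystem (Theorem \ref{2x2connecting orbit}), substituting them reduces the $(z_4,w_2)$-equations to a linear inhomogeneous $2\times2$ system with known coefficients; differentiating the claimed $z_4,w_2$ in $\theta$ via $\frac{dt}{d\theta}=\Delta_0(\cosh 2\theta-1)$ and plugging in verifies the system, and uniqueness of the bounded trajectory connecting $SM$ to $M$ identifies it as the Friedmann one.

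The main obstacle is the expansion of the implicit relation (\ref{formulaFort}) to fourth order together with the ensuing simplification of nested rational-trigonometric expressions, most acutely for $w_2$, whose closed form carries the nontrivial factor $\sinh^2 2\theta+(1-2\sinh^2\theta)(\cosh 2\theta-1)$ and the decomposition $w_2=w_2^B-w_2^A$; getting this to reduce cleanly, rather than to some equivalent but unrecognizable form, is where the care is needed, and the ODE cross-check is precisely what converts that long simplification into a verification.
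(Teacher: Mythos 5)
Your proposal is correct and follows essentially the same route as the paper's proof in Sections \ref{Appendix1F}--\ref{Appendix1G}: both Taylor-expand the exact closed-form SSCNG representation of the $k=\pm1$ Friedmann solution about $\xi=0$, obtain the $O(\xi^4)$ coefficients by implicitly differentiating the time-transformation relation (the paper computes $t_{\xi\xi}(\bar t,0)$ from $\cosh\theta(\bar t)=\sqrt[4]{1+r^2}\cosh\theta(t)$, which is your order-by-order expansion of (\ref{formulaFort})), and use the constraint $\xi A_\xi=-z+(1-A)$ to convert $A_{2k}$ into $z_{2k}=-(2k+1)A_{2k}$. Your supplementary STV-ODE cross-check and the $\theta\mapsto i\theta$ shortcut for $k=+1$ are harmless additions; the paper simply repeats the computation with trigonometric functions for $k=+1$.
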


Note that although the formulas for $k=\pm1$ Friedmann spacetimes were given in terms of $\theta$ as a function of Friedmann time, in formulas (\ref{kneq0z})--(\ref{Theta2k1}) $t$ is now SSCNG time, not Friedmann time, as explained in the proof below. Note also that since $\Delta_0$ is the only free parameter in the Friedmann spacetimes, it follows from (\ref{twithDelta1})--(\ref{twithDelta3}) that the solutions of (\ref{4by4system}) determined by $p=0$, $k\neq0$ Friedmann solutions of the Einstein field equations are given, for general values of the free parameter $\Delta_0>0$, by
\begin{align}
    \boldsymbol{U}_-(\tau-\Delta_0) = \tilde{\boldsymbol{U}}\big(\Theta\circ\exp(\tau-\Delta_0)\big),\label{oneorbit}
\end{align}
where
\begin{align*}
    \tilde{\boldsymbol{U}}_\pm(\theta) = \big(\tilde{z}_2(\theta),\tilde{w}_0(\theta),\tilde{z}_4(\theta),\tilde{w}_2(\theta)\big),
\end{align*}
with the right hand side being defined separately by (\ref{z2FinalAgain})--(\ref{w2FinalAgain}) and (\ref{z2FinalAgain})--(\ref{w2FinalAgain}) in the cases $k=-1$ and $k=+1$ respectively. Equation (\ref{oneorbit}) shows that the $k\pm1$ Friedmann solutions for $\Delta_0\neq1$ are time-translations of the $k=\pm1$ solutions corresponding to $\Delta_0=\frac{4}{9}$, and hence the projection of $k\neq0$ Friedmann solutions onto solutions of (\ref{4by4system}) all lie on the same trajectory with $\Delta_0$ giving the time translation.

The cases $k=-1$ and $k=+1$ of Theorem \ref{Friedmannknotzeroexpansion} are proven separately in subsections \ref{Appendix1F} and \ref{Appendix1G} respectively. For the proof of Theorem \ref{UnstableManifoldOfSM}, we require the following corollary.

\begin{Corollary}\label{CorLimits}
    In the case $k=\pm1$, formulas (\ref{z2FinalAgain})--(\ref{w2FinalAgain}) and (\ref{z2FinalAgain})--(\ref{w2FinalAgain}) both imply the limits
    \begin{align}
        \lim_{\theta\to0}\tilde{\boldsymbol{U}}_\pm(\theta) = \boldsymbol{U}_F = \bigg(\frac{4}{3},\frac{2}{3},\frac{40}{27},\frac{2}{9}\bigg).\label{limitsUhat}
    \end{align}
    For the opposite limits, in the case $k=-1$, (\ref{z2FinalAgain})--(\ref{w2FinalAgain}) imply
    \begin{align}
        \lim_{\theta\to\infty}\tilde{\boldsymbol{U}}_-(\theta) = (0,1,0,0),\label{limitsUhatminus}
    \end{align}
    and in the case $k=+1$, (\ref{z2FinalAgain})--(\ref{w2FinalAgain}) imply
    \begin{align}
        \lim_{\theta\to\frac{\pi}{2}}\tilde{\boldsymbol{U}}_+(\theta) = \bigg(\frac{3}{4}\pi^2,0,0,0\bigg).\label{limitsUhatplus}
    \end{align}
\end{Corollary}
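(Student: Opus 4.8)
The plan is to treat Corollary \ref{CorLimits} as an elementary exercise in evaluating limits of explicit functions of the single variable $\theta$, after one algebraic simplification. First I would simplify the bracketed factor appearing in the $\tilde w_2$ formulas (\ref{w2FinalAgain}) and (\ref{w2FinalAgaink1}). Setting $c=\cosh2\theta$ and using $\cosh2\theta=1+2\sinh^2\theta$, so that $\sinh^22\theta=c^2-1$ and $1-2\sinh^2\theta=2-c$, the bracket in (\ref{w2FinalAgain}) collapses:
\[
\sinh^22\theta+(1-2\sinh^2\theta)(\cosh2\theta-1)=(c^2-1)+(2-c)(c-1)=3(c-1)=3(\cosh2\theta-1),
\]
so that $\tilde w_2(\theta)=\dfrac{3(\sinh2\theta-2\theta)^3\cosh\theta}{2(\cosh2\theta-1)^4\sinh\theta}$. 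The parallel computation in the $k=+1$ case, using $1-2\sin^2\theta=\cos2\theta$ and $\sin^22\theta=(1-\cos2\theta)(1+\cos2\theta)$, shows the bracket in (\ref{w2FinalAgaink1}) equals $(1-\cos2\theta)(1+2\cos2\theta)$, hence $\tilde w_2(\theta)=\dfrac{(2\theta-\sin2\theta)^3\cos\theta\,(1+2\cos2\theta)}{2(1-\cos2\theta)^4\sin\theta}$.

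Next, for the limit $\theta\to0$ in (\ref{limitsUhat}), I would insert the Taylor expansions $\sinh2\theta-2\theta=\tfrac{4}{3}\theta^3+O(\theta^5)$, $\cosh2\theta-1=2\theta^2+O(\theta^4)$, $\cosh\theta=1+O(\theta^2)$, $\sinh\theta=\theta+O(\theta^3)$ (and the sine/cosine analogues $2\theta-\sin2\theta=\tfrac{4}{3}\theta^3+O(\theta^5)$, $1-\cos2\theta=2\theta^2+O(\theta^4)$, $1+2\cos2\theta=3+O(\theta^2)$) into (\ref{z2FinalAgain})--(\ref{w0FinalAgain}), (\ref{z4FinalAgain}) and the simplified $\tilde w_2$, and likewise for $k=+1$. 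In each component the leading power of $\theta$ cancels between numerator and denominator, leaving $\tilde z_2\to\tfrac{4}{3}$, $\tilde w_0\to\tfrac{2}{3}$, $\tilde z_4\to\tfrac{40}{27}$, $\tilde w_2\to\tfrac{2}{9}$, which is exactly $\boldsymbol U_F$. Conceptually this is forced: near the Big Bang the spatial curvature term is negligible, so the $k=\pm1$ solutions agree to leading order with $k=0$ Friedmann, whose first four self-similar coefficients are $\boldsymbol U_F$ by (\ref{wFk}); the Taylor computation makes this quantitative. I would also note in passing that no spurious singularity appears at $\theta=0$: the $\sinh\theta$ in the denominator of $\tilde w_2$ is cancelled by the $\theta^9$ of the numerator against $\theta^8\cdot\theta$ of the denominator.

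For the opposite limits I would use the exponential asymptotics $\sinh2\theta,\cosh2\theta\sim\tfrac12 e^{2\theta}$ and $\sinh\theta,\cosh\theta\sim\tfrac12 e^{\theta}$ as $\theta\to\infty$, together with $2\theta=o(e^{2\theta})$, so that $\sinh2\theta-2\theta\sim\tfrac12 e^{2\theta}$ and $\cosh2\theta-1\sim\tfrac12 e^{2\theta}$. This yields $\tilde z_2\sim12 e^{-2\theta}\to0$, $\tilde z_4\sim30 e^{-2\theta}\to0$, $\tilde w_2\sim3 e^{-2\theta}\to0$; for $\tilde w_0$ it is cleanest to write $\tilde w_0=\dfrac{\sinh^22\theta}{(\cosh2\theta-1)^2}-\dfrac{2\theta\sinh2\theta}{(\cosh2\theta-1)^2}=\dfrac{\cosh2\theta+1}{\cosh2\theta-1}-\dfrac{2\theta\sinh2\theta}{(\cosh2\theta-1)^2}\to1$, giving (\ref{limitsUhatminus}). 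For (\ref{limitsUhatplus}) I would simply evaluate at $\theta=\tfrac{\pi}{2}$, where $\cos\theta=0$ and $\sin2\theta=0$ while $2\theta-\sin2\theta\to\pi$, $1-\cos2\theta\to2$, $\sin\theta\to1$ remain finite and nonzero: the factor $\cos\theta$ present in $\tilde w_0$, $\tilde z_4$, $\tilde w_2$ forces all three to vanish, whereas $\tilde z_2\to6\pi^2/2^3=\tfrac34\pi^2$.

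The routine part is the substitution of the expansions; the one genuinely load-bearing step is the algebraic collapse of the $\tilde w_2$ bracket to $3(\cosh2\theta-1)$ (respectively $(1-\cos2\theta)(1+2\cos2\theta)$). Without it, the $\theta\to\infty$ behaviour of $\tilde w_2$ presents a leading-order $\infty-\infty$ cancellation — one checks that $\sinh^22\theta$ and $(1-2\sinh^2\theta)(\cosh2\theta-1)$ each grow like $\pm\tfrac14 e^{4\theta}$ — which obscures the limit, and this cancellation is precisely what the identity resolves.
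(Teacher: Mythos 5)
Your proof is correct and follows essentially the same route as the paper's: Taylor expansion of the hyperbolic (resp.\ trigonometric) functions at $\theta=0$, exponential asymptotics as $\theta\to\infty$, and direct evaluation at $\theta=\tfrac{\pi}{2}$ (the paper relegates the $k=+1$ limits to the computations in Section \ref{Appendix1G}). Your collapse of the $\tilde w_2$ bracket to $3(\cosh2\theta-1)$ is a worthwhile refinement: the paper's remark that the exponential leading orders ``easily confirm'' $\tilde w_2\to0$ passes silently over exactly the $O(e^{4\theta})$ cancellation you identify, and your identity is the cleanest way to justify that step.
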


\begin{proof}
The limits (\ref{limitsUhat})--(\ref{limitsUhatplus}) can be confirmed numerically, but for completeness, we give a proof in the case $k=-1$ based on elementary asymptotics. The argument for $k=+1$ is of secondary interest to this paper and incorporated into the proof of Theorem \ref{Friedmannknotzeroexpansion} below. We first confirm the limits for $\theta\to0$, which by (\ref{Theta2}) is equivalent to $\bar{t}\to0$. For this it suffices to use the leading order expressions:
\begin{align*}
    \cosh\theta &= 1 + \frac{1}{2}\theta^2 + O(\theta^4), & \sinh\theta &= \theta + \frac{1}{6}\theta^3 + O(\theta^5),
\end{align*}
as $\theta\to0$. This then gives:
\begin{align*}
    \cosh2\theta - 1 &= 2\theta^2 + O(\theta^4), & \sinh2\theta - 2\theta &= \frac{4}{3}\theta^3 + O(\theta^5),
\end{align*}
as $\theta\to0$. Putting the leading order terms for $\theta\to0$ into (\ref{z2FinalAgain}) and using $A_2=-3z_2$ gives
\begin{align*}
    A_2 = -\frac{1}{3}z_2 = -2\frac{\big(\frac{4}{3}\theta^3\big)^2}{(2\theta^2)^3} + H.O.T. \to -\frac{4}{9},
\end{align*}
verifying the first limit in (\ref{limitsUhat}). Similarly, putting the leading order terms above into (\ref{z4FinalAgain}) for $\theta\to0$ gives
\begin{align*}
    A_4 = -\frac{1}{5}z_4 = -6\frac{\big(\frac{4}{3}\theta^3\big)^4}{(2\theta^2)^6} + H.O.T. \to -\frac{8}{27},
\end{align*}
verifying the third limit in (\ref{limitsUhat}). Again, putting the leading order terms above into (\ref{w0FinalAgain}) for $\theta\to0$ gives
\begin{align*}
    w_0 = 2\theta\frac{\frac{4}{3}\theta^3}{(2\theta^2)^2} + H.O.T. \to \frac{2}{3},
\end{align*}
verifying the second limit in (\ref{limitsUhat}). Finally, putting the leading order terms above into (\ref{w2FinalAgain}) for $\theta\to0$ gives
\begin{align*}
    w_2 = 3\theta\frac{\big(\frac{4}{3}\theta^3\big)^3}{(2\theta^2)^5} + H.O.T. \to \frac{2}{9},
\end{align*}
verifying the last limit in (\ref{limitsUhat}).

We now confirm the limits for $\theta\to\infty$, which by (\ref{Theta2}) is equivalent to $t\to\infty$. For this it suffices to use the leading order expressions:
\begin{align*}
    \cosh\theta &= \frac{1}{2}e^\theta + O(e^{-\theta}), & \sinh\theta &= \frac{1}{2}e^\theta + O(e^{-\theta}),
\end{align*}
as $\theta\to\infty$. This then gives the leading orders of $(\cosh2\theta-1)$ and $(\sinh2\theta-2\theta)$ as $\frac{1}{2}e^{2\theta}$ for $\theta\to\infty$. Putting these leading order expressions into (\ref{z2FinalAgain})--(\ref{w2FinalAgain}) easily confirms the limits on the right hand side of (\ref{limitsUhatminus}).
\end{proof}

\begin{proof}[Proof of Theorem \ref{UnstableManifoldOfSM}]
We begin by recalling Theorem \ref{Friedmannknotzeroexpansion}. The fact that Friedmann spacetimes are spherically symmetric solutions of the Einstein field equations which are smooth at the center for every $k$ implies that they all solve the self-similar equations (\ref{Axi-final0})--(\ref{weqnxi-final0}) in SSCNG coordinates. This justifies the expansions (\ref{kneq0z})--(\ref{kneq0A}) in even powers of $\xi$, with coefficient functions of $t$, and implies that the first and second terms in their expansion, given by (\ref{z2FinalAgain})--(\ref{w2FinalAgain}) in the case $k=-1$, $\Delta_0=\frac{4}{9}$, and by (\ref{z2FinalAgaink1})--(\ref{w2FinalAgaink1}) in the case $k=+1$, $\Delta_0=\frac{4}{9}$ respectively, produce unique exact solutions $(z_2,w_0,z_4,w_2)$ of the $4\times4$ system (\ref{4by4system}). Equation (\ref{oneorbit}) shows that the $k\neq0$ Friedmann solution for $\Delta_0\neq1$ is a time-translation of the $k\neq0$ Friedman solution corresponding to $\Delta_0=\frac{4}{9}$, and hence all of the $k<0$ Friedmann solutions lie on a single trajectory, with the same for $k>0$ Friedmann solutions but on a different trajectory. Therefore, since $\lim_{t\to0}\equiv\lim_{\theta\to0}\equiv\lim_{\tau\to-\infty}$, to prove that these Friedmann solutions of (\ref{4by4system}) lie in the unstable manifold of $SM$, it suffices to prove that $\lim_{\theta\to0}\tilde{\boldsymbol{U}}_\pm(\theta)=\boldsymbol{U}_F$ for $\tilde{\boldsymbol{U}}_\pm=(\tilde{z}_2,\tilde{w}_0,\tilde{z}_4,\tilde{w}_2)$ defined in (\ref{z2FinalAgain})--(\ref{w2FinalAgain}) and (\ref{z2FinalAgaink1})--(\ref{w2FinalAgaink1}) respectively. Given that this follows from Corollary \ref{CorLimits}, the proof of Theorem \ref{UnstableManifoldOfSM} is complete.
\end{proof}

\section{The Higher Order STV-ODE}\label{S11}

Consider now the case of the STV-ODE at orders $n\geq3$. In this light, assume a smooth solution $(z,w)$ of the STV-PDE (\ref{Axi-final0})--(\ref{weqnxi-final0}) is expanded asymptotically in even powers of $\xi$ as so:
\begin{align*}
    z(t,\xi) &= \sum_{n=1}^\infty z_{2n}(t)\xi^{2n}, & w(t,\xi) &= \sum_{n=0}^{\infty}w_{2n}\xi^{2n}.
\end{align*}
We introduce the notation
\begin{align*}
    \boldsymbol{v}_n = (z_{2n},w_{2n-2})
\end{align*}
for $n\geq1$, so in terms of our earlier notation,
\begin{align*}
    \boldsymbol{v}_1 = \boldsymbol{u} := (z_2,w_0).
\end{align*}
In the next theorem we establish that the system of ODE, obtained by substituting the above expansions into equations (\ref{Axi-final0})--(\ref{weqnxi-final0}) and collecting like powers of $\xi$, closes in $\boldsymbol{v}_1,\dots,\boldsymbol{v}_n$ at each order $n\geq1$. We then prove by induction that if
\begin{align*}
    \boldsymbol{U}_k := (\boldsymbol{v}_1,\dots,\boldsymbol{v}_k) \to M
\end{align*}
for $k\leq n-1$ in the $(n-1)\times(n-1)$ closed system of ODE, then also $\boldsymbol{U}_n\to M$ in the $n\times n$ system as well, where $M$ is the stable rest point in each closed $n\times n$ system given by
\begin{align*}
    M = (0,1,0,\dots,0).
\end{align*}
We note that at $M$ we have $\boldsymbol{v}_1=(0,1)$ and $\boldsymbol{v}_k=(0,0)$ for all $k\geq2$.

\begin{Thm}\label{Thmgeneralexpansion}
    Assume a smooth solution $(z,w)$ of system (\ref{Axi-final0})--(\ref{weqnxi-final0}) is expanded asymptotically in even powers of $\xi$ as so:
    \begin{align}
        z(t,\xi) &= \sum_{n=0}^\infty z_{2n}(t)\xi^{2n},\label{ansatzinfinityz}\\
        w(t,\xi) &= \sum_{n=0}^{\infty}w_{2n}(t)\xi^{2n},\label{ansatzinfinityw}\\
        A(t,\xi) &= \sum_{n=0}^{\infty}A_{2n}(t)\xi^{2n},\label{ansatzinfinityA}\\
        D(t,\xi) &= \sum_{n=0}^{\infty}D_{2n}(t)\xi^{2n},\label{ansatzinfinityD}
    \end{align}
    where
    \begin{align*}
        z_0 &= 0, & A_0 &= 1, & D_0 &= 1.
    \end{align*}
    Then substituting (\ref{ansatzinfinityz})--(\ref{ansatzinfinityD}) into (\ref{zeqnxi-final0})--(\ref{weqnxi-final0}) and collecting even powers of $\xi$ leads to the following system of equations: 
    \begin{align}
        t\dot{z}_{2n} &= (2n)z_{2n} - (2n+1)\sum_{i+j+k=n}D_{2i}z_{2j}w_{2k},\label{zneqn}\\
        t\dot{w}_{2n} &= (2n+1)w_{2n} + \frac{1}{2}\sum_{i+j+k=n}\hat{w}_{2i}a_{2j}D_{2k} - \sum_{i+j+k=n}(2i+1)w_{2i}w_{2j}D_{2k},\label{wneqn}
    \end{align}
    where $\hat{w}_{2k}$ and $a_{2k}$ are defined by
    \begin{align}
        \hat{w}_{2n} = \begin{cases}
            1, & n=0,\\
            -\sum_{i+j=n-1}w_{2i}w_{2j}, & n\geq1,
        \end{cases}\label{defwhat}
    \end{align}
    and
    \begin{align}
        \frac{A-1}{A} = \sum_{n=0}^{\infty}a_{2n}\xi^{2n}.\label{defan}
    \end{align}
    respectively. Note that (\ref{zneqn}) holds for $n\geq2$, (\ref{wneqn}) holds for $n\geq1$ and the case for $(z_2,w_0)$ is given in system (\ref{2by2system}). Moreover, substituting (\ref{ansatzinfinityz}) and (\ref{ansatzinfinityA}) into (\ref{Axi-final0}) and collecting like powers of $\xi$ determines $A_{2n}$ in terms of $z_{2n}$ according to
    \begin{align}
        A_{2n} &= -\frac{1}{2n+1}z_{2n},\label{Antozn}
    \end{align}
    for $n\geq1$. Furthermore, substituting (\ref{ansatzinfinityz})--(\ref{ansatzinfinityD}) into (\ref{Dxi-final0}) and collecting like powers of $\xi$ determines $D_{2n}$, as a function of $z_2,\dots,z_{2n}$ and $w_0,\dots,w_{2n-2}$ as so
    \begin{align}
        4nD_{2n} &= \sum_{i+j+k+l=n-1}z_{2i}w_{2j}w_{2k}D_{2l} - \sum_{\substack{i+j=n\\i\neq0}}(z_{2i}+2A_{2i})D_{2j} - \sum_{\substack{i+j=n\\j\neq n}}4jA_{2i}D_{2j}\notag\\
        &= -(z_{2n}+2A_{2n}) + \sum_{i+j+k+l=n-1}z_{2i}w_{2j}w_{2k}D_{2l} - \sum_{\substack{i+j= n\\i\neq0,n}}(z_{2i}+2A_{2i})D_{2j} - \sum_{\substack{i+j=n\\j\neq n}}4jA_{2i}D_{2j}\notag\\
        &= -\frac{2n-1}{2n+1}z_{2n} + \gamma_n(\boldsymbol{v}_1,\dots,\boldsymbol{v}_{n-1}),\label{Dntoznwn}
    \end{align}
    for $n\geq1$, where $\gamma_n$ is a smooth function of $(\boldsymbol{v}_1,\dots,\boldsymbol{v}_{n-1})$. Finally, substituting expressions (\ref{Antozn}) and (\ref{Dntoznwn}) into equations (\ref{zneqn})--(\ref{wneqn}) results in a system of $n$ $2\times2$ ODE in $\boldsymbol{v}_1,\dots,\boldsymbol{v}_n$ which closes for every $n\geq1$. For each $k=1,\dots,n$, the equation for $\boldsymbol{v}_k$ takes the form
    \begin{align}
        t\dot{\boldsymbol{v}}_k = P_k\boldsymbol{v}_k + \boldsymbol{q}_k\label{ktheqninbfv}
    \end{align}
    where $P_k$ is the $2\times2$ matrix
    \begin{align}
        P_k = P_k(\boldsymbol{v}_1) = \left(\begin{array}{cc}(2k+1)(1-w_0)-1 & -(2k+1)z_2\\
        -\frac{1}{2(2k+1)} & 2k(1-w_0)-1
        \end{array}\right)\label{Pk}
    \end{align}
    and:
    \begin{align*}
        \boldsymbol{q}_1 &= \boldsymbol{0},\\
        \boldsymbol{q}_k &= \boldsymbol{q}_k(\boldsymbol{v}_1,\dots,\boldsymbol{v}_{k-1}).
    \end{align*}
\end{Thm}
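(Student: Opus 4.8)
The proof is a systematic coefficient-matching argument: substitute the even-power ansatz (\ref{ansatzinfinityz})--(\ref{ansatzinfinityD}) into the four STV-PDE (\ref{Axi-final0})--(\ref{weqnxi-final0}) and equate coefficients of $\xi^{2n}$, using Cauchy products of the power series together with the normalizing data $A_0=D_0=1$, $z_0=0$. The plan is to process the four equations in the order $A$, $z$, $w$, $D$, and then assemble the closed $2n\times2n$ system. Throughout, the only structural feature of the transport operators used is that $\xi\partial_\xi$ multiplies the coefficient of $\xi^{2n}$ by $2n$.

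First I would treat the purely algebraic relation coming from (\ref{Axi-final0}): matching $\xi^{2n}$ in $\xi A_\xi=-z+(1-A)$ gives $2nA_{2n}=-z_{2n}-A_{2n}$, i.e. $(2n+1)A_{2n}=-z_{2n}$, which is (\ref{Antozn}); no induction is needed. For (\ref{zeqnxi-final0}) I would write $(-1+Dw)z=-z+Dwz$, expand $Dwz$ as the triple Cauchy product with coefficients $\sum_{i+j+k=n}D_{2i}w_{2j}z_{2k}$, and match $\xi^{2n}$ to obtain $t\dot z_{2n}=2nz_{2n}-(2n+1)\sum_{i+j+k=n}D_{2i}w_{2j}z_{2k}$, which is (\ref{zneqn}). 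For (\ref{weqnxi-final0}) the delicate point is the term $\tfrac{D}{2\xi^2}(1-\xi^2w^2)\tfrac{1-A}{A}$: one records $1-\xi^2w^2=\sum_{n\ge0}\hat w_{2n}\xi^{2n}$ with $\hat w$ as in (\ref{defwhat}), and notes that by (\ref{defan}) and $A_0=1$ the series $\tfrac{A-1}{A}=\sum_{n\ge1}a_{2n}\xi^{2n}$ has no constant term; hence $D(1-\xi^2w^2)\tfrac{A-1}{A}$ starts at order $\xi^2$, and division by $\xi^2$ produces a genuine power series whose coefficient of $\xi^{2n}$ is the (once index-shifted) convolution appearing in (\ref{wneqn}). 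Combining this with the contributions of $\xi w_\xi$ and the bare $w$ (which together give $(2n+1)w_{2n}$), of $-\xi Dww_\xi$, and of $-Dw^2$ yields (\ref{wneqn}). Finally, for (\ref{Dxi-final0}) I would invert the power series $A$ to get $\tfrac1A=1-\sum_{n\ge1}a_{2n}\xi^{2n}$, rewrite the right side as $\big(\tfrac DA-D\big)-\tfrac{D}{2A}(1-\xi^2w^2)z$, and match $\xi^{2n}$: the two $D_{2n}$'s cancel, the $a_{2n}$ in $\tfrac DA$ contributes its leading part $-A_{2n}=\tfrac{z_{2n}}{2n+1}$, and the last term contributes $-\tfrac{z_{2n}}2$, so that $2nD_{2n}=-\tfrac{2n-1}{2(2n+1)}z_{2n}+(\text{lower order})$, which is (\ref{Dntoznwn}). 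That the remainder $\gamma_n$ is a function of $\boldsymbol v_1,\dots,\boldsymbol v_{n-1}$ only follows by strong induction on $n$: every factor entering it is some $z_{2j},w_{2j},A_{2j},D_{2j}$ with $j<n$, which by the inductive hypothesis together with (\ref{Antozn})--(\ref{Dntoznwn}) is expressible through $\boldsymbol v_1,\dots,\boldsymbol v_{n-1}$.

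The last step is the closure statement (\ref{ktheqninbfv})--(\ref{Pk}). Substituting (\ref{Antozn}) and (\ref{Dntoznwn}) into (\ref{zneqn})--(\ref{wneqn}), I would isolate in each convolution the terms carrying a top-order unknown of the block $\boldsymbol v_k=(z_{2k},w_{2k-2})$. In the equation for $z_{2k}$ the only such terms are $D_0w_0z_{2k}$ and $D_0w_{2k-2}z_2$ — every higher $D_{2i}$ is forced to multiply $z_0=0$ — which, together with the explicit $2kz_{2k}$ term of (\ref{zneqn}), gives the first row of $P_k$ in (\ref{Pk}). In the equation for $w_{2k-2}$ the coupling to $z_{2k}$ enters only through $a_{2k}$, whose leading part is $A_{2k}=-z_{2k}/(2k+1)$, inside the index-shifted $\tfrac{1}{2\xi^2}$-term; this produces the off-diagonal entry $-\tfrac{1}{2(2k+1)}$, while the diagonal coefficient $2k(1-w_0)-1$ is collected from the $(2k-1)w_{2k-2}$ term of (\ref{wneqn}) and the leading linearizations of $-Dw^2$ and $-\xi Dww_\xi$ at $w_0$. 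Every remaining term involves only $z_{2j},w_{2j}$ with $j<k$ and so is absorbed into $\boldsymbol q_k(\boldsymbol v_1,\dots,\boldsymbol v_{k-1})$; the base case $k=1$ is the explicit system (\ref{2by2system}) with $\boldsymbol q_1=\boldsymbol 0$.

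The work here is entirely bookkeeping, and I expect the main obstacle to be keeping the triple convolutions, the power-series inversion of $A$, and especially the order-shift caused by the $\tfrac{1}{2\xi^2}$ factor straight, since it is precisely that shift which supplies the $z_{2k}\to w_{2k-2}$ coupling in $P_k$ and is the easiest place to slip by one index. The cleanest way to control this is to run the induction on $n$ through all four sub-claims simultaneously, so that at each stage $A_{2n}$ and $D_{2n}$ are already known to be affine in $z_{2n}$ with coefficients determined by $\boldsymbol v_1$ and lower-order blocks — which is exactly what makes the top-order part of (\ref{zneqn})--(\ref{wneqn}) collapse to $P_k\boldsymbol v_k$.
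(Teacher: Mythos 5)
Your proposal is correct and follows essentially the same route as the paper's proof in Section \ref{Appendix1H}: substitute the even-power ansatz, form the Cauchy products, match coefficients of $\xi^{2n}$, and read off the top-order block $P_k\boldsymbol{v}_k$ by noting that $z_0=0$ kills the higher $D$-couplings in the $z$-equation while the $\tfrac{1}{2\xi^2}$ index shift in the $w$-equation supplies the $-\tfrac{1}{2(2k+1)}$ entry through the leading part $A_{2k}=-z_{2k}/(2k+1)$ of the shifted $a$-series. The only cosmetic difference is that you invert $A$ as a power series in the $D$-equation where the paper clears the denominator by multiplying through by $2A$; both give $4nD_{2n}=-\tfrac{2n-1}{2n+1}z_{2n}+\gamma_n$ with $\gamma_n$ depending only on lower-order blocks.
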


The proof of Theorem \ref{Thmgeneralexpansion} is given in Section \ref{Appendix1H}.

Since system (\ref{ktheqninbfv}) closes in $\boldsymbol{v}_1,\dots,\boldsymbol{v}_n$ for every $n\geq1$, it follows for each $n\geq1$ that equations (\ref{zneqn})--(\ref{wneqn}) together with (\ref{Antozn} and (\ref{Dntoznwn}) determine a closed system of $2n\times2n$ ODE in $z_2,\dots,z_{2n}$ and $w_0,\dots,w_{2n-2}$ as a function of $\tau=\ln t$. We write this as an $n\times n$ system in the compact form
\begin{align}
    \frac{d}{d\tau}\left(\begin{array}{c}\boldsymbol{v}_1\\
    \vdots\\
    \boldsymbol{v}_n\end{array}\right) = \left(\begin{array}{c}P_1\boldsymbol{v}_1 + \boldsymbol{q}_1\\
    \vdots\\
    P_n\boldsymbol{v}_n + \boldsymbol{q}_n\end{array}\right)\label{nbynsystem}
\end{align}
in unknowns $\boldsymbol{v}_1,\dots,\boldsymbol{v}_n$, where for each $1\leq k\leq n$, $P_k$ is a function of only $\boldsymbol{v}_1=(z_2,w_0)$ and $\boldsymbol{q}_k$ is a function of $\boldsymbol{v}_1,\dots\boldsymbol{v}_{k-1}$. We refer to system (\ref{nbynsystem}) as the $n^{th}$-order STV-ODE. Note that at order $n=1$ we recover the $2\times2$ system (\ref{2by2system}) in $(z_2,w_0)$ and at order $n=2$ we recover the $4\times4$ system (\ref{4by4system}) in $(z_2,w_0,z_4,w_2)$. As a consequence of the following generalization of Theorem \ref{2by2phaseportrait}, we establish the instability of the $p=0$, $k=0$ Friedmann spacetime, the stability of the rest point $M$ and the stability of the $p=0$, $k<0$ Friedmann spacetimes to perturbations of all orders. Now smooth solutions of (\ref{nbynsystem}) starting at $\tau_*\in\mathbb{R}$ are bounded on the compact interval $[\tau_*,\tau]$ for any $\tau_*\leq\tau<\infty$, and since (\ref{nbynsystem}) is autonomous, solutions are preserved under the time translation $\tau\to \tau-\tau_*+1$. To keep things simple, and without loss of generality, we state the following theorem in terms of solutions defined for $\tau\geq\tau_*$, assuming initial time $\tau_*\geq1$.
 
\begin{Thm}\label{ThmBigStability} 
    Let $\boldsymbol{U}(\tau)=(\boldsymbol{v}_1(\tau),\dots,\boldsymbol{v}_n(\tau))$ be a smooth solution of the initial value problem for the $2n\times2n$ system of ODE (\ref{nbynsystem}) starting from initial data
    \begin{align}
        \boldsymbol{U}(\tau_*) = \boldsymbol{U}_*\in\mathbb{R}^{2n}.\label{initialdataU*}
    \end{align}
    Since system (\ref{nbynsystem}) is autonomous, assume for convenience, and without loss of generality, that $\tau_*\geq1$. Assume further that $\boldsymbol{v}_1(\tau)=(z_2(t),w_0(t))$ is defined for all $\tau\geq\tau_*$ and 
    \begin{align}
        \lim_{\tau\to\infty}\boldsymbol{v}_1(\tau) = (0,1).\label{lowestorderlimit}
    \end{align}
    Then $\boldsymbol{U}(\tau)$ is defined for all $\tau\geq\tau_*$ and 
    \begin{align}
        \lim_{\tau\to\infty}\boldsymbol{v}_k = (0,0)\label{higherorderlimit}
    \end{align}
    for $k=2,\dots,n$, that is,
    \begin{align}
        \lim_{\tau\to\infty}\boldsymbol{U}(\tau) = (0,1,0,\dots,0) = M.\label{higherorderlimitU}
    \end{align}
    Moreover, there exists a constant $C>0$, depending only on the system (\ref{nbynsystem}) and initial data $(\boldsymbol{U}_*,\tau_*)$, such that
    \begin{align}
        |\boldsymbol{U}(\tau)-M| \leq C\tau^ne^{-\tau}
    \end{align}
    for all $\tau\geq\tau_*$. Furthermore, there exists a $\Delta_0>0$ and constant $C>0$ such that the associated $k<0$ Friedmann solution
    \begin{align*}
        \boldsymbol{U}^F(\tau) = (\boldsymbol{v}_1^F(\tau),\dots,\boldsymbol{v}_n^F(\tau))
    \end{align*}
    satisfies
    \begin{align}
        |\boldsymbol{v}_1(\tau)-\boldsymbol{v}_1^F(\tau)| \leq Ce^{-\tau}\label{higherorderlimitUF1}
    \end{align}
    and 
    \begin{align}
        |\boldsymbol{v}_k(\tau)-\boldsymbol{v}_k^F(\tau)| \leq C\tau^ke^{-\tau}\label{higherorderlimitUF2}
    \end{align}
    for $k=2,\dots,n$ and $\tau\geq\tau_*$.
\end{Thm}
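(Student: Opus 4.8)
The plan is to prove Theorem \ref{ThmBigStability} by induction on the order $n$, exploiting the nested triangular structure of the STV-ODE (\ref{nbynsystem}): the equation for $\boldsymbol{v}_k$ is $\frac{d}{d\tau}\boldsymbol{v}_k = P_k(\boldsymbol{v}_1)\boldsymbol{v}_k + \boldsymbol{q}_k$, where $P_k$ depends on $\boldsymbol{v}_1$ alone (formula (\ref{Pk})) and $\boldsymbol{q}_k$ is a polynomial in $\boldsymbol{v}_1,\dots,\boldsymbol{v}_{k-1}$ only, with $\boldsymbol{q}_1=\boldsymbol{0}$. The base case $n=1$ is exactly Theorem \ref{2by2phaseportrait}: it furnishes $|\boldsymbol{v}_1(\tau)-(0,1)|\le C\tau e^{-\tau}$, the existence of a $\Delta_0>0$, and the associated $k<0$ Friedmann leading-order solution $\boldsymbol{v}_1^F$ with $|\boldsymbol{v}_1(\tau)-\boldsymbol{v}_1^F(\tau)|\le Ce^{-\tau}$. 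Global existence for $\tau\ge\tau_*$ follows because (\ref{nbynsystem}) is linear in each top variable $\boldsymbol{v}_k$ once $\boldsymbol{v}_1,\dots,\boldsymbol{v}_{k-1}$ are treated as known, so solutions extend as long as the lower-order variables remain finite, and the inductive decay bounds keep them bounded; after a finite-time normalization we may therefore assume $\boldsymbol{v}_1(\tau)$ is already near $(0,1)$ from the outset.

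The crux is that evaluating (\ref{Pk}) at $M$ yields the constant matrix $P_k(M)=\begin{pmatrix} -1 & 0 \\ -\frac{1}{2(2k+1)} & -1\end{pmatrix}$, a degenerate (resonant) stable node with double eigenvalue $-1$ and a single eigenvector, whose off-diagonal part feeds only the first component into the second. Since $M$ is a rest point, $\boldsymbol{q}_k(M,0,\dots,0)=\boldsymbol{0}$, so along any trajectory tending to $M$ the forcing $\boldsymbol{q}_k(\tau)$ is Lipschitz-controlled by the deviations of $\boldsymbol{v}_1,\dots,\boldsymbol{v}_{k-1}$ from their $M$-values, each $O(\tau^{j}e^{-\tau})$ by the inductive hypothesis. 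Because $\boldsymbol{v}_1(\tau)$ converges exponentially to $(0,1)$, the non-autonomous matrix $P_k(\boldsymbol{v}_1(\tau))$ converges exponentially to $P_k(M)$, so by a Levinson/asymptotic-integration argument its fundamental solution $\Phi_k$ inherits the node bound $\|\Phi_k(\tau)\Phi_k(s)^{-1}\|\le C(1+\tau-s)e^{-(\tau-s)}$. Inserting this and the forcing bound into the variation-of-parameters formula, and using that $\boldsymbol{v}_k(\tau)\to 0$ (which itself follows once the forcing tends to zero and $P_k(M)$ is stable), gives $|\boldsymbol{v}_k(\tau)|\le C\tau^{k}e^{-\tau}$; summing over $k$ yields $|\boldsymbol{U}(\tau)-M|\le C\tau^{n}e^{-\tau}$. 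The $k<0$ Friedmann comparison (\ref{higherorderlimitUF1})--(\ref{higherorderlimitUF2}) is obtained by the identical argument applied to $\boldsymbol{\delta}_k:=\boldsymbol{v}_k-\boldsymbol{v}_k^F$, which solves $\frac{d}{d\tau}\boldsymbol{\delta}_k = P_k(\boldsymbol{v}_1^F)\boldsymbol{\delta}_k + \big[P_k(\boldsymbol{v}_1)-P_k(\boldsymbol{v}_1^F)\big]\boldsymbol{v}_k + \big[\boldsymbol{q}_k(\boldsymbol{v}_{<k})-\boldsymbol{q}_k(\boldsymbol{v}_{<k}^F)\big]$; the first forcing term is $O(\tau^{k}e^{-2\tau})$ by the base case and the second is controlled by $\sum_{j<k}|\boldsymbol{\delta}_j|$, so the same fundamental-solution estimate closes the induction with rate $\tau^{k}e^{-\tau}$.

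The main obstacle I expect is the bookkeeping of the powers of $\tau$ as one climbs the hierarchy: at every level the matrix $P_k(M)$ carries a nontrivial Jordan block and the forcing $\boldsymbol{q}_k$ already decays only at the resonant rate $(\text{polynomial})\cdot e^{-\tau}$, so a naive variation-of-parameters estimate gains too many powers of $\tau$ per step. Keeping the rate at exactly $\tau^{k}e^{-\tau}$ will require a refined induction hypothesis that separates the decay of the density component $z_{2k}$ from that of the velocity component $w_{2k-2}$ — concretely, $z_{2k}$ decaying one power of $\tau$ faster than $w_{2k-2}$, mirroring $|z_2|=O(e^{-\tau})$ versus $|1-w_0|=O(\tau e^{-\tau})$ at the base — together with the observation, read off from the explicit forms (\ref{zneqn})--(\ref{Dntoznwn}), that $\boldsymbol{q}_k^{(1)}$ is at least linear in the lower-order densities $z_{2j}$ while the single resonant enhancement of $P_k(M)$ strikes only the velocity component. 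Verifying that the polynomial structure of the $\boldsymbol{q}_k$ respects this dichotomy is the technical heart; the remainder is a standard Gronwall/Hartman--Grobman packaging of the sort already used for the $2\times2$ and $4\times4$ systems in Section \ref{S8}.
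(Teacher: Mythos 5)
Your proposal follows the same skeleton as the paper's proof: induction on $n$ through the nested triangular structure of (\ref{nbynsystem}), with Theorem \ref{2by2phaseportrait} as the base case and the top block $\boldsymbol{v}_n$ treated as a linear non-autonomous perturbation of the degenerate node $P_n(M)$ forced by the lower orders. The differences are in the packaging. The paper does not invoke a Levinson-type fundamental-solution bound; it runs three successive scalar Gr\"onwall estimates with integrating factors — first a supnorm bound on $|z_{2n}|+|w_{2n-2}|$, then the $z_{2n}$-inequality alone (exploiting that the $(1,2)$ entry of $P_n$ is $-(2n+1)z_2$, which vanishes at $M$ and so couples $w_{2n-2}$ into the $z_{2n}$-equation only through an exponentially small coefficient), then the $w_{2n-2}$-inequality. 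For the Friedmann comparison the paper simply observes that $\boldsymbol{U}^F$ itself solves (\ref{nbynsystem}) and tends to $M$, so (\ref{higherorderlimitUF2}) follows from the decay of both solutions, with the improved leading-order rate (\ref{higherorderlimitUF1}) already supplied by Lemma \ref{diffxylemma}; your separate difference-equation for $\boldsymbol{\delta}_k$ is more work than the theorem requires, though it would be the right tool if a gain over the triangle inequality were claimed at orders $k\geq2$. The most valuable part of your proposal is the third paragraph: the concern about accumulating powers of $\tau$ is real, and the paper's own final step (integrating the $w_{2n-2}$-inequality after substituting $|z_{2n}|\leq \bar{C}_u\tau^n e^{-\tau}$) is loose at exactly this point — taken literally it yields $\tau^{n+1}e^{-\tau}$ for the second component. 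Your refined hypothesis, separating the decay of $z_{2k}$ from that of $w_{2k-2}$ and using that the forcing $\boldsymbol{q}_k^{(1)}$ always carries a factor of a lower-order density while the Jordan-block resonance strikes only the velocity component, is precisely what is needed to land cleanly on the stated rate $\tau^k e^{-\tau}$, so on this step your argument is tighter than the paper's.
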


We have the following rather remarkable corollary, which states that any solution of the $2\times2$ system (\ref{2by2system}) within the domain of attraction of $M=(0,1)$ can be extended arbitrarily to a global solution of (\ref{nbynsystem}) at every order $n$, with arbitrary higher order initial data. Moreover, all components higher order than $(z_2,w_0)$ decay to zero at the rate $\frac{\ln t}{t}$. This is a restatement of Theorem \ref{DecayToM} in the introduction.

\begin{Corollary}\label{globalexistencewithdecay}
    Assume $(z_2(t),w_0(t))$ is a solution of the $2\times2$ system (\ref{2by2system}) with initial data
    \begin{align}
        (z_2(t_*),w_0(t_*)) &= (z_2^*,w_0^*), & z_2^* &\geq 0, & t_* &> 0,\label{2by2initialdata}
    \end{align}
    such that
    \begin{align*}
        \lim_{t\to\infty}(z_2(t),w_0(t)) = (0,1) = M.
    \end{align*}
    Then the solution of the $2n\times2n$ system with initial data (\ref{2by2initialdata}) augmented with the arbitrary higher order initial data
    \begin{align}
        (z_{2k}(t_*),w_{2k-2}(t_*)) &= (z_{2k}^*,w_{2k-2}^*)\in\mathbb{R}^2, & k &\geq 2,
    \end{align}
    exists for all time. Moreover, there exists a constant $C>0$, depending only on the initial data and the equations, such that all higher order components satisfy
    \begin{align}
        \big|(z_{2k}(t),w_{2k-2}(t))-(z_{2k}^*,w_{2k-2}^*)\big| &\leq (C+1)\frac{\ln t}{t}, & k &\geq 2.\label{kbykdecay}
    \end{align}
\end{Corollary}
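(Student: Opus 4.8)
The plan is to read Corollary \ref{globalexistencewithdecay} as the specialization of Theorem \ref{ThmBigStability} to the full nested system, and to deduce both assertions from the triangular linear structure (\ref{ktheqninbfv})--(\ref{Pk}). First I would dispatch global existence by induction on the block index $k$. By hypothesis $\boldsymbol{v}_1(t)=(z_2(t),w_0(t))$ is defined for all $t\ge t_*$ and converges to $M$, hence is bounded on $[t_*,\infty)$. Assuming $\boldsymbol{v}_1,\dots,\boldsymbol{v}_{k-1}$ are defined and bounded there, the block equation $t\dot{\boldsymbol{v}}_k = P_k(\boldsymbol{v}_1)\boldsymbol{v}_k + \boldsymbol{q}_k(\boldsymbol{v}_1,\dots,\boldsymbol{v}_{k-1})$ is \emph{linear} in $\boldsymbol{v}_k$, with coefficient matrix $P_k(\boldsymbol{v}_1)$ a polynomial in the bounded quantity $\boldsymbol{v}_1$ and with $\boldsymbol{q}_k$ bounded by the inductive hypothesis. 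A linear ODE whose coefficients and forcing are bounded in $\tau=\ln t$ admits no finite-$\tau$ blow-up (Gronwall), so $\boldsymbol{v}_k$ extends to all $\tau\ge\ln t_*$ for the arbitrary data $(z_{2k}^*,w_{2k-2}^*)$. This yields existence of the augmented solution for all time.

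For the estimate, note that (\ref{2by2initialdata}) together with $\lim_{t\to\infty}(z_2,w_0)=(0,1)$ is exactly hypothesis (\ref{lowestorderlimit}) of Theorem \ref{ThmBigStability}, so that theorem applies verbatim to the augmented solution and gives $\lim_{\tau\to\infty}\boldsymbol{v}_k(\tau)=(0,0)$ for $k\ge 2$ with a decay bound. To obtain the clean rate $\ln t/t$ (uniformly in $k$) claimed in (\ref{kbykdecay}), I would work in $\tau=\ln t$ and use variation of constants against the limiting matrix $P_k(M)$. From (\ref{Pk}), $P_k(M)$ has the single eigenvalue $-1$ with a one-dimensional eigenspace, i.e.\ a $2\times 2$ Jordan block, exactly the degenerate structure of $M$ recorded in Theorem \ref{DegenerateM}; hence its propagator obeys $\|e^{\sigma P_k(M)}\|\le C(1+\sigma)e^{-\sigma}$. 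The homogeneous contribution carrying the arbitrary data $(z_{2k}^*,w_{2k-2}^*)$ therefore decays like $(1+\tau)e^{-\tau}=\ln t/t$, and the forced part (the off--rest-point piece $(P_k(\boldsymbol{v}_1)-P_k(M))\boldsymbol{v}_k$ together with $\boldsymbol{q}_k$) is controlled by the lower-order decay.

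To assemble the bound I would split $\boldsymbol{v}_k=\boldsymbol{v}_k^F+(\boldsymbol{v}_k-\boldsymbol{v}_k^F)$: the Friedmann part $\boldsymbol{v}_k^F$ decays at rate $\ln t/t$, read off from the exact expansions (\ref{z4FinalAgain})--(\ref{w2FinalAgain}) as $\theta\to\infty$, while $\boldsymbol{v}_k-\boldsymbol{v}_k^F$ decays at least this fast by (\ref{higherorderlimitUF2}); summing gives $|(z_{2k}(t),w_{2k-2}(t))|\le (C+1)\tfrac{\ln t}{t}$. This is the estimate (\ref{kbykdecay}), in which the deviation is measured from the limiting rest-point value $(0,0)$ toward which the augmented components relax: starting from the arbitrary data $(z_{2k}^*,w_{2k-2}^*)$ at $t_*$, each higher-order block decays to $(0,0)$ at the stated rate, so the components are driven to the Friedmann/Minkowski value rather than remaining near their assigned initial data.

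The main obstacle is obtaining the rate $\ln t/t$ with a power of $\tau=\ln t$ that does \emph{not} grow with the order $k$, despite the resonant degenerate eigenvalue $-1$ of $M$ at every order: a naive Duhamel iteration up the triangular system accumulates one factor of $\tau$ per order (this is the source of the $\tau^k$ in Theorem \ref{ThmBigStability}). The resolution I would pursue is that the $\ln t/t$-order part of the forcing $\boldsymbol{q}_k$ is precisely the part reproduced by the Friedmann particular solution $\boldsymbol{v}_k^F$; subtracting it leaves a strictly faster-decaying ($e^{-2\tau}$-type) forcing whose Duhamel response stays at the homogeneous rate $\tau e^{-\tau}$, so the logarithmic power does not compound with $k$. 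Making this cancellation uniform in $k$ is the crux, and it is where the exact formulas of Theorem \ref{Friedmannknotzeroexpansion} and the order-uniform constants from Theorem \ref{ThmBigStability} carry the argument.
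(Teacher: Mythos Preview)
Your proposal is substantially more careful than the paper's own argument. The paper's entire proof is a single sentence: it declares the corollary to be Theorem \ref{ThmBigStability} restated with $t$ in place of $\tau=\ln t$, with the $(C+1)$ in place of $C$ meant to absorb the bounded transient on $t\in(t_*,1)$. There is no separate global-existence argument and no Duhamel/Friedmann-subtraction analysis; everything is delegated to Theorem \ref{ThmBigStability}.

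You have in fact identified two genuine inconsistencies in the \emph{statement} of the corollary that the paper's one-line proof does not address. First, Theorem \ref{ThmBigStability} yields $|\boldsymbol{v}_k(\tau)|\le C\tau^k e^{-\tau}$, i.e.\ $(\ln t)^k/t$, not the uniform $\ln t/t$ written in (\ref{kbykdecay}); the resonant Jordan structure at $M$ really does accumulate a factor of $\tau$ per block in that proof, exactly as you note. Second, the quantity on the left of (\ref{kbykdecay}) is the deviation from the \emph{initial data} $(z_{2k}^*,w_{2k-2}^*)$, which cannot tend to zero for generic data since $\boldsymbol{v}_k\to(0,0)$; your reinterpretation (measuring deviation from the rest-point value) is the only sensible reading. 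The Friedmann-subtraction mechanism you sketch to recover a $k$-independent rate is plausible and interesting, but it goes well beyond what the paper actually proves; the paper is content to invoke Theorem \ref{ThmBigStability} and does not close the gap you correctly flag. In short: your global-existence step matches the paper's intent, and your decay analysis is more honest about (and more ambitious than) what the paper establishes.
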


\begin{proof}
This is Theorem \ref{ThmBigStability} stated in terms of $t$ instead of $\tau=\ln t$, except that the presence of $(C+1)$ in (\ref{kbykdecay}) instead of $C$ is used to cover the bounded growth of a solution during the compact time interval $t\in(t_*,1)$.
\end{proof}

\begin{proof}[Proof of Theorem \ref{ThmBigStability}]
We prove this by induction. To begin, we know this holds in the $n=0$ case by Theorem \ref{2by2phaseportrait}, so we need only assume the result for cases $\leq n-1$ and show that this implies case $n$. Moreover, since the $k=-1$ Friedmann solutions $\boldsymbol{U}_F$ solve system (\ref{nbynsystem}), estimates (\ref{higherorderlimitUF1}) and (\ref{higherorderlimitUF2}) for general $n\geq1$ follow from the $n=0$ case once we establish (\ref{higherorderlimitU}) for $n\geq1$. Thus it remains only to prove (\ref{higherorderlimitUest}) for general $n$.

In this light, let $\boldsymbol{U}(\tau)=(\boldsymbol{v}_1(\tau),\dots,\boldsymbol{v}_n(\tau))$ be a solution of (\ref{nbynsystem}) satisfying initial data (\ref{initialdataU*}), assume (\ref{lowestorderlimit}) and assume for induction that (\ref{higherorderlimit}) and (\ref{higherorderlimitUest}) hold for $k\leq n-1$. Assuming this, the proof by induction is complete once we prove (\ref{higherorderlimit}) and (\ref{higherorderlimitUest}) hold for $k=n$. For this, define $\boldsymbol{U}_k(\tau)=(\boldsymbol{v}_1(\tau),\dots,\boldsymbol{v}_k(\tau))$. Since system (\ref{nbynsystem}) closes in $\boldsymbol{v}_1,\dots,\boldsymbol{v}_k$ for every $k\leq n-1$, it follows that $\boldsymbol{U}_k$ solves the $n=k$ version of system (\ref{nbynsystem}) for every $k=1,\dots,n-1$. Thus by the inductive assumption, (\ref{higherorderlimit}) and (\ref{higherorderlimitUest}) hold for $k\leq n-1$, that is, $\boldsymbol{U}_k(\tau)\to M$ and 
\begin{align*}
    |\boldsymbol{U}_k(\tau)-M| \leq C_k \tau^ke^{-\tau}
\end{align*}
holds for every $k\leq n-1$ for some constants $C_1,\dots,C_{n-1}$ with $C_k = C_k(\boldsymbol{v}_1^*,\dots,\boldsymbol{v}_k^*,\tau_*)$, that is, depending only on the equations and the initial data $(\boldsymbol{U}_{n-1}^*,\tau_*)$. Thus to prove the theorem, it suffices to prove there exists a constant $C_n$, depending only on $C_1,\dots,C_{n-1}$ and $(\boldsymbol{U}_{n-1}^*,\tau_*)$, such that
\begin{align*}
    |\boldsymbol{v}_n| \leq C_n\tau^{n}e^{-\tau}.
\end{align*}
So for induction, assume without loss of generality, that
\begin{align}
    z_2(\tau) &= a(\tau)\tau e^{-\tau},\label{thelimitz2gen}\\
    w_0(\tau) &= 1 + b(\tau)\tau e^{-\tau},\label{thelimitw0gen}\\
    \boldsymbol{U}_{n-1}(\tau) &= M + \boldsymbol{V}_{n-1}(\tau)\tau^{n-1}e^{-\tau},\label{thelimitvkgen}
\end{align}
where
\begin{align}
    |a(\tau)| &\leq C_{n-1}, & |b(\tau)| &\leq C_{n-1}, & |\boldsymbol{V}_{n-1}(\tau)| &\leq C_{n-1},\label{boundz2w0gen}
\end{align}
for all $\tau\geq\tau_*$. Putting (\ref{thelimitz2gen}) into system (\ref{nbynsystem}) and assuming (\ref{boundz2w0gen}), we find that the $n^{th}$ equation for $\boldsymbol{v}_n=(z_{2n},w_{2n-2})=(u,v)$ in system (\ref{nbynsystem}) takes the form
\begin{align}
    \frac{d}{d\tau}\left(\begin{array}{c} u\\v\end{array}\right) = \left(\begin{array}{cc}-1 & 0\\
    -\frac{1}{10} & -1\end{array}\right)\left(\begin{array}{c}u\\
    v\end{array}\right) + \tau^{n-1}e^{-\tau}A(\tau)\left(\begin{array}{c}u\\
    v\end{array}\right) + \tau^{n-1}e^{-\tau}B(\tau),\label{nonauto2genuv}
\end{align}
with
\begin{align*}
    u(1) &= u_*, & v(1) &= v_*,
\end{align*}
and where $u=z_{2n}$ and $v=w_{2n-2}$ are treated as unknowns. Moreover, $A$ and $B$ are $2\times2$ matrices determined by $\boldsymbol{v}_1,\dots,\boldsymbol{v}_{n-1}$ and assumed without loss of generality to satisfy the induction hypothesis in the form
\begin{align}
    \|A\| &\leq C_{n-1}, & \|B\| &\leq C_{n-1}.\label{estAB}
\end{align}

We obtain estimate (\ref{higherorderlimitUest}) from the theory of scalar first-order linear equations. For this we need a preliminary supnorm estimate on $|u|+|v|$ to derive an estimate for $|u|$, independent of $v$, from the first equation in (\ref{nonauto2genuv}) and an estimate for $|v|$, independent of $u$, from the second equation in (\ref{nonauto2genuv}). For this, (\ref{nonauto2genuv}) implies the two estimates:
\begin{align}
    \frac{d}{d\tau}|u| &\leq -|u| + \tau^{n-1}e^{-\tau}\|A\||v| + \tau^{n-1}e^{-\tau}\|B\|,\label{gronestu}\\
    \frac{d}{d\tau}|v| &\leq -|v| + \frac{1}{10}|u| + \tau^{n-1}e^{-\tau}\|B\|.\label{gronestv}
\end{align}
Adding positive terms to the right hand sides of (\ref{gronestu}) and (\ref{gronestv}) and then adding the equations gives
\begin{align*}
    \frac{d}{d\tau}(|u|+|v|) \leq -\frac{9}{10}(|u|+|v|) + \tau^{n-1}e^{-\tau}\|A\|(|u|+|v|) + \tau^{n-1}e^{-\tau}\|B\|,
\end{align*}
which in light of (\ref{estAB}) yields the linear Gr\"{o}nwall estimate
\begin{align}
    \frac{d}{d\tau}w \leq g(\tau)w + f(\tau),\label{weqn}
\end{align}
where:
\begin{align*}
    w &= |u| + |v|,\\
    g(\tau) &= -\frac{9}{10} + C_{n-1}\tau^{n-1}e^{-\tau},\\
    f(\tau) &= C_{n-1}\tau^{n-1}e^{-\tau}.
\end{align*}
Multiplying (\ref{weqn}) through by the integrating factor $e^{-\int_1^{\tau}g(s)ds}$, we obtain
\begin{align*}
    \frac{d}{d\tau}\Big(e^{-\int_1^{\tau}g(s)ds}w\Big) \leq e^{-\int_1^{\tau}g(s)ds}f(\tau),
\end{align*}
which integrates to
\begin{align}
    w(\tau) \leq e^{-\int_1^{\tau}g(z)dz}\Bigg(e^{\tau_*}w_*+\int_1^{\tau}f(z)\bigg(-\int_1^zg(s)ds\bigg)dz\Bigg).\label{estforw}
\end{align}
Now recall that
\begin{align*}
    I_k = \int_0^\infty s^ke^{-s}ds
\end{align*}
integrates by parts to
\begin{align*}
    I_k = kI_{k-1} = k(k-1)I_{k-2} = \dots = k!,
\end{align*}
thus
\begin{align*}
    \int_1^\tau g(s)ds \leq \int_1^\tau-\frac{9}{10}+C_{n-1}s^{n-1}e^{-s}\ ds \leq -\frac{9}{10}(\tau-1) + C_{n-1}(n-1)!.
\end{align*}
Using this in (\ref{estforw}) gives the estimate
\begin{align}
    w(\tau) &\leq e^{C_{n-1}(n-1)!}e^{-\frac{9}{10}(\tau-1)}\big(e^{\tau_*}w_*+C_{n-1}^2(n-1)!\big)\notag\\
    &\leq e^{C_{n-1}(n-1)!}\big(e^{\tau_*}w_*+C_{n-1}^2(n-1)!\big) =: \bar{C},\label{estforw1}
\end{align}
where $\bar{C}$ depends only on $C_{n-1}$ and the initial data.

Using the estimate (\ref{estforw1}) for $|v|$ in (\ref{gronestu}) gives
\begin{align*}
    \frac{d}{d\tau}|u| \leq -|u| + C_{n-1}(\bar{C}+1)\tau^{n-1}e^{-\tau},
\end{align*}
which by the integrating factor method yields
\begin{align*}
    \frac{d}{d\tau}(e^{\tau}|u|) \leq C_{n-1}(\bar{C}+1)\tau^{n-1},
\end{align*}
and integrates to
\begin{align*}
    |u(\tau)| \leq e^{-\tau}\bigg(e^{\tau_*}|u_*|+\frac{C_{n-1}(\bar{C}+1)}{n}\tau^n\bigg) \leq \bar{C}_u\tau^ne^{-\tau},
\end{align*}
where $\tau\geq\tau_*\geq1$ and 
\begin{align*}
    \bar{C}_u := e^{\tau_*}|u_*| + \frac{C_{n-1}(\bar{C}+1)}{n}.
\end{align*}
Alternatively, using (\ref{estforw1}) to estimate $u$ in the $v$-equation (\ref{gronestv}) gives
\begin{align*}
    \frac{d}{d\tau}|v| \leq -|v| + \bigg(\frac{1}{10}\bar{C}_u+C_{n+1}\bigg)\tau^{n-1}e^{-\tau},
\end{align*}
which integrates as above to
\begin{align*}
    |v(\tau)| \leq e^{-\tau}\bigg(e^{\tau_*}|v_*|+\frac{\frac{1}{10}\bar{C}_u+C_{n+1}}{n}\tau^n\bigg) \leq \bar{C}_v\tau^ne^{-\tau},
\end{align*}
where again $\tau\geq\tau_*\geq1$ and
\begin{align*}
    \bar{C}_v := e^{\tau_*}|v_*| + \frac{\frac{1}{10}\bar{C}_u+C_{n+1}}{n}.
\end{align*}
Now by setting
\begin{align*}
    C_n = \sqrt{\bar{C}_u^2+\bar{C}_v^2},
\end{align*}
we conclude
\begin{align*}
    |\boldsymbol{v}_n(\tau)| = \sqrt{|u(\tau)|^2+|v(\tau)|^2} \leq \sqrt{\bar{C}_u^2+\bar{C}_v^2}\ \tau^ne^{-\tau} = C_{n+1}\tau^ne^{-\tau}, 
\end{align*}
from which estimate (\ref{higherorderlimitUest}) follows. This completes the induction step and thereby completes the proof.
\end{proof}

In the next corollary of Theorem \ref{ThmBigStability}, we explicitly compute system (\ref{zneqn})--(\ref{wneqn}) for $n=1,2,3$.

\begin{Corollary}\label{CorForNequal3}
    Extracting the leading order terms from the right hand side of (\ref{zneqn}), we obtain the equivalent form (also see (\ref{leadinginz}) below)
    \begin{align}
        t\dot{z}_{2n} = \big((2n+1)(1-w_0)-1\big)z_{2n} - (2n+1)z_2w_{2n-2}\delta_1(n) - (2n+1)\sum_{\substack{i+j+k=n\\i\neq n,j\neq n-1}}z_{2i}w_{2j}D_{2k},\label{zneqnextract2nrepeat}
    \end{align}
    where
    \begin{align*}
        \delta_1(n) = \begin{cases}
            1, & n=1,\\
            0, & n\neq 1.
        \end{cases}
    \end{align*}
    From (\ref{zneqnextract2nrepeat}) we compute the equations for $n=1,2,3$:
    \begin{align}
        t\dot{z}_2 &= 2z_2 - 3z_2w_0,\label{z2neqnis1}\\
        t\dot{z}_4 &= 4z_4 - 5(z_4w_0+z_2w_2+z_2w_0D_2),\label{z4neqnis2}\\
        t\dot{z}_6 &= 6z_6 - 7(z_6w_0+z_2w_4+z_2w_0D_4+z_2w_2D_2+z_4w_0D_2+z_4w_2).\label{z6neqnis3}
    \end{align}
    Similarly, extracting the leading order terms from the right hand side of (\ref{wneqn}), we obtain the equivalent form
    \begin{multline}
        t\dot{w}_{2n} = -\frac{1}{2(2n+3)}z_{2n+2} + \big((2n+2)(1-w_0)-1\big)w_{2n} + \frac{1}{2}(a_{2n}-A_{2n+2})\\
        + \frac{1}{2}\sum_{\substack{i+j+k=n\\j\neq n}}\hat{w}_{2i}a_{2j}D_{2k} - \sum_{\substack{i+j+k=n\\i\neq n,j\neq n}}(2i+1)w_{2i}w_{2j}D_{2k}.\label{wneqnextract2nrepeat}
    \end{multline}
    Again, from (\ref{wneqnextract2nrepeat}) we compute the equations for $n=1,2,3$:
    \begin{align}
        t\dot{w}_0 &= -\frac{1}{6}z_2 + w-w_0^2,\label{w0eqnis1}\\
        t\dot{w}_2 &= -\frac{1}{10}z_4 + 3w_2 - 4w_0w_2 - \frac{1}{2}w_0^2A_2 - \frac{1}{2}A_2^2 + \frac{1}{2}A_2D_2-w_0^2D_2,\label{w2eqnis2}\\
        t\dot{w}_4 &= -\frac{1}{14}z_6 + 5w_4 - 6w_0w_4 - \frac{1}{2}w_0^2(A_4-A_2^2) - \frac{1}{2}w_0^2A_2D_2\notag\\
        &- w_0w_2A_2 + \frac{1}{2}A_2D_4 + \frac{1}{2}(A_4-A_2^2)D_2 - A_2A_4 + \frac{1}{2}A_2^3\notag\\
        &- w_0^2D_4 - 4w_0w_2D_2 - 3w_2^2.\label{w4eqnis3}
    \end{align}
    Moreover,
    \begin{align}
        A_2 &= -\frac{1}{3}z_2, & A_4 &= -\frac{1}{5}z_4, & A_6 &= -\frac{1}{7}z_6,\label{AsInPaper}
    \end{align}
    and:
    \begin{align}
        D_2 &= -\frac{1}{12}z_2,\label{D2InPaper}\\
        D_4 &= -\frac{3}{40}z_4 + \frac{1}{8}z_2w_0^2 - \frac{1}{96}z_2^2,\label{D4InPaper}\\
        D_6 &= \frac{1}{12}\bigg(z_4w_0^2+2z_2w_0w_2+\frac{5}{24}z_2^2w_0^2-\frac{23}{120}z_2z_4-\frac{7}{288}z_2^3-\frac{5}{7}z_6\bigg).\label{D6InPaper}
    \end{align}
\end{Corollary}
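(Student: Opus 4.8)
The plan is to derive Corollary~\ref{CorForNequal3} by instantiating the general recursion of Theorem~\ref{Thmgeneralexpansion} at orders $n=1,2,3$; the content is entirely computational, so the proof reduces to organizing the convolution sums in (\ref{zneqn})--(\ref{wneqn}) and (\ref{Dntoznwn}) and carrying out finitely many substitutions. I would begin by recording the base data against which all the sums collapse: $z_0=0$, $A_0=D_0=1$, $\hat w_0=1$ from (\ref{defwhat}), and $a_0=0$ from (\ref{defan}) (since $A_0=1$), together with the geometric-series expansion of $\frac{A-1}{A}$ using (\ref{ansatzinfinityA}), which gives $a_2=A_2$ and $a_4=A_4-A_2^2$, and the value $\hat w_2=-w_0^2$ from (\ref{defwhat}).

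First I would establish the extracted forms (\ref{zneqnextract2nrepeat}) and (\ref{wneqnextract2nrepeat}). In each convolution $\sum_{i+j+k=n}$ appearing in (\ref{zneqn}) and (\ref{wneqn}) I would separate off the tuples in which a single index attains its extreme value: in the $z$-equation the tuple $(i,j,k)=(0,n,0)$ produces the term $z_{2n}w_0$, which combines with $(2n)z_{2n}$ into the diagonal coefficient $(2n+1)(1-w_0)-1$, and $(0,1,n-1)$ produces the leading $w_{2n-2}$ coupling (needed as a separate term only at $n=1$, hence the $\delta_1(n)$); the analogous separation in the $w$-equation, after $A_{2n}$ is eliminated via (\ref{Antozn}) and $D_{2n}$ via (\ref{Dntoznwn}), together with the highest-order $z_{2n+2}$ source coming from the leading coefficient of the $\frac{1-A}{A}$ expansion, yields (\ref{wneqnextract2nrepeat}), whose highest-order $2\times2$ block is precisely the matrix $P_k$ of (\ref{Pk}). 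This step is routine once the indices are tabulated.

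Next I would compute the metric coefficients. The identities (\ref{AsInPaper}) are immediate from (\ref{Antozn}) at $n=1,2,3$. For (\ref{D2InPaper})--(\ref{D6InPaper}) I would substitute the already-known $a_{2j}$, $\hat w_{2i}$ and $A_{2i}$ into (\ref{Dntoznwn}) at $n=1,2,3$: at $n=1$ the first convolution vanishes (factor $z_0$), the second gives $(z_2+2A_2)D_0=\tfrac13z_2$ and the third vanishes (its surviving tuple has coefficient $4j$ with $j=0$), so $4D_2=-\tfrac13z_2$; at $n=2,3$ I would enumerate the tuples in each of the three sums of (\ref{Dntoznwn}), discard those carrying a factor $z_0$, replace every $A$ by the corresponding $z$ via (\ref{AsInPaper}), use $D_2$ (respectively $D_2,D_4$) recursively, and collect like terms.

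Finally I would substitute the computed $A_{2k}$ and $D_{2k}$ into the extracted equations (\ref{zneqnextract2nrepeat})--(\ref{wneqnextract2nrepeat}) at $n=1,2,3$ and enumerate the remaining tuples to read off (\ref{z2neqnis1})--(\ref{z6neqnis3}) and (\ref{w0eqnis1})--(\ref{w4eqnis3}); at $n=1$ this recovers the $2\times2$ system (\ref{2by2system}) and at $n=2$ the $4\times4$ system (\ref{4by4system}) of Theorem~\ref{ODEsForCorrections} (in particular the corrected $z_4$ equation), while the $n=3$ equations are new. The single point requiring care — the only real obstacle — is the combinatorics at third order, where the sums $\sum_{i+j+k+l=n-1}$ in (\ref{Dntoznwn}) and $\sum_{i+j+k=3}$ in (\ref{zneqn})--(\ref{wneqn}) have many terms; but organizing every sum by which index is maximal, exactly as in the extraction step, keeps the surviving terms few and makes the cancellations transparent, so there is no conceptual difficulty, only bookkeeping.
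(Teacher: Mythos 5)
Your overall strategy is the same as the paper's: the proof there also just instantiates the general recursion of Theorem \ref{Thmgeneralexpansion} at $n=1,2,3$, reads the extracted forms off (\ref{leadinginz}) and (\ref{finalwgood}), tabulates the surviving tuples in each convolution (writing out $\sum\hat{w}_{2i}a_{2j}D_{2k}$ through order $n=2$ explicitly), and computes $D_6$ from (\ref{Dntoznwn}) as a sample case. So there is no difference of method. There are, however, two concrete bookkeeping errors in your proposal, each of which would make the computation fail at exactly the checkpoints you propose to use.

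First, the coefficients $a_{2n}$. You set $a_0=0$, $a_2=A_2$, $a_4=A_4-A_2^2$ by reading (\ref{defan}) literally. But the source term in (\ref{startw}) is $\tfrac12(1-w^2\xi^2)\frac{A-1}{\xi^2A}D$, with an essential factor $\xi^{-2}$, and the convolution $\sum_{i+j+k=n}\hat{w}_{2i}a_{2j}D_{2k}$ in (\ref{wneqn}) together with the term $\tfrac12(a_{2n}-A_{2n+2})$ in (\ref{wneqnextract2nrepeat}) are only consistent if $a_{2n}$ denotes the coefficient of $\xi^{2n}$ in $\frac{A-1}{\xi^2A}$, i.e.\ $a_0=A_2$, $a_2=A_4-A_2^2$, $a_4=A_6-2A_2A_4+A_2^3$ (this is the convention used in the appendix proof; (\ref{defan}) is missing the $\xi^2$). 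With your indexing the $n=0$ case of (\ref{wneqn}) gives $t\dot{w}_0=w_0-w_0^2$ with no $-\tfrac16z_2$ source, contradicting (\ref{w0eqnis1}), and every higher $w$-equation acquires its $z$ source at the wrong order.

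Second, the $\delta_1(n)$ term in (\ref{zneqnextract2nrepeat}). You say the separate $z_2w_{2n-2}$ coupling is \emph{needed only at} $n=1$; it is the reverse. At $n=1$ the tuple producing $z_2w_{2n-2}$ coincides with the tuple already absorbed into the diagonal coefficient $(2n+1)(1-w_0)-1$, so counting it again yields $t\dot{z}_2=2z_2-6z_2w_0$ instead of (\ref{z2neqnis1}); at $n\geq2$ it is a distinct tuple excluded by the constraint $j\neq n-1$ and must be restored, otherwise the $z_2w_2$ term of (\ref{z4neqnis2}) is lost. The correct indicator is $1-\delta_1(n)$. (The displayed formula in the corollary has the same inversion, but the explicit equations (\ref{z2neqnis1})--(\ref{z6neqnis3}) are correct and are what a proof must reproduce; the reliable route is to work directly from (\ref{zneqn}) with the single exclusion of the diagonal tuple.) Both defects are repairable and would surface as soon as you checked $n=1$ against (\ref{2by2system}), but as written your procedure does not produce the stated equations.
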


\begin{proof}
Equations (\ref{zneqnextract2nrepeat}) and (\ref{wneqnextract2nrepeat}) are derived from (\ref{leadinginz}) and (\ref{finalwgood}) respectively. To eliminate $\hat{w}_{2i}a_{2j}D_{2k}$ from (\ref{wneqnextract2nrepeat}), close the equations in $z_{2k}$, $w_{2k}$, $A_{2k}$ and $D_{2k}$ at orders $n=1,2,3$ and use the expression
\begin{align*}
    \sum_{i+j+k=n}\hat{w}_{2i}a_{2j}D_{2k} &= A_2 - w_0^2A_2 - A_2^2 + A_2D_2 + A_4 - w_0^2(A_4-A_2^2)\\
    &- w_0^2A_2D_2 - 2w_0w_2A_2 + A_2D_4 + A_4 - A_2^2\\
    &- 2A_2A_4 + A_2^3 + A_6.
\end{align*}
By this we arrive at (\ref{z2neqnis1})--(\ref{z6neqnis3}) and (\ref{w0eqnis1})--(\ref{w4eqnis3}).

Equations (\ref{AsInPaper}) and (\ref{D4InPaper}) follow from (\ref{Antozn}) and (\ref{Dntoznwn}). As an example, we consider the case of $D_6$. From (\ref{Dntoznwn}) we have
\begin{align*}
    12D_6 &= \sum_{(i,j,k,l)\in X}z_{2i}w_{2j}w_{2k}D_{2l} - \sum_{(i,j)\in Y}\Big((z_{2i}+2A_{2i})D_{2j} - 4jA_{2i}D_{2j}\Big)\\
    &= z_4w_0^2 + 2z_2w_0w_2 + z_2w_0^2D_2 - z_2D_4 - 10A_2D_4 - z_4D_2 + 6A_4D_2 - z_6 + 2A_6\\
    &= z_4w_0^2 + 2z_2w_0w_2 + z_2w_0^2D_2 + \frac{7}{3}z_2D_4 + \frac{1}{5}z_4D_2 - \frac{5}{7}z_6\\
    &= z_4w_0^2 + 2z_2w_0w_2 + \frac{5}{24}z_2^2w_0^2 - \frac{23}{120}z_2z_4 - \frac{7}{288}z_2^3 - \frac{5}{7}z_6,
\end{align*}
where
\begin{align*}
    X &= \{(2,0,0,0),(1,0,0,0),(1,0,1,0),(1,1,0,0)\},\\
    Y &= \{(1,2),(2,1),(3,0)\},
\end{align*}
which confirms (\ref{D6InPaper}).
\end{proof}

Equations (\ref{AsInPaper})--(\ref{D6InPaper}), recorded in (\ref{As})--(\ref{D6}) of the introduction, express the $A_i$'s and $D_j$'s in terms of $w_i$'s and $z_j$'s. Using these to eliminate the $A_i$'s and $D_j$'s from equations (\ref{z2neqnis1})--(\ref{z6neqnis3}) and (\ref{w0eqnis1})--(\ref{w4eqnis3}) leads, after simplification, to equations (\ref{z2final})--(\ref{w4final}) of the introduction. This, together with (\ref{AsInPaper})--(\ref{D6InPaper}), then establishes Theorem \ref{EqnsToOrder3}, our final result given in the introduction.

\section{Pure Eigenvalue Solutions}\label{S11.5}

Recall from Section \ref{I5} and (\ref{eigenvaluesSM}) that the eigenvalues of our expansion in $\xi$ about $SM$ take the form:
\begin{align*}
    \lambda_{An} &= \frac{2n}{3}, & \lambda_{Bn} &= \frac{1}{3}(2n-5).
\end{align*}
Each eigenvalue introduces a free parameter into the expansion, so there are two additional free parameters at each order. At orders $n=1$ and $n=2$ we know the negative eigenvalues do not appear in the expansion of a $k<0$ Friedmann solution, since this would mean the trajectory does not originate at the fixed point $SM$. Furthermore, we know from Section \ref{S8} that the $k<0$ Friedmann solutions at leading order ($n=1$) have a one-to-one correspondence with trajectories emanating from the unstable manifold of $SM$ on the underdense side. Thus the free parameter associated with the first positive eigenvalue, $\lambda_{A1}$, is related to the single $k<0$ Friedmann free parameter $\Delta_0$. This free parameter will appear at higher orders in the expansion as well, the question is whether the higher order coefficients of the expansion of the $k<0$ Friedmann solution are generated purely by $\Delta_0$, or whether the parameters that appear at higher orders introduce additional contributions of $\Delta_0$. Put another way, we know that the $k<0$ Friedmann solution has a single parameter freedom, but we do not know if the additional parameters that enter generically in the expansion are absent, making the $k<0$ Friedmann solution a pure eigenvector solution, or whether these additional parameters are functions of $\Delta_0$, implying the $k<0$ Friedmann solution is not generated by the single leading order parameter. We conjecture that it is the former that is true, with the following theorem confirming this up to order $n=3$.

\begin{Thm}\label{ThmPure}
    Let $(\bar{t},\xi)$ represent SSCNG coordinates, with $\xi=\frac{\bar{r}}{\bar{t}}$. Then smooth perturbations of $SM$ take the form:
    \begin{align}
    	A(\bar{t},\xi) &= 1 + A_2(\bar{t})\xi^2 + A_4(\bar{t})\xi^4 + A_6(\bar{t})\xi^6 + O(\xi^8),\label{PertA}\\
    	D(\bar{t},\xi) &= 1 + D_2(\bar{t})\xi^2 + D_4(\bar{t})\xi^4 + D_6(\bar{t})\xi^6 + O(\xi^8),\label{PertD}\\
        z(\bar{t},\xi) &= z_2(\bar{t})\xi^2 + z_4(\bar{t})\xi^4 + z_6(\bar{t})\xi^6 + O(\xi^8),\label{Pertz}\\
        w(\bar{t},\xi) &= w_0(\bar{t}) + w_2(\bar{t})\xi^2 + w_4(\bar{t})\xi^4 + O(\xi^6),\label{Pertw}
    \end{align}
    where:
    \begin{align*}
        A_2(\bar{t}) &= -\frac{1}{3}z_2(\bar{t}), & A_4(\bar{t}) &= -\frac{1}{5}z_4(\bar{t}), & A_6(\bar{t}) &= -\frac{1}{7}z_6(\bar{t}),
	\end{align*}
	\begin{align*}
        D_2(\bar{t}) &= -\frac{1}{12}z_2(\bar{t}), & D_4(\bar{t}) &= \frac{1}{8}\left(w_0^2(\bar{t}) - \frac{1}{12}z_2(\bar{t})\right)z_2(\bar{t}) - \frac{3}{40}z_4(\bar{t}),
    \end{align*}
    \begin{align*}
        D_6(\bar{t}) = \frac{1}{6}\left(w_0(\bar{t})w_2(\bar{t}) + \frac{5}{48}w_0^2(\bar{t})z_2(\bar{t}) - \frac{7}{576}z_2^2(\bar{t})\right)z_2(\bar{t}) + \frac{1}{12}\left(w_0^2(\bar{t}) - \frac{23}{120}z_2(\bar{t})\right)z_4(\bar{t}) - \frac{5}{84}z_6(\bar{t}),
    \end{align*}
    and:
    \begin{align}
        z_2(\bar{t}) = \frac{4}{3} - 6a\bar{t}^{\frac{2}{3}} + \frac{153}{7}a^2\bar{t}^{\frac{4}{3}} - \frac{1023}{14}a^3\bar{t}^2 + O(\bar{t}^{\frac{8}{3}}),\label{z2Series}
    \end{align}
    \begin{align}
        z_4(\bar{t}) = \frac{40}{27} - \frac{20}{9}a\bar{t}^{\frac{2}{3}} - \left(\frac{100}{21}a^2+10b\right)\bar{t}^{\frac{4}{3}} + \left(\frac{1000}{21}a^3+\frac{720}{7}ab\right)\bar{t}^2 + O(\bar{t}^{\frac{8}{3}}),\label{z4Series}
    \end{align}
    \begin{multline}
        z_6(\bar{t}) = \frac{448}{243} - \frac{196}{81}a\bar{t}^{\frac{2}{3}} - \left(\frac{58}{27}a^2+\frac{98}{9}b\right)\bar{t}^{\frac{4}{3}} + \left(\frac{77}{3}a^3+\frac{211}{2}ab-14c\right)\bar{t}^2\\
        + \frac{28}{3}\alpha\bar{t}^{\frac{1}{3}} - 84\alpha a\bar{t} + 465\alpha a^2\bar{t}^{\frac{5}{3}} + O(\bar{t}^{\frac{8}{3}}),\label{z6Series}
    \end{multline}
	\begin{align}
		w_0(\bar{t}) = \frac{2}{3} + a\bar{t}^{\frac{2}{3}} - \frac{39}{14}a^2\bar{t}^{\frac{4}{3}} + \frac{213}{28}a^3\bar{t}^2 + O(\bar{t}^{\frac{8}{3}}),\label{w0Series}
	\end{align}
	\begin{align}
		w_2(\bar{t}) = \frac{2}{9} - \frac{2}{3}a\bar{t}^{\frac{2}{3}} + \left(\frac{17}{14}a^2+b\right)\bar{t}^{\frac{4}{3}} - \frac{60}{7}ab\bar{t}^2 + O(\bar{t}^{\frac{8}{3}}),\label{w2Series}
	\end{align}
	\begin{align}
		w_4(\bar{t}) = \frac{13}{81} - \frac{35}{54}a\bar{t}^{\frac{2}{3}} + \left(\frac{295}{126}a^2-\frac{7}{18}b\right)\bar{t}^{\frac{4}{3}} - \left(\frac{70}{9}a^3-c\right)\bar{t}^2 + \alpha\bar{t}^{\frac{1}{3}} - \frac{11}{2}\alpha a\bar{t} + \frac{99}{4}\alpha a^2\bar{t}^{\frac{5}{3}} + O(\bar{t}^{\frac{8}{3}}),\label{w4Series}
	\end{align}
    where $a$, $b$, $c$ and $\alpha$ are constants. In particular, the $k=-1$ Friedmann spacetime satisfies $b=c=0$ and $\alpha=\beta=\gamma=0$, with
    \begin{align*}
		a^3 = \frac{2}{375}\Delta_0^{-2},
    \end{align*}
    that is, the $k=-1$ Friedmann spacetime is a pure eigenvalue solution up to order $\xi^6$.
\end{Thm}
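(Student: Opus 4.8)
The plan is to verify Theorem~\ref{ThmPure} by direct substitution of ansatz~\eqref{PertA}--\eqref{Pertw} into the STV-ODE system established in Corollary~\ref{CorForNequal3}, expanding each time-dependent coefficient as a power series in $\bar t^{1/3}$ near the rest point $SM$. First I would recall that the eigenvalues at $SM$ are $\lambda_{An}=\frac{2n}{3}$ and $\lambda_{Bn}=\frac{1}{3}(2n-5)$, so the natural expansion variable is $\bar t^{1/3}$: at order $n=1$ the unstable manifold is driven by $\lambda_{A1}=\frac23$, giving a leading perturbation term proportional to $\bar t^{2/3}$; at order $n=2$ the new positive eigenvalue is $\lambda_{A2}=\frac43$, contributing a term $b\,\bar t^{4/3}$; at order $n=3$ the new positive eigenvalues are $\lambda_{A3}=2$ (term $c\,\bar t^2$) and $\lambda_{B3}=\frac13$ (term $\alpha\,\bar t^{1/3}$). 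So I would posit
\begin{align*}
    z_2 &= \tfrac43 + \sum_{k\ge1} p_k\,\bar t^{2k/3}, & w_0 &= \tfrac23 + \sum_{k\ge1} q_k\,\bar t^{2k/3},
\end{align*}
with the leading coefficients $p_1=-6a$, $q_1=a$ chosen to lie along the eigenvector $\boldsymbol R_{A1}=(9,-\tfrac32)^T$ (after the scaling $a$ absorbs the $\pm1$ and $\tau_0$ freedoms of Section~\ref{S9}), and analogous expansions with the extra eigendirections $\boldsymbol R_{A2}$, $\boldsymbol R_{A3}$, $\boldsymbol R_{B3}$ introducing $b$, $c$, $\alpha$ at the orders dictated by their eigenvalues.

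Next I would plug these series into equations~\eqref{z2neqnis1}--\eqref{w4eqnis3} together with the algebraic relations~\eqref{AsInPaper}--\eqref{D6InPaper}, convert $t\frac{d}{dt}$ to $\bar t\frac{d}{d\bar t}$ acting term-by-term (so $\bar t\frac{d}{d\bar t}\bar t^{2k/3}=\frac{2k}{3}\bar t^{2k/3}$), and match coefficients of each power $\bar t^{2k/3}$ (and the half-integer powers $\bar t^{(2k+1)/3}$ coming from the $\alpha$-branch at order $n=3$). At each order the $2\times2$ block structure of Theorem~\ref{ThmSTVequations}, namely $t\dot{\boldsymbol v}_k=P_k\boldsymbol v_k+\boldsymbol q_k$ with $P_k$ depending only on $\boldsymbol v_1$, makes the matching triangular: the coefficients $p_k,q_k$ at order $n=1$ are determined recursively, then $z_4,w_2$ coefficients in terms of those plus $b$, then $z_6,w_4$ coefficients in terms of all the above plus $c$ and $\alpha$. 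This is a finite, if tedious, linear-algebra computation that pins down every numerical coefficient in~\eqref{z2Series}--\eqref{w4Series} — e.g.\ matching the $\bar t^{4/3}$ coefficient in the $z_2$-equation forces $p_2=\frac{153}{7}a^2$, and so on. I would present the recursion schematically and quote the resulting coefficients, noting that the consistency of overdetermined matches (more equations than free parameters at each order) is itself the check that the ansatz is correct.

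Finally, to extract the $k=-1$ Friedmann specialization, I would invoke Theorem~\ref{Friedmannknotzeroexpansion}, which gives the exact closed forms~\eqref{z2FinalAgain}--\eqref{w2FinalAgain} for $z_2,w_0,z_4,w_2$ (and the parametrization~\eqref{Theta2} of $\theta$ by $\bar t/\Delta_0$), and Theorem~\ref{ThmsmoothAgain}/Section~\ref{S11.5}'s observation that $\beta_F^-=0$. Expanding $\theta=\Theta(\bar t/\Delta_0)$ for small $\bar t$ using $\bar t/\Delta_0=\frac12(\sinh2\theta-2\theta)\sim\frac{2}{3}\theta^3$, so $\theta\sim(\frac{3}{2}\bar t/\Delta_0)^{1/3}$, and substituting into~\eqref{z2FinalAgain} etc., I would read off $w_0=\frac23+a\bar t^{2/3}+\dots$ with $a$ determined by the cube-root relation; matching the leading coefficient against the Friedmann expansion gives $a^3=\frac{2}{375}\Delta_0^{-2}$ (this also pins the normalizing constant implicit in $\tau_0=\Delta_0+\tau_F$). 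Then, since the Friedmann solution has no $\lambda_{A2}$, $\lambda_{A3}$ or $\lambda_{B3}$ component — it emerges from $SM$ tangent to $\boldsymbol R_{A1}$ alone — one sets $b=c=\alpha=0$ and checks that~\eqref{z2Series}--\eqref{w4Series} then reduce exactly to the small-$\bar t$ expansions of~\eqref{z2FinalAgain}--\eqref{w2FinalAgain} and of the corresponding $z_6,w_4$ obtained by pushing Theorem~\ref{Friedmannknotzeroexpansion}'s method one order further. The main obstacle I anticipate is bookkeeping: correctly propagating the composite expressions~\eqref{D4InPaper}--\eqref{D6InPaper} for $D_4,D_6$ (which mix $z_2^2$, $z_2w_0^2$, $z_2z_4$, etc.) through the $w_4$-equation without sign or combinatorial errors, and keeping the half-integer $\alpha$-powers from contaminating the integer-power matches at order $n=3$; organizing the computation by the grading in powers of $\bar t^{1/3}$ and exploiting the triangular $P_k$-structure is what keeps this tractable.
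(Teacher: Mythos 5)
Your proposal follows essentially the same route as the paper: the general series (\ref{z2Series})--(\ref{w4Series}) are obtained by positing an expansion in powers of $\bar{t}^{1/3}$ graded by the eigenvalues $\lambda_{An}$, $\lambda_{Bn}$ at $SM$ and matching coefficients in the nested STV-ODE together with the algebraic relations for $A_{2k}$, $D_{2k}$, and the Friedmann specialization is obtained by expanding the exact $k=-1$ solution in SSCNG form in powers of $\bar{t}^{1/3}$ and reading off $a^3=\frac{2}{375}\Delta_0^{-2}$ with $b=c=\alpha=0$. One small caution: tangency to $\boldsymbol{R}_{A1}$ at $SM$ only forces $\alpha=0$ (since $\lambda_{B3}<\lambda_{A1}$), not $b=c=0$ (those eigenvalues are larger, hence subdominant as $\bar{t}\to0$), so the vanishing of $b$ and $c$ must come from the explicit comparison with the expanded exact solution --- which your final verification step does supply, and which is exactly how the paper establishes it.
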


\begin{proof}
Substituting series (\ref{PertA})--(\ref{Pertw}) into the STV PDE (\ref{Axi-final0})--(\ref{weqnxi-final0}), we immediately obtain the algebraic relations for $A_2$, $A_4$, $A_6$, $D_2$, $D_4$ and $D_6$. The remaining equations are then given by:
\begin{align*}
    \bar{t}\dot{w}_0 &= (1-w_0)w_0 - \frac{1}{6}z_2,\\
    \bar{t}\dot{z}_2 &= (2-3w_0)z_2,\\
    \bar{t}\dot{w}_2 &= (3-4w_0)w_2 - \frac{1}{10}z_4 + \frac{1}{4}\left(w_0^2 - \frac{1}{6}z_2\right)z_2,\\
    \bar{t}\dot{z}_4 &= (4-5w_0)z_4 - 5z_2w_2 + \frac{5}{12}z_2^2w_0,\\
    \bar{t}\dot{w}_4 &= (5-6w_0)w_4 - \frac{1}{14}z_6 + \frac{7}{40}\left(w_0^2-\frac{11}{42}z_2\right)z_4 - 3\left(w_2-\frac{2}{9}z_2w_0\right)w_2 - \frac{1}{8}\left(w_0^4 - \frac{1}{4}w_0^2z_2 + \frac{7}{72}z_2^2\right)z_2,\\
    \bar{t}\dot{z}_6 &= (6-7w_0)z_6 - 7z_2w_4 + \frac{7}{12}z_2^2w_2 - 7\left(w_2 - \frac{19}{120}z_2w_0\right)z_4 - \frac{7}{8}\left(w_0^2-\frac{1}{12}z_2\right)w_0z_2^2.
\end{align*}
We know from Theorem \ref{UnstableManifoldOfSM} that the leading order terms of $w_0$ and $z_2$ in the limit $\bar{t}\to0$ are proportional to
\begin{align*}
    e^{\lambda_{A1}(\tau)} = \bar{t}^{\lambda_{A1}},
\end{align*}
where $\lambda_{A1} = \frac{2}{3}$ is the positive eigenvalue of the $n=1$ system and $\tau=\ln\bar{t}$. We have already seen that the negative eigenvalue $\lambda_{B1}$ is eliminated by setting \emph{time since the Big Bang} and thus does not feature in the leading order analysis. For $n=2$, if we denote $U(\bar{t}) = (z_2,w_0,z_4,w_2)$, then the leading order behavior as $\bar{t}\to0$ becomes
\begin{align*}
    U(\bar{t}) = a\bar{t}^{\lambda_{A1}}\boldsymbol{R}_{A1} + b\bar{t}^{\lambda_{A2}}\boldsymbol{R}_{A2},
\end{align*}
where $\lambda_{A2}=\frac{4}{3}$. We note that $\lambda_{B2}$ is the only negative eigenvalue for $n>1$ since in general:
\begin{align*}
    \lambda_{An} &= \frac{2n}{3}, & \lambda_{Bn} &= \frac{1}{3}(2n-5).
\end{align*}
If we denote $U(\bar{t}) = (z_2,w_0,z_4,w_2,z_6,w_4)$, then the leading order behavior as $\bar{t}\to0$ is given in general by
\begin{align*}
    U(\bar{t}) &= a\bar{t}^{\lambda_{A1}}\boldsymbol{R}_{A1} + b\bar{t}^{\lambda_{A2}}\boldsymbol{R}_{A2} + c\bar{t}^{\lambda_{A3}}\boldsymbol{R}_{A3}\\
    &+ \alpha\bar{t}^{\lambda_{B3}}\boldsymbol{R}_{B3} + \beta\bar{t}^{\lambda_{B4}}\boldsymbol{R}_{B4} + \gamma\bar{t}^{\lambda_{B5}}\boldsymbol{R}_{B5} + O(\bar{t}^{\frac{7}{3}}),
\end{align*}
where $a$, $b$, $c$, $\alpha$, $\beta$ and $\gamma$ are free parameters. With this knowledge, we can compute the Taylor series of $w_0$, $w_2$, $w_4$, $z_2$, $z_4$ and $z_6$ to yield (\ref{z2Series})--(\ref{w4Series}). What remains is to show that the $k=-1$ Friedmann spacetime parameters satisfy $b=c=0$ and $\alpha=\beta=\gamma=0$, where
\begin{align*}
    a^3 = \frac{2}{375}\Delta_0^{-2}.
\end{align*}
Recall from Theorem \ref{ThmLog-time translation} that:
\begin{align}
    1-A &= \frac{\kappa}{3}\rho\bar{r}^2 = \frac{8\bar{\chi}^2\xi^2}{(\cosh2\Theta-1)^3},\label{AStart}\\
    \sqrt{AB} &= \frac{\sqrt{1+r^2}}{\frac{\partial\bar{t}}{\partial t}(t,r)} =
    \frac{\sqrt{1+\frac{\bar{\chi}^2\xi^2}{\sinh^4\Theta}}}{\frac{\partial\bar{\chi}}{\partial\chi}(\chi,r)},\label{DStart}\\
    v &= \frac{\dot{R}r}{\sqrt{1+r^2}} = \frac{\frac{\bar{\chi}\xi}{\sinh^2\Theta}\coth\Theta}{\sqrt{1+\frac{\bar{\chi}^2\xi^2}{\sinh^4\Theta}}}\label{vStart},
\end{align}
where:
\begin{align}
    \chi &= \frac{t}{\Delta_0} = \frac{1}{2}(\sinh2\Theta-2\Theta),\label{Theta1}\\
    \bar{\chi} &= \frac{\bar{t}}{\Delta_0},
\end{align}
and:
\begin{align}
    \cosh\Theta(\bar{\chi}) &= \sqrt[4]{1+\frac{\bar{\chi}^2\xi^2}{\sinh^4\Theta}}\cosh\Theta,\label{Theta3}\\
    r &= \frac{\bar{\chi}\xi}{\sinh^2\Theta}.
\end{align}
Note that $\Theta = \Theta(\chi)$ is the inverse of (\ref{Theta1}), whereas $\Theta(\bar{\chi})$ is the inverse of the same expression but with $\chi$ replaced with $\bar{\chi}$. Now with a nontrivial amount of algebra we can write (\ref{AStart})--(\ref{vStart}) as:
\begin{align}
    A &= 1-\frac{\xi^2\bar{\chi}^2}{\sinh^6\Theta},\label{ATheta}\\
    D &= \frac{\cosh\Theta(\bar{\chi})\sinh\Theta}{\sinh\Theta(\bar{\chi})\cosh\Theta},\label{DTheta}\\
    w &= \frac{\bar{\chi}\cosh^3\Theta}{\cosh^2\Theta(\bar{\chi})\sinh^3\Theta},\label{wTheta}
\end{align}
along with
\begin{align}
    z = \frac{3(1-A)}{1-\xi^2w^2}.\label{zTheta}
\end{align}
The strategy is to write $A$, $D$, $w$ and $z$ as a series of the form (\ref{PertA})--(\ref{Pertw}). From this, the series of $w_{2n-2}(\bar{t})$ and $z_{2n}(\bar{t})$ can be deduced and compared to the general series (\ref{z2Series})--(\ref{w4Series}). The first step is thus to write variables $A$, $D$, $w$ and $z$ as a series in $(\bar{\chi},\xi)$, which requires expanding $\Theta=\Theta(\chi)$ as a series in $(\bar{\chi},\xi)$. We can compute this using relation (\ref{Theta3}) to yield
\begin{align*}
    \Theta(\chi) &= \Theta(\bar{\chi})\\
    &- \frac{1}{4}\bar{\chi}^2\frac{\cosh\Theta(\bar{\chi})}{\sinh^5\Theta(\bar{\chi})}\xi^2\\
    &- \frac{1}{32}\bar{\chi}^4\frac{8\cosh\Theta(\bar{\chi})+\cosh3\Theta(\bar{\chi})}{\sinh^{11}\Theta(\bar{\chi})}\xi^4\\
    &- \frac{1}{384}\bar{\chi}^6\frac{116+75\cosh2\Theta(\bar{\chi})+4\cosh4\Theta(\bar{\chi})}{\sinh^{17}\Theta(\bar{\chi})}\cosh\Theta(\bar{\chi})\xi^6 + O(\xi^8).
\end{align*}
With this series, we can expand (\ref{ATheta})--(\ref{zTheta}) to take the form given by (\ref{PertA})--(\ref{Pertw}), recalling that $\bar{t}=\Delta_0\bar{\chi}$. Computing these series we obtain:
\begin{align}
    w_0(\bar{\chi}) &= \bar{\chi}\frac{\cosh\Theta(\bar{\chi})}{\sinh^3\Theta(\bar{\chi})},\label{w0chi}\\
    z_2(\bar{\chi}) &= 3\bar{\chi}^2\frac{1}{\sinh^6\Theta(\bar{\chi})},\label{z2chi}\\
    w_2(\bar{\chi}) &= \frac{3}{4}\bar{\chi}^3\frac{\cosh\Theta(\bar{\chi})}{\sinh^9\Theta(\bar{\chi})},\label{w2chi}\\
    z_4(\bar{\chi}) &= \frac{15}{2}\bar{\chi}^4\frac{\cosh^2\Theta(\bar{\chi})}{\sinh^{12}\Theta(\bar{\chi})},\label{z4chi}\\
    w_4(\bar{\chi}) &= \frac{3}{64}\bar{\chi}^5\frac{23\cosh\Theta(\bar{\chi})+3\cosh3\Theta(\bar{\chi})}{\sinh^{15}\Theta(\bar{\chi})},\label{w4chi}\\
    z_6(\bar{\chi}) &= \frac{21}{16}\bar{\chi}^6\frac{11+5\cosh2\Theta(\bar{\chi})}{\sinh^{18}\Theta(\bar{\chi})}\cosh^2\Theta(\bar{\chi}).\label{z6chi}
\end{align}
Now noting that
\begin{align*}
    \Theta(\bar{\chi}) = \left(\frac{3}{2}\right)^{\frac{1}{3}}\bar{\chi}^{\frac{1}{3}} - \frac{1}{10}\bar{\chi} + \frac{3}{175}\left(\frac{3}{2}\right)^{\frac{2}{3}}\bar{\chi}^{\frac{5}{3}} - \frac{1}{175}\left(\frac{3}{2}\right)^{\frac{1}{3}}\bar{\chi}^{\frac{7}{3}} + O(\bar{\chi}^3),
\end{align*}
we can thus expand (\ref{w0chi})--(\ref{z6chi}) as a series in $\bar{\chi}$ to obtain:
\begin{align*}
    w_0(\bar{\chi}) &= \frac{2}{3} + \frac{1}{5}\left(\frac{2}{3}\right)^{\frac{1}{3}}\bar{\chi}^{\frac{2}{3}} - \frac{13}{175}\left(\frac{3}{2}\right)^{\frac{1}{3}}\bar{\chi}^{\frac{4}{3}} + \frac{71}{1750}\bar{\chi}^2 + O(\bar{\chi}^{\frac{8}{3}}),\\
    z_2(\bar{\chi}) &= \frac{4}{3} - \frac{2}{5}(18)^{\frac{1}{3}}\bar{\chi}^{\frac{2}{3}} + \frac{51}{175}(12)^{\frac{1}{3}}\bar{\chi}^{\frac{4}{3}} - \frac{341}{875}\bar{\chi}^2 + O(\bar{\chi}^{\frac{8}{3}}),\\
    w_2(\bar{\chi}) &= \frac{2}{9} - \frac{2}{15}\left(\frac{2}{3}\right)^{\frac{1}{3}}\bar{\chi}^{\frac{2}{3}} + \frac{17}{175}(18)^{-\frac{1}{3}}\bar{\chi}^{\frac{4}{3}} + O(\bar{\chi}^{\frac{8}{3}}),\\
    z_4(\bar{\chi}) &= \frac{40}{27} - \frac{4}{9}\left(\frac{2}{3}\right)^{\frac{1}{3}}\bar{\chi}^{\frac{2}{3}} - \frac{4}{21}\left(\frac{2}{3}\right)^{\frac{2}{3}}\bar{\chi}^{\frac{4}{3}} + \frac{16}{63}\bar{\chi}^2 + O(\bar{\chi}^{\frac{8}{3}}),\\
    w_4(\bar{\chi}) &= \frac{13}{81} - \frac{7}{27}(12)^{-\frac{1}{3}}\bar{\chi}^{\frac{2}{3}} + \frac{59}{315}(18)^{-\frac{1}{3}}\bar{\chi}^{\frac{4}{3}} - \frac{28}{675}\bar{\chi}^2 + O(\bar{\chi}^{\frac{8}{3}}),\\
    z_6(\bar{\chi}) &= \frac{448}{243} - \frac{196}{405}\left(\frac{2}{3}\right)^{\frac{1}{3}}\bar{\chi}^{\frac{2}{3}} - \frac{58}{675}\left(\frac{2}{3}\right)^{\frac{2}{3}}\bar{\chi}^{\frac{4}{3}} + \frac{154}{1125}\bar{\chi}^2 + O(\bar{\chi}^{\frac{8}{3}}).
\end{align*}
Finally, we see that by identifying
\begin{align*}
    a^3 = \frac{2}{375}\Delta_0^{-2},
\end{align*}
we obtain (\ref{z2Series})--(\ref{w4Series}) with $b=c=0$ and $\alpha=\beta=\gamma=0$.
\end{proof}

\section{Appendix: Proofs of the Main Theorems}\label{Appendix1}

\subsection{Proof of Theorem \ref{SSCStandardGauge}: Transformation to SSCNG}\label{Appendix1A}

First note that by direct differentiation:
\begin{align}
    \Phi_t &= h'g = \frac{\lambda}{\dot{R}R}\Phi,\label{Phit}\\
    \Phi_r &= hg' = \frac{\lambda r}{1-kr^2}\Phi.\label{Phir}
\end{align}
Letting $\boldsymbol{x}=(t,r)$ and $\hat{\boldsymbol{x}}=(\hat{t},\bar{r})$, the inverse Jacobian is given by
\begin{align}
    J^{-1} = \frac{\partial\hat{x}^\nu}{\partial x^\mu} = \left(\begin{array}{cc}
    \Phi_t & \Phi_r\\
    \dot{R}r & R
    \end{array}\right)^\nu_\mu,\label{Jinverse}
\end{align}
so
\begin{align}
    J = \frac{\partial x^\mu}{\partial\hat{x}^\nu} = \frac{1}{|J^{-1}|}\left(\begin{array}{cc}
    R & -\Phi_r\\
    -\dot{R}r & \Phi_t
    \end{array}\right)^\mu_\nu,
\end{align}
where
\begin{align*}
    |J^{-1}| = |R\Phi_t-\dot{R}r\Phi_r|.
\end{align*}
Thus 
\begin{align}
    \hat{g} = J^TgJ &= \frac{1}{|J^{-1}|^2}\left(\begin{array}{cc}
    R & -\dot{R}r\\
    -\Phi_r & \Phi_t
    \end{array}\right)\left(\begin{array}{cc}
    -1 & 0\\
    0 & \frac{R^2}{1-kr^2}
    \end{array}\right)\left(\begin{array}{cc}
    R & -\Phi_r\\
    -\dot{R}r & \Phi_t
    \end{array}\right)\notag\\
    &= \frac{1}{|J^{-1}|^2}\left(\begin{array}{cc}
    -R & -\frac{\dot{R}R^2r}{1-kr^2}\\
    \Phi_r & \frac{R^2}{1-kr^2}\Phi_t
    \end{array}\right)\left(\begin{array}{cc}
    R & -\Phi_r\\
    -\dot{R}r & \Phi_t
    \end{array}\right)\notag\\
    &= \frac{1}{|J^{-1}|^2}\left(\begin{array}{cc}
    \frac{(\dot{R}Rr)^2}{1-kr^2} - R^2 & R\Phi_r - \frac{\dot{R}R^2r}{1-kr^2}\Phi_t\\
    R\Phi_r - \frac{\dot{R}R^2r}{1-kr^2}\Phi_t & \frac{R^2}{1-kr^2}\Phi_t^2 - \Phi_r^2
    \end{array}\right).\label{TransformedMetric}
\end{align}
We first verify that the middle term $\hat{g}_{01}$ in (\ref{TransformedMetric}) vanishes when $\Phi(t,r)=h(t)g(r)$, with $g$ and $h$ given in (\ref{DefghThm}). To see this we derive (\ref{DefghThm}) from the condition
\begin{align*}
    \hat{g}_{01} = R\Phi_r - \frac{\dot{R}R^2r}{1-kr^2}\Phi_t = 0,
\end{align*}
which gives
\begin{align*}
    \frac{1-kr^2}{r}\frac{g'}{g} = \lambda = \dot{R}R\frac{h'}{h}
\end{align*}
for some positive constant $\lambda$. Integrating then gives:
\begin{align*}
    h(t) &= e^{\lambda\int_0^t\frac{ds}{\dot{R}(s)R(s)}},\\
    g(r) &= e^{\lambda\int_0^r\frac{sds}{1-ks^2}},
\end{align*}
in agreement with (\ref{DefghThm}).

To establish (\ref{Aformulaone}) and (\ref{Bformulaone}), let $\hat{g}_{00}=-\hat{B}$ and $\hat{g}_{11}=\hat{A}^{-1}$, then straightforward substitutions give:
\begin{align}
    \frac{1}{\hat{A}(\hat{t},\bar{r})} &= \frac{R^2\Phi_t^2-(1-kr^2)\Phi_r^2}{(R\Phi_t-\dot{R}r\Phi_r)^2(1-kr^2)} = 1 - kr^2 - H^2\bar{r}^2,\label{Abar}\\
    \hat{B}(\hat{t},\bar{r}) &= -\frac{1}{|J^{-1}|^2}\bigg(\frac{(\dot{R}Rr)^2}{1-kr^2}-R^2\bigg)\notag\\
    &= \frac{(1-kr^2)R^2-(\dot{R}Rr)^2}{(R\Phi_t-\dot{R}r\Phi_r)^2(1-kr^2)}\notag\\
    &= \frac{1-kr^2}{1-kr^2-H^2\bar{r}^2}\bigg(\frac{R^2\dot{R}^2}{\lambda^2\Phi^2}\bigg).\label{Bbar}
\end{align}
This implies
\begin{align*}
    B = \frac{1}{F'(\Phi)^2}\hat{B} = \frac{1}{(F'(\Phi)\Phi_t)^2}\frac{1-kr^2}{1-kr^2-H^2\bar{r}^2},
\end{align*}
verifying (\ref{Bformulaone}). To obtain the expression for $A(\bar{t},\bar{r})$ for the Friedmann metrics in SSCNG, note that, assuming (\ref{FtransFinal}), equation (\ref{TransformedMetric}) gives the formula
\begin{align*}
    \frac{1}{A} = \frac{(F')^2}{|J^{-1}|^2}\bigg(\frac{R^2}{1-kr^2}\Phi_t^2-\Phi_r^2\bigg),
\end{align*}
with
\begin{align*}
    |J^{-1}| = |F'||R\Phi_t-\dot{R}r\Phi_r|,
\end{align*}
so
\begin{align}
    A = \frac{(R\Phi_t-\dot{R}r\Phi_r)^2}{\frac{R^2}{1-kr^2}\Phi_t^2-\Phi_r^2}.\label{Aformula}
\end{align}
Putting (\ref{Phit}) and (\ref{Phir}) into (\ref{Aformula}) gives
\begin{align*}
    A = \frac{\Big(\frac{\lambda R\Phi}{\dot{R}R}-\frac{\dot{R}r\lambda r\Phi}{1-kr^2}\Big)^2}{\frac{R^2}{1-kr^2}\Big(\frac{\lambda\Phi}{\dot{R}R}\Big)^2-\Big(\frac{\lambda r\Phi}{1-kr^2}\Big)^2} = \frac{(1-kr^2-\dot{R}^2r^2)^2}{1-kr^2-\dot{R}^2r^2},
\end{align*}
which takes the final form
\begin{align}
    A = 1 - kr^2 - H^2\bar{r}^2,\label{Afinalform}
\end{align}
agreeing with (\ref{Aformulaone}). Thus (\ref{Aformulaone}) and (\ref{Bformulaone}) are confirmed. 

It remains to establish (\ref{barv}). Since the four-velocity equals $\boldsymbol{e}_0=(1,0)$ and the fluid velocity vanishes in comoving coordinates $(t,r)$, the formula for $J^{-1}$ in (\ref{Jinverse}) gives
\begin{align*}
    \hat{u}^\nu = (J^{-1})^\nu_\mu u^\mu = \left(\begin{array}{c}
    \Phi_t\\
    \dot{R}r\end{array}\right),
\end{align*}
and (\ref{Abar})--(\ref{Bbar}) give
\begin{align*}
    \sqrt{AB} = \frac{\sqrt{1-kr^2}}{\frac{\partial\bar{t}}{\partial t}(t,r)},
\end{align*}
which verifies (\ref{sqrtABbest}). Putting this together with (\ref{Phit}) into the formula for $\hat{v}$ gives
\begin{align}
    \hat{v} = \frac{1}{\sqrt{\hat{A}\hat{B}}}\frac{\hat{u}^1}{\hat{u}^0} = \frac{\dot{R}r}{\sqrt{1-kr^2}},\label{ABbyhatAB}
\end{align}
where
\begin{align*}
    B &= \bar{B} = \frac{1}{(F'(\hat{t}))^2}\hat{B}, & A &= \bar{A} = \hat{A}.
\end{align*}
To verify (\ref{barv}), note that the mapping from $(\bar{t},\bar{r})=(F(\hat{t}),\hat{r})$ involves only a change in time. Thus, using the notation $\bar{\boldsymbol{u}}=\boldsymbol{u}$, we have $u^0=F'(\hat{t})\hat{t}$, $u^0=\hat{u}^0$ and (\ref{ABbyhatAB}), so plugging this into the formula $v=\frac{1}{\sqrt{AB}}\frac{u^1}{u^0}$, the two factors of $F'(\hat{t})$ cancel and we see that $v=\hat{v}$.

Finally, note that $B(\bar{t},0)=1$ determines $F$ in (\ref{FtransFinal}) to be (\ref{Fdefine}), as it must, because in the SSC gauge $B(\bar{t},0)=1$, SSC time $\bar{t}$ and comoving time $t$ both measure proper time at $r=0$, so $\bar{t}=F(h(t))=t$ and $F(y)=h^{-1}(y)$. This completes the proof of Theorem \ref{SSCStandardGauge}.\qed

\subsection{Proof of Theorem \ref{SelfSimExpandxi}: The Friedmann Spacetime in SSCNG Coordinates}\label{Appendix1B}

Using the $k=0$ version of (\ref{FtransFinal})--(\ref{Fdefine}) of Theorem \ref{SSCStandardGauge}, let $R(t)$ denote the cosmological scale factor and define the coordinate transformation:
\begin{align}
    \bar{t} &= F(h(t)g(r)), & \bar{r} &= R(t)r,\label{FtransFinalagain}
\end{align}
where:
\begin{align}
    h(t) &= e^{\lambda\int_0^t\frac{d\tau}{\dot{R}(\tau)R(\tau)}},\label{Defhagain}\\
    g(r) &= e^{\frac{\lambda}{2}r^2},\label{Defgagain}\\
    F(y) &= h^{-1}(y),\label{Fdefineagain}
\end{align}
and we use the notation
\begin{align*}
    y = \hat{t} = \Phi(t,r) = h(t)g(r).
\end{align*}
By Theorem \ref{SSCStandardGauge}, (\ref{FtransFinalagain}) transforms metric (\ref{Friedmann}) over to SSC form (\ref{SSC}), the normalized gauge condition $B(\bar{t},0)=1$ holds and:
\begin{align}
    A_\sigma &= 1 - H^2\bar{r}^2,\label{Aformulaonenew}\\
    B_\sigma &= \frac{1}{(F(\Phi)_t)^2(1-H^2\bar{r}^2)}.\label{Bformulaonenew}
\end{align}
We now use formulas (\ref{Friedmannsoln1})--(\ref{kapparho}) for $R(t)$, $H(t)$ and $\rho(t)$ of Theorem \ref{ThmCosWithShockMAA}, together with the comoving velocity condition, applicable to the $p=0$, $k=0$ Friedmann metric in comoving coordinates $(t,r)$. Starting with (\ref{Friedmannsoln1}),
\begin{align*}
    R(t) = \bigg(\frac{t}{t_0}\bigg)^{\frac{\alpha}{2}},
\end{align*}
where
\begin{align*}
    \alpha = \frac{4}{3(1+\sigma)}.
\end{align*}
Differentiating yields
\begin{align*}
    \dot{R}(t) = \frac{\alpha}{2t_0}\bigg(\frac{t}{t_0}\bigg)^{\frac{\alpha}{2}-1}
\end{align*}
and substituting this into (\ref{Defhagain}) and integrating gives
\begin{align*}
    h(t) = e^{\lambda\big(\frac{2t_0^{\alpha}}{\alpha(2-\alpha)}t^{2-\alpha}\big)}.
\end{align*}
From this we obtain
\begin{align*}
    \hat{t} = \Phi(t,r) = h(t)g(r) = e^{\lambda\big(1+\frac{\alpha(2-\alpha)}{4}\eta^2\big)\frac{2t_0^2}{\alpha(2-\alpha)}t^{2-\alpha}}
\end{align*}
and
\begin{align*}
    F(y) = h^{-1}(y) = \bigg(\frac{\alpha(2-\alpha)\ln(y)}{2t_0^\alpha\lambda}\bigg)^{\frac{1}{2-\alpha}}.
\end{align*}
Thus by simple algebra,
\begin{align*}
    \bar{t} = h^{-1}(h(t)g(r)) = h^{-1}(\hat{t}) = 1 + \frac{\alpha(2-\alpha)}{4}\eta^2.
\end{align*}
This confirms that the transformation (\ref{F1a}) is equivalent to (\ref{FtransFinalagain}).

Consider now the formula (\ref{Aformulaonenew}). Equation (\ref{Friedmannsoln3}) gives the Friedmann formula for the Hubble constant
\begin{align*}
    H = \frac{2}{3(1+\sigma)}\frac{1}{t} = \frac{\alpha}{2}\frac{1}{t}.
\end{align*}
Substituting this into (\ref{Aformulaonenew}) and simplifying confirms (\ref{Aformeta}). To confirm (\ref{Bformulaonenew}), we calculate
\begin{align*}
    F(\Phi(t,r))_t = \frac{dh^{-1}}{d\hat{t}}\frac{d\hat{t}}{dt} = \bigg(\frac{\ln\hat{t}}{\lambda}\frac{\alpha(2-\alpha)}{2t_0^\alpha}\bigg)^{-\frac{1-\alpha}{2-\alpha}}t^{1-\alpha},
\end{align*}
which simplifies to
\begin{align*}
    F(\Phi(t,r))_t = \bigg(1+\frac{\alpha(2-\alpha)}{4}\eta^2\bigg)^{-\frac{1-\alpha}{2-\alpha}}.
\end{align*}
Using the above formulas in (\ref{Bformulaonenew}) then gives
\begin{align*}
    B_\sigma = \frac{1}{(1-H\bar{r}^2)F(h(t)g(r))_t^2} = \frac{\Big(1+\frac{\alpha(2-\alpha)}{4}\eta^2\Big)^{-2\frac{\alpha-1}{\alpha-2}}}{1-\frac{\alpha^2}{4}\eta^2},
\end{align*}
confirming (\ref{Bformeta}).

To verify formula (\ref{kapparho1}) for $\kappa\rho_\sigma\bar{r}^2$, start with (\ref{kapparho}), that is,
\begin{align*}
    \rho_\sigma = \frac{4}{3\kappa(1+\sigma)^2}\frac{1}{t^2},
\end{align*}
to directly obtain
\begin{align*}
    \kappa\rho_\sigma\bar{r}^2 = \kappa\rho\bar{r}^2 = \frac{4}{3}\alpha^2\eta^2,
\end{align*}
confirming (\ref{kapparho1}).

Finally, to confirm formula (\ref{vformeta}) for $v_\sigma$, set $k=0$ in (\ref{barv}) to obtain
\begin{align*}
    v_\sigma = \dot{R}r = H\bar{r} = \frac{\alpha}{2t}\frac{\bar{r}}{2} = \frac{\alpha}{2}\eta,
\end{align*}
which confirms (\ref{vformeta}).

It remains to verify expansions (\ref{etaexpand})--(\ref{vform}). Now equations (\ref{F1a}) and (\ref{eta0}) determine $\xi$ in terms of $\eta$ as
\begin{align*}
    \xi = \frac{\bar{r}}{\bar{t}} = \frac{\eta}{\mathcal{F}(\eta)} = \eta\bigg(1+\frac{\alpha(2-\alpha)}{4}\eta^2\bigg)^{-\frac{1}{2-\alpha}}.
\end{align*}
Using this together with
\begin{align*}
    (1+\gamma\eta^2)^\beta = 1 + \gamma\beta\eta^2 + \frac{1}{2}\gamma^2\beta(\beta-1)\eta^4 + O(\eta^6),
\end{align*}
we have
\begin{align*}
    \xi = \eta\bigg(1+\gamma\beta\eta^2+\frac{1}{2}\gamma^2\beta(\beta-1)\eta^4+O(\eta^6)\bigg),
\end{align*}
with:
\begin{align*}
    \beta &= -\frac{1}{2-\alpha}, & \gamma &= \frac{\alpha(2-\alpha)}{4}.
\end{align*}
Squaring and substituting gives
\begin{align}
    \xi^2 = \eta^2 - \frac{\alpha}{2}\eta^4 + \frac{1}{16}\alpha^2(4-\alpha)\eta^6 + O(\eta^8).\label{xiintermsofeta}
\end{align}
Using the elementary facts: 
\begin{align*}
    y &= x + ax^2 + bx^3 + O(x^4) & &\iff & x &= y - ay^2 + (2a^2-b)y^3 + O(y^4)
\end{align*}
and
\begin{align*}
    \frac{1}{1-ax-bx^2+O(x^3)} = 1 + ax + (a^2+b)x^2 + O(x^3),
\end{align*}
it is straightforward to invert the series (\ref{xiintermsofeta}) in a neighborhood of $\xi=\eta=0$ to obtain
\begin{align}
    \eta^2 = \xi^2 + \frac{\alpha}{2}\xi^4 + \frac{\alpha^3}{16}\xi^6 + O(\xi^8).\label{etabyxi}
\end{align}
This verifies (\ref{etaexpand}). Using (\ref{etabyxi}) in (\ref{Aformeta})--(\ref{vformeta}) gives (\ref{Aform})--(\ref{kapparho1}) and:
\begin{align*}
    \eta &= \xi + \frac{\alpha}{4}\xi^3 + O(\xi^4), & \xi &= \eta - \frac{\alpha}{4}\eta^3 + O(\xi^4), & \mathcal{F}(\eta) &= 1 + \frac{\alpha}{4}\eta^2 + O(\eta^4).
\end{align*}
Thus we can compute
\begin{align*}
    v_\sigma = \frac{\alpha}{2}\eta = \frac{\alpha}{2}\xi\frac{\bar{t}}{t} = \frac{\alpha}{2}\mathcal{F}(\eta) = \frac{\alpha}{2}\xi\Big(1+\frac{\alpha}{4}\xi^2+\dots\Big),
\end{align*}
as claimed in (\ref{vform}). This completes the proof of Theorem \ref{SelfSimExpandxi}.\qed

\subsection{Proof of Theorem \ref{thmsigma}: Derivation of the STV-PDE}\label{Appendix1C}

By \cite{groate}, it suffices to show that (\ref{Axi-final})--(\ref{weqnxi-final}) are equivalent to (\ref{firstorder1}), (\ref{firstorder3}), (\ref{conservation_law00}) and (\ref{conservation_law01}). Neglecting bars, we first convert (\ref{firstorder1}) and (\ref{firstorder3}) to a system in $(t,\xi)$ to obtain (\ref{Axi-final}) and (\ref{Dxi-final}) respectively. Since we have:
\begin{align}
    f_r &= \frac{1}{t}f_\xi,\label{trans1}\\
    f_t(t,r) &= f_t(t,\xi) - \frac{\xi}{t}f_\xi(t,\xi),\label{trans2}
\end{align}
where $f_t(t,\xi)$ denotes the partial of $f$ with respect to $t$ holding $\xi$ fixed and $f_\xi(t,\xi)$ denotes the partial of $f$ with respect to $\xi$ holding $t$ fixed. Making these substitutions into (\ref{firstorder1}) and (\ref{firstorder3}) gives the equivalent equations:
\begin{align*}
    \xi A_\xi &= -z + (1-A), & \xi\frac{B_\xi}{B} &= \frac{1-A}{A} + \frac{1}{A}T^{11}_M,
    \end{align*}
respectively. To obtain the implied equation for $D=\sqrt{AB}$, write
\begin{align*}
    2\xi DD_\xi = \xi(AB)_\xi = \xi A_\xi B + \xi B_\xi A,
\end{align*}
so
\begin{align*}
    \xi D_\xi = \frac{\xi A_\xi B + \xi B_\xi A}{2D}
\end{align*}
and
\begin{align}
    \xi D_\xi &= \frac{1}{2D}\bigg((-z+(1-A))B+BA\frac{1-A}{A}+\frac{BA}{A}T^{11}_M\bigg)\notag\\
    &= \frac{B}{2D}\left(-z+2(1-A)+T^{11}_M\right)\notag\\
    &= \frac{D}{2A}\left(2(1-A)-z+T^{11}_M\right).\label{Dalmost}
\end{align}
Putting (\ref{sigmastress3}) into (\ref{Dalmost}) gives (\ref{Axi-final}). Reversing these steps verifies the equivalence of (\ref{Axi-final}) and (\ref{Dxi-final}) with (\ref{Axi-final}) and (\ref{Bxi-final}) and (\ref{firstorder1}) and (\ref{firstorder3}).

It remains now only to prove that (\ref{zeqnxi-final}) and (\ref{weqnxi-final}) are equivalent to (\ref{conservation_law00}) and (\ref{conservation_law01}) respectively, assuming (\ref{Axi-final}) and (\ref{Dxi-final}). To start, assume $c=1$ and multiply equations (\ref{conservation_law00}) and (\ref{conservation_law01}) through by $r^2$ to get:
\begin{align}
    \big(T^{00}_Mr^2\big)_t + r^2\big(\sqrt{AB}T^{01}_M\big)_r + 2r\sqrt{AB}T^{01}_M &= 0,\label{cl00}\\
    \big(T^{01}_Mr^2\big)_t + r^2\big(\sqrt{AB}T^{11}_M\big)_r + 2r\sqrt{AB}\big(T^{11}_M-T^{22}_Mr^2\big) &= -\frac{1}{2}r^2\sqrt{AB}\{\cdot\}_*,\label{cl01}
\end{align}
with
\begin{align*}
    \{\cdot\}_* = \bigg\{\frac{1}{r}\bigg(\frac{1}{A}-1\bigg)\big(T^{00}_M-T^{11}_M\big)+\frac{2\kappa r}{A}\big(T^{00}_MT^{11}_M-(T^{01}_M)^2\big)\bigg\}_*.
\end{align*}
Combining terms, (\ref{cl00}) becomes
\begin{align}
    \big(T^{00}_Mr^2\big)_t + \big(\sqrt{AB}T^{01}_Mr^2\big)_r = 0.\label{cl00a}
\end{align}
To achieve a similar simplification in (\ref{cl01}), add and subtract to get
\begin{align*}
    \big(T^{01}_Mr^2\big)_t + r^2\Big(\sqrt{AB}\big(T^{11}_M-T^{22}_Mr^2\big)\Big)_r + 2r\sqrt{AB}\big(T^{11}_M-T^{22}_Mr^2\big) + r^2\big(\sqrt{AB}T^{22}_Mr^2\big)_r = -\frac{1}{2}r^2\sqrt{AB}\{\cdot\}_*,
\end{align*}
so the second and third terms combine to give
\begin{align*}
    \big(T^{01}_Mr^2\big)_t + \Big(\sqrt{AB}\big(T^{11}_Mr^2-T^{22}_Mr^4\big)\Big)_r + r^2\big(\sqrt{AB}T^{22}_Mr^2\big)_r = -\frac{1}{2}r^2\sqrt{AB}\{\cdot\}_*,
\end{align*}
or equivalently
\begin{align}
    t\big(T^{01}_Mr^2\big)_t - \xi\big(T^{01}_Mr^2\big)_\xi + \Big(D\big(T^{11}_Mr^2-T^{22}_Mr^4\big)\Big)_\xi + r^2\big(DT^{22}_Mr^2\big)_\xi = -\frac{1}{2}tr^2D\{\cdot\}_*.\label{cl01b}
\end{align}
Then by (\ref{definez}) and (\ref{definew}), together with (\ref{sigmastress1})--(\ref{sigmastress4}), we have:
\begin{align*}
    \kappa T^{00}_Mr^2 &= \kappa\rho r^2\frac{1+\sigma^2v^2}{1-v^2} = z,\\
    \kappa T^{01}_Mr^2 &= \frac{1+\sigma^2}{1+\sigma^2v^2}vz = \bar{v}z, & \bar{v} &= \frac{1+\sigma^2}{1+\sigma^2v^2} = \xi zw,\\
    \kappa T^{11}_Mr^2 &= \frac{\sigma^2+v^2}{1+\sigma^2v^2}z,\\
    \kappa T^{22}r^2 &= \kappa\sigma^2\rho, & \kappa T^{22}r^4 &= \sigma^2\frac{1-v^2}{1+\sigma^2v^2}z,
\end{align*}
and
\begin{align*}
    \kappa T^{11}_Mr^2 - \kappa T^{22}r^4 &= \frac{\sigma^2+v^2}{1+\sigma^2v^2}z - \frac{\sigma^2-\sigma^2v^2}{1+\sigma^2v^2}z\\
    &= \frac{(1+\sigma^2)v^2}{1+\sigma^2v^2}z = \bar{v}vz\\
    &= \bar{v}^2z + \bar{v}(v-\bar{v})z\\
    &= \bar{v}^2\bigg(1-\sigma^2\frac{1-v^2}{1+\sigma^2}\bigg)z.
\end{align*}
Moreover,
\begin{align*}
    (\bar{v}z)_t + \bigg(\sqrt{AB}\bar{v}^2z\frac{\kappa T^{11}_Mr^2-\kappa T^{22}r^4}{\bar{v}^2z}\bigg)_r + r^2\big(\kappa\sqrt{AB}T^{22}_Mr^2\big)_r = -\frac{1}{2}\kappa r^2\sqrt{AB}\{\cdot\}_*.
\end{align*}
Using these in (\ref{cl00a}) and (\ref{cl01b}) yields the equivalent system:
\begin{align}
    z_t &+ \big(\sqrt{AB}\bar{v}z\big)_r = 0,\label{00-rt}\\
    (\bar{v}z)_t &+ \bigg(\sqrt{AB}\bar{v}^2\bigg(1-\sigma^2\frac{1-v^2}{1+\sigma^2}\bigg)z\bigg)_r + r^2\big(\kappa\sqrt{AB}T^{22}_Mr^2\big)_r = -\frac{1}{2}\kappa r^2\sqrt{AB}\{\cdot\}_*.\label{01-rt}
\end{align}
We take (\ref{00-rt}) as our final form for the $z$ equation in terms of $(t,r)$, but for (\ref{01-rt}), we use (\ref{00-rt}) to write the first term as
\begin{align*}
    (\bar{v}z)_t = z\bar{v}_t + \bar{v}z_t = z\bar{v}_t - \bar{v}(\sqrt{AB}\bar{v}z)_r
\end{align*}
and the second term in (\ref{01-rt}) as
\begin{align*}
    \bigg(\sqrt{AB}\bar{v}^2\bigg(1-\sigma^2\frac{1-v^2}{1+\sigma^2}\bigg)z\bigg)_r = \big(\sqrt{AB}\bar{v}^2z\big)_r - \sigma^2\bigg(\sqrt{AB}\bar{v}^2\bigg(\frac{1-v^2}{1+\sigma^2}\bigg)z\bigg)_r,
\end{align*}
so that they combine to form
\begin{align*}
    z\bar{v}_t + \sqrt{AB}\bar{v}z\bar{v}_r - \sigma^2\bigg(\sqrt{AB}\bar{v}^2\bigg(\frac{1+v^2}{1+\sigma^2}\bigg)z\bigg)_r.
\end{align*}
Substituting this into (\ref{01-rt}) yields
\begin{align*}
    z\bar{v}_t + \sqrt{AB}\bar{v}z\bar{v}_r - \sigma^2\bigg(\sqrt{AB}\bar{v}^2\bigg(\frac{1-v^2}{1+\sigma^2}\bigg)z\bigg)_r + r^2\big(\kappa\sqrt{AB}T^{22}_Mr^2\big)_r = -\frac{1}{2}\kappa r^2\sqrt{AB}\{\cdot\}_*.
\end{align*}
We conclude that in the case (\ref{psigmarho}), equations (\ref{00-rt}) and (\ref{01-rt}) are equivalent to:
\begin{align*}
    z_t &+ \big(\sqrt{AB}\bar{v}z\big)_r = 0,\\
    z\bar{v}_t &+ \sqrt{AB}\bar{v}z\bar{v}_r - \sigma^2\bigg(\sqrt{AB}\bar{v}^2\bigg(\frac{1-v^2}{1+\sigma^2}\bigg)z\bigg)_r + r^2\big(\kappa\sqrt{AB}T^{22}_Mr^2\big)_r = -\frac{1}{2}\kappa r^2\sqrt{AB}\{\cdot\}_*,
\end{align*}
and hence equivalent to (\ref{conservation_law00}) and (\ref{conservation_law01}).

We now convert (\ref{00-rt}) and (\ref{01-rt}) to a system in $(t,\xi)$. Substituting (\ref{trans1}) and (\ref{trans2}) into (\ref{00-rt}) directly gives (\ref{zeqnxi-final}). For (\ref{weqnxi-final}), we use (\ref{trans1}) and (\ref{trans2}) to obtain:
\begin{align}
    \bar{v}_r &= \frac{1}{t}(\xi w_\xi+w),\label{trans3}\\
    \bar{v}_t &= \frac{\xi}{t}(tw_t-\xi w_\xi-w),\label{trans4}
\end{align}
for $w(t,\xi)$. Substituting (\ref{trans3}) and (\ref{trans4}) into (\ref{01-rt}) gives
\begin{multline*}
    \frac{\xi z}{t}\Big(tw_t+(-1+\sqrt{AB}w)\xi w_\xi-w+\sqrt{AB}w^2\Big)\\
    -\frac{\sigma^2}{t}\bigg(\sqrt{AB}\bar{v}^2\bigg(\frac{1-v^2}{1+\sigma^2}\bigg)z\bigg)_\xi + \frac{r^2}{t}\big(\kappa\sqrt{AB}T^{22}_Mr^2\big)_\xi = -\frac{1}{2}\kappa r^2\sqrt{AB}\{\cdot\}_*,
\end{multline*}
which yields
\begin{multline}
    tw_t + (-1+\sqrt{AB}w)\xi w_\xi - w + \sqrt{AB}w^2\\
    -\frac{\sigma^2}{\xi z}\bigg(\sqrt{AB}\bar{v}^2\bigg(\frac{1-v^2}{1+\sigma^2}\bigg)z\bigg)_\xi + \frac{r^2}{\xi z}\big(\kappa\sqrt{AB}T^{22}_Mr^2\big)_\xi = -\frac{\kappa tr^2}{2\xi z}\sqrt{AB}\{\cdot\}_*,\label{01-xita}
\end{multline}
and where from (\ref{conservation_law01}) we have
\begin{align*}
    \{\cdot\}_* = \bigg\{\frac{1}{r}\bigg(\frac{1}{A}-1\bigg)\big(T^{00}_M-T^{11}_M\big)+\frac{2\kappa r}{A}\big(T^{00}_MT^{11}_M-(T^{01}_M)^2\big)\bigg\}_*.
\end{align*}
Now
\begin{align*}
    \frac{r^2}{\xi z}\big(\kappa\sqrt{AB}T^{22}_Mr^2\big)_\xi &= \frac{1}{t^2}\frac{r^2}{\xi z}\big(\kappa\sqrt{AB}T^{22}_Mr^2t^2\big)_\xi\\
    &= \frac{\xi}{z}\bigg(\sqrt{AB}\sigma^2\kappa \rho r^2\frac{1}{\xi^2}\bigg)_\xi\\
    &= \sigma^2\frac{\xi}{z}\bigg(\sqrt{AB}\frac{1-v^2}{1+\sigma^2v^2}\frac{z}{\xi^2}\bigg)_\xi,
\end{align*}
so putting this into (\ref{01-xita}) gives
\begin{multline}
    tw_t + (-1+\sqrt{AB}w)\xi w_\xi - w + \sqrt{AB}w^2 - \frac{\sigma^2}{\xi z}\bigg(\sqrt{AB}\bar{v}^2\bigg(\frac{1-v^2}{1+\sigma^2}\bigg)z\bigg)_\xi\\
    + \sigma^2\frac{\xi}{z}\bigg(\sqrt{AB}\frac{1-v^2}{1+\sigma^2v^2}\frac{z}{\xi^2}\bigg)_\xi = -\frac{\kappa tr^2}{2\xi z}\sqrt{AB}\{\cdot\}_*.\label{01-xit}
\end{multline}
Using:
\begin{align*}
    T^{00}_M - T^{11}_M &= (1-\sigma^2)\rho, & T^{00}_MT^{11}_M - (T^{01}_M)^2 &= \sigma^2\rho^2,
\end{align*}
we get
\begin{align*}
    \{\cdot\}_* = \bigg\{\frac{1}{r}\bigg(\frac{1}{A}-1\bigg)(1-\sigma^2)\rho+\frac{2\kappa r}{A}\sigma^2\rho^2\bigg\}_*,
\end{align*}
so
\begin{align*}
    \frac{\kappa tr^2}{\xi z}\{\cdot\}_* &= \bigg(\frac{1}{\xi^2}\bigg(\frac{1}{A}-1\bigg)(1-\sigma^2)\frac{\kappa\rho r^2}{z}
    +\frac{2\sigma^2}{\xi^2A}\frac{(\kappa\rho r^2)^2}{z}\bigg)\\
    &= \frac{1}{\xi^2}\frac{\kappa\rho r^2}{z}\bigg(\frac{1-A}{A}(1-\sigma^2)
    +\frac{2\sigma^2}{A}\kappa\rho r^2\bigg).
\end{align*}
Since
\begin{align*}
    \frac{\kappa\rho r^2}{z} = \frac{1-v^2}{1+\sigma^2v^2},
\end{align*}
we have
\begin{align*}
    \frac{\kappa tr^2}{\xi z}\{\cdot\}_* = \frac{1}{\xi^2}\frac{1-v^2}{1+\sigma^2v^2}\bigg(\frac{1-A}{A}(1-\sigma^2)+\frac{2\sigma^2}{A}\frac{1-v^2}{1+\sigma^2v^2}z\bigg).
\end{align*}
Thus the right hand side of equation (\ref{01-xit}) is
\begin{align*}
    -\frac{1}{2\xi^2}\frac{1-v^2}{1+\sigma^2v^2}\sqrt{AB}\bigg(\frac{1-A}{A}(1-\sigma^2)+\frac{2\sigma^2}{A}\frac{1-v^2}{1+\sigma^2v^2}z\bigg),
\end{align*}
which yields (\ref{weqnxi-final}).\qed

\subsection{Proof of Theorem \ref{ThmsmoothAgain}: Smoothness at the Center is Preserved Under Evolution}\label{Appendix1D}

We start with equations:
\begin{align}
    \xi A_\xi &= (1-A) - z,\label{A}\\
    \xi D_\xi &= \frac{D}{2A}\big(2(1-A)-z+\xi^2w^2z\big).\label{D}\\
    tz_t &= -\xi\big((-1+Dw)z\big)_\xi - Dwz,\label{z}\\
    tw_t &= -\xi(-1+Dw)w_\xi + w - D\bigg(w^2+\frac{1-A}{2\xi^2A}(1-\xi^2w^2)\bigg),\label{w}
\end{align}
First note that products and quotients of smooth functions that satisfy the condition that all odd derivatives vanish at $\xi=0$ also have this property. Now for a function $F(t,\xi)$, let $F^{(n)}_\xi(t)$ denote the $n^{th}$ partial derivative of $F$ with respect to $\xi$ at $\xi=0$. We prove this theorem by induction on $n$. For this, assume $n\geq1$ is odd and make the induction hypothesis that for all odd $k<n$, $F^{(k)}_\xi(t)=0$ for all $t\geq t_0$ and all functions $F=z,w,A,D$. We prove that $F^{(n)}_\xi(t)=0$ for $t>t_0$. For this, we employ the following simple observation: If the $n^{th}$ derivative of the product of $m$ functions
\begin{align*}
    \frac{\partial^n}{\partial\xi^n}(F_1\dots F_m)
\end{align*}
is expanded into a sum by the product rule, the only terms that will not have a factor containing an odd derivative of order less than $n$ are the terms in which all the derivatives fall on the same factor. This follows from the simple fact that if the sum of $k$ integers is odd, then one of them must be odd. Taking the $n^{th}$ derivative of (\ref{z}) and setting $\xi=0$ gives the ODE at $\xi=0$
\begin{align}
    t\frac{d}{dt}z^{(n)}_\xi = -n\frac{\partial^n}{\partial\xi^n}\big((-1+Dw)z\big) - \frac{\partial^n}{\partial\xi^n}(DWz).\label{zode}
\end{align}
Since all odd derivatives of order less than $n$ are assumed to vanish at $\xi=0$, we can apply the observation and the assumptions (\ref{ansatzassumption}), that $D=1$, $w=w_0(t)$ and $z=0$ at $\xi=0$, to see that only the $n^{th}$ order derivative $z^{(n)}_\xi$ survives on the right hand side of (\ref{zode}). That is, by the induction hypothesis, (\ref{zode}) reduces to
\begin{align}
    t\frac{d}{dt}z^{(n)}_\xi = \big(n-(n+1)w_0(t)\big)z^{(n)}_\xi.\label{zode1}
\end{align}
Since under the change of variable $t\to\ln(t)$, (\ref{zode1}) is a linear first order homogeneous ODE in $z^{(n)}_\xi(t)$ with $z^{(n)}_\xi(t_0)=0$, it follows by uniqueness of solutions that $z^{(n)}_\xi(t)=0$ for all $t\geq t_0$. This proves the theorem for the solution component $z(t,\xi)$.

Consider next equation (\ref{A}). Differentiating both sides $n$ times with respect to $\xi$ and setting $\xi=0$ gives
\begin{align*}
    (n+1)A^{(n)}_\xi(t) = -z^{(n)}_\xi(t) = 0
\end{align*}
for $t\geq t_0$, thus verifying 
\begin{align*}
    A^{(n)}_\xi(t) = 0
\end{align*}
for $t\geq t_0$, which verifies the theorem for component $A$.

Consider now equation (\ref{D}). Differentiating both sides $n$ times with respect to $\xi$, setting $\xi=0$ and applying the observation and the induction hypothesis yields
\begin{align*}
    nD^{(n)}_\xi &= \frac{\partial^n}{\partial\xi^n}\bigg(D\frac{1-A}{A}\bigg)\\
    &= D^{(n)}_\xi\frac{1-A}{A} + D\bigg(\frac{1-A}{A}\bigg)^{(n)}_\xi + \sum_{k<n\ \text{odd}}c_kD^{(k)}_\xi = 0
\end{align*}
for $t\geq t_0$ because $A=1$ at $\xi=0$, all lower order odd derivatives are assumed to vanish at $\xi=0$ and we have already verified the theorem for the component $A$. This proves
\begin{align*}
    D^{(n)}_\xi(t) = 0
\end{align*}
for $t\geq t_0$, verifying the theorem for component $D$.

Consider lastly equation (\ref{w}). Differentiating both sides $n$ times with respect to $\xi$, setting $\xi=0$ and applying our observation gives
\begin{align*}
    t\frac{d}{dt}w^{(n)}_\xi &= -n(-1+w_0(t))w^{(n)}_\xi + w^{(n)}_\xi - (w^2)^{(n)}_{\xi}\\
    &= -n(-1+w_0(t))w^{(n)}_\xi + w^{(n)}_\xi - 2ww^{(n)}_\xi\\
    &= (-n(-1+w_0(t))+1-2w)w^{(n)}_\xi
\end{align*}
for $t\geq t_0$ because all lower order odd derivatives are assumed to vanish at $\xi=0$. Thus $w^{(n)}_\xi(t)$ solves the first order homogeneous ODE
\begin{align*}
    t\frac{d}{dt}w^{(n)}_\xi &= (-n(-1+w_0(t))+1-2w)w^{(n)}_\xi
\end{align*}
starting from zero initial data at $t=t_0$, so again we conclude 
\begin{align*}
    w^{(n)}_\xi(t) = 0
\end{align*}
for $t\geq t_0$. This verifies the theorem for the final component $w$, thereby completing the proof of Theorem \ref{ThmsmoothAgain}.\qed

\subsection{Proof of Theorem \ref{ODEsForCorrections}: Derivation of the STV-ODE of Order 2}\label{Appendix1E}

We start with equations (\ref{z}) and (\ref{w}) in the form:
\begin{align}
    tz_t &= \xi z_\xi - \xi(Dwz)_\xi - Dwz\label{zagain}\\
    tw_t &= (1-Dw)(\xi w_\xi+w) - D\frac{1-A}{2\xi^2A}(1-\xi^2w^2).\label{wagain}
\end{align}
Applying (\ref{zansatz}) to terms in (\ref{zagain}) gives:
\begin{align}
    tz_t &= t\dot{z}_2\xi^2 + t\dot{z}_4\xi^4 + O(\xi^6),\label{zeq1}\\
    \xi z_\xi &= 2z_2\xi^2 + 4z_4\xi^4 + O(\xi^6),\label{zeq2}
\end{align}
and
\begin{align}
    Dwz = w_0z_2\xi^2 + (z_2w_2+w_0z_4+D_2z_2w_0)\xi^4 + O(\xi^6).\label{zeq3}
\end{align}
Thus putting ansatz (\ref{Aansatz})--(\ref{wansatz}) into (\ref{zagain}), using (\ref{zeq1})--(\ref{zeq3}) and collecting like powers of $\xi$ shows that the equations for the corrections close at every even power of $\xi$ and yield the following equations for the corrections at orders $\xi^2$ and $\xi^4$:
\begin{align}
    t\dot{z}_2 &= 2z_2 - 3z_2w_0,\label{z2eqn}\\
    t\dot{z}_4 &= -5D_2w_0z_2 + 4z_4 - 5w_0z_4 - 5w_2z_2.\label{z4eqn}
\end{align}

Applying the ansatz (\ref{Aansatz})--(\ref{wansatz}) to terms in equation (\ref{wagain}) gives:
\begin{align}
    tw_t &= t\dot{w}_0 + t\dot{w}_2\xi^2 + O(\xi^4),\label{weq1}\\
    1 - Dw &= (1-w_0) - (w_2+D_2w_0)\xi^2 + O(\xi^4),\label{weq2}\\
    \xi w_\xi + w &= w_0 + 3w_2\xi^2 + O(\xi^4),\label{weq3}\\
    \frac{1-A}{2\xi^2A} &= -\frac{1}{2}A_2 + \frac{1}{2}(A_2^2-A_4^2)\xi^2 + O(\xi^4),\label{weq5}\\
    D\frac{1-A}{2\xi^2A}(1-\xi^2w^2) &= -\frac{1}{2}A_2 + \frac{1}{2}\big(A_2w_0^2+A_2^2-A_4-A_2D_2\big)\xi^2 + O(\xi^4).\label{weq6}
\end{align}
Thus putting ansatz (\ref{Aansatz})--(\ref{wansatz}) into (\ref{wagain}), using (\ref{weq1})--(\ref{weq6}) and collecting like powers of $\xi$ shows that the equations for the corrections close at every even power of $\xi$, and this yields the following equations for the corrections at orders zero and $\xi^2$:
\begin{align}
    t\dot{w}_0 &= w_0 - w_0^2 + \frac{1}{2}A_2,\label{w0eqn}\\
    t\dot{w}_2 &= 3w_2 - 4w_0w_2 - D_2w_0^2 - \frac{1}{2}A_2w_0^2 - \frac{1}{2}A_2^2 + \frac{1}{2}A_4 + \frac{1}{2}A_2D_2.\label{w2eqn}
\end{align}

Substituting now the ansatz (\ref{Aansatz}) and (\ref{zansatz}) into (\ref{Axi-final0}) in the form
\begin{align*}
    z = -\big(\xi A_\xi+(A-1)\big)
\end{align*}
yields
\begin{align*}
    z_2\xi^2 + z_4\xi^4 + O(\xi^6) = -3A_2\xi^2 - 5A_4\xi^4 + O(\xi^6),
\end{align*}
so equating orders gives
\begin{align}
    A_2 &= -\frac{1}{3}z_2, & A_4 &= -\frac{1}{5}z_4.\label{A2A4}
\end{align}
Similarly, putting ansatz (\ref{Aansatz})--(\ref{wansatz}) into (\ref{Dxi-final0}) and keeping only $O(\xi^2)$ terms yields
\begin{align*}
    2D_2\xi^2 + O(\xi^4) = -\frac{1}{2}(2A_2+z_2)\xi^2 + O(\xi^4),
\end{align*}
so using (\ref{A2A4}) and equating orders gives
\begin{align}
    D_2 = -\frac{1}{12}z_2.\label{D2}
\end{align}
Finally, putting (\ref{A2A4}) and (\ref{D2}) into (\ref{z2eqn}) and (\ref{z4eqn}) and (\ref{w0eqn}) and (\ref{w2eqn}) yields (\ref{z2finalequation})--(\ref{w2finalequation}). This completes the proof of Theorem \ref{ODEsForCorrections}.\qed

\subsection{Proof of Theorem \ref{Friedmannknotzeroexpansion}: The Expansion of Friedmann in SSCNG Case $k=-1$}\label{Appendix1F}

Since the derivation of (\ref{z2FinalAgain})--(\ref{w2FinalAgain}) and (\ref{z2FinalAgaink1})--(\ref{w2FinalAgaink1}) is based on known formulas for the $k=\pm1$ Friedmann metric, for the proofs contained in this subsection and the next, we return to our earlier notation of letting unbarred coordinates $(t,r)$ denote Friedmann comoving coordinates, and barred coordinates $(\bar{t},\bar{r})$ denote SSC.

The solution of the $p=0$, $k=-1$ Friedmann equations (\ref{kFriedmannequation1})--(\ref{kFriedmannequation2}) is given by (\ref{kminusone})--(\ref{kminusoneR}) in Friedmann coordinates $(t,r)$. We write these formulas in the form:
\begin{align}
    \frac{t}{\Delta_0} &= \frac{1}{2}(\sinh2\theta-2\theta),\label{tagainkneg}\\
    \frac{R}{\Delta_0} &= \frac{1}{2}(\cosh2\theta-1),\label{Ragainkneg}
\end{align}
where by (\ref{DeltaDef}) the Friedmann free parameter $\Delta_0$ is taken to be
\begin{align*}
    \Delta_0 &= \frac{\kappa}{3}\rho R^3.
\end{align*}
To transform these coordinates over to SSCNG, we begin by finding a simple expression for $\bar{t}=h^{-1}(h(t)g(r))$ in (\ref{FtransFinal}), where $h$ and $g$ are given by (\ref{DefghThm}) to be:
\begin{align*}
    h(t) &= e^{\lambda\int_0^t\frac{d\tau}{\dot{R}(\tau)R(\tau)}}, & g(r) &= (1+r^2)^{\frac{\lambda}{2}},
\end{align*}
where $R(t)$ is defined implicitly by (\ref{kminusone}) and (\ref{kminusoneR}). To obtain expressions in terms of $\theta$, we calculate the following:
\begin{align}
    \frac{dt}{d\theta} &= \Delta_0(\cosh2\theta-1),\label{dtdtheta}\\
    \dot{R} &= \Delta_0\frac{d\theta}{dt}\sinh2\theta = \coth\theta,\label{tstep2}\\
    \frac{d\dot{R}}{d\theta} &= -\frac{2}{\cosh2\theta-1} = -\csch^2\theta,\\
    \ddot{R} &= \frac{d\dot{R}}{d\theta}\frac{d\theta}{dt} = -\frac{\csch^22\theta}{\Delta_0(\cosh2\theta-1)},\\
    \dot{H} &= \frac{\ddot{R}R-\dot{R}^2}{R^2} = -\frac{4(\sinh^22\theta+\cosh2\theta-1)}{\Delta_0^2(\cosh2\theta-1)^4}.\label{Hdot}
\end{align}

Note first that by (\ref{tstep2}), (\ref{tagainkneg}) and (\ref{Ragainkneg}), we have
\begin{align*}
    \dot{R}R = \frac{\Delta_0}{2}\sinh2\theta,
\end{align*}
and using this in (\ref{tstep2}) gives a formula for $\theta$ in terms of $t$, namely
\begin{align*}
    \theta = \frac{1}{\Delta_0}(R\dot{R}-t).
\end{align*}
Using this in (\ref{hoft2again}) gives
\begin{align*}
    \int_0^t\frac{ds}{\dot{R}R} = 2\int_0^t\frac{ds}{\Delta_0\sinh2\theta(s)}.
\end{align*}
Now let
\begin{align*}
    s = \frac{\Delta_0}{2}(\sinh2\theta-2\theta),
\end{align*}
so
\begin{align*}
    ds = \Delta_0(\cosh2\theta-1)d\theta,
\end{align*}
and substitution yields
\begin{align*}
    \int_0^t\frac{ds}{\dot{R}(s)R(s)} &= 2\int_{t=0}^t\frac{\cosh2\theta-1}{\sinh2\theta}d\theta = 2\int_{t=0}^t\frac{2\sinh^2\theta}{2\sinh\theta\cosh\theta}d\theta\\
    &= 2\int_{t=0}^t\frac{\sinh\theta}{\cosh\theta}d\theta = \Big[2\ln|\cosh\theta|\Big]_{t=0}^t = \cosh^2\theta.
\end{align*}
Using this in (\ref{hoft2again}) gives
\begin{align*}
    h(t) &= \Big[e^{\lambda\ln|\cosh^2\theta|}\Big]_{t=0}^t = \cosh^{2\lambda}\theta.
\end{align*}
Taking $\lambda=\frac{1}{2}$ now gives:
\begin{align*}
    h(t) &= \cosh\theta(t),\\
    g(r) &= \sqrt[4]{1+r^2},\\
    h^{-1}(y) &= \theta^{-1}\circ\cosh^{-1}(y).
\end{align*}
So since we started with (\ref{kminusone}) and $\Phi(t,r)=h(t)g(r)$, we have
\begin{align*}
    \bar{t} = h^{-1}(\Phi) = \theta^{-1}\Big(\cosh^{-1}\big(\sqrt[4]{1+r^2}\cosh\theta(t)\big)\Big),
\end{align*}
or equivalently
\begin{align*}
    \cosh\theta(\bar{t}) = \sqrt[4]{1+r^2}\cosh\theta(t).
\end{align*}
Thus in summary, $\bar{t}=F(h(t)g(r))$ with $F(y)=h^{-1}(y)$, so the coordinate transformation from $(t,r)\to(\bar{t},\bar{r})$ is:
\begin{align}
    \bar{t} &= \theta^{-1}\circ\cosh^{-1}\big(\sqrt[4]{1+r^2}\cosh\theta(t)\big),\label{transkequalminusone1-}\\
    \bar{r} &= R(t)r.\label{transkequalminusone2-}
\end{align}
Our goal is to derive (\ref{z2FinalAgain})--(\ref{w2FinalAgain}) by writing $A$, $z$ and $w$ as functions of $(\bar{t},\xi)$ for $0\leq\theta\leq\frac{\pi}{2}$ using the $k=-1$ Friedmann formulas and then determine the Taylor coefficients of the expansion in $\xi$.

We begin by deriving formulas for $A_2=-\frac{1}{3}z_2$ and $A_4=-\frac{1}{5}z_4$. Start with equation (\ref{Afirst}). We find $A_2(\bar{t})$ and $A_4(\bar{t})$ such that
\begin{align*}
    A(\bar{t},\xi) = A_2(\bar{t})\xi^2 + A_4(\bar{t})\xi^4 + O(\xi^6).
\end{align*}
Note that from (\ref{Afirst}) we have
\begin{align}
    A(\bar{t},\xi) = 1 - \frac{\Delta_0\bar{t}^2}{R(t)^3}\xi^2,\label{Adef}
\end{align}
where $t=t(\bar{t},\bar{r})$ needs to be expressed as a function of $(\bar{t},\bar{r})$. Since $t=\bar{t}+O(\xi^2)$, it follows immediately from (\ref{Adef}) that
\begin{align}
    A_2(\bar{t}) = -\frac{\Delta_0\bar{t}^2}{R^3(\bar{t})}.\label{A2tbar}
\end{align}
Now since $A_2$ is a function of $\bar{t}$ and $\bar{t}=t+O(\xi^2)$ to within the order we seek, we can write $A_2(\bar{t})$ as a function of $\theta$ at $\xi=0$ by identifying
\begin{align*}
    \bar{t} = t = \frac{\Delta_0}{2}(\sinh2\theta-2\theta)
\end{align*}
from (\ref{tagainkneg}). The result is
\begin{align}
    A_2(\bar{t}) = -\frac{2(\sinh2\theta-2\theta)^2}{(\cosh2\theta-1)^3},\label{A2theta}
\end{align}
which follows directly from (\ref{tagainkneg}) and (\ref{Ragainkneg}). This establishes (\ref{z2FinalAgain}).

We now find $A_4(\bar{t})$ in the case $k=-1$. Writing
\begin{align}
    A(\bar{t},\xi) = 1 - \frac{\Delta_0\bar{t}^2}{R^3(t(\bar{t},\xi))}\xi^2 =: 1 + a(\bar{t},t(\bar{t},\xi))\xi^2,\label{Aformulas}
\end{align}
together with
\begin{align*}
    A(\bar{t},\xi) = 1 + A_2(\bar{t})\xi^2 + A_4(\bar{t})\xi^4 + O(\xi^6),
\end{align*}
we derive $A_4(\bar{t})$ by Taylor expanding $a(\bar{t},t(\bar{t},\xi))$ in $\xi$, using a formula for
\begin{align*}
    t_{\xi\xi} := \frac{\partial^2}{\partial \xi^2}t(\bar{t},\xi),
\end{align*}
obtained implicitly from the transformation law (\ref{transkequalminusone1-})--(\ref{transkequalminusone2-}) by writing
\begin{align}
    \cosh\theta(\bar{t}) = \sqrt[4]{1+\frac{\bar{t}^2\xi^2}{R^2(t)}}\cosh\theta(t).\label{formulaFort-}
\end{align}
That is, we take (\ref{formulaFort-}) as the implicit definition for $t=t(\bar{t},\xi)$. Differentiating (\ref{Aformulas}) with respect to $\xi$ holding $\bar{t}$ fixed gives:
\begin{align*}
    A &= 1 + a\xi^2,\\
    A_\xi &= a_\xi\xi^2 + 2a\xi,\\
    A_{\xi\xi} &= a_{\xi\xi}\xi^2 + 4a_\xi\xi + 2a,\\
    A_{\xi\xi\xi} &= a_{\xi\xi\xi}\xi^2 + 6a_{\xi\xi}\xi + 6a_\xi,\\
    A_{\xi\xi\xi\xi} &= a_{\xi\xi\xi\xi}\xi^2 + 8a_{\xi\xi\xi}\xi + 12a_{\xi\xi}.
\end{align*}
Thus
\begin{align*}
    A_4 = \frac{1}{4!}A_{\xi\xi\xi\xi}(\bar{t},0) = \frac{1}{2}a_{\xi\xi}(\bar{t},0).
\end{align*}
Differentiating
\begin{align*}
    a(\bar{t},\xi) = -\frac{\Delta_0\bar{t}^2}{R^3(t)}
\end{align*}
with respect to $\xi$ holding $\bar{t}$ fixed yields:
\begin{align*}
    a_\xi &= 3\Delta_0\frac{\dot{R}(t)}{R^4(t)}t_\xi\bar{t}^2,\\
    a_{\xi\xi} &= 3\Delta_0\frac{\dot{R}(t)}{R^4(t)}t_{\xi\xi}\bar{t}^2 - 12\Delta_0\frac{\dot{R}(t)}{R^5(t)}t_\xi^2\bar{t}^2 + 3\Delta_0\frac{\dot{R}(t)}{R^4(t)}t_\xi^2\bar{t}^2.
\end{align*}
Setting $\xi=0$ and using $t_\xi(\bar{t},0)=0$ and $\bar{t}=t$ at $r=\xi=0$ gives
\begin{align*}
    a_{\xi\xi}(\bar{t},0) = 3\Delta_0\frac{\dot{R}(\bar{t})}{R^4(\bar{t})}t_{\xi\xi}(\bar{t},0)\bar{t}^2.
\end{align*}
It remains to find $t_{\xi\xi}$ and $\dot{R}(\bar{t})$ as functions of $(\bar{t},\xi)$ at $\xi=0$. First write (\ref{formulaFort-}) as
\begin{align}
    \cosh\theta(\bar{t}) = \sqrt[4]{b(\bar{t},\xi)}\cosh\theta(t)\label{bform}
\end{align}
with
\begin{align}
    b(\bar{t},\xi) = 1 + r^2 = 1 + \frac{\bar{t}^2}{R^2(t)}\xi^2.\label{bform1}
\end{align}
Differentiating (\ref{bform}) with respect to $\xi$ with $\bar{t}$ fixed gives
\begin{align*}
    \frac{1}{4}b^{-\frac{3}{4}}b_\xi\cosh\theta(t) + b^{\frac{1}{4}}t_\xi\theta'(t)\sinh\theta(t) = 0.
\end{align*}
Differentiating again, setting $\xi=0$ and using:
\begin{align*}
    t &= \bar{t} + O(\xi^2), & \dot{R}(t) &= \dot{R}(\bar{t}) + O(\xi),
\end{align*}
together with:
\begin{align*}
     b(\bar{t},0) &= 1, & b_{\xi} &:= b_{\xi}(\bar{t},0) = 0, & t_{\xi} := t_{\xi}(\bar{t},0) &= 0,
\end{align*}
(noting that we are changing definitions) gives
\begin{align*}
    \frac{1}{4}b_{\xi\xi}(\bar{t},0)\cosh\theta(\bar{t}) + t_{\xi\xi}(\bar{t},0)\theta'(\bar{t})\sinh\theta(\bar{t}) = 0.
\end{align*}
So from the definition of $b$ in (\ref{bform1}),
\begin{align*}
    b_{\xi\xi}(\bar{t},0) = \frac{2\bar{t}^2}{R^2(\bar{t})}.
\end{align*}
From this we conclude
\begin{align}
    t_{\xi\xi}(\bar{t},0) = -\frac{\cosh\theta(\bar{t})}{2\theta'(\bar{t})\sinh\theta(\bar{t})}\frac{\bar{t}^2}{R^2(\bar{t})}.\label{btwice}
\end{align}
By (\ref{tagainkneg}), since $\xi=0$, we have
\begin{align*}
    \theta'(\bar{t}) = \frac{1}{\Delta_0(\cosh2\theta(\bar{t})-1)},
\end{align*}
so putting this into (\ref{btwice}), and using $\theta=\theta(\bar{t})$, gives
\begin{align}
    t_{\xi\xi}(\bar{t},0) &= -\frac{\Delta_0(\cosh2\theta-1)\cosh\theta}{2\sinh\theta}\frac{\bar{t}^2}{R^2(\bar{t})}\label{btwice2-}\\
    &= -\frac{\Delta_0(\sinh2\theta-2\theta)^2\cosh\theta}{2(\cosh2\theta-1)\sinh\theta}.\label{btwice2}
\end{align}
Thus we have
\begin{align}
    A_4(\bar{t},0) = \frac{1}{2}a_{\xi\xi}(\bar{t},0) = \frac{3\Delta_0}{2}\frac{\dot{R}(\bar{t})}{R^4(\bar{t})}t_{\xi\xi}(\bar{t},0)\bar{t}^2.\label{A4start}
\end{align}
Note that (\ref{A4start}) holds for any $k=\pm1$ and will be used for the case $k=+1$ below as well. Using (\ref{A2tbar}) and (\ref{btwice2-}) in (\ref{A4start}) gives
\begin{align*}
    A_4(\bar{t},0) &= -\frac{3\Delta_0^2}{4}\bigg(\frac{\bar{t}^2}{R^3(\bar{t})}\bigg)^2\dot{R}(\bar{t})\frac{(\cosh2\theta-1)\cosh\theta}{\sinh\theta}\\
    &= -\frac{3}{4}A_2(\bar{t})^2\dot{R}(\bar{t})\frac{(\cosh2\theta-1)\cosh\theta}{\sinh\theta}.
\end{align*}
We summarize this as
\begin{align}
    A_4(\bar{t},0) &= -\frac{6(\sinh2\theta-2\theta)^4\cosh^2\theta}{(\cosh2\theta-1)^6}.\label{A4final}
\end{align}
Equation (\ref{A4final}) follows from (\ref{A2tbar}) and (\ref{tstep2}) because $A_4(\bar{t})$ is computed at $\xi=0$ where $\bar{t}=t$. This establishes (\ref{z4FinalAgain}).

We next obtain formulas for $w_0$ and $w_2$ by expanding $w=\frac{v}{\xi}$ in even powers of $\xi$. We begin with the formula (\ref{barv}) for $v$,
\begin{align}
    w = \frac{v}{\xi} = \frac{1}{\xi}\frac{\dot{R}(t)r}{\sqrt{1-kr^2}}.\label{wfirst}
\end{align}
Substituting $k=-1$, $\bar{r}=R(t)r$ and $\xi=\frac{\bar{r}}{\bar{t}}$ into (\ref{wfirst}) gives the exact formula
\begin{align}
    w = \frac{\dot{R}(t)\bar{t}}{R(t)}\frac{1}{\sqrt{1+\frac{\bar{t}^2}{R^2(t)}\xi^2}}.\label{w2}
\end{align}
Now we know $t=t(\bar{t},\xi)$ and $t$ agrees with $\bar{t}$ at $\xi=0$, so $t=t(\bar{t},0)$ with error order $O(\xi^2)$, so we again write
\begin{align*}
    t = \bar{t} + f(\bar{t})\xi^2 + O(\xi^4),
\end{align*}
where $2f(\bar{t})=t_{\xi\xi}(\bar{t},0)$ and, according to (\ref{btwice2}),
\begin{align*}
    t_{\xi\xi}(\bar{t},0) = -\frac{\Delta_0(\sinh2\theta-2\theta)^2\cosh\theta}{2(\cosh2\theta-1)\sinh\theta}.
\end{align*}
Thus
\begin{align*}
    w(\bar{t},\xi) &= H(t)\bar{t}\bigg(1-\frac{1}{2}\frac{\bar{t}^2}{R(t)^2}\xi^2\bigg) + O(\xi^4)\\
    &= H\big(\bar{t}+f(\bar{t})\xi^2\big)\bar{t}\bigg(1-\frac{1}{2}\frac{\bar{t}^2}{R(t)^2}\xi^2\bigg) + O(\xi^4)\\
    &= \bigg(H(\bar{t})+\frac{1}{2}H_{\xi\xi}\xi^2\bigg)\bar{t}\bigg(1-\frac{1}{2}\frac{\bar{t}^2}{R(t)^2}\xi^2\bigg) + O(\xi^4),\\
\end{align*}
where we have written $H=H(t(\bar{t},\xi))$, so
\begin{align*}
    H_\xi=\dot{H}t_\xi,
\end{align*}
and since $t_\xi=O(\xi)$ we have to leading even orders
\begin{align*}
    H_{\xi\xi}(t(\bar{t},0)) = H_{\xi\xi}(\bar{t}) = 2\dot{H}f(\bar{t}) = \dot{H}(\bar{t})t_{\xi\xi}(\bar{t},0).
\end{align*}
Continuing, we obtain
\begin{align*}
    w(\bar{t},\xi) &= \Big(H(\bar{t})+\dot{H}(\bar{t})t_{\xi\xi}(\bar{t},0)\xi^2\Big)\bar{t}\bigg(1-\frac{1}{2}\frac{\bar{t}^2}{R(\bar{t})^2}\xi^2\bigg) + O(\xi^4)\\
    &= H(\bar{t})\bar{t} + \bigg(\frac{1}{2}\dot{H}(\bar{t})t_{\xi\xi}(\bar{t},0)\bar{t} - \frac{1}{2}\frac{H(\bar{t})\bar{t}^3}{R(\bar{t})^2}\bigg)\xi^2 + O(\xi^4).
\end{align*}
From this we obtain the following exact formulas valid at $\xi=0$ for $k=-1$ and $k=+1$:
\begin{align}
    w_0 &= H(\bar{t})\bar{t}\label{w0first}\\
    w_2 &= \frac{1}{2}\dot{H}(\bar{t})t_{\xi\xi}(\bar{t},0)\bar{t} - \frac{H(\bar{t})\bar{t}^3}{2R(\bar{t})^2}.\label{w2first}
\end{align}
Using formulas (\ref{tagainkneg})--(\ref{Hdot}) together with $\bar{t}=t(\bar{t},0)$ we immediately obtain
\begin{align*}
    w_0 = \frac{(\sinh2\theta-2\theta)\sinh2\theta}{(\cosh2\theta-1)^2}.
\end{align*}
This establishes (\ref{w0FinalAgain}).

To express $w_2$ as a function of $\theta$, write (\ref{w2first}) as
\begin{align}
    w_2 = w_2^B - w_2^A,\label{wAB}
\end{align}
where
\begin{align}
    w_2^A &= \frac{H\bar{t}^3}{2R^2},\label{wA}\\
    w_2^B &= \frac{1}{2}\dot{H}t_{\xi\xi}\bar{t},\label{wB}
\end{align}
and where the functions $R$, $H$, $\dot{H}$ and $t_{\xi\xi}$ each take $\bar{t}$ as their only argument, since we evaluate at $\xi=0$. Substituting (\ref{tagainkneg}), (\ref{Ragainkneg}) and (\ref{Hdot}) into (\ref{wA}) and simplifying gives
\begin{align}
    w_2^A = \frac{(\sinh2\theta-2\theta)^3\cosh\theta}{(\cosh2\theta-1)^4}\sinh\theta,\label{w2Afinal}
\end{align}
which uses the expression
\begin{align*}
    \frac{\bar{t}^3}{R^2(\bar{t})} = \frac{\Delta_0(\sinh2\theta-2\theta)^3}{2(\cosh2\theta-1)^4}.
\end{align*}
To write $w_2^B$ as a function of $\theta$, substitute (\ref{tagainkneg}), (\ref{Ragainkneg}), (\ref{Hdot}) and (\ref{btwice2}) into (\ref{wB}) and simplify to obtain
\begin{align}
    w_2^B = \frac{(\sinh2\theta-2\theta)^3\cosh\theta}{2(\cosh2\theta-1)^5\sinh\theta}(\sinh^22\theta+\cosh2\theta-1).\label{w2Bfinal}
\end{align}
Putting formulas (\ref{w2Afinal}) and (\ref{w2Bfinal}) for $w_2^A$ and $w_2^B$ into (\ref{wAB}) establishes (\ref{w2FinalAgain}). This competes the proof of Theorem \ref{Friedmannknotzeroexpansion} in the case $k=-1$.\qed

\subsection{Proof of Theorem \ref{Friedmannknotzeroexpansion}: The Expansion of Friedmann in SSCNG Case $k=+1$}\label{Appendix1G}

We continue in this subsection with our original notation that unbarred coordinates $(t,r)$ denote Friedmann comoving coordinates and barred coordinates $(\bar{t},\bar{r})$ denote SSC. To establish (\ref{z2FinalAgaink1})--(\ref{w2FinalAgaink1}), we begin with formulas (\ref{kplusone})--(\ref{kplusoneR}) for the solutions of the Friedmann equations (\ref{kFriedmannequation1})--(\ref{kFriedmannequation2}), given in Friedmann coordinates $(t,r)$, for $p=0$, $k=+1$, written in the form:
\begin{align}
    \frac{t}{\Delta_0} &= \frac{1}{2}(2\theta-\sin2\theta),\label{tagaink1}\\
    \frac{R}{\Delta_0} &= \frac{1}{2}(1-\cos2\theta),\label{Ragaink1}
\end{align}
where again $\Delta_0$ is the Friedmann free parameter given by
\begin{align*}
    \Delta_0 = \frac{\kappa}{3}\rho R^3.
\end{align*}
As in the case $k=-1$, to transform these coordinates over to SSCNG, we begin by finding a simple expression for $\bar{t}=h^{-1}(h(t)g(r))$, where $f$ and $g$ are given in (\ref{DefghThm}) for case $k=+1$ by:
\begin{align}
    h(t) &= e^{\lambda\int_0^t\frac{d\tau}{\dot{R}(\tau)R(\tau)}},\label{hoftk1}\\
    g(r) &= (1-r^2)^{-\frac{\lambda}{2}}.
\end{align}
To obtain expressions in terms of $\theta$, we calculate the following:
\begin{align}
    \frac{dt}{d\theta} &= \Delta_0(1-\cos2\theta),\label{dtdthetak1}\\
    \dot{R} &= \frac{d}{dt}R(\theta(t)) = \frac{dR}{d\theta}\frac{d\theta}{dt} = \frac{\sin2\theta}{1-\cos2\theta} = 2\cot\theta,\label{tstepk1}\\
    \frac{d\dot{R}}{d\theta} &= -\frac{4\sin^2\theta}{(1-\cos2\theta)^2} = -\csc^2\theta,\\
    \ddot{R} &= \frac{d\dot{R}}{d\theta}\frac{d\theta}{dt} = -\frac{2}{\Delta_0(1-\cos2\theta)^2} = -\frac{1}{\Delta_0}\csc^4\theta,\label{Rddotk1}
\end{align}
and
\begin{align}
    \dot{H} &= \frac{\ddot{R}R-\dot{R}^2}{R^2} = -\frac{4(1-\cos2\theta+\sin^22\theta)}{\Delta_0(1-\cos2\theta)^4}.\label{Hdotk1}
\end{align}

Note first that by (\ref{Ragaink1}) and (\ref{tagaink1}) we have
\begin{align*}
    \dot{R}R = \frac{\Delta_0}{2}\sin2\theta,
\end{align*}
and using this in (\ref{tagaink1}) gives a formula for $\theta$ in terms of $t$, namely
\begin{align*}
    \theta = \frac{1}{\Delta_0}(R\dot{R}+t).
\end{align*}
Using this in (\ref{hoftk1}), we obtain
\begin{align*}
    \int_0^t\frac{ds}{\dot{R}(s)R(s)} = \int_0^t\frac{2ds}{\Delta_0\sin2\theta(s)}.
\end{align*}
Making the substitution:
\begin{align*}
    s &= \frac{\Delta_0}{2}(2\theta-\sin2\theta), & ds &= \Delta_0(1-\cos2\theta)d\theta,
\end{align*}
gives, noting (\ref{dtdthetak1}),
\begin{align*}
    \int_0^t\frac{ds}{\dot{R}(s)R(s)} = 2\int_{t=0}^t\frac{2(1-\cos2\theta)}{\sin2\theta}d\theta = 2\int_{t=0}^t\frac{\sin\theta}{\cos\theta}d\theta = -2\ln|\cos\theta(t)|.
\end{align*}
Using this in (\ref{hoftk1}) then gives
\begin{align*}
    h(t) = e^{-2\lambda\ln|\cos2\theta(t)|} = \sec^{2\lambda}\theta(t).
\end{align*}

Now taking $\lambda=\frac{1}{2}$ we obtain:
\begin{align}
    h(t) &= \sec\theta(t),\label{SepTrans1}\\
    g(r) &= \sqrt[4]{1+r^2},\label{SepTrans2}\\
    h^{-1}(y) &= \theta^{-1}\circ\sec^{-1}(y).\label{SepTrans3}
\end{align}
Thus by (\ref{SepTrans1})--(\ref{SepTrans3}) and $\Phi(t,r)=h(t)g(r)$, we have
\begin{align*}
    \bar{t} = h^{-1}(\Phi) = \theta^{-1}\Big(\sec^{-1}\big(\sqrt[4]{1+r^2}\sec\theta(t)\big)\Big)
\end{align*}
or
\begin{align}
    \sec\theta(\bar{t}) = \sqrt[4]{1+r^2}\sec\theta(t).\label{tbartotk1}
\end{align}
Thus in summary, $\bar{t}=F(h(t)g(r))$ with $F(y)=h^{-1}(y)$, so the coordinate transformation from $(t,r)\to(\bar{t},\bar{r})$ with normalized gauge is:
\begin{align}
    \bar{t} &= \theta^{-1}\circ\sec^{-1}\big(\sqrt[4]{1+r^2}\sec\theta(t)\big),\label{transkequalminusone1k1}\\
    \bar{r} &= R(t)r.\label{transkequalminusone2k1}
\end{align}
Note that the coordinate transformation (\ref{transkequalminusone1k1}) becomes singular at $\theta=\frac{\pi}{2}$, the point where the $k=+1$ Friedmann spacetime expands to its maximum. Thus for $0\leq\theta\leq\frac{\pi}{2}$, the transformation (\ref{transkequalminusone1k1})--(\ref{transkequalminusone2k1}) describes the $k=+1$ Friedmann spacetime in SSC from the Big Bang $\theta=t=0$, out to the maximum $\theta=\frac{\pi}{2}$, $\bar{t}=\Delta_0(\frac{\pi}{2}-1)$. Our goal is to write $A$, $z$ and $w$ as functions of $(\bar{t},\xi)$ for $0\leq\theta\leq\frac{\pi}{2}$ for the $k=+1$ Friedmann solution and confirm that the Taylor coefficients of the expansion in $\xi$ are given by (\ref{z2FinalAgaink1})--(\ref{w2FinalAgaink1}).

To derive $A_2(\bar{t})$ and $A_4(\bar{t})$ such that
\begin{align*}
    A(\bar{t},\xi) = A_2(\bar{t})\xi^2 + A_4(\bar{t})\xi^4 + O(\xi^6),
\end{align*}
we start with equation (\ref{Afirst}). Note that from (\ref{Afirst}) we have
\begin{align}
    A(\bar{t},\xi) = 1 - \frac{\Delta_0\bar{t}^2}{R(t)^3}\xi^2,\label{Adef2}
\end{align}
where again, $t=t(\bar{t},\bar{r})$ needs to be expressed as a function of $(\bar{t},\bar{r})$. Since $t=\bar{t}+O(\xi^2)$, it follows immediately from (\ref{Adef2}) that
\begin{align*}
    A_2(\bar{t}) = -\frac{\Delta_0\bar{t}^2}{R^3(\bar{t})}.
\end{align*}
Now since $A_2$ is a function of $\bar{t}$ and $\bar{t}=t+O(\xi^2)$ to within the order we seek, we can write $A_2(\bar{t})$ as a function of $\theta$ at $\xi=0$ by identifying
\begin{align*}
    \bar{t} = t = \frac{\Delta_0}{2}(2\theta-\sin2\theta)
\end{align*}
from (\ref{tagaink1}). The result is
\begin{align}
    A_2(\bar{t}) = -\frac{2(2\theta-\sin2\theta)^2}{(1-\cos2\theta)^3},\label{A2finalk1}
\end{align}
which follows directly from (\ref{tagaink1}) and (\ref{Ragaink1}). From (\ref{A2finalk1}) we obtain the limits:
\begin{align*}
    \lim_{\theta\to0}A_2 &= -\frac{4}{9}, & \lim_{\theta\to\frac{\pi}{2}}A_2 &= -\frac{\pi^2}{4}, & \lim_{\theta\to\pi^-}A_2 &= \infty.
\end{align*}
Equations (\ref{z2FinalAgaink1}) and the second limit in (\ref{limitsUhatplus}) follow from the identity $z_2=-3A_2$, see (\ref{A2A4}). We next find $w_0$. For this, we start with (\ref{w0first}), given by
\begin{align*}
    w_0 = H(\bar{t})\bar{t} = \frac{\dot{R}(\bar{t})}{R(\bar{t})}\bar{t}.
\end{align*}
Substituting the $k=+1$ formulas (\ref{tagaink1}), (\ref{Ragaink1}) and (\ref{tstepk1}) into the right hand side gives
\begin{align}
    w_0 = \frac{(2\theta-\sin2\theta)\sin2\theta}{(1-\cos2\theta)^2}.\label{w0fork1}
\end{align}
From (\ref{w0fork1}) we obtain the limits:
\begin{align*}
    \lim_{\theta\to0}w_0 &= \frac{2}{3}, & \lim_{\theta\to\frac{\pi}{2}}w_0 &= 0, & \lim_{\theta\to\pi^-}w_0 &= -\infty.
\end{align*}
This establishes the second limits in (\ref{limitsUhat}) and (\ref{limitsUhatplus}). To express $z_4$ and $w_2$ as functions of $\theta$ for the case $k=+1$, we need a formula for $t_{\xi\xi}:=t_{\xi\xi}(\bar{t},0)$. For this, we start with (\ref{transkequalminusone1k1}) in the form
\begin{align*}
    \sqrt[4]{b(\bar{t},\xi)}\sec\theta(\bar{t}) = \sec\theta(t),
\end{align*}
with
\begin{align}
    b(\bar{t},\xi) = 1 - \frac{\bar{t}^2}{R^2}\xi^2.\label{txixi2k1}
\end{align}
By (\ref{txixi2k1}),
\begin{align*}
    b_{\xi\xi} := b_{\xi\xi}(\bar{t},0) = -\frac{2\bar{t}^2}{R^2}.
\end{align*}
Differentiating (\ref{tbartotk1}) twice with respect to $\xi$, setting $\xi=0$ and using $b_\xi=t_\xi=0$ at $\xi=0$ gives
\begin{align*}
    \frac{1}{4}b_{\xi\xi} = t_{\xi\xi}\dot{\theta}\tan\theta,
\end{align*}
so by (\ref{tagaink1}), (\ref{Ragaink1}), (\ref{txixi2k1}) and (\ref{dtdthetak1}) we have
\begin{align}
    t_{\xi\xi} &= \frac{1}{4}b_{\xi\xi}\frac{dt}{d\theta}\cot\theta\notag\\
    &= \frac{\Delta_0}{4}(1-\cos2\theta)\left(-2\frac{\bar{t}^2}{R^2}\right)\frac{\cos\theta}{\sin\theta}\notag\\
    &= -\frac{\Delta_0(2\theta-\sin2\theta)^2\cos\theta}{2(1-\cos2\theta)\sin\theta}.\label{txixi3k1}
\end{align}
Consider now $A_4$ in the case $k=+1$. We start with (\ref{A4start}). Using (\ref{tagaink1}), (\ref{Ragaink1}), (\ref{txixi2k1}) and (\ref{tstepk1}) in (\ref{A4start}) gives
\begin{align}
    A_4 = \frac{3\Delta_0\bar{t}^2\dot{R}(\bar{t})t_{\xi\xi}}{2R^4(\bar{t})} = -\frac{6(2\theta-\sin2\theta)^4\cos^2\theta}{(1-\cos2\theta)^6}.\label{A4finalk1}
\end{align}
From (\ref{A4finalk1}) we obtain the limits:
\begin{align*}
    \lim_{\theta\to0}A_4 &= -\frac{8}{27}, & \lim_{\theta\to\frac{\pi}{2}}A_4 &= 0, & \lim_{\theta\to\pi^-}A_4 &= -\infty.
\end{align*}
Since $z_4=-5A_4$, this establishes the third limits in (\ref{limitsUhat}) and (\ref{limitsUhatplus}).

Finally, consider $w_2$ in the case $k=+1$. We start with (\ref{A4start}). Using (\ref{tagaink1}), (\ref{Ragaink1}), (\ref{txixi2k1}) and (\ref{tstepk1}) in (\ref{A4start}) gives
\begin{align}
    w_2 = w_2^B - w_2^A,\label{w2Aminusw2B}
\end{align}
where:
\begin{align}
    w_2^A &= \frac{H\bar{t}^3}{2R^2},\label{w2Afirstk1}\\
    w_2^B &= \frac{1}{2}\dot{H}t_{\xi\xi}\bar{t}.\label{w2Bfirstk1}
\end{align}
Substituting (\ref{tagaink1}), (\ref{Ragaink1}) and (\ref{tstepk1}) into (\ref{w2Afirstk1}) gives
\begin{align}
    w_2^A = \frac{\dot{R}\bar{t}^3}{2R^3} = \frac{(2\theta-\sin2\theta)^3\sin2\theta}{2(1-\cos2\theta)^4}\label{w2Ak1}
\end{align}
and substituting (\ref{tagaink1}), (\ref{Hdotk1}) and (\ref{txixi3k1}) into (\ref{w2Bfirstk1}) gives
\begin{align}
    w_2^B = \frac{\cos\theta(1-\cos2\theta+\sin^22\theta)(2\theta-\sin2\theta)^3}{2(1-\cos2\theta)^5\sin\theta}.\label{w2Bk1}
\end{align}
From (\ref{w2Ak1}) we obtain the limits:
\begin{align*}
    \lim_{\theta\to0}w_2^A &= 0, & \lim_{\theta\to\frac{\pi}{2}}w_2^A &= 0, & \lim_{\theta\to\pi^-}w_2^A &= -\infty,
\end{align*}
and from (\ref{A4finalk1}) we obtain the limits:
\begin{align*}
    \lim_{\theta\to0}w_2^B &= 0, & \lim_{\theta\to\frac{\pi}{2}}w_2^B &= 0, & \lim_{\theta\to\pi^-}w_2^B &= -\infty.
\end{align*}
Finally, putting (\ref{w2Ak1}) and (\ref{w2Bk1}) into (\ref{w2Aminusw2B}) gives the formula
\begin{align}
    w_2 = \frac{(2\theta-\sin2\theta)^3\cos\theta}{2(1-\cos2\theta)^5\sin\theta}\big(\sin^22\theta+(1-2\sin^2\theta)(1-\cos2\theta)\big).\label{w2finalk1}
\end{align}
From (\ref{w2finalk1}) we obtain the limits
\begin{align*}
    \lim_{\theta\to0}w_2 &= \frac{2}{9}, & \lim_{\theta\to\frac{\pi}{2}}w_2 &= 0, & \lim_{\theta\to\pi^-}w_2 &= -\infty.
\end{align*}
This establishes (\ref{w2FinalAgaink1}), as well as (\ref{limitsUhat}) and (\ref{limitsUhatplus}), and thus completes the proof of Theorem \ref{Friedmannknotzeroexpansion} and Corollary \ref{CorLimits} in the case $k>0$, thereby completing the proof of Theorem \ref{UnstableManifoldOfSM}.\qed

\subsection{Proof of Theorem \ref{Thmgeneralexpansion}: Derivation of the STV-ODE of Order $n$}\label{Appendix1H}

To show (\ref{zneqn}), we start with equation 
(\ref{zeqnxi-final0}) in the form
\begin{align*}
    tz_t = \xi z_\xi - \xi(zwD)_\xi - zwD
\end{align*}
and expand the terms separately. We first note that
\begin{align*}
    tz_t &= \sum_{n=1}^\infty\dot{z}_{2n}\xi^{2n}, & \xi z_\xi &= \sum_{n=0}^\infty2nz_{2n}\xi^{2n},
\end{align*}
and
\begin{align*}
    zwD = \sum_{n=1}^\infty\Bigg(\sum_{i+j+k=n}z_{2i}w_{2j}D_{2k}\Bigg)\xi^{2n}.
\end{align*}
Thus
\begin{align*}
    \xi(zwD)_\xi = \sum_{n=1}^\infty2n\Bigg(\sum_{i+j+k=n}z_{2i}w_{2j}D_{2k}\Bigg)\xi^{2n}
\end{align*}
and
\begin{align}
    \xi z_\xi - \xi(zwD)_\xi - zwD = \sum_{n=1}^\infty\Bigg(2nz_{2n}-(2n+1)\sum_{\substack{i+j+k=n\\i\neq n,\,j\neq n-1}}z_{2i}w_{2j}D_{2k}\Bigg)\xi^{2n}.\label{z2neqn}
\end{align}
Using this in (\ref{zeqnxi-final0}) gives (\ref{zneqn}). Extracting the leading order term first in equation (\ref{z2neqn}) we obtain
\begin{align}
    \sum_{n=1}^\infty t\dot{z}_{2n}\xi^{2n} &= \sum_{n=1}^\infty\Big(\big((2n+1)(1-w_0)-1\big)z_{2n}-(2n+1)z_2w_{2n-2}\Big)\xi^{2n}\notag\\
    &- \sum_{n=1}^\infty\Bigg((2n+1)\sum_{\substack{i+j+k=n\\i\neq n,\,j\neq n-1}}z_{2i}w_{2j}D_{2k}\Bigg)\xi^{2n},\label{leadinginz}
\end{align}
and so this confirms the first row of the matrix (\ref{Pk}).

To show (\ref{wneqn}), we write equation (\ref{weqnxi-final0}) in the form
\begin{align}
    tw_t = \frac{1}{2}(1-w^2\xi^2)\frac{A-1}{\xi^2A}D + (\xi w_\xi+w)(1-wD)\label{startw}
\end{align}
and again expand the terms separately. To expand the first term on the right hand side of (\ref{startw}), we write
\begin{align*}
    \frac{1}{2}(1-w^2\xi^2)\frac{A-1}{\xi^2A}D = \sum_{n=0}^\infty\bigg(\frac{1}{2}\hat{w}_{2i}a_{2j}D_{2k}\bigg)\xi^{2n},
\end{align*}
where, noting (\ref{defwhat}), we use
\begin{align*}
    1 - w^2\xi^2 &= 1 - w_0^2\xi^2 - 2w_0w_2\xi^2 -(2w_0w_4+w_2^2)\xi^6 + \dots\\
    &= 1 - \sum_{m=0}^\infty\sum_{i+j=m}w_{2i}w_{2j}\xi^{2m+2}\\
    &= \sum_{n=0}^\infty\hat{w}_n\xi^{2n},
\end{align*}
and, noting (\ref{defan}),
\begin{align*}
    \frac{A-1}{\xi^2A} &= \frac{1}{\xi^2}\sum_{n=1}^\infty(-1)^{n+1}(A-1)^n\\
    &= \sum_{n=1}^\infty(-1)^{n+1}\Bigg(\sum_{m=n}\Bigg(\sum_{i_1+i_2+\dotsc+i_n=m}A_{2i_1}A_{2i_2}\cdots A_{2i_n}\Bigg)\xi^{2m-2}\Bigg)\\
    &= A_2 + (A_4-A_2^2)\xi^2 + (A_6-2A_2A_4+A_2^3)\xi^4 + \dots\\
    &= \sum_{n=0}^\infty a_n\xi^{2n}.
\end{align*}
Note that the highest order term $A_{2k}$, which appears in $a_{2n}$, is $A_{2n+2}$. 

To expand the second term on the right hand side of (\ref{startw}), we write:
\begin{align*}
    \xi w_\xi + w &= \sum_{n=0}^\infty(2n+1)w_{2n}\xi^{2n},\\
    1 - wD &= 1 - \sum_{n=0}^\infty\Bigg(\sum_{j+k=n}w_{2j}D_{2k}\Bigg)\xi^{2n},
\end{align*}
so that
\begin{align*}
    (\xi w_\xi+w)(1-wD) = \sum_{n=0}^\infty\Bigg((2n+1)w_{2n}-\sum_{i+j+k=n}(2i+1)w_{2i}w_{2j}D_{2k}\Bigg)\xi^{2n},
\end{align*}
and in the case $n\neq0$, we obtain
\begin{align*}
    (\xi w_\xi+w)(1-wD) &= \sum_{n=0}^\infty\big((2n+1)w_{2n}-(2n+1)w_0w_{2n}-w_0w_{2n}\big)\xi^{2n}\\
    &- \sum_{n=0}^\infty\Bigg(\sum_{\substack{i+j+k=n\\i\neq n,\,j\neq n}}(2i+1)w_{2i}w_{2j}D_{2k}\Bigg)\xi^{2n}\\
    &= \sum_{n=0}^\infty\big((2n+2)(1-w_0)-1\big)w_{2n}\xi^{2n}\\
    &- \sum_{n=0}^\infty\Bigg(\sum_{\substack{i+j+k=n\\i\neq n,\,j\neq n}}(2i+1)w_{2i}w_{2j}D_{2k}\Bigg)\xi^{2n}.
\end{align*}
Thus
\begin{align*}
    \sum_{n=0}^\infty t\dot{w}_{2n}\xi^{2n} &= \sum_{n=0}^\infty\Bigg(\big((2n+2)(1-w_0)-1\big)w_{2n}+\sum_{i+j+k=n}\frac{1}{2}\hat{w}_{2i}a_{2j}D_{2k}\Bigg)\xi^{2n}\\
    &- \sum_{n=0}^\infty\Bigg(\sum_{\substack{i+j+k=n\\i\neq n,\,j\neq n}}(2i+1)w_{2i}w_{2j}D_{2k}\Bigg)\xi^{2n}
\end{align*}
or
\begin{align}
    \sum_{n=0}^\infty t\dot{w}_{2n}\xi^{2n} &= \sum_{n=0}^\infty\bigg(\big((2n+2)(1-w_0)-1\big)w_{2n}-\frac{1}{2(2n+3)}z_{2n+2}\bigg)\xi^{2n}\notag\\
    &+ \frac{1}{2}\sum_{n=0}^\infty\Bigg(a_{2n}-A_{2n+2}+\sum_{\substack{i+j+k=n\\j\neq n}}\hat{w}_{2i}a_{2j}D_{2k}\Bigg)\xi^{2n}\notag\\
    &- \sum_{n=0}^\infty\Bigg(\sum_{\substack{i+j+k=n\\i\neq n,\,j\neq n}}w_{2i}w_{2j}D_{2k}\Bigg)\xi^{2n},\label{finalwgood}
\end{align}
where we have used the fact that the leading order term in
\begin{align*}
    \sum_{i+j+k=n}\frac{1}{2}\hat{w}_{2i}a_{2j}D_{2k}
\end{align*}
is
\begin{align*}
    \frac{1}{2}\hat{w}_0a_{2n}D_0 = \frac{1}{2}a_{2n} - \frac{1}{2}A_{2n+2} + \frac{1}{2}A_{2n+2}.
\end{align*}
The first line in (\ref{finalwgood}) confirms (\ref{wneqn}) and the penultimate line displays the leading order terms, which establish the second row of the matrix $P_n$ in (\ref{Pk}). The next to leading order terms in (\ref{leadinginz}) and (\ref{finalwgood}) confirm that $P_n\boldsymbol{v}_n$ really does give the leading terms in (\ref{zneqn}) and (\ref{wneqn}), thereby confirming that $\boldsymbol{q}_n$ involves only lower order terms. This completes the proof of Theorem \ref{Thmgeneralexpansion}. 

To verify (\ref{Dntoznwn}), we start with the STV self-similar equation (\ref{D}) for $D$, which we write as
\begin{align}
    2\xi AD_\xi = D\big(2(1-A)-z+zw^2\xi^2\big).\label{Dproof1}
 \end{align}
We now substitute the ansatz (\ref{ansatzinfinityz})--(\ref{ansatzinfinityD}) into (\ref{Dproof1}) and collect like powers of $\xi$. First, we use:
\begin{align*}
    \xi D_\xi &= \sum_{j=0}^\infty 2jD_{2j}\xi^{2j}, & A &= \sum_{i=0}^\infty A_{2i}\xi^{2i}, & A_0 &= 1,
\end{align*}
to obtain
\begin{align}
    2\xi AD_\xi &= \sum_{n=0}^\infty\Bigg(\sum_{i+j=n}4jA_{2i}D_{2j}\Bigg)\xi^{2n}\notag\\
    &= 4D_2\xi^2 + \sum_{n=2}^\infty\Bigg(4nD_{2n}+\sum_{\substack{i+j=n\\j\neq n}}4jA_{2i}D_{2j})\Bigg)\xi^{2n}\label{Dproof2}\\
    &= 4D_2\xi^2 + (4A_2D_2+8D_4)\xi^4 + (4A_4D_2+8A_2D_4+12D_6)\xi^6 + \dots\notag
\end{align}
By (\ref{ansatzinfinityz}) and (\ref{ansatzinfinityw}) we have
\begin{align*}
    zw^2\xi^2 = \sum_{n=2}^\infty\Bigg(\sum_{i+j+k=n-1}z_{2i}w_{2j}w_{2k}\Bigg)\xi^{2n},
\end{align*}
and by (\ref{ansatzinfinityA}),
\begin{align*}
    2(1-A) = -\sum_{n=1}^\infty2A_{2n}\xi^{2n},
\end{align*}
so
\begin{align*}
    -z + 2(1-A) = -(z_2+2A_2)\xi^2 - \sum_{n=2}^\infty(z_{2n}+2A_{2n})\xi^{2n},
\end{align*}
and
\begin{align*}
    zw^2\xi^2 - z + 2(1-A) &= -(z_2+2A_2)\xi^2 + \sum_{n=2}^\infty\Bigg(\sum_{i+j+k=n-1}z_{2i}w_{2j}w_{2k}-(z_{2n}+2A_{2n})\Bigg)\xi^{2n}\\
    &= -(z_2+2A_2)\xi^2 + \big(z_2w_0^2-(z_4+2A_4)\big)\xi^4\\
    &+ \big(2z_2w_0w_2+z_4w_0^2-(z_6+2A_6)\big)\xi^6 + \dots
\end{align*}
Putting these together we have
\begin{align}
    D\big(zw^2\xi^2-z+2(1-A)\big) &= -(z_2+2A_2)\xi^2 + \sum_{n=2}^\infty\Bigg(\sum_{i+j+k+l=n-1}z_{2i}w_{2j}w_{2k}D_{2l}\Bigg)\xi^{2n}\notag\\
    &- \sum_{n=2}^\infty\Bigg(\sum_{\substack{i+j=n\\i\neq0}}(z_{2i}+2A_{2i})D_{2j}\Bigg)\xi^{2n}.\label{Dproof7}
\end{align}
Equating (\ref{Dproof2}) to (\ref{Dproof7}) and setting $n=1$ gives
\begin{align*}
    D_2 = -\frac{1}{4}(z_2+2A_2).
\end{align*}
Equating (\ref{Dproof2}) to (\ref{Dproof7}) for $n\geq2$ then gives
\begin{align}
    4nD_{2n} + \sum_{\substack{i+j=n\\j\neq n}}A_{2i}D_{2j} &= \sum_{i+j+k+l=n-1}z_{2i}w_{2j}w_{2k}D_{2l} - \sum_{\substack{i+j=n\\i\neq0}}(z_{2i}+2A_{2i})D_{2j}.\label{Dproof9}
\end{align}
Finally, solving (\ref{Dproof9}) for $D_{2n}$, using the fact that the leading order term $z_{2n}+2A_{2n}$ comes from the last sum in the case $i=n$ and $j=0$, we obtain, for $n\geq2$,
\begin{align*}
    4nD_{2n} &= -(z_{2n}+2A_{2n}) + \sum_{i+j+k+l=n-1}z_{2i}w_{2j}w_{2k}D_{2l}\\
    &- \sum_{\substack{i+j=n\\i\neq0,n}}(z_{2i}+2A_{2i})D_{2j} - \sum_{\substack{i+j=n\\j\neq n}}4jA_{2i}D_{2j},
\end{align*}
from which (\ref{Dntoznwn}) follows immediately.\qed

\section{Appendix: Comparison with Results in \cite{SmolTeVo}}\label{S12}

We have proven that solutions of the $n\times n$ STV-ODE which lie on the underdense side of the unstable manifold of the rest point $SM$ at order $n=1$ all tend to the rest point $M=(0,1,0,0,\dots)$ as $t\to\infty$ in the $n\times n$ system as well. This means the higher order corrections $\boldsymbol{v}_k=(z_{2k},w_{2k-2})\to0$ for all $k\geq2$, and hence the leading order approximation $\boldsymbol{v}_1=(z_2,w_0)$ becomes the dominant part of the solution for fixed $r$ as $t\to\infty$. It follows that such perturbations of $SM$ produce approximately uniform expanding spacetimes which appear more and more like Friedmann spacetimes at late times at each fixed $r$ but expand at an apparently accelerated rate during intermediate times. This case was first made in \cite{SmolTeVo} using the theory up to order $n=2$, but the expansions were centered on $SM$, while here we centered the expansions on $\boldsymbol{v}_k=0$. We conclude this paper by making the connection between the formulas in \cite{SmolTeVo} and the formulas established in this paper.

To make the connections with \cite{SmolTeVo} at order $n=2$ as clear as possible, in this section we adopt the notation of \cite{SmolTeVo}, by which $(t,r)$ denote SSCNG coordinates, and calculations are based on corrections to $SM$, instead of corrections to zero. That is, as in \cite{SmolTeVo}, we now let $\boldsymbol{U}=(z_2,w_0,z_4,w_0)$ denote corrections to the $k=0$ Friedmann solution based on solutions $(z,w)$ expanded about $\boldsymbol{U}_F=(\frac{4}{3},\frac{2}{3},\frac{40}{27},\frac{2}{9})$ according to:
\begin{align}
    z(t,\xi) &= \left(\frac{4}{3}+z_2(t)\right)\xi^2 + \left(\frac{40}{27}+z_4(t)\right)\xi^4 + O(\xi^6),\label{ansatzz}\\
    w(t,\xi) &= \left(\frac{2}{3}+w_0(t)\right) + \left(\frac{2}{9}+w_2(t)\right)\xi^2 + O(\xi^4),\label{ansatzw}\\
    A(t,\xi) &= \left(-\frac{4}{9}+A_2(t)\right)\xi^2 + \left(-\frac{8}{27}+A_4(t)\right)\xi^4 + O(\xi^6),\label{ansatzA}\\
    D(t,\xi) &= \left(-\frac{1}{9}+D_2(t)\right)\xi^2 + O(\xi^4).\label{ansatzD}
\end{align}
So, to be clear, this is accomplished by defining $\bar{\boldsymbol{U}}=\boldsymbol{U}-\boldsymbol{U}_F$ in terms of $\boldsymbol{U}$ as defined before in (\ref{zF}) and (\ref{wFk}) and then dropping the bars. In this notation, $\boldsymbol{U}_0=(0,0,0,0)$ now corresponds the rest point $SM$ of the $k=0$ Friedmann solution, but to keep notation to a minimum, we continue to use $\boldsymbol{U}_F$ to record the values $\boldsymbol{U}_F=(\frac{4}{3},\frac{2}{3},\frac{40}{27},\frac{2}{9})$. Since the change $\bar{\boldsymbol{U}}\to\boldsymbol{U}$ is a simple translation of the unknowns, relations (\ref{metricfluidrelations}), that is:
\begin{align*}
    A_2 &= -\frac{1}{3}z_2, & A_4 &= -\frac{1}{5}z_4, & D_2 &= -\frac{1}{12}z_2,
\end{align*}
continue to hold with $z_i$, $w_i$, $A_i$ and $D_i$ defined in (\ref{ansatzz})--(\ref{ansatzD}), and the new system in $\bar{\boldsymbol{U}}\to\boldsymbol{U}$ has the same rest points $SM$, $M$ and $U$ as (\ref{4by4system}) up to translation by $\boldsymbol{U}_F$, and rest points have the same Jacobians and eigenpairs as system (\ref{4by4system}). The only change from (\ref{4by4system}) is the constants which appear in the equations, which we now record to achieve correspondence with the equations and results stated in \cite{SmolTeVo}.
 
\begin{Thm}
    Using (\ref{ansatzz})--(\ref{ansatzD}) to define the corrections $(z_2,w_0,z_4,w_2)$ to $SM$, and letting $t$ denote SSC time, the $4\times4$ system of equations for the corrections is given by:
    \begin{align}
        t\frac{dz_2}{dt} &= -4w_0 - 3z_2w_0,\label{z2finalsumeqn1}\\
        t\frac{dw_0}{dt} &= -\frac{1}{6}z_2 - \frac{1}{3}w_0 - w_0^2,\label{w0finalsumeqn1}\\
        t\frac{dz_4}{dt} &= -\frac{10}{27}z_2 - \frac{20}{3}w_0 + \frac{5}{18}z_2^2 + \frac{10}{9}z_2w_0 + \frac{5}{12}w_0z_2^2\notag\\
        &+ \frac{2}{3}z_4 - \frac{20}{3}w_2 - 5w_0z_4 - 5z_2w_2,\label{z4finalsumeqn1}\\
        t\frac{dw_2}{dt} &= -\frac{4}{9}w_0 - \frac{1}{24}z_2^2 + \frac{1}{3}z_2w_0 + \frac{1}{3}w_0^2 + \frac{1}{4}w_0^2z_2\notag\\
        &- \frac{1}{10}z_4 + \frac{1}{3}w_2 - 4w_0w_2.\label{w2finalsumeqn1}
    \end{align}
    Moreover, the rest points of system (\ref{z2finalsumeqn1})--(\ref{w2finalsumeqn1}) are:
    \begin{align}
        SM &= (0,0,0,0), & M &= \left(-\frac{4}{3},-\frac{2}{3},-\frac{40}{27},-\frac{2}{9}\right), & U &= -\boldsymbol{U}_F,
    \end{align}
    with unchanged Jacobians and eigenpairs given by (\ref{JacSM})--(\ref{epr34U}).
\end{Thm}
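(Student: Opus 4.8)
The plan is to obtain the shifted system by a direct translation-of-coordinates argument, leveraging the fact that the $4\times4$ system \eqref{4by4system} has already been derived and analyzed. First I would recall that the ansatz \eqref{ansatzz}--\eqref{ansatzD} is obtained from the original ansatz \eqref{Aansatz}--\eqref{wansatz} by the substitution $\boldsymbol{U} \mapsto \boldsymbol{U} + \boldsymbol{U}_F$, where $\boldsymbol{U}_F = (\tfrac{4}{3},\tfrac{2}{3},\tfrac{40}{27},\tfrac{2}{9})$; equivalently, if $\boldsymbol{U}^{\mathrm{old}}$ solves \eqref{4by4system}, then $\boldsymbol{U}^{\mathrm{new}} := \boldsymbol{U}^{\mathrm{old}} - \boldsymbol{U}_F$ solves the translated system. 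Since \eqref{4by4system} reads $\tfrac{d}{d\tau}\boldsymbol{U}^{\mathrm{old}} = \boldsymbol{F}(\boldsymbol{U}^{\mathrm{old}})$, the translated system is simply $\tfrac{d}{d\tau}\boldsymbol{U}^{\mathrm{new}} = \boldsymbol{F}(\boldsymbol{U}^{\mathrm{new}} + \boldsymbol{U}_F) =: \tilde{\boldsymbol{F}}(\boldsymbol{U}^{\mathrm{new}})$. So the entire content of the theorem is the evaluation of $\boldsymbol{F}(\cdot + \boldsymbol{U}_F)$ componentwise, which is a routine polynomial expansion in the four variables $(z_2,w_0,z_4,w_2)$, keeping in mind $\tfrac{d}{d\tau} = t\tfrac{d}{dt}$.

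Concretely, the steps are: (i) Write down $\boldsymbol{F}$ from \eqref{4by4system}. (ii) Substitute $z_2 \to \tfrac{4}{3}+z_2$, $w_0 \to \tfrac{2}{3}+w_0$, $z_4 \to \tfrac{40}{27}+z_4$, $w_2 \to \tfrac{2}{9}+w_2$ into each of the four right-hand sides. (iii) Expand and collect terms; the constant terms must cancel because $\boldsymbol{U}_F$ is the rest point $SM$ of \eqref{4by4system}, providing a built-in consistency check. (iv) Verify that the algebraic relations \eqref{metricfluidrelations} survive the translation: since $A_i^{\mathrm{new}} = A_i^{\mathrm{old}} - A_i^F$ and $z_i^{\mathrm{new}} = z_i^{\mathrm{old}} - z_i^F$, and $A_i^F = -\tfrac{1}{3}z_i^F$ etc. by \eqref{AFk}--\eqref{DFk}, linearity of these relations gives $A_2^{\mathrm{new}} = -\tfrac{1}{3}z_2^{\mathrm{new}}$, $A_4^{\mathrm{new}} = -\tfrac{1}{5}z_4^{\mathrm{new}}$, $D_2^{\mathrm{new}} = -\tfrac{1}{12}z_2^{\mathrm{new}}$ immediately. (v) Identify the new rest points: $SM$ moves to $\boldsymbol{0}$, while $M$ and $U$ move to $M - \boldsymbol{U}_F = (-\tfrac{4}{3},-\tfrac{2}{3},-\tfrac{40}{27},-\tfrac{2}{9})$ and $U - \boldsymbol{U}_F = -\boldsymbol{U}_F$ respectively. (vi) Observe that the Jacobian $d\tilde{\boldsymbol{F}}(\boldsymbol{U}^{\mathrm{new}}) = d\boldsymbol{F}(\boldsymbol{U}^{\mathrm{new}} + \boldsymbol{U}_F)$ by the chain rule, so the Jacobian at each translated rest point equals the Jacobian of $\boldsymbol{F}$ at the corresponding original rest point; hence the eigenpairs are exactly those recorded in \eqref{JacSM}--\eqref{epr34U}. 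This is the cleanest route and makes the eigenvalue claim essentially a one-line consequence rather than a recomputation.

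There is essentially no serious obstacle here — the theorem is a bookkeeping restatement designed to reconcile notation with \cite{SmolTeVo}. The only place requiring care is the polynomial expansion in step (ii)--(iii): the quadratic and cubic cross-terms (e.g. $\tfrac{5}{12}z_2^2 w_0$ expanded around $SM$) generate a moderate number of new linear and quadratic constants, and it is easy to drop or mis-sum a term. The recommended safeguard is to use the cancellation of constant terms (forced by $\boldsymbol{F}(\boldsymbol{U}_F)=0$) and, as a second check, to verify that the linear part of $\tilde{\boldsymbol{F}}$ at $\boldsymbol{0}$ reproduces the known Jacobian $d\boldsymbol{F}(\tfrac{4}{3},\tfrac{2}{3},\tfrac{40}{27},\tfrac{2}{9})$ from \eqref{JacSM}. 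I would present the final equations \eqref{z2finalsumeqn1}--\eqref{w2finalsumeqn1} as the output of this substitution, state the rest points and note that Jacobians and eigenpairs are unchanged by the chain-rule argument, and flag the relations \eqref{metricfluidrelations} as preserved by linearity. I would also briefly remark, for the reader's orientation, that in this shifted frame the Friedmann $k<0$ trajectory is the curve emanating from $\boldsymbol{0}$ along $-\boldsymbol{R}_{A1}$, consistent with \eqref{unstablemanSM}.
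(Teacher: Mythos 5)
Your proposal is correct and follows essentially the same route as the paper: the paper also treats this as a pure translation $\boldsymbol{U}\to\boldsymbol{U}+\boldsymbol{U}_F$ of the unknowns in the $4\times4$ system (\ref{4by4system}), notes that the linear relations (\ref{metricfluidrelations}) and the Jacobians/eigenpairs at the (translated) rest points are unchanged, and simply records the re-expanded polynomial right-hand sides. Your consistency checks (vanishing of the constant terms since $\boldsymbol{F}(\boldsymbol{U}_F)=0$, and matching the linear part at $\boldsymbol{0}$ against (\ref{JacSM})) are exactly the right safeguards for the bookkeeping.
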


This corresponds to equations (3.31)--(3.34) in \cite{SmolTeVo}, except for one correction, which we record here. Equation (3.33) in \cite{SmolTeVo} is missing the term $\frac{2}{3}z_4$, which appears in (\ref{z4finalsumeqn1}). This error did not significantly effect further calculations in \cite{SmolTeVo} because the third order term in redshift vs luminosity used only $w_2$, not $z_4$, and this missing term in the $z_4$ equation did not significantly effect the right hand side of (3.34) for the numerics performed in \cite{SmolTeVo}.

\section{Appendix: Lemaître (1932) Tolman (1933) Bondi (1945) Spacetimes}\label{Appendix2}

In the papers [61*]-[65*] from Matt Visser's paper, the references suggested by the editors at RSPA, we see that inhomogeneous cosmologies modeled by the LTB spacetimes were employed in an attempt to model the anomalous acceleration without dark energy. A main issue addressed by these papers is the existence of \emph{central weak singularities} at $r=0$. To quote [65*]=\cite{roma} Romano, 5th paragraph:

\emph{In this paper [we] calculate the low redshift expansion of mn(z) and DL(z) for flat $\Lambda$CDM and matter dominated LTB. We then show how, if the conditions to avoid a central weak singularity are imposed, it is impossible to mimic dark energy with a LTB model without cosmological constant for both these observables, giving a general proof of the impossibility to give a local solution of the inversion problem for a smooth LTB model. This central singularity is rather mild, and is associated to linear terms in the energy density which lead to a divergence of the second derivative, so non smooth LTB models could still be viable cosmological models. It can be shown that the inversion problem [34] can be solved if the smoothness conditions we are imposing are relaxed. This implies that the numerical solutions of the inversion problem which have been recently proposed [11, 20] must contain such a weak central singularity.}

Thus the singularity \emph{is associated to linear terms in the energy density which lead to a divergence of the second derivative}. Now our solutions in SSC constructed within our asymptotic ansatz are singularity free. That is, the density $\rho$ and metric entries $A$ and $B$ invoke only even powers of $\xi$ (and hence $\bar{r}$), so these are smooth at $\bar{r}=0$. Moreover, $v$ is odd in $\xi$ and $r$, and since it is a derivative, this is the condition that $v=\frac{dx}{dt}$ is smooth in $x$ for $r=|x|$ at $x=0$. Thus all components of our solution are smooth and it gives the quadratic correction to redshift vs luminosity in line with dark energy, so it appears to contradict the statement above by Romano who claims that all such solutions must have a weak singularity at $r=0$ if expressed in LTB coordinates. This begs the following questions:

(1) Can every solution in SSC be expressed in LTB? That is, could our solution not be an LTB solution? We answer this in the negative by proving below that every SSC metric can be transformed to LTB when $p=0$.

(2) Could the transformation from SSC to LTB introduce a weak singularity at $r=0$ in LTB when no such singularity exists in SSC? We argue that there is a mechanism for this.

In regard to (2), we first prove that a function $\rho(t,r)$, with $r\geq0$, extends to a smooth function $f(t,x)=\rho(t,|x|)$, with $-\epsilon<x<\epsilon$, if an only if $\rho(t,r)$ is smooth and all odd derivatives vanish at $r=0$. Thus, for example, if the Taylor expansion of $\rho$ in powers of $r$ about $r=0$ contains an odd power term $f(t)r^n$, then $\frac{\partial^n\rho}{\partial r^n}=n!f(t)\neq0$ implies $\rho$ has a kink in the $(n-1)$ derivative at $r=0$.

In paper [65] it is argued that $p=0$ solutions constructed in Lemaître--Tolman--Bondi (LTB) coordinates that can account for the anomalous acceleration near the center also exhibit a \emph{central weak singularity} in the second derivative of the (scalar) density at $r=0$. This appears to be inconsistent with the fact that our solutions in SSC, including the density, are smooth with no singularity at the center. We show how our work here clarifies this issue, and there is actually no inconsistency, due essentially to the fact that spherical coordinates do not form a regular coordinate system at $r=0$.

To this end, recall that polar coordinates for $\boldsymbol{x}=(x^1,x^2,x^3)\in\mathbb{R}^3$ take the radial coordinate to be $r=|x|$, and a function given by $f(r)$, with $r\geq0$, represents a smooth spherically symmetric function of $\boldsymbol{x}$ precisely when $f$ is smooth and satisfies the condition that all odd derivatives of $f$ vanish at the origin $r=0$. That is, a function $f(r)$ represents a smooth spherically symmetric function of the Euclidean coordinates $\boldsymbol{x}$ at $r=0$ if and only if the function
\begin{align*}
    g(\boldsymbol{x}) = f(|\boldsymbol{x}|)
\end{align*}
is smooth at $\boldsymbol{x}=0$. Assuming $f$ is smooth for $r\geq0$, taking the $n^{th}$ derivative of $g$ from the left and right and setting them equal gives the smoothness condition $f^n(0)=(-1)^nf^n(0)$. Thus $f(r)$ represents a smooth function of the underlying coordinates $\boldsymbol{x}$ if and only if $f$ is smooth for $r\geq0$ and all odd derivatives vanish at $r=0$. Moreover, if any odd derivative $f^{(n+1)}(0)\neq0$, then $f(|\boldsymbol{x}|)$ has a jump discontinuity in its $(n+1)$ derivative, and hence a kink singularity in its $n^{th}$ derivative at $r=0$. Similarly, a spherically symmetric function $f(t,r)$ on a four-dimensional spacetime in spherical coordinates $(t,r,\phi,\theta)$ will represent a smooth function of the underlying Euclidean coordinates at $r=0$ if and only if $f(t,|\boldsymbol{x}|)$ is a smooth function at $\boldsymbol{x}=0$. In particular, if the Taylor expansion of $f(r)$ about $r=0$ contains a nonzero odd power of order $n+1$, so that $f^{n+1}(0)\neq0$, then the function has a kink singularity in its $n^{th}$ derivative at the origin in those coordinates. But since $r=0$ is a singular point of spherical coordinates, this may only be an \emph{apparent} coordinate singularity.

To characterize the problem for LTB coordinates, consider now a coordinate transformation that takes a $p=0$ metric from LTB coordinates $(\hat{t},\hat{r})$ over to SSC given by
\begin{align*}
    t &= t(\hat{t},\hat{r}), & \bar{r} &= \bar{r}(\hat{t},\hat{r}).
\end{align*}
Then by definition, the fluid is comoving with respect to $\hat{r}$, constant $\hat{r}$ are geodesics, $\hat{t}$ is proper time along constant $\hat{r}$ and $\bar{r}$ is arc-length distance along radial directions at constant $t$ \cite{gronhe}.

The following theorem characterizes when a smooth scalar density function $\rho(t,\bar{r})$ in SSC has a kink singularity in its second derivative at $\hat{r}=0$ when represented in LTB coordinates. The theorem is a direct consequence of the following lemma.

\begin{Lemma}\label{Lemmakink}
    Assume that $\rho(t,\bar{r})$ is a scalar density function which extends to a smooth function $\rho(t,|\boldsymbol{x}|)$ in SSC, so that it is given near $\bar{r}=0$ by
    \begin{align}
        \rho(t,\bar{r}) = f_0(t) + f_2(t)\bar{r}^2 + \dots,\label{rhosmooth}
    \end{align}
    where the dots indicate that the expansion includes only even powers of $\bar{r}$. Assume further that the mapping $(t,\bar{r})\to(\hat{t},\hat{r})$ from SSC to LTB coordinates is smooth, invertible on $r\geq0$ and meets the minimal regularity conditions that all derivatives of $\frac{\partial t}{\partial\hat{r}}(\hat{t},\hat{r})$ up to order three have continuous one-sided limits at $\hat{r}=0$, together with
    \begin{align}
        \lim_{\hat{r}\to0}\bar{r}(\hat{t},\hat{r}) = \bar{r}(\hat{t},0) = 0\label{reg1}
    \end{align}
    and
    \begin{align}
        \lim_{\hat{r}\to0}\frac{\partial t}{\partial\hat{r}}(\hat{t},\hat{r}) = \frac{\partial t}{\partial\hat{r}}(\hat{t},0) = 0.\label{reg2}
    \end{align}
    Finally, let
    \begin{align*}
        \hat{\rho}(\hat{t},\hat{r}) = \rho(t(\hat{t},\hat{r}),\bar{r}(\hat{t},\hat{r}))
    \end{align*}
    denote the representation of the function $\rho(t,\bar{r})$ in LTB coordinates. Then among odd order derivatives, the first partial derivative of $\hat{\rho}$ with respect to $\hat{r}$ always vanishes at $(\hat{t},0)$, but the third partial derivative of $\hat{\rho}$ with respect to $\hat{r}$ at $(\hat{t},0)$ is given by
    \begin{align}
        \frac{\partial^3\hat{\rho}}{\partial\hat{r}^3} = \frac{\partial\rho}{\partial t}\frac{\partial^3t}{\partial\hat{r}^3} + 3\frac{\partial^2\rho}{\partial\bar{r}^2}\frac{\partial\bar{r}}{\partial\hat{r}}\frac{\partial^2\bar{r}}{\partial\hat{r}^2}.\label{kinkcondtn}
    \end{align}
\end{Lemma}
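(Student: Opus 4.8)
The plan is to prove formula (\ref{kinkcondtn}) by a direct but carefully staged application of the multivariate chain rule to $\hat\rho(\hat t,\hat r)=\rho\big(t(\hat t,\hat r),\bar r(\hat t,\hat r)\big)$, exploiting at each order the two vanishing conditions (\ref{reg1})--(\ref{reg2}) and the even-power structure (\ref{rhosmooth}) of $\rho$ in $\bar r$. The key observation that makes everything collapse is that, evaluated at $\hat r=0$, we have $\bar r=0$, $\partial t/\partial\hat r=0$, and $\partial\rho/\partial\bar r=0$ (this last because the expansion (\ref{rhosmooth}) has no linear term in $\bar r$, so $\partial\rho/\partial\bar r=2f_2(t)\bar r+\cdots$ vanishes at $\bar r=0$). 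So I would first record these three facts as the ``vanishing lemma'' at the base point, then differentiate $\hat\rho$ once, twice, and three times in $\hat r$, discarding at each stage every term that carries a factor which vanishes at $(\hat t,0)$.

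First I would compute $\partial\hat\rho/\partial\hat r=\rho_t\, t_{\hat r}+\rho_{\bar r}\,\bar r_{\hat r}$. At $\hat r=0$ both $t_{\hat r}$ and $\rho_{\bar r}$ vanish, so $\partial\hat\rho/\partial\hat r=0$ there — this gives the first assertion of the lemma (and is consistent with the general principle from Section~\ref{S3} that a smooth spherically symmetric function has vanishing odd $\hat r$-derivatives at the center, provided the transformation is itself smooth in the Euclidean sense; here we are checking exactly where that can fail at third order). Next, differentiating again,
\begin{align*}
\frac{\partial^2\hat\rho}{\partial\hat r^2} &= \rho_{tt}t_{\hat r}^2 + 2\rho_{t\bar r}t_{\hat r}\bar r_{\hat r} + \rho_{\bar r\bar r}\bar r_{\hat r}^2 + \rho_t t_{\hat r\hat r} + \rho_{\bar r}\bar r_{\hat r\hat r},
\end{align*}
and at $(\hat t,0)$ the terms with $t_{\hat r}$ or $\rho_{\bar r}$ drop, leaving $\partial^2\hat\rho/\partial\hat r^2=\rho_{tt}\cdot 0+\rho_{\bar r\bar r}\bar r_{\hat r}^2+\rho_t t_{\hat r\hat r}$; I keep this intermediate expression since its $\hat r$-derivative is what feeds the third-order formula. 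Then I differentiate once more in $\hat r$. Schematically the new terms are: (i) $\hat r$-derivatives of $\rho_{\bar r\bar r}\bar r_{\hat r}^2$, which by the product rule give $\rho_{\bar r\bar r\bar r}\bar r_{\hat r}^3 + \rho_{\bar r\bar r t}t_{\hat r}\bar r_{\hat r}^2 + 2\rho_{\bar r\bar r}\bar r_{\hat r}\bar r_{\hat r\hat r}$ — at $(\hat t,0)$ the first two vanish ($\bar r_{\hat r}=0$ via (\ref{reg1})? no — careful: $\bar r_{\hat r}$ need not vanish; but $\bar r$ itself vanishes, so $\rho_{\bar r\bar r\bar r}$ is finite and $\bar r_{\hat r}^3$ need not vanish). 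This is the subtle point I must get right: which of $\bar r_{\hat r}$, $\bar r_{\hat r\hat r}$ survive. Since $\bar r(\hat t,0)=0$ only kills $\bar r$, not its $\hat r$-derivatives, I should \emph{not} assume $\bar r_{\hat r}=0$; rather the cancellation of the $\bar r_{\hat r}^3$ term must come from $\rho_{\bar r\bar r\bar r}$ being finite combined with — actually it does \emph{not} cancel on its own, so I expect the correct bookkeeping to show that the $\rho_{\bar r\bar r\bar r}\bar r_{\hat r}^3$ contribution is absorbed: re-examining, the $3\rho_{\bar r\bar r}\bar r_{\hat r}\bar r_{\hat r\hat r}$ piece is exactly the surviving second term of (\ref{kinkcondtn}), and (ii) the $\hat r$-derivative of $\rho_t t_{\hat r\hat r}$ gives $\rho_{tt}t_{\hat r}t_{\hat r\hat r} + \rho_{t\bar r}\bar r_{\hat r}t_{\hat r\hat r} + \rho_t t_{\hat r\hat r\hat r}$, where at $(\hat t,0)$ the first drops ($t_{\hat r}=0$), the second drops ($\rho_{\bar r}$... no, $\rho_{t\bar r}=2\dot f_2\bar r+\cdots$ also vanishes at $\bar r=0$), leaving $\rho_t t_{\hat r\hat r\hat r}$, the first term of (\ref{kinkcondtn}).

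Thus the whole proof reduces to a disciplined accounting of chain-rule terms under the three base-point vanishings $\bar r=0$, $t_{\hat r}=0$, $\rho_{\bar r}=0$, plus the derived facts $\rho_{t\bar r}=0$ and $\rho_{\bar r\bar r\bar r}$ finite (indeed $\rho_{\bar r\bar r\bar r}=0$ too, since (\ref{rhosmooth}) is even, so $\rho_{\bar r\bar r\bar r}=\partial_{\bar r}^3(f_0+f_2\bar r^2+f_4\bar r^4+\cdots)$ has no constant term — this is what kills the $\bar r_{\hat r}^3$ term cleanly). I would present a short lemma listing exactly which partials of $\rho$ vanish at $\bar r=0$ (all odd-order ones, by evenness), then do the three differentiations in order, citing this lemma to strike terms, and arrive at (\ref{kinkcondtn}). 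The main obstacle, and the only place real care is needed, is the second-order-in-$\bar r$ chain: making sure that every term containing an odd-order $\bar r$-derivative of $\rho$ or a first $\hat r$-derivative of $t$ is correctly identified and eliminated, and in particular confirming that the evenness of (\ref{rhosmooth}) forces $\rho_{\bar r\bar r\bar r}(t,0)=0$ so that the only surviving $\bar r$-contribution at third order is $3\rho_{\bar r\bar r}\,\bar r_{\hat r}\,\bar r_{\hat r\hat r}$. Everything else — existence of the one-sided limits, the regularity needed to differentiate three times — is guaranteed by the hypotheses on $\partial t/\partial\hat r$ and the smoothness of the transformation on $\hat r\ge 0$, so no further analytic input is required.
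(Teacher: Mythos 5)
Your overall strategy is exactly the paper's: apply the chain rule three times to $\hat\rho(\hat t,\hat r)=\rho(t(\hat t,\hat r),\bar r(\hat t,\hat r))$ and strike, at $(\hat t,0)$, every term carrying a factor of $t_{\hat r}$ or an odd-order $\bar r$-derivative of $\rho$ (all of which vanish at $\bar r=0$ by the evenness of (\ref{rhosmooth})). Your first- and second-derivative computations and your list of base-point vanishings ($t_{\hat r}=0$, $\rho_{\bar r}=0$, $\rho_{t\bar r}=0$, $\rho_{\bar r\bar r\bar r}=0$) are all correct and match the paper's proof.

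However, there is one concrete flaw in the execution that would change the final coefficient. You discard the term $\rho_{\bar r}\,\bar r_{\hat r\hat r}$ from the second derivative \emph{before} differentiating a third time, on the grounds that it vanishes at $(\hat t,0)$. But evaluating at the base point and then differentiating is not the same as differentiating and then evaluating: the $\hat r$-derivative of $\rho_{\bar r}\,\bar r_{\hat r\hat r}$ contains the term $\rho_{\bar r\bar r}\,\bar r_{\hat r}\,\bar r_{\hat r\hat r}$, which does \emph{not} vanish at $(\hat t,0)$. This is precisely the missing third copy of $\rho_{\bar r\bar r}\,\bar r_{\hat r}\,\bar r_{\hat r\hat r}$: your item (i) correctly produces $2\rho_{\bar r\bar r}\,\bar r_{\hat r}\,\bar r_{\hat r\hat r}$ from differentiating $\rho_{\bar r\bar r}\,\bar r_{\hat r}^2$, and then you assert without justification that the surviving piece is $3\rho_{\bar r\bar r}\,\bar r_{\hat r}\,\bar r_{\hat r\hat r}$. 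Carried out as written, your reduced bookkeeping yields the coefficient $2$ rather than the $3$ in (\ref{kinkcondtn}). The fix is to differentiate the full second-derivative expression (as the paper does in (\ref{secondterm})), retaining the terms that merely vanish at the base point, and only then evaluate at $(\hat t,0)$; the derivative of $\rho_{\bar r}\,\bar r_{\hat r\hat r}$ then supplies the third copy and the formula comes out correctly. (The analogous omissions of $\rho_{tt}t_{\hat r}^2$ and $2\rho_{t\bar r}t_{\hat r}\bar r_{\hat r}$ happen to be harmless, since every term of their derivatives still carries a vanishing factor, but that needs to be checked rather than assumed.)
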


\begin{proof}
To verify (\ref{kinkcondtn}), compute the partial derivatives of $\hat{\rho}$ with respect to $\hat{r}$ as so:
\begin{align}
    \frac{\partial\hat{\rho}}{\partial\hat{r}}(\hat{t},\hat{r}) &= \frac{\partial\rho}{\partial t}\frac{\partial t}{\partial\hat{r}} + \frac{\partial\rho}{\partial\bar{r}}\frac{\partial\bar{r}}{\partial\hat{r}},\label{firstterm}\\
    \frac{\partial^2\hat{\rho}}{\partial\hat{r}^2}(\hat{t},\hat{r}) &= \frac{\partial^2\rho}{\partial t^2}\bigg(\frac{\partial t}{\partial\hat{r}}\bigg)^2 + \frac{\partial^2\rho}{\partial t\partial\bar{r}}\frac{\partial\bar{r}}{\partial\hat{r}}\frac{\partial t}{\partial\hat{r}} + \frac{\partial\rho}{\partial t}\frac{\partial^2t}{\partial\hat{r}^2}\notag\\
    &+ \frac{\partial^2\rho}{\partial t\partial\bar{r}}\bigg(\frac{\partial\bar{r}}{\partial\hat{r}}\bigg)^2 + \frac{\partial^2\rho}{\partial\bar{r}^2}\bigg(\frac{\partial\bar{r}}{\partial\hat{r}}\bigg)^2 + \frac{\partial\rho}{\partial\bar{r}}\frac{\partial^2\bar{r}}{\partial\hat{r}^2}.\label{secondterm}
\end{align}
Now $\frac{\partial\rho}{\partial\bar{r}}=0$ at $(t,0)$ by (\ref{rhosmooth}) and $\frac{\partial t}{\partial\hat{r}}=0$ at $(\hat{t},0)$ by (\ref{reg2}), so these in (\ref{firstterm}) imply $\frac{\partial\hat{\rho}}{\partial\hat{r}}(\hat{t},\hat{r})=0$ at $(\hat{t},0)$ as claimed. For the third derivative, use (\ref{reg2}) together with the fact that by (\ref{rhosmooth}), all partial derivatives of $\rho(t,\bar{r})$ that are odd order in $\bar{r}$ vanish at $\bar{r}=0$. It is then straightforward to see that the only terms that survive under differentiation of (\ref{secondterm}) with respect to $\hat{r}$ upon setting $\bar{r}=\hat{r}=0$ are given by the right hand side of (\ref{kinkcondtn}).
\end{proof}

We conclude the condition that the third derivative (\ref{kinkcondtn}) be nonzero is necessary and sufficient for a density function $\rho$, smooth in SSC, to have a nonzero third order derivative with respect to $\hat{r}$ in LTB at $\hat{r}=0$, and hence is necessary and sufficient for the second $\hat{r}$-derivative of $\rho$ to have a kink singularity in LTB at $\hat{r}=0$.

\begin{Thm}\label{Thmkink}
    Assuming (\ref{rhosmooth})--(\ref{reg2}), the right hand side of (\ref{kinkcondtn}) is nonzero at $(\hat{t},0)$ if and only if the density function $\hat{\rho}(\hat{t},\hat{r})$ has a kink singularity in its second derivative at the point $(\hat{t},0)$ in the sense that the function $\hat{\rho}(\hat{t},|\boldsymbol{x}|)$ has a jump discontinuity in its second derivative in $\hat{r}$ at $\hat{r}=0$.
\end{Thm}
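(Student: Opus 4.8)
The plan is to obtain Theorem \ref{Thmkink} as an immediate corollary of Lemma \ref{Lemmakink} combined with the elementary characterization of smoothness at the center recorded earlier, namely that a function $f(\hat r)$ of the radial coordinate $\hat r=|\boldsymbol{x}|$ represents a smooth function of the Euclidean coordinates exactly when $f$ is smooth for $\hat r\geq 0$ and all of its odd one-sided derivatives vanish at $\hat r=0$, and that a nonzero odd derivative $f^{(n+1)}(0)\neq 0$ forces a jump in the $(n+1)$-st derivative of $f(|\boldsymbol{x}|)$ and hence a kink in the $n$-th derivative.

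First I would record that, under hypotheses (\ref{rhosmooth})--(\ref{reg2}), the composed density $\hat{\rho}(\hat{t},\hat{r})=\rho(t(\hat{t},\hat{r}),\bar{r}(\hat{t},\hat{r}))$ is smooth for $\hat r>0$ and possesses one-sided $\hat r$-derivatives up to order three at $\hat r=0$ — precisely the regularity already built into the statement of Lemma \ref{Lemmakink}, since the transformation derivatives that survive in $\partial^{3}\hat\rho/\partial\hat r^{3}$ at the center are $\partial^{3}t/\partial\hat r^{3}$, $\partial\bar r/\partial\hat r$ and $\partial^{2}\bar r/\partial\hat r^{2}$, all of which have the assumed one-sided limits. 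Then I would invoke the two conclusions of Lemma \ref{Lemmakink}: the first $\hat r$-derivative of $\hat\rho$ vanishes at $(\hat t,0)$, and the third $\hat r$-derivative of $\hat\rho$ at $(\hat t,0)$ equals the right-hand side of (\ref{kinkcondtn}).

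Next I would apply the parity characterization along an arbitrary radial line through the center, parametrized by a signed coordinate $s$ with $\hat r=|s|$, to $g(\hat r):=\hat\rho(\hat t,\hat r)$. Because $g'(0^{+})=0$, the function $\hat\rho(\hat t,|\boldsymbol{x}|)$ is automatically $C^{2}$ at $\hat r=0$ (the second-order radial derivative is even, so there is no odd-order obstruction below order three). Whether it is $C^{3}$ there is then governed solely by $g'''(0^{+})$. If the right-hand side of (\ref{kinkcondtn}) is nonzero, then $g'''(0^{+})\neq 0$, so the third $\hat r$-derivative of $\hat\rho(\hat t,|\boldsymbol{x}|)$ jumps by $2g'''(0^{+})$ across $\hat r=0$ and, equivalently, its second $\hat r$-derivative acquires a corner, i.e. a \emph{kink singularity}, at the center. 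Conversely, if the right-hand side of (\ref{kinkcondtn}) vanishes, then $g'''(0^{+})=0$ and $\hat\rho(\hat t,|\boldsymbol{x}|)$ is $C^{3}$ at $\hat r=0$, so its second derivative is differentiable there and no kink occurs. This establishes the asserted biconditional.

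The argument is essentially bookkeeping, so there is no genuine obstacle; the one point that needs care is already absorbed into Lemma \ref{Lemmakink}, namely checking that, once (\ref{reg2}) and the vanishing of the odd $\bar r$-derivatives of $\rho$ are used, no transformation derivatives beyond those with the stated one-sided limits enter $\partial^{3}\hat\rho/\partial\hat r^{3}$, so that $g'''(0^{+})$ is well defined under the minimal regularity hypotheses. I would also take care to state precisely what "kink in the second derivative" means — continuity of the second $\hat r$-derivative of $\hat\rho(\hat t,|\boldsymbol{x}|)$ together with a jump in its third $\hat r$-derivative — so that the conclusion matches the earlier parity lemma verbatim.
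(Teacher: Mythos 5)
Your proposal is correct and follows essentially the same route as the paper: the paper derives the theorem directly from Lemma \ref{Lemmakink} (vanishing of the first $\hat r$-derivative and the formula (\ref{kinkcondtn}) for the third) combined with the earlier parity characterization that a nonzero odd one-sided derivative at the center forces a jump in that derivative of $f(|\boldsymbol{x}|)$ and hence a kink one order below. Your added remark pinning down that the kink means continuity of the second derivative together with a jump in the third is a reasonable clarification of the paper's slightly loose wording, but the substance of the argument is identical.
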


Note that since SSC is the coordinate system in which $\bar{r}$ is arc-length distance along radial curves at constant $t$, the condition of smoothness of $\rho$ in $\bar{r}$ at $\bar{r}=0$ is geometric smoothness, so the kink singularity in LTB should be treated as a coordinate singularity.

All of this raises an important and interesting unresolved issue with our work. Namely, our asymptotic ansatz is saying no more and no less than that the solution is smooth at $\bar{r}=0$. But our equations only \emph{close up} when we impose our gauge condition relating metric coefficients in the expansion to fluid coefficients. So assuming we can solve for solutions within the asymptotics, we have to wonder whether there are other solutions which are smooth but for which that ansatz does not close. However, there must be, because if we take one of our solutions and change the gauge by a transformation of $t$, it still will not change the even and odd powers in $r$, so the solutions will stay smooth in the new gauge, but in the new gauge, it will not solve our asymptotic equations. On the other hand, it will still solve the exact equations because they are gauge invariant. The question then is, are there smooth solutions of the Einstein field equations, expressible in even powers of $r$ and $\xi$, which are not gauge transformations of solutions expressible in our ansatz?

We can resolve this as follows. Every solution of the Einstein field equations (that admits a Taylor expansion at $\bar{r}=0$) in SSC has to have even powers of $\bar{r}$ and hence even powers of $\xi$ in that expansion. But it will not be in our gauge because we may have $B(t,0)\neq1$. However, a gauge transformation will transform the solution into our gauge. Now plugging into the Einstein field equations, our gauge condition $A_2=-\frac{1}{12}z_2$ and so on must hold in order for the Einstein field equations to hold. Not for the ansatz to close, but for the Einstein field equations themselves to hold on that solution. Thus, every solution that has a Taylor expansion at $\bar{r}=0$ and is smooth there must be a gauge transformation of one of our solutions.

To conclude, every spherically symmetric solution of the Einstein field equations that is smooth at $\bar{r}=0$ in SSC, is one of our solutions, and hence the solution space admits the phase portrait that we introduced. In particular, $SM$ is an unstable solution in that space.

Finally, since computing $\hat{t}(t,\hat{r})$ entails integrating arc-length along the geodesic particle path $\hat{r}=0$, the resulting formulas would not in general close up under even powers, because applying this integration to quadratics would result in cubics and so on.

It now remains to verify (1). We start with the following theorem.

\begin{Thm}\label{thmLTB}
    Assume that $r=r_0$ for constant $r_0$ are geodesics and
    \begin{align*}
        ds^2 = g_{ij}dx^idx^j = -B(t,r)dt^2 + \frac{1}{A(t,r)}dr^2 + r^2d\Omega^2
    \end{align*}
    is diagonal with $x^0=t$ and $x^1=r$. Then $B$ depends only on $t$.
\end{Thm}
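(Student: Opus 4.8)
The plan is to read off the constraint imposed on the metric by requiring that the integral curves $\gamma_{r_0}$ defined by $r\equiv r_0$, $\theta\equiv\theta_0$, $\phi\equiv\phi_0$ be geodesics, and to show that this constraint is precisely $B_r=0$. First I would parametrize $\gamma_{r_0}$ by a parameter $s$ so that its tangent vector is $u^\mu=(u^0,0,0,0)$; since $B>0$ wherever the Lorentzian signature holds, we may take $s$ to be proper time, but the argument works for any affine parameter. The geodesic equation $\dot u^\mu+\Gamma^\mu_{\alpha\beta}u^\alpha u^\beta=0$ then reduces, because only $u^0\neq0$, to $\dot u^\mu+\Gamma^\mu_{00}(u^0)^2=0$ for each $\mu$.

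The key step is the $\mu=1$ (radial) component. Along $\gamma_{r_0}$ we have $u^1\equiv0$, hence $\dot u^1=0$, so the geodesic condition forces $\Gamma^1_{00}=0$. Now I compute $\Gamma^1_{00}$ for the diagonal metric with $g_{00}=-B(t,r)$, $g_{11}=1/A(t,r)$, $g^{11}=A(t,r)$, $g_{01}=0$:
\begin{align*}
\Gamma^1_{00} = \tfrac12 g^{11}\big(2\partial_t g_{01}-\partial_r g_{00}\big) = \tfrac12 A(t,r)\,\partial_r\big(B(t,r)\big) = \tfrac12 A\,B_r.
\end{align*}
Since $A\neq0$ (nondegeneracy of the spatial part of the metric), $\Gamma^1_{00}=0$ is equivalent to $B_r=0$, i.e.\ $B$ is a function of $t$ alone, which is the assertion.

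To close the argument I would note that the remaining components of the geodesic equation are automatically compatible and impose nothing further on $B$: the $\mu=0$ component reads $\dot u^0=-\Gamma^0_{00}(u^0)^2=-\tfrac{B_t}{2B}(u^0)^2$, which merely determines the parameter $s$ along the curve; and for $\mu=2,3$ one has $\Gamma^2_{00}=-\tfrac12 g^{22}\partial_\theta g_{00}=0$ and $\Gamma^3_{00}=-\tfrac12 g^{33}\partial_\phi g_{00}=0$ because $B(t,r)$ is independent of the angular coordinates. Hence the single nontrivial requirement is $B_r=0$, completing the proof.

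\textbf{Expected main obstacle.} There is essentially no obstacle here: the statement is a one-line Christoffel-symbol computation once the reduction $\dot u^1=0\Rightarrow\Gamma^1_{00}=0$ is in place. The only points requiring a word of care are (i) observing that $u^1\equiv0$ along the curve kills the $\dot u^1$ term so that the radial geodesic equation really is the pointwise condition $\Gamma^1_{00}=0$, and (ii) invoking $A\neq0$ to divide out. Neither is substantive.
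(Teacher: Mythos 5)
Your proof is correct and follows essentially the same route as the paper: both reduce the geodesic condition for the constant-$r$ curves to $\Gamma^1_{00}=0$, compute $\Gamma^1_{00}=\tfrac12 A\,B_r$ for the diagonal SSC metric, and conclude $B_r=0$ since $A\neq0$. Your additional check that the $\mu=0,2,3$ components impose nothing further is a harmless extra remark not present in the paper.
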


\begin{proof}
The geodesic equation is
\begin{align*}
    \ddot{x}^i = \Gamma^i_{jk}\dot{x}^i\dot{x}^j.
\end{align*}
Thus for an $r=x^1=r_0$ (with $r_0$ constant) geodesic, we have $\dot{x}^0\neq0$ and $\dot{x}^1=0$, so
\begin{align*}
    \ddot{x}^1 = \Gamma^1_{00}\dot{x}^0\dot{x}^0 = 0.
\end{align*}
However,
\begin{align*}
    0 = \Gamma^1_{00}\dot{x}^0\dot{x}^0 = \frac{1}{2}g^{1\sigma}(-g_{00,\sigma}+2g_{\sigma0,0}) = \frac{1}{2}g^{11}(-g_{00,1}+2g_{01,0}) = \frac{1}{2}g^{11}g_{00,1},
\end{align*}
so
\begin{align*}
    g_{00,1} = \frac{\partial B}{\partial r}(t,r) = 0,
\end{align*}
and hence $B$ depends only on $t$.
\end{proof}

Now assume that we are given a general metric in SSC $(t,\bar{r})$,
\begin{align*}
    \left(\begin{array}{cc}
    B & 0\\
    0 & \frac{1}{A}
    \end{array}\right)_{(t,\bar{r})},
\end{align*}
where $\bar{r}$ is arc-length at each fixed $t$ and the subscript indicates the coordinates on which we assume the components depend. Assume further that $p=0$, so the particle paths are geodesics, and assume we know constant $\hat{r}$ describes the particle paths, so that subluminal velocities imply $\hat{r}$ is a space-like coordinate. We can thus transform to comoving coordinates by $(t,\bar{r})\to(t,\hat{r})$, producing
\begin{align*}
    \left(\begin{array}{cc}
    B & 0\\
    0 & \frac{1}{A}
    \end{array}\right)_{(t,\bar{r})} \to \left(\begin{array}{cc}
    \tilde{H} & \tilde{E}\\
    \tilde{E} & \tilde{F}
    \end{array}\right)_{(t,\hat{r})}.
\end{align*}
However, by Weinberg \cite{wein}, there always exists a time transformation $\tilde{t}=\hat{t}(t,\hat{r})$, obtained locally from an integrating factor, which eliminates the middle term $\tilde{E}$, while keeping the comoving radial coordinate $\hat{r}$, that is,
\begin{align*}
    \left(\begin{array}{cc}
    \tilde{H} & \tilde{E}\\
    \tilde{E} & \tilde{F}
    \end{array}\right)_{(t,\hat{r})} \to \left(\begin{array}{cc}
    H & 0\\
    0 & F
    \end{array}\right)_{(\tilde{t},\hat{r})}.
\end{align*}
We now conclude by Theorem \ref{thmLTB} that $H$ depends only on $\tilde{t}$. But now the time transformation $\tilde{t}=\phi(\hat{t})$ with $\phi'(\hat{t})=\frac{1}{\sqrt{H}}$ converts $H\to1$ and the final coordinates $(\hat{t},\hat{r})$ are LTB coordinates. This verifies (1).

\end{document}